\newcommand{\Id}{\mathbbm{1}}
\newcommand{\be}{\begin{equation}}
\newcommand{\ee}{\end{equation}}
\newcommand{\bea}{\begin{eqnarray}}
\newcommand{\eea}{\end{eqnarray}}
\newtheorem{thm}{Theorem}[section]
\newtheorem*{thm*}{Theorem}
\newtheorem{prop}[thm]{Proposition}
\newtheorem*{prop*}{Proposition}
\newtheorem{cor}[thm]{Corollary}
\newtheorem*{cor*}{Corollary}
\newtheorem{lem}[thm]{Lemma}
\newtheorem{defn}[thm]{Definition}
\newtheorem{ex}[thm]{Example}
\newcommand{\ket}[1]{\vert#1\rangle}
\newcommand{\bra}[1]{\langle#1\vert}
\newcommand{\tr}{\mathrm{tr}}
\begin{document}

\title{Matrix Product Density Operators: Renormalization Fixed Points and Boundary Theories}

\author{J. I. {Cirac}$^1$, D. {P\'erez-Garc\'{\i}a}$^{2,3}$, N. {Schuch}$^1$ and F. {Verstraete}$^{4,5}$}
\address{$^1$ Max-Planck-Institut f{\"{u}}r Quantenoptik,
Hans-Kopfermann-Str.\ 1, D-85748 Garching, Germany}
\address{$^2$ Departamento de An\'alisis Matem\'atico, Universidad Complutense de Madrid, Plaza de Ciencias 3,
28040 Madrid, Spain}
\address{$^3$ ICMAT, Nicolas Cabrera, Campus de Cantoblanco, 28049 Madrid, Spain}
\address{$^4$ Department of Physics and Astronomy, Ghent University}
\address{$^5$ Vienna Center for Quantum Technology, University of Vienna}

\begin{abstract}
We consider the tensors generating matrix product states and density operators  in a spin chain. For pure states, we revise the renormalization procedure introduced in [F. Verstraete et al., Phys. Rev. Lett. {\bf 94}, 140601 (2005)] and characterize the tensors corresponding to the fixed points. We relate them to the states possessing zero correlation length, saturation of the area law, as well as to those which generate ground states of local and commuting Hamiltonians. For mixed states, we introduce the concept of renormalization fixed points and characterize the corresponding tensors. We also relate them to concepts like finite correlation length, saturation of the area law, as well as to those which generate Gibbs states of local and commuting Hamiltonians. One of the main result of this work is that the resulting fixed points can be associated to the boundary theories of two-dimensional topological states, through the bulk-boundary correspondence introduced in [J. I. Cirac et al., Phys. Rev. B {\bf 83}, 245134 (2011)].
\end{abstract}

\maketitle

\section{Introduction}

Tensor networks offer concise descriptions of certain many-body
quantum states in lattices. They thus consitute the basis of several numerical
algorithms to solve strongly correlated systems problems \cite{Valentin,CiracVerstraete},
like the density
matrix renormalization group technique \cite{DMRG,Schollwock}. They have also proven to
be very useful in the description of a variety of phenomena,
including symmetry \cite{Sanz} or gauge invariance \cite{Tagglia,Jutho},
and topologogical \cite{Norberttopo,Franktopo}, and symmetry protected order
\cite{GuWen}.

Matrix product states (MPS), the tensor network states naturally describing one
dimensional systems, are by now relatively well understood and characterized.
In particular, they have given rise to the classification of gapped phases in
one-dimensional systems \cite{Wen,Norbertphases,Pollamann}. A key technique
that has allowed to achieve this goal is the renormalization procedure introduced
in \cite{Latorre} (see also \cite{Wen}). The corresponding fixed points turn out to have a very
simple structure, which indeed leads to the mentioned classification. In some sense
they describe the coarse-grained behavior of all possible MPS, but with the
peculiarity of having zero correlation length, as it should be expected for
any renormalization procedure applied to gapped phases. Renormalization
procedures are also at the realm of the multi-scale entanglement renormalization
states \cite{Mera}, describing critical phenomena.

The extension of tensor network renormalization methods to higher dimensions is
not a simple task. Simple renormalization procedures have been introduced in
\cite{WenRG2D,VidalRG2D}, and the corresponding fixed point sets have been analyzed to
some extent \cite{WenRGFP}. In particular, toric code \cite{Toric} and string net states
\cite{StringNet} belong to those sets. They have zero correlation length and
possess topological order, which has been extensively investigated using
projected entangled-pair states (PEPS) \cite{PEPS1,PEPS2}, the natural generalization of MPS to
higher spatial dimensions.

In this work we first thoroughly revisit the fixed points of the renormalization
procedure introduced in \cite{Latorre} for one-dimensional spins and then extend
the corresponding formalism to mixed states; that is, first we concentrate on MPS and
then on matrix product density operators (MPDO). In this context, we define concepts
like zero correlation length, saturation of the area law, as well as local
commuting Hamiltonians, and connect them to the characterization of the
fixed points. The main result of this work is the characterization of
what we call renormalization fixed points (RFP) for MPDO. As we will see, they
give rise to an algebraic structure that directly connects with fusion categories and
 string net states \cite{StringNet,FrankStrings}. This is, indeed, what one
could expect given the existing bulk-boundary correspondence in PEPS, if we
describe the boundary theory in terms of a MPDO \cite{bulkboundary}. Thus, some of the
RFP can be interpreted as the boundaries of string net states. In this sense,
the MPDO we find can be understood \cite{bulkboundconj} as the product of a Gibbs state of a local
commuting Hamiltonian with a non-local projector (which commutes with the
Gibbs state), and which reflects an anomaly
\cite{anomaly}. This can be interpreted as the consequence of the topological order
of a two-dimensional theory at the boundary.

A basic tool of the characterization of the RFP, both for pure and mixed states, is
the so-called canonical form of MPS \cite{David2006}. In this paper we also carefully
revise this notion, and derive some other results that are required for the
formal proofs of our statements. In fact, since we introduce many definitions and
such statements, and the corresponding proofs are sometimes involved, we have
structured the paper as follows. First we introduce the main definitions and
theorems, and we leave the technical material for the appendices. The paper
starts out with the definition and characterization of canonical forms that include
both MPS and MPDO, and in the next two sections we concentrate on the RFP of MPS and MPDO,
respectively.

\section{Matrix Product Vectors}\label{Sec:MPV}

\subsection{Definition}

The main object of study here is a tensor, $A$, with coefficients $A^i_{\alpha,\beta}$, where $i=1,\ldots,d$ and $\alpha,\beta=1,\ldots,D$. We denote by $H_d$ a Hilbert space of finite dimension $d$, and by $\{|i\rangle\}_{i=1}^d$ an orthonormal basis in $H_d$. The tensor $A$ defines a family of vectors:

\begin{defn}
The vectors
 \be
 \label{MPV}
 |V^{(N)}(A)\rangle = \sum_{i_1,\ldots,i_N=1}^d {\rm tr}\left( A^{i_1}\ldots  A^{i_N}\right) |i_1\rangle\otimes\ldots\otimes|i_N\rangle \in H_d^{\otimes N},
 \ee
for $N=1,2,3,\ldots$ are the {\em matrix product vectors generated} by $A$. We call $D,d<\infty$ the {\em bond dimension} and the {\em physical dimension}, respectively.
\end{defn}

We will omit the tensor products, and thus write $|i_1,i_2\rangle$ instead of $|i_1\rangle\otimes|i_2\rangle$. Sometimes, we will use round kets and Greek letters, eg. $|\alpha)$, to denote the elements of an orthonormal basis of vectors acting on the indices that are contracted in order to generate the MPV.

For the moment, we do not specify whether each MPV represents a state, in which case we will have a matrix product state (MPS), or an operator, in which case we will have a matrix product operator (MPO).We will denote by ${\cal V}(A)$ the family of MPV generated by $A$, i.e.
 \be
 {\cal V}(A) = \{|V^{(N)}(A)\rangle\}_{N\in\mathds{N}}.
 \ee
The map
 \be
 A \to {\cal V}(A)
 \ee
is not one-to-one. In fact, different tensors may generate the same family of vectors, i.e. $|V^{(N)}(A)\rangle = |V^{(N)}(B)\rangle$. In the following sections we will analyze the relations among the tensors $A$ generating the same MPV's. In particular, we will show that one can always express a tensor in a canonical form so that one can easily compare the MPV's generated by different tensors in that form. This will be a key element of our analysis in the following sections.

By definition, the vector $V^{(N)}(A)$ is translationally invariant. One can also consider different matrices corresponding to different subsystems. However, here we will always deal with MPV of this form, and thus drop the word translationally invariant. Apart from that, in order not to overload the notation, we will drop the index $N$ and/or the dependence on $A$ in $V^{(N)}(A)$ and denote it simply by $V$ whenever this causes no confusion (or the corresponding family by ${\cal V}$). We will denote by $A^i$ the $D\times D$ matrices whose matrix elements are precisely the coefficients $A^i_{\alpha,\beta}$ of the tensor $A$, and by ${\cal M}_{D\times D}$ the whole set of $D\times D$ matrices.


\subsection{Graphical representations}\label{section:graphical}

Since we will deal with various tensors contracted in different forms, the notation may become cumbersome. Thus, it will be useful to use a graphical representation in certain occasions. We will write a tensor in terms of a box with lines representing the different indices. For instance, the tensor $A$ generating MPV will be represented by
 \be
 A =  \raisebox{-12pt}{\includegraphics[height=2.8em]{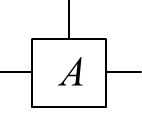}}
 \ee
where the line pointing up represents the physical index, and the other two the auxiliary indices. The contraction of two tensors will be represented by joining the lines of the corresponding indices.
For instance, the coefficients of the MPV (\ref{MPV}) in the basis $|i_1,\ldots,i_N\rangle$ will be represented as
 \be
 \langle i_1,\ldots,i_N|V\rangle = \; \raisebox{-8pt}{\includegraphics[height=3.8em]{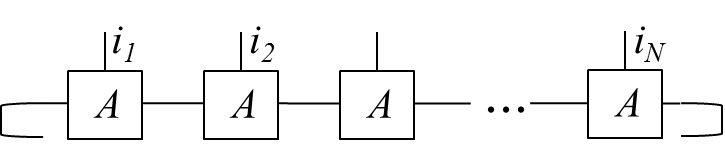}}
 \ee
The meaning of this expression is that each coefficient can be written by multiplying the matrix $A^{i_N}$ by $A^{i_{N-1}}$, then by $A^{i_{N-2}}$, and so on. The curly lines at the end indicate that we have to take the trace at the end (i.e., contract the first with the last index after the matrix multiplication). We will abuse notation and simply write
 \be
 |V\rangle = \; \raisebox{-10pt}{\includegraphics[height=2.8em]{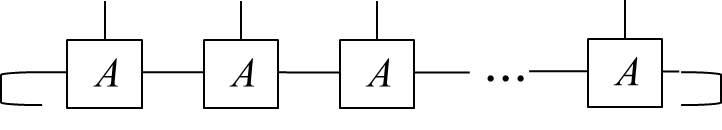}}
 \ee
without explicitly writing the indices $i_n$.

Furthermore, since we will have to represent relative complex expressions with tensors, some of the lines will have to cross. If we write a full circle in the crossing, this will mean a delta function. For instance
 \be
\raisebox{-12pt}{\includegraphics[height=4.2em]{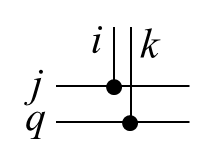}}
 \ee
means that the indices on the right and the left coincide, and that $i=j$ and $k=q$ (but not that $i=k$).


\subsection{Canonical forms}

Two tensors, $B$ and $C$, generate the same MPV if, for instance, all the matrices $B^i$ and $C^i$ are related by a similarity transformation
  \be
  B^i=X C^i X^{-1}
  \ee
where $X$ is a nonsingular matrix. Graphically,
 \be
 {\includegraphics[height=2.8em]{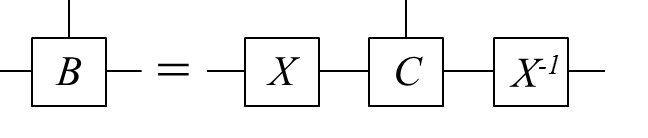}}
 \ee
We say that the tensors $B$ and $C$ are related by the {\em gauge matrix} $X$, and that the MPV are invariant under the similarity (gauge) transformation induced by $X$. But this is not the only way two tensors can generate the same MPV. It can also happen that all $B^i$ have a block upper triangular form, eg.
 \be
 B^i = \left(\begin{array}{cc} B^i_1 & B^i_o\\ 0 & B^i_2 \end{array}\right)
 \ee
where $B^i_k$ are $D_k\times D_k$ matrices, and $B^i_o$ is a $D_1\times D_2$ matrix. The MPV generated by $B$ does not depend on $B^i_o$, so that for any such matrix we will have the same vector. In fact, there exists a subspace, $S_1$, of dimension $D_1$ which is left invariant under the action of all $B^i$, i.e. $B^i S_1\in S_1$. In order to lift this redundancy, it is convenient to select out of all the matrices $B^i$ generating the same MPV, those which have a simplified form. For instance, we can choose $B^i_o=0$, and use the invariance under the gauge transformation induced by a gauge matrix to specify certain properties of the $B^i$. In the following we will systematize this procedure to define canonical forms of MPV's.

First of all, let us assume that there exist one or more subspaces that are invariant under the action of all $B^i$. Let us denote by $S_1$ one such subspace, of dimension $D_1$, which does not contain any other invariant subspace, and let us denote by $P_1$ ($Q_1=\Id-P_1$) the orthogonal projector onto $S_1$ ($S_1^\perp$). We have
 \be
 B^i P_1 = P_1 B^i P_1,\quad  Q_1 B^i = Q_1 B^i Q_1.
 \ee
We consider now $Q_1 B^i Q_1$ and proceed in the same form: identify an invariant subspace $S_2\subseteq S_1^\perp$ that does not contain any invariant subspace, denote by $P_2$ the corresponding projector, by $D_2$ the dimension of $S_2$, and $Q_2=\Id-P_1-P_2$. In general, denoting by
 \be
 Q_l = \Id - \sum_{k=1}^l P_k,
 \ee
we have that
 \be
 Q_l B^i P_{l+1} = P_{l+1} B^i P_{l+1}, \quad
 Q_l B^i = Q_l B^i Q_l.
\ee
After a finite number of steps, $r$, there will be no (non-trivial) invariant subspace anymore. The matrices $\{A^i\}$, defined as follows
 \be
 \label{eq:II_Aiplusk1}
 A^i = \sum_{k=1}^r P_k B^i P_k = \oplus_{k=1}^r \mu_k A^i_k,
 \ee
generate the same family of MPV, $\cal{V}$. Here, $r$ is the number of blocks, the $\mu_k$ are complex numbers, and $A^i_k$ are $D_k\times D_k$ matrices with $\sum_k D_k\le D$ (i.e., there can be zero blocks). We associate each tensor $A_k$ a completely positive map (CPM), ${\cal E}_k$, defined through
 \be
 \label{Ek}
 {\cal E}_k (X) = \sum_{i=1}^d A_k^i X A_k^{i\dagger}.
 \ee
The values of $\mu_k$ are chosen such all these maps
have spectral radius equal to 1.

As it was shown in \cite{David2006}, each CPM ${\cal E}_k$ has a unique eigenvalue $\lambda=1$ . However, there may be other eigenvalues of magnitude one, $e^{i 2\pi q/p}$, with $p,q$ integers, gcd$(q,p)=1$, and where $p$ is a divisor of $D$. They correspond to so-called $p$-periodic vectors \cite{Fan92}. In this context, it is very convenient to define the procedure of blocking, which will also play a role when we talk about renormalization procedures. By blocking tensors $B$ we mean defining a tensor $C$ such that the corresponding matrices are obtained by the $p$-fold products of the original ones, i.e. $C^i=B^{i_1}\ldots B^{i_p}$, where the index $i$ is contains all the $i_k$, $k=1,\ldots,p$. Graphically,
 \be
 \raisebox{-12pt}{\includegraphics[height=3.5em]{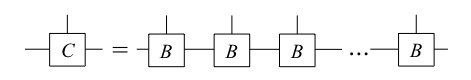}}
 \ee
If there are $t$ $p$-periodic vectors, with $p=p_1,\ldots,p_t$, by blocking lcm$(p_1,\ldots,p_t)$ systems, one obtains a vector without $p$-periodic ones. Since in the present work we will be mostly interested in MPVs obtained after a renormalization procedure which will block large numbers of spins anyway, we will be able to assume that for each $\tilde {\cal E}_k$ there is a unique eigenvalue of magnitude (and value) equal to one. Note that, when discussing MPOs, we may not be allowed to block tensors in some special ways, so that in that case we will have to ensure that there are no $p$-periodic vectors in case we want to use the canonical forms defined below.

\begin{defn}
We call a tensor, $A$, a {\em normal tensor} (NT) if: (i) there exist no non-trivial projector $P$ such that $A^iP=PA^iP$; (ii) its associated CPM, has a unique eigenvalue of magnitude (and value) equal to its spectral radius, which is equal to one. We call $|V\rangle$ {\em normal matrix product vector} (NMPV) if it is generated by a normal tensor.
\end{defn}

\begin{defn}
We say that a tensor,$A$, is in a {\em canonical form} (CF), if
 \be
 \label{eq:II_CF1}
 A^i = \oplus_{k=1}^r \mu_k A^i_k,
 \ee
and the tensors  $A_k$ are NT.
\end{defn}

Note that we are not considering here normalized states, so that we can always choose $|\mu_k|\le 1$ and at least one of them equals one, something which we will assume from now on.
The next propositions follow from the above procedure.

\begin{prop}
After blocking, for any tensor, $B$, it is always possible to obtain another one, $A$, in CF and generating the same ${\cal V}$.
\end{prop}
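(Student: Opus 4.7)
The plan is to directly formalize the iterative construction already sketched in the paragraphs preceding the proposition. Starting from $B$, I would first repeatedly extract minimal invariant subspaces: let $S_1$ be a subspace of minimal positive dimension that is invariant under every $B^i$, with $P_1$ its orthogonal projector; by construction $B^i P_1 = P_1 B^i P_1$. Then restrict attention to $Q_1 B^i Q_1$ on $S_1^\perp$, extract a minimal invariant $S_2 \subseteq S_1^\perp$, and so on. Since each new subspace has positive dimension and is orthogonal to the previous ones, this terminates after $r \leq D$ steps, producing an orthogonal decomposition in which every $B^i$ is block upper-triangular with diagonal blocks $B^i_k := P_k B^i P_k$ (viewed as operators on the $D_k$-dimensional range of $P_k$).

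Second, I would show that defining
\be
 A^i := \sum_{k=1}^r P_k B^i P_k,
\ee
does not alter the MPV. The key algebraic fact is that a product of block upper-triangular matrices is block upper-triangular with diagonal blocks equal to the products of the corresponding diagonal blocks, so for every $N$ and every choice of indices,
\be
 {\rm tr}\bigl(B^{i_1}\cdots B^{i_N}\bigr) = \sum_{k=1}^r {\rm tr}\bigl(B^{i_1}_k\cdots B^{i_N}_k\bigr) = {\rm tr}\bigl(A^{i_1}\cdots A^{i_N}\bigr),
\ee
since the off-diagonal (upper) blocks never feed into the trace. Hence $\mathcal{V}(A) = \mathcal{V}(B)$, and $A$ already has the direct-sum shape of equation~(\ref{eq:II_Aiplusk1}), with each $A_k$ satisfying condition~(i) of the definition of a normal tensor by the minimality of $S_k$.

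Third, I would fit each block to satisfy condition~(ii). For each $k$, the CPM $\mathcal{E}_k(X)=\sum_i A_k^i X A_k^{i\dagger}$ has a positive spectral radius $\rho_k$; rescaling $A_k^i \to \mu_k A_k^i$ with $\mu_k=\rho_k^{-1/2}$ makes the spectral radius equal to $1$, and by the Perron-Frobenius theory for irreducible (in the sense of invariant-subspace condition~(i)) CPMs quoted from \cite{David2006}, $1$ is then a nondegenerate eigenvalue of $\mathcal{E}_k$. What could still fail is uniqueness of magnitude-one eigenvalues: each $\mathcal{E}_k$ may carry additional eigenvalues of the form $e^{2\pi i q/p_k}$ associated with $p_k$-periodic vectors. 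Setting $p:=\mathrm{lcm}(p_1,\ldots,p_r)$ and blocking $p$ sites, one passes from $A_k^i$ to the products $A_k^{i_1}\cdots A_k^{i_p}$; the corresponding CPM is $\mathcal{E}_k^p$, whose spectrum is $\{\lambda^p : \lambda \in \mathrm{spec}(\mathcal{E}_k)\}$, so every periodic root of unity collapses onto $1$, leaving $1$ as the unique eigenvalue of magnitude one. After the appropriate renormalization of the $\mu_k$, each blocked $A_k$ is then normal.

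The one point that requires real care, and that I would flag as the main obstacle, is checking that blocking does not destroy condition~(i): one must verify that if $\{A_k^i\}_i$ admits no nontrivial common invariant subspace, neither does the set of $p$-fold products $\{A_k^{i_1}\cdots A_k^{i_p}\}$. This is where the Perron-Frobenius/periodicity analysis of \cite{David2006,Fan92} enters, since an invariant subspace for the blocked family corresponds to a proper invariant subspace in the cyclic decomposition induced by the peripheral spectrum of $\mathcal{E}_k$; once the peripheral spectrum has been trivialized by blocking, condition~(i) propagates from the original to the blocked tensor. Combining the three ingredients — invariant-subspace extraction, block upper-triangular irrelevance to the trace, and blocking to kill periodicity — yields a tensor $A$ in the canonical form of equation~(\ref{eq:II_CF1}) generating the same $\mathcal{V}$.
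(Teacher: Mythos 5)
Your first three ingredients are sound and follow the paper's own sketch (the proposition is stated there as a direct consequence of the preceding procedure, with no further proof given): the iterative extraction of minimal invariant subspaces, the observation that a product of block upper-triangular matrices is block upper-triangular so the off-diagonal blocks never enter the trace, and the rescaling $\mu_k=\rho_k^{-1/2}$ normalizing each CPM to spectral radius one, with irreducibility guaranteeing $\rho_k>0$. Up to that point your write-up is a faithful formalization of what the paper does.

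The last step, however, sets out to verify something false. You want to show that if $\{A_k^i\}_i$ admits no nontrivial common invariant subspace, then neither does the set of $p$-fold products. But when $\mathcal{E}_k$ is irreducible with period $p_k>1$, the quantum Perron--Frobenius structure theorem gives a decomposition $\mathbb{C}^{D_k}=V_0\oplus\cdots\oplus V_{p_k-1}$ with $A_k^i V_j\subseteq V_{j+1\bmod p_k}$, so every $V_j$ \emph{is} a nontrivial common invariant subspace of the $p$-fold products whenever $p_k\mid p$. Collapsing the peripheral spectrum by blocking is precisely what makes the eigenvalue $1$ of $\mathcal{E}_k^{p}$ degenerate (multiplicity $p_k$) and is therefore inseparable from the breakdown of condition (i); it cannot simultaneously rescue it, and the sentence ``condition (i) propagates from the original to the blocked tensor'' is the opposite of what happens. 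The repair is to decompose \emph{again} after blocking: the blocked matrices are block diagonal in a basis adapted to $\oplus_j V_j$ (a genuine direct sum, so no further off-diagonal discarding is needed), and the restriction to each $V_j$ is irreducible with primitive CPM, hence a normal tensor after rescaling. With that modification --- block, then split each periodic block into its $p_k$ cyclic components --- your argument closes and coincides with the paper's construction.
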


\begin{prop}
A tensor, $A$, is in CF if it has no $p$-periodic vectors and for every projector, $P$, fulfilling $PA^i=PA^iP$, we also have $A^iP=PA^iP$.
\end{prop}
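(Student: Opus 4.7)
The plan is to run the iterative invariant-subspace extraction sketched just before the proposition on $A$ itself, using condition (b) at each step to promote each extracted invariant subspace into a genuine block-diagonal summand of the $A^i$. The key reformulation is that $PA^i=PA^iP$ says exactly that the range of $\Id-P$ is $A^i$-invariant and $A^iP=PA^iP$ says the range of $P$ itself is $A^i$-invariant, so condition (b) asserts that any subspace whose orthogonal complement is $A^i$-invariant must itself be $A^i$-invariant.

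Inductively, assume mutually orthogonal projectors $P_1,\dots,P_l$ with ranges $S_1,\dots,S_l$ have already been produced so that every $A^i$ commutes with every $P_k$, and set $Q_l=\Id-\sum_{k\le l}P_k$. Then $A^i$ commutes with $Q_l$ and $Q_l(V)$ is $A^i$-invariant; if $Q_l\neq 0$, select a minimal $A^i$-invariant subspace $S_{l+1}\subset Q_l(V)$ as in the excerpt, with projector $P_{l+1}$, so that $A^iP_{l+1}=P_{l+1}A^iP_{l+1}$. Setting $Q_{l+1}:=Q_l-P_{l+1}$ and expanding $Q_{l+1}A^i=Q_{l+1}A^i(\sum_{k\le l}P_k+P_{l+1}+Q_{l+1})$, the terms $Q_{l+1}A^iP_k$ for $k\le l$ vanish because $A^i$ commutes with $P_k$ and $Q_{l+1}P_k=0$, while $Q_{l+1}A^iP_{l+1}$ vanishes because $A^iP_{l+1}=P_{l+1}A^iP_{l+1}$ and $Q_{l+1}P_{l+1}=0$. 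Hence $Q_{l+1}A^i=Q_{l+1}A^iQ_{l+1}$, and condition (b) applied with $P=Q_{l+1}$ yields $A^iQ_{l+1}=Q_{l+1}A^iQ_{l+1}$; together these say $A^i$ commutes with $Q_{l+1}$, and hence also with $P_{l+1}=Q_l-Q_{l+1}$. Since $\dim Q_l(V)$ strictly decreases, after $r\le D$ steps $Q_r=0$, yielding $A^i=\bigoplus_{k=1}^r A^i|_{S_k}$.

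It remains to verify that each block, after pulling out a scalar $\mu_k$ so that the CPM $\mathcal{E}_k$ of $A_k:=A|_{S_k}/\mu_k$ has spectral radius one, is a normal tensor. Condition (i) of the definition of NT follows at once from the minimality of $S_k$ inside $Q_{k-1}(V)$, since any proper $A_k^i$-invariant subspace would give a smaller $A^i$-invariant subspace there. For condition (ii), the global CPM $\mathcal{E}$ preserves the block decomposition of operators and its $(k,k)$ diagonal corner equals $|\mu_k|^2\mathcal{E}_k$, so combined with the irreducibility of $\mathcal{E}_k$ (which follows from (i) via Burnside plus the Perron--Frobenius analysis for irreducible CPMs recalled in \cite{David2006}), any unimodular eigenvalue of $\mathcal{E}_k$ other than $1$ would produce a $p$-periodic eigenvalue of $\mathcal{E}$, forbidden by (a). The hardest step is this last spectral bookkeeping: one must justify that unimodular eigenvalues of the diagonal corners genuinely appear as unimodular eigenvalues of the full $\mathcal{E}$ (rather than being cancelled by contributions from off-diagonal corners) and that the scalars $\mu_k$ can simultaneously be chosen real with $|\mu_k|\le 1$ and at least one equal to one. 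I would handle both points by invoking the peripheral-spectrum classification of irreducible CPMs from \cite{David2006} after an overall normalization of $A$ so that $\mathcal{E}$ itself has spectral radius one.
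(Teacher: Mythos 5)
Your proof is correct and follows essentially the route the paper intends: the paper gives no separate argument for this proposition beyond asserting that it ``follows from the above procedure,'' and your induction is precisely a careful execution of that invariant-subspace extraction, with hypothesis (b) invoked exactly where needed (on $Q_{l+1}$) to turn the block-triangular form into a genuine direct sum. The spectral point you flag as hardest is in fact immediate, since the block-diagonality of the $A^i$ makes $\mathcal{E}$ a direct sum of its corner maps, so peripheral eigenvalues of the diagonal corners cannot be cancelled and the no-$p$-periodic-vectors hypothesis applies directly to each $\mathcal{E}_k$.
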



The CF of a tensor (\ref{eq:II_CF1}) immediately implies that
 \be
 \label{eq:II_Psi_k}
 |V^{(N)}(A)\rangle = \sum_{k=1}^r \mu_k^N|V^{(N)}(A_k)\rangle,
 \ee
Note that two or more families ${\cal V}(A_j)$ may be the same, or related by a phase. This happens, for instance, if two of the tensors, say $A_j$ and $\tilde A_k$, satisfy
 \be
 \label{eq:II:A=XAX}
 \tilde A_k = e^{i\phi_k} X_k A_j X_k^{-1}
 \ee
for some invertible $X_k$ and phase $\phi_k$. This case has to be treated with some care, and for that we will introduce the following

\begin{defn}
The tensors $A_j$ ($j=1,\ldots,g$) form a {\em basis of normal tensors} (BNT) of a tensor $A$ if:
(i) the $A_j$ are NT; (ii) for each $N$, $V^{(N)}(A)$ can be written as a linear combination of $V^{(N)}(A_j)$; (iii) there exists some $N_0$ such that for all $N>N_0$, $V^{(N)}(A_j)$ are linearly independent.
\end{defn}

Note that for any tensor, there always exists a BNT. In \ref{AppendixMPV} we show how to do that in the proof of the following characterization of BNT.

\begin{prop}\label{prop:char-BNT}
The tensors $A_j$ ($j=1,\ldots,g$) form a BNT of $A$ iff: (i) for all NT appearing in its CF (\ref{eq:II_CF1}), $\tilde A_k$, there exists a $j$, a non-singular matrix, $X_k$, and some phase, $\phi_k$, such that (\ref{eq:II:A=XAX}) holds; (ii) the set is minimal, in the sense that for any element $A_j$, there is no other $j'$ for which (\ref{eq:II:A=XAX}) is possible.
\end{prop}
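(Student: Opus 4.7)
The plan is to prove the iff via a single technical ingredient: a linear–independence lemma for families of pairwise non-equivalent normal tensors. Call two NTs \emph{equivalent} if they are related as in (\ref{eq:II:A=XAX}), and abbreviate this relation by $\sim$. The key observation is that $\sim$ is precisely the relation under which the MPV generated are the same up to an $N$-dependent phase: if $\tilde A = e^{i\phi} X A X^{-1}$, then gauge invariance gives $V^{(N)}(\tilde A) = e^{iN\phi}\,V^{(N)}(A)$.

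\textbf{Linear independence lemma.} If $A^{(1)},\ldots,A^{(g)}$ are NT pairwise non-equivalent under $\sim$, there exists $N_0$ such that $\{V^{(N)}(A^{(l)})\}_{l=1}^g$ is linearly independent for all $N \geq N_0$. I would prove this via the Gram matrix $G^{(N)}_{ll'} = \langle V^{(N)}(A^{(l)})|V^{(N)}(A^{(l')})\rangle = \tr(\mc{E}_{ll'}^{\,N})$, where $\mc{E}_{ll'}(X) = \sum_i A^{(l)\,i}\,X\,(A^{(l')\,i})^\dagger$ is the mixed transfer operator. By the normality hypothesis, each diagonal map $\mc{E}_{ll}$ has a unique, non-degenerate eigenvalue of magnitude equal to its spectral radius $1$, so $G^{(N)}_{ll}$ converges to a nonzero constant. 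For $l\neq l'$ with $A^{(l)}\not\sim A^{(l')}$, the mixed map $\mc{E}_{ll'}$ has spectral radius strictly less than $1$: if an eigenvalue of unit modulus existed, then by the structure of fixed points of CPM associated with NT (as developed in \cite{David2006}), any corresponding eigenoperator would yield an invertible intertwiner between the two tensors, implying $A^{(l)}\sim A^{(l')}$ and contradicting the hypothesis. Hence the off-diagonals of $G^{(N)}$ decay exponentially while the diagonals remain bounded away from zero, so $G^{(N)}$ is invertible for $N \geq N_0$, giving linear independence.

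\textbf{Implication $(\Leftarrow)$.} Assume conditions (i) and (ii) of the proposition. The $A_j$ are NT since each is $\sim$-equivalent to some NT $\tilde A_k$. Starting from the CF decomposition (\ref{eq:II_Psi_k}) and substituting $V^{(N)}(\tilde A_k) = e^{iN\phi_k}\,V^{(N)}(A_{j(k)})$, one obtains $V^{(N)}(A) = \sum_{j} c_j^{(N)}\,V^{(N)}(A_j)$ with $c_j^{(N)} = \sum_{k:\,j(k)=j}\mu_k^N e^{iN\phi_k}$, establishing property (ii) of the BNT definition. Property (i) of the BNT definition is immediate, and property (iii) follows from the lemma applied to $\{A_j\}$, which are pairwise non-equivalent by hypothesis (ii) of the proposition.

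\textbf{Implication $(\Rightarrow)$.} Assume $\{A_j\}$ is a BNT. If two elements $A_j \sim A_{j'}$ were equivalent, then $V^{(N)}(A_j)$ and $V^{(N)}(A_{j'})$ would be scalar multiples, contradicting property (iii) of the BNT definition; this gives condition (ii) of the proposition. To prove condition (i), enumerate all equivalence classes appearing among $\{\tilde A_k\}\cup\{A_j\}$, fix one representative $R_m$ per class, and expand both the CF expression and the BNT expression for $V^{(N)}(A)$ in the basis $\{V^{(N)}(R_m)\}$ (linearly independent for large $N$ by the lemma). Equating coefficients for each class $\mc{C}_m$ containing no $A_j$ yields $\sum_{k:\tilde A_k\in\mc{C}_m}(\mu_k e^{i\phi_k})^N = 0$ for all large $N$. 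Grouping by distinct values of $\mu_k e^{i\phi_k}$ and invoking the Vandermonde non-vanishing of sums of distinct exponentials forces this class to contain no $\tilde A_k$ with $\mu_k\neq 0$, i.e.\ no $\tilde A_k$ from the CF. Thus every $\tilde A_k$ lies in a class containing some $A_j$, which is exactly condition (i) of the proposition.

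\textbf{Main obstacle.} The substantive step is the linear independence lemma, and within it the assertion that mixed transfer operators between non-equivalent NTs have spectral radius strictly less than one. This is where the uniqueness properties of the fixed points of normal CPMs (and the absence of $p$-periodic vectors) are used in an essential way, and is the reason the proof is relegated to the appendix; everything else is bookkeeping around the CF (\ref{eq:II_CF1}) and the Vandermonde argument.
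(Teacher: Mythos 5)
Your proposal is correct, and its technical core coincides with the paper's: the decisive ingredient in both is the dichotomy for the mixed transfer operator between two normal tensors --- spectral radius strictly below one unless the tensors are related by a gauge transformation and a phase --- which is exactly Lemma \ref{equalMPS} together with Corollaries \ref{eqV} and \ref{Lem1} in \ref{AppendixMPV}; your Gram-matrix linear-independence lemma is their combination, and the step you defer to the fixed-point theory of CPMs (unit-modulus eigenvalue of $\mc{E}_{ll'}$ forces an invertible intertwiner) is precisely where the paper does the real work, via a Cauchy--Schwarz argument after passing to CFII. Where you differ is in how the proposition is extracted from that lemma. The paper gives a greedy construction: it runs through the normal tensors $\tilde A_k$ of the CF, keeps each one whose MPV is asymptotically orthogonal to the ones already kept, and notes that every discarded $\tilde A_k$ is forced by Corollary \ref{eqV} to be equivalent to a kept one; the characterization is then declared an immediate consequence. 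You instead verify both implications directly for an arbitrary candidate set, which obliges you to add the coefficient-matching step with sums of $N$-th powers (the paper's Lemma \ref{Lem:app_simple}) in the forward direction --- machinery the paper only deploys later, for the Corollary of the Fundamental Theorem. Your version is more explicit about why an \emph{arbitrary} BNT must meet every equivalence class of the CF, a point the paper leaves somewhat implicit, at the cost of extra bookkeeping. One small caveat, shared with the paper rather than introduced by you: neither condition (i) nor the minimality condition as literally stated rules out an $A_j$ equivalent to no $\tilde A_k$ of the CF, so your opening claim in the backward direction that each $A_j$ is equivalent to some NT of the CF is not strictly forced by the hypotheses; this is a looseness of the statement, not a flaw in your argument.
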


We can write now the matrices of any tensor, $A$, in CF in terms of a BNT, $A_j$, as
 \begin{subequations}
 \bea
 \label{eq:II_ABasicTensors}
 A^i &=& X \left[\oplus_{j=1}^g \left(M_j \otimes A^i_j\right)\right] X^{-1} \\
 &=&
 \oplus_{j=1}^g \oplus_{q=1}^{r_j} \mu_{j,q} X_{j,q} A^i_j X_{j,q}^{-1},
 \eea
 \end{subequations}
where $M_j$ is a diagonal matrix with coefficients $\mu_{j,q}$,
and
 \be
 \label{eq:II_X}
 X= \oplus_{j=1}^g \oplus_{q=1}^{r_j} X_{j,q},
 \ee
so that
 \be
 \label{decBSV}
 |V^{N}(A)\rangle = \sum_{j=1}^g \left(\sum_{q=1}^{r_j} \mu_{j,q}^N \right) |V^{(N)}(A_j)\rangle.
 \ee

We will express (\ref{eq:II_ABasicTensors}) graphically as follows
 \be
 \label{Eq19}
  \raisebox{-12pt}{\includegraphics[height=7.3em]{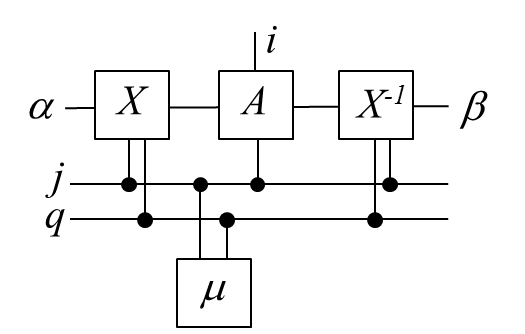}}
 \ee
where
 \be
  A_j^i=\raisebox{-28pt}{\includegraphics[height=5.6em]{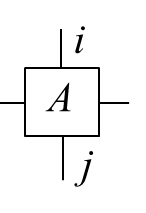}}
 \ee
are the BNT. We have included in (\ref{Eq19}) all the indices, in order to make the relation with (\ref{eq:II_ABasicTensors}) more evident. Note that we have represented the indices $j$ and $q$ by lines, and thus replaced the direct sums by tensor products. In reality, the values over which the indices $q,\alpha$, and $\beta$ run will generically depend on $j$; for instance, $q=1,\ldots,r_j$. Thus, the graphical representation should not be taken literally.

Now we introduce a basic definition in the context of MPS, on which many of the special properties of those states rely.

\begin{defn}
\label{defnbi}
A NT $A$ is called {\em injective}, if the matrices $A^i$ span the whole set of ${\cal M}_{D\times D}$ matrices.
We say that a tensor $A$ is in {\em block injective} canonical form (biCF) if it is in CF, and for each element $X\in \oplus_{j=1}^g {\cal M}_{D_j\times D_j}$ there exists a vector, $c(X)$, such that
$X=\sum_i c_i(X) \tilde A^i$, where $\tilde A^i:=\oplus_{i=1}^g A^i_j$, and $A_j$ are a BNT of $A$.
\end{defn}

The first definition is equivalent to the existence of another tensor, $A^{-1}$, such that $\sum_{i=1}^d (A^i)_{\alpha,\beta} (A^{-1,i})_{\alpha',\beta'}=\delta_{\alpha,\alpha'}\delta_{\beta,\beta'}$, and the second one to the existence of tensors $A_j^{-1}$ such that 
$\sum_{i=1}^d (A^i_j)_{\alpha,\beta}
(A^{-1,i}_{j'})_{\alpha',\beta'}=\delta_{\alpha,\alpha'}\delta_{\beta,\beta'}\delta_{j,j'}$; graphically,
 \be
 \label{injectivity}
 {\includegraphics[height=6em]{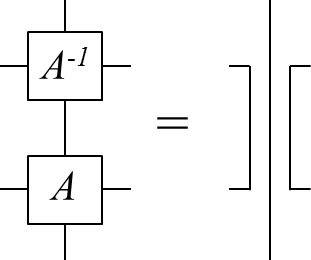}}
 \ee

According to the quantum version of Wieland's theorem \cite{Wieland}, after blocking at most $D^4$ times, every NT becomes injective. We will use this property several times in the coming sections. The definition of biCF basically says that acting on the physical index we have access to each and every element of every BNT. 


Furthermore, in \cite{David2006} it was shown that after $3(L+1)(D-1)$ blockings, any tensor $A$ becomes block injective,  where $L$ is the number of blockings after which each NT becomes injective. Thus we have

\begin{prop}
\label{propblockinj}
After blocking at most $3D^5$ spins, any tensor $A$ in CF is in  biCF.
\end{prop}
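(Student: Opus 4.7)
The plan is to combine the two results that are explicitly quoted just before the statement: the quantum Wielandt theorem (which controls when a single normal tensor becomes injective after blocking) and the bound from \cite{David2006} (which, given a uniform blocking length $L$ that makes each normal block injective, controls when the entire CF tensor becomes block injective). So there is no new technical work; the proof is a matter of assembling these two ingredients and chasing the arithmetic.

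First I would start from the CF of $A$, namely $A^i = \oplus_{k=1}^r \mu_k A_k^i$ with each $A_k$ a NT of bond dimension $D_k$ satisfying $\sum_k D_k \leq D$. Because blocking acts independently on each block (the product of direct sums is the direct sum of the products), after blocking $n$ sites the tensor retains the same block structure, with new normal components $A_k^{(n)}$ of the same bond dimensions $D_k$. Quantum Wielandt gives, for each $k$, a uniform bound: after at most $D_k^4 \leq D^4$ blockings the normal tensor $A_k^{(n)}$ is injective. Taking $L = D^4$, this single choice of blocking length simultaneously renders every block injective, which is exactly the hypothesis needed to invoke the result from \cite{David2006}.

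The cited statement from \cite{David2006} then says that with $L$ so chosen, after $3(L+1)(D-1)$ blockings of the original tensor, the resulting $A$ is in biCF. Plugging in $L = D^4$ and simplifying,
\begin{equation}
3(L+1)(D-1) \;=\; 3(D^4+1)(D-1) \;=\; 3\bigl(D^5 - D^4 + D - 1\bigr) \;\leq\; 3D^5,
\end{equation}
where the last inequality uses $D^4 \geq D - 1$ for all $D \geq 1$. This gives the bound claimed in the proposition.

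The only delicate point I anticipate is a bookkeeping one: one must confirm that the quantum Wielandt bound really applies after arbitrary further blocking (so that after the $3(L+1)(D-1)$ blockings one still has injectivity in each block, not merely at the $L$-th step), and that the remark about $p$-periodic vectors earlier in the section ensures each $\mathcal{E}_k$ has a unique peripheral eigenvalue so that the notion of NT is preserved under blocking. Both are handled by the observation that injectivity is stable under further blocking and by the assumption, made just before the Definition of NT, that $p$-periodic vectors have already been removed.
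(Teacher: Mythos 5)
Your proposal is correct and follows essentially the same route as the paper, which presents this proposition as an immediate consequence of the quantum Wielandt bound ($L\le D^4$ blockings for each NT to become injective) combined with the $3(L+1)(D-1)$ bound from \cite{David2006}, exactly as you do. Your arithmetic $3(D^4+1)(D-1)\le 3D^5$ and the closing remarks about stability under further blocking are consistent with the paper's (terse) justification.
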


We are then ready to state the main result of this section, that we call {\em The Fundamental Theorem of Matrix Product Vectors}. It clarifies which is the freedom for two tensors $A$ and $B$ in CF, to generate families of MPV, ${\cal V}_{a,b}$, such that for each $N$ they are proportional or equal to each other

\begin{thm}
\label{thm1}
Let $A$ and $B$ be two tensors in CF, with BNT $A^i_{k_a}$ and $B^i_{k_b}$ ($k_{a,b}=1,\dots,g_{a,b}$), respectively. If for all $N$, $A$ and $B$ generate MPV that are proportional to each other, then: (i) $g_a=g_b=:g$; (ii) for all $k$ there exists a $j_k$, phases $\phi_{k}$, and non-singular matrices, $X_{k}$ such that $B^i_{k}=e^{i\phi_{k}}X_{k}A^i_{j_k}X_{k}^{-1}$.
\end{thm}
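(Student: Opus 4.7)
My plan is to reduce, via blocking, to the situation in which every normal tensor appearing in the canonical form of $A$ and of $B$ is injective and both tensors are in biCF. Proposition \ref{propblockinj} together with the quantum Wielandt theorem cited just before it ensures this requires only finitely many blocking steps, and the hypothesis that $|V^{(N)}(A)\rangle$ and $|V^{(N)}(B)\rangle$ are proportional for every $N$ descends to the corresponding proportionality between the blocked MPV (the $N$-dependence of the proportionality scalar is harmless for what follows).

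Having done this, I would take the disjoint union $\{A_{k_a}\}\cup\{B_{k_b}\}$ of the two BNTs and partition it into classes under the relation ``related by a gauge transformation together with a phase'' of (\ref{eq:II:A=XAX}). By the minimality clause in Proposition \ref{prop:char-BNT}, no two elements of $\{A_{k_a}\}$ belong to the same class, and similarly for $\{B_{k_b}\}$. Letting $T_1,\ldots,T_m$ denote class representatives and absorbing the phases in $|V^{(N)}(A_{k_a})\rangle=\alpha_{k_a}^{N}\,|V^{(N)}(T_{\sigma(k_a)})\rangle$ into the scalar coefficients, I would rewrite (\ref{decBSV}) as
$$|V^{(N)}(A)\rangle=\sum_{i=1}^{m}G_i^A(N)\,|V^{(N)}(T_i)\rangle,\qquad |V^{(N)}(B)\rangle=\sum_{i=1}^{m}G_i^B(N)\,|V^{(N)}(T_i)\rangle,$$
where each $G_i^A(N)$ is a finite sum of $N$-th powers of nonzero complex scalars, and $G_i^A\equiv 0$ precisely when no $A_{k_a}$ lies in class $i$ (likewise for $G_i^B$).

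The heart of the argument is a lemma of the following form: if $T_1,\ldots,T_m$ are injective normal tensors and no two are related by a gauge transformation together with a phase, then $\{|V^{(N)}(T_i)\rangle\}_{i=1}^m$ is linearly independent for every sufficiently large $N$. I expect this step to be the main technical obstacle; it is essentially what makes the notion of BNT useful. I would prove it via a transfer-operator analysis of $\oplus_i T_i$: after enough blocking the transfer operator is block-diagonal with pairwise inequivalent injective blocks, each with a one-dimensional leading eigenspace, so a vanishing linear combination of the $|V^{(N)}(T_i)\rangle$ forces each coefficient to vanish. Equivalently, by injectivity one can exhibit physical-index patterns that contract inside a single chosen block only, again yielding independence. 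Details fit naturally in \ref{AppendixMPV}, in line with the paper's structure.

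Granting the lemma, the hypothesis $|V^{(N)}(B)\rangle=c_N|V^{(N)}(A)\rangle$ forces $G_i^B(N)=c_N G_i^A(N)$ for every $i$ and every large $N$. A non-trivial exponential sum cannot vanish for all sufficiently large $N$, so a class contains some $A_{k_a}$ if and only if it contains some $B_{k_b}$; together with the minimality clause this gives $g_a=g_b=:g$ and a bijection $k\mapsto j_k$ with $B_k^{i}=e^{i\phi_k}X_k A_{j_k}^{i} X_k^{-1}$ at the blocked level. To descend to the original tensors, one inverts the blocked identity a single site at a time using the injectivity tensors $A_{j_k}^{-1}$ of (\ref{injectivity}) to isolate the single-site matrices. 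Standard manipulations then yield the claimed similarity-and-phase relation on the unblocked tensors, with the $p$-th-root-of-unity ambiguity in the phase that arises from blocking being automatically of the required form $e^{i\phi_k}$.
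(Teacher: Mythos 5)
Your skeleton is the same as the paper's: decompose both tensors over their BNTs, show that MPVs generated by pairwise inequivalent normal tensors become linearly independent for large $N$, and then match coefficients using the rigidity of exponential sums (the paper's Lemma \ref{Lem:app_simple}). However, the step you yourself flag as ``the main technical obstacle'' is where essentially all of the content of the theorem lives, and neither of your two sketches for it actually contains the decisive ingredient. What is needed is the dichotomy of the paper's Lemma \ref{equalMPS}: for two normal tensors (brought to CFII, i.e.\ trace-preserving with full-rank fixed point), the normalized overlap $\langle V^{(N)}(T_i)|V^{(N)}(T_j)\rangle$ tends either to $0$ or to a phase, and in the latter case the tensors are related as in (\ref{eq:II:A=XAX}). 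Equivalently, the mixed transfer operator $\sum_k T_i^k\otimes \bar T_j^k$ has spectral radius strictly below $1$ unless $T_i\sim T_j$; the paper proves this by a Cauchy--Schwarz argument whose saturation case produces the gauge matrix and phase. Your transfer-operator sketch only invokes the one-dimensional leading eigenspaces of the \emph{diagonal} blocks and silently assumes the off-diagonal overlaps are harmless, which is precisely the point at issue; and your alternative ``contract inside a single chosen block by injectivity'' is essentially circular, since one cannot isolate a single block without already knowing that the inequivalent blocks produce asymptotically orthogonal states. Note also that this dichotomy is not only what gives linear independence (part (i)); its saturation branch is what produces the conclusion $B_k^i=e^{i\phi_k}X_kA_{j_k}^iX_k^{-1}$ of part (ii), so it cannot be bypassed.

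A second, more minor issue is the initial reduction to injective tensors by blocking. The paper's proof never needs injectivity: Lemma \ref{equalMPS} works directly for normal tensors in CFII, so the single-site relation comes out immediately. By blocking $p$ sites you only obtain $B^{i_1}\cdots B^{i_p}=e^{i\phi}X A^{i_1}\cdots A^{i_p}X^{-1}$, and recovering the single-site statement from this is not a matter of ``inverting with $A^{-1}$ one site at a time'': the inverse tensor of (\ref{injectivity}) inverts $A$ against $A^{-1}$, not $B$ against $A^{-1}$. The standard fix is to derive the blocked relation for two consecutive block lengths, argue (using injectivity) that the gauge matrices agree up to a scalar, and then divide; this works, but it is an extra argument your one-sentence remark does not supply, and it is rendered unnecessary by running the overlap analysis at the single-site level as the paper does.
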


\begin{cor}
\label{II_cor2}
If two tensors $A$ and $B$ in CF generate the same MPV for all $N$ then: (i) the dimensions of the matrices $A^i$ and $B^i$ coincide; (ii) there exists an invertible matrix, $X$, such that
 \be
 \label{eq:II_auxcor}
 A^i = X B^i X^{-1}.
 \ee
\end{cor}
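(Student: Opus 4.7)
The plan is to deduce the corollary from Theorem \ref{thm1} together with the expansion (\ref{decBSV}); the main work is matching the multiplicities of each equivalence class of normal blocks in the canonical forms of $A$ and $B$.

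Since equal MPVs are in particular proportional, Theorem \ref{thm1} applies. Choosing BNTs $\{A_j\}_{j=1}^g$ and $\{B_k\}_{k=1}^g$ for $A$ and $B$, it supplies a bijection $k\mapsto j_k$, phases $\phi_k$, and invertible matrices $Y_k$ with $B^i_k = e^{i\phi_k}Y_k A^i_{j_k}Y_k^{-1}$; in particular $|V^{(N)}(B_k)\rangle = e^{iN\phi_k}|V^{(N)}(A_{j_k})\rangle$, and the BNT block sizes satisfy $D^B_k = D^A_{j_k}$.

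I would then insert this into the decomposition (\ref{decBSV}) on the $B$-side and equate with the $A$-side. Clause (iii) of the BNT definition gives linear independence of the $|V^{(N)}(A_j)\rangle$ for $N\ge N_0$, so the equation reduces, for each $j$, to the scalar identity
\begin{equation*}
\sum_{q=1}^{r_j^A}(\mu^A_{j,q})^N \;=\; \sum_{k\,:\,j_k=j}\;\sum_{q=1}^{r_k^B}\bigl(e^{i\phi_k}\mu^B_{k,q}\bigr)^N \qquad (N\ge N_0).
\end{equation*}
A Vandermonde-type (uniqueness of exponential sums) argument then forces the two multisets of scalars to coincide. Summing the associated block dimensions over $j$ yields $\sum_j r_j^A D_j^A = \sum_k r_k^B D_k^B$, i.e.\ the bond dimensions of $A^i$ and $B^i$ are equal, which is assertion (i).

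The same multiset matching produces a bijection between the individual CF summands of $A^i$ in (\ref{eq:II_ABasicTensors}) and those of $B^i$, such that paired summands are related by an explicit similarity built from the gauge matrices $X^A_{j,q}$, $X^B_{k,q'}$, and $Y_k$ (with the phase absorbed into the matched equality $\mu^A_{j,q} = e^{i\phi_k}\mu^B_{k,q'}$). Stacking these blockwise similarities, preceded by the permutation matrix realizing the pairing of $B$-blocks with $A$-blocks, yields a single invertible $X$ with $A^i = XB^iX^{-1}$, proving (ii). The only non-routine ingredient is the Vandermonde step, which is legitimate because the $\mu$'s are nonzero by the spectral-radius normalization in (\ref{eq:II_Aiplusk1}); everything else is bookkeeping over the block labels $j,k,q$ and the permutation $k\mapsto j_k$.
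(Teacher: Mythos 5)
Your proposal is correct and follows essentially the same route as the paper: invoke Theorem \ref{thm1} to pair the BNT elements up to phase and similarity, expand both sides via (\ref{decBSV}), use linear independence of the $|V^{(N)}(A_j)\rangle$ plus the exponential-sum uniqueness lemma (Lemma \ref{Lem:app_simple} in the appendix) to match the multisets $\{\mu_{j,q}\}$ and hence the block multiplicities and dimensions, and then assemble the global gauge $X$ as a direct sum of the blockwise similarities. The only cosmetic difference is that the paper relabels the $B$-blocks so that $j_k=k$, whereas you carry the permutation explicitly.
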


The justification of the name came from the wide and important uses of that result. Besides being the key technical tool in this paper, it has been also the key result in the characterization of symmetries and string order parameters of MPS \cite{Sanz}, the classification of phases in 1D systems \cite{Wen,Norbertphases}, or the construction of string nets with PEPS \cite{FrankStrings}.

\section{Pure States: Renormalization of Matrix Product States}\label{Sec:MPS}

Here, we consider $N$ spins in a 1D chain, and a translationally invariant MPS. That is, now $H_d$ is the Hilbert space for each spin, and thus the vectors $V^{(N)}(A)$ represent states of those spins.
We are interested in defining a renormalization procedure whose fixed points will be the central subject of this section. The renormalization flow will encompass the idea of coarse-graining transformations in space, as it appears in some versions of the standard renormalization group methods. However, it is defined in the space of tensors generating the MPS, as it should affect the whole family of states generated by the tensors. We will analyze the fixed points of that flow, ie. the renormalization fixed points (RFP), as well as their connection with other concepts that naturally appear in the context of many-body systems, like zero correlation length (ZCL), or ground states of frustration-free, commuting Hamiltonians (GSCH). We will finally show how all those notions are equivalent in the main theorem of this section. The other main result of this section characterizes being an RFP in the structure of the local tensor $A$.

\subsection{Renormalization Flow and Renormalization Fixed Point}

We consider the renormalization procedure introduced in \cite{Latorre}. There the idea was to separate MPS in classes such that their elements, when blocking spins, were related by local unitary transformations. The blocking of spins in this procedure gives rise to a coarse grained picture of MPS, and defines a flow in the set of tensors generating the MPS. In the limit $m\to \infty$, this flow converges to a special set of tensors, whose elements can be considered as the fixed points of the renormalization procedure.

In renormalization theory one is interested in global features of a system and hence, in order to define the  renormalization flow, we will identify those states that are related by local unitary transformations. That is, tensors $A$ and $\tilde{A}$ are considered equivalent if they are related by  an isometry $U$:
 \be\label{eq:equivalence-relation-RFP-pure}
 {\includegraphics[height=5em]{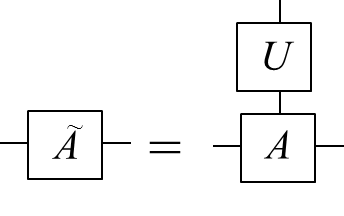}}
 \ee
This divides the set of tensors in equivalence classes.
We will define now the basic step of the renormalization procedure that maps a class of tensors into another class. We will do that by considering how one element of the class is mapped into another element of the new class, as this definition respects the equivalence classes. In particular, let us consider a tensor $B$ and perform the blocking of two spins as defined in the previous section. Thus, we have the new tensor, $C$, with components $C_{\alpha_1,\alpha_2}^{i_1,i_2}=(\alpha_1|B^{i_1} B^{i_2}|\alpha_2)$. This can be viewed as a matrix, $C^i_\alpha$, where $i=(i_1,i_2)$ and $\alpha=(\alpha_1,\alpha_2)$ are combined indices. Since the rank of this matrix, $d_B\le D^2$ (the possible values taken by $\alpha$), we can always find a $d^2\times d_B$  isometry, $U$, with $U^\dagger U=\Id_{d_B}$ and write $C=UA$. That is,
 \be
 \label{RG}
 B^{i_1} B^{i_2} = \sum_{j=1}^{d_B} U_{i,j} A^j
 \ee
for some matrices $A^j$ and $d_B\le D^2$. The flow just maps the class of tensors represented by $B$ in that represented by $A$.

We are interested in the form of the tensor $A$ with matrices $A^j$  that appear as limits of such renormalization flow when $m\to \infty$. By the very definition (\ref{RG}) of the renormalization flow, one clearly expects the following

\begin{thm}\label{thm:renormalization-flow}
A tensor $A$ appears as a limit in the above renormalization flow if and only if
 \be
 \label{AA=A}
 A^{i_1} A^{i_2} = \sum_{i_1,i_2,j} U_{(i_1,i_2),j} A^{j_1}
 \ee
for an isometry $U$.
\end{thm}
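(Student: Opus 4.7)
My plan is to prove both directions of the equivalence. The ``$\Leftarrow$'' direction will be immediate from the definitions; the ``$\Rightarrow$'' direction will reduce to showing that any limit point of the flow must be a fixed point of the map induced on equivalence classes, which follows from a continuity/compactness argument.

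\textbf{``$\Leftarrow$''.} Given $A^{i_1}A^{i_2}=\sum_j U_{(i_1,i_2),j}A^j$ with $U$ an isometry, the blocked matrix $C^{(i_1,i_2)}=A^{i_1}A^{i_2}$ already factors in the form $C=UA$ demanded by the renormalization step~(\ref{RG}). Therefore the flow maps the class of $A$ to itself, and $A$ arises as the constant limit $A,A,A,\ldots$.

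\textbf{``$\Rightarrow$''.} Suppose $A$ appears as the limit of a sequence $B_m=\Phi^m(B_0)$ of iterates. The physical dimensions $d_m\le D^2$ are bounded integers, so along the convergent subsequence they must eventually stabilize (otherwise convergence to a tensor of fixed physical dimension would be ill-defined). After stabilization the tensors live in a fixed finite-dimensional space, the blocked matrix $C$ depends continuously on the input, and the quotient by the compact Stiefel action of physical-index isometries~(\ref{eq:equivalence-relation-RFP-pure}) is Hausdorff; hence the induced map $\Phi$ on equivalence classes is continuous at $[A]$, and the limit must be a fixed point: $\Phi([A])=[A]$. Unpacking this, the renormalization step applied to $A$ produces a tensor $A'$ and an isometry $U$ with $A^{i_1}A^{i_2}=\sum_j U_{(i_1,i_2),j}(A')^j$; the fixed-point relation $[A']=[A]$ supplies an isometry $V$ with $(A')^j=\sum_k V_{j,k}A^k$, which is in fact unitary because both $A$ and $A'$ have physical dimension equal to the rank of $C$ by construction of the flow output. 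Composing yields $A^{i_1}A^{i_2}=\sum_k W_{(i_1,i_2),k}A^k$ with $W=UV$ an isometry, as required.

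The main technical obstacle lies in the compactness/continuity step just invoked: one must justify, using the isometric nature of the equivalence~(\ref{eq:equivalence-relation-RFP-pure}) together with the integrality and boundedness of the physical dimensions, that the quotient dynamics are genuinely well behaved at the limit. In particular, one has to handle the fact that the flow is a priori defined between tensor spaces of potentially varying dimensions, and that the decomposition $C=UA'$ in the RG step is only canonical up to a unitary on the physical index of $A'$; rank-stability along the convergent subsequence is what makes both points go through cleanly.
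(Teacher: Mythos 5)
Your skeleton is the same as the paper's: a limit of iterates of the (continuous) class map must be a fixed point, and a fixed point means the blocked tensor is related to the original by an isometry on the physical index. The paper, however, closes every technical step you flag as "the main obstacle" with one device you do not use: the equivalence classes under (\ref{eq:equivalence-relation-RFP-pure}) are in one-to-one correspondence with the completely positive maps $\mathcal{E}(X)=\sum_i A^i X A^{i\dagger}$, since two Kraus families give the same CPM iff they are related by an isometry on the physical index. Under this identification the renormalization step is literally $\mathcal{E}\mapsto\mathcal{E}^2$, a polynomial map on the fixed finite-dimensional space of linear maps on $\mathcal{M}_{D\times D}$; continuity is manifest, the varying physical dimension is irrelevant (it only changes the number of Kraus operators, not the space the CPM acts on), the limit condition is simply $\mathcal{E}^2=\mathcal{E}$, and translating back through the Kraus-isometry freedom gives (\ref{AA=A}) immediately. (The paper separately proves convergence of the flow for tensors in CF by a spectral analysis of the transfer matrix, which is not needed for the statement of the theorem itself.)

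As written, your quotient-space version has three genuine gaps that the CPM identification would remove. First, you never define the topology in which $B_m\to A$ when the $B_m$ have different physical dimensions; asserting that the dimensions "must stabilize or convergence is ill-defined" is circular, and in fact the physical dimension is not even an invariant of an equivalence class (an isometry into a larger space changes $d$ while preserving the class), so "rank stabilization" needs to be phrased in terms of the Kraus rank, which is the actual class invariant. Second, the claim that the quotient is Hausdorff and that $\Phi$ descends continuously is asserted, not proved; the relation is not the orbit relation of a single compact group action, precisely because $d$ varies within a class. Third, in the final composition $W=UV$, you need $V^\dagger V=\Id$ for $W$ to be an isometry; your justification that $V$ is unitary because "both $A$ and $A'$ have physical dimension equal to the rank of $C$" does not follow for the limit tensor $A$, whose physical dimension is not controlled by the construction — what is true is that $A$ and $A'$ define the same CPM, which is exactly the statement one should extract and which yields the required isometry directly.
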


Graphically,
 \bea
 &&{\includegraphics[height=5em]{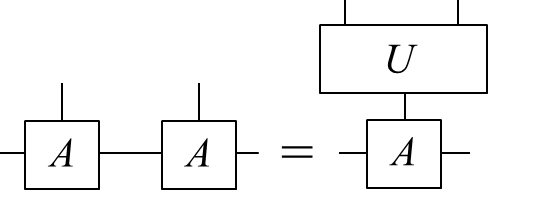}}\\
 &&{\includegraphics[height=5em]{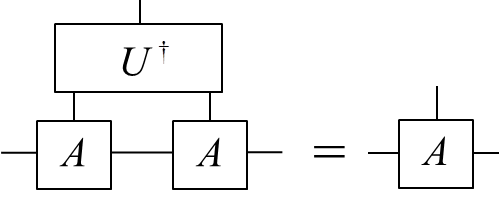}}
 \eea
and, schematically
 \be
 {\includegraphics[height=5em]{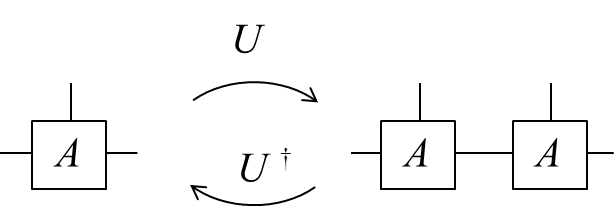}}
 \ee

Moreover, it will be also shown in \ref{AppendixPure} that if we start with a tensor $B$ in CF, the above renormalization flow always converges.
This result motivates the following

\begin{defn}
\label{defRFP}
A tensor $A$ is called a {\em renormalization fixed point} (RFP) if it verifies
(\ref{AA=A}), that is, if one can obtain two tensors out of one (and vice versa) by the action of an isometry in the physical index.
\end{defn}


\subsection{Zero correlation length}

The renormalization flow defined above does not change the state, but just  redefines the spins as it makes them larger and larger by blocking them. Thus, one should expect that if there is any length scale in the system, it will disappear at the end of the flow. In this section we will show how this is captured at the RFP.

\subsubsection{Correlations independent of the distance and local orthogonality}

Perhaps the main notion of distance that naturally appears in the context of the many-body states we are dealing with is the correlation length. This is roughly defined as the typical length at which correlations decay. According to what we expressed above, the RFP should have zero correlation length. This statement can be made more specific as follows. Let us denote by $R_{n_1,n_2}$ the set of spins with $n_1\le n\le n_2$. Then
\begin{defn}
We say that $\Psi$ has {\em correlations independent of the distance} (CID) if for any pair of observables, $O_{n_1,n_1'}$, $O_{n_2,n_2'}$, acting on any disjoint regions $R_{n_1,n_1'}$ and $R_{n_2,n_2'}$, respectively,
 \be
 \label{ZCL}
 \langle \Psi| O_{n_1,n_1'}O_{n_2,n_2'}|\Psi\rangle =
 \langle \Psi| O_{n_1,n_1'}O_{n_2+\Delta,n_2'+\Delta}|\Psi\rangle
 \ee
for all $n_1'-n_2+1 < \Delta < N-n_2'$.
\end{defn}

Even though this definition is probably the most natural one, there are states with CID which keep other type of lengths (and hence cannot be RFP) as the following trivial example shows.

\begin{ex}
Let us consider the state
 \be
 \label{Ex:ZCL}
 |\Psi\rangle = |0,\ldots,0\rangle + |+,\ldots,+\rangle,
 \ee
where $|+\rangle=(|0\rangle+|1\rangle)/\sqrt{2}$. The corresponding matrices are
 \be
 A^0=\left( \begin{array}{cc} 1 & 0 \\ 0 & 1/\sqrt{2} \end{array}\right), \quad A^1=\left( \begin{array}{cc} 0 & 0 \\ 0 & 1/\sqrt{2} \end{array}\right).
 \ee
This state is permutationally invariant, and thus it trivially satisfies (\ref{ZCL}). However, it does not fulfill (\ref{AA=A}), and thus it is not a RFP. In fact, if we block $n$ spins we will have
 \be
 |\Psi\rangle = |\tilde 0,\ldots,\tilde 0\rangle + |\tilde +,\ldots,\tilde +\rangle,
 \ee
where $|\tilde 0\rangle=|0,\ldots,0\rangle$ and $|\tilde +\rangle= |+,\ldots,+\rangle$. Although this state looks very similar to the original one, now the scalar product $\langle \tilde 0|\tilde +\rangle$ decreases exponentially with $n$. Thus, if we think in terms of renormalization, the state (\ref{Ex:ZCL}) cannot be a renormalization fixed point despite the fact that has CID.
\end{ex}

The previous example illustrates that if we have CID and we keep on blocking, the states generated by the different elements of the BNT will look locally orthogonal. This property can be defined as follows
\begin{defn}
\label{DefLO}
We say that two tensors, $A_j, A_{j'}$ are {\em orthogonal} if
 \be
 \sum_i A^i_j \otimes \bar A^i_{j'} = 0.
 \ee
We will say that a MPV is {\em locally orthogonal} (LO) if the associated BNT are orthogonal. \end{defn}

\begin{defn}
We will say that a tensor $A$ has ZCL if the generated family of MPS is LO and has CID.
\end{defn}

The property of ZCL for a tensor can be easily characterized using the so-called transfer matrix.

\begin{defn}
Given a tensor $A$, generating a family of MPV, we define the {\em transfer matrix}
 \be
 \mathbb{E} = \sum_{i} A^i \otimes \bar A^i
 \raisebox{-24pt}{\includegraphics[height=4.8em]{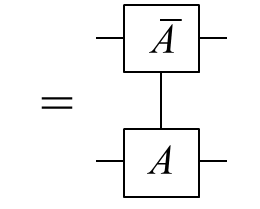}}
 \ee
 \end{defn}

The transfer matrix is the standard tool to calculate expectation values. For any pair of observables, $O_{1,2}$, acting on subsystems $s_{n_{1,2}}$, respectively,
 \be
 \label{Corr}
 \langle V|O_1 O_2|V\rangle = {\rm tr}\left(
 \mathbb{E}_{O_1}\mathbb{E}^{n_2-n_1-1} \mathbb{E}_{O_2} \mathbb{E}^{N+n_1-n_2-1}\right),
 \ee
where $\mathbb{E}_O$ is a matrix that depends on the matrices $A^i$ as well as on the operator $O$.

Note that if $\mathbb{E}^2=\mathbb{E}$, then one will have CID. In fact, this condition is equivalent to ZCL:

\begin{thm}
\label{TheoremZCLPure}
A tensor $A$ in CF has ZCL if and only if $\mathbb{E}^2=\mathbb{E}$.
\end{thm}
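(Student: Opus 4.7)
The plan is to analyse the block structure of $\mathbb{E}$ induced by the canonical form. Writing $A^i = X\bigl[\oplus_{j}(M_j\otimes A_j^i)\bigr]X^{-1}$ in terms of its BNT as in (\ref{eq:II_ABasicTensors}), the transfer matrix (whose similarity by $X\otimes\bar X$ is innocuous for all three conditions appearing in the theorem) decomposes into blocks labelled by pairs $(j,q),(j',q')$, with the block equal to $\mu_{j,q}\bar\mu_{j',q'}\mathbb{E}_{j,j'}$, where $\mathbb{E}_{j,j'}:=\sum_i A_j^i\otimes\bar A_{j'}^i$. I will use two standard spectral facts from \cite{David2006}: for each $\mathbb{E}_j:=\mathbb{E}_{j,j}$, Perron--Frobenius gives a unique, simple eigenvalue of modulus one, equal to $1$ (the $p$-periodic eigenvalues having been removed by prior blocking, as assumed in the CF setup); and when $A_j$ and $A_{j'}$ are inequivalent BNT elements, $\mathbb{E}_{j,j'}$ has spectral radius strictly less than one.

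The direction $\mathbb{E}^2=\mathbb{E}\Rightarrow$ ZCL is then immediate. Because $\mathbb{E}^n=\mathbb{E}$ for every $n\ge 1$, substituting into the correlation formula (\ref{Corr}) makes $\langle V|O_1O_2|V\rangle$ independent of the separation between $O_1$ and $O_2$, which is CID. For LO, reading $\mathbb{E}^2=\mathbb{E}$ off each block with $\mu_{j,q}\bar\mu_{j',q'}\ne 0$ gives $\mu_{j,q}\bar\mu_{j',q'}\mathbb{E}_{j,j'}^2=\mathbb{E}_{j,j'}$. For $j\ne j'$, any nonzero eigenvalue of $\mathbb{E}_{j,j'}$ would then have modulus $|\mu_{j,q}\bar\mu_{j',q'}|^{-1}\ge 1$, contradicting the strict sub-unit spectral radius; hence $\mathbb{E}_{j,j'}$ is nilpotent, and the block equation then forces $\mathbb{E}_{j,j'}=0$, which is LO.

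Conversely, assume ZCL. By LO, $\mathbb{E}$ becomes block diagonal in $j$, each $j$-block being (up to similarity) $(M_j\otimes\bar M_j)\otimes\mathbb{E}_j$. After blocking enough spins (Wieland's theorem together with Proposition \ref{propblockinj}), each $A_j$ is injective, so inserting arbitrary operators $O$ in the transfer matrix yields a spanning set of endomorphisms of the auxiliary space. Applying CID via (\ref{Corr}) and varying $O_1,O_2$ then promotes the scalar CID identity to the operator identity
\be
\mathbb{E}_j^{k_1}\,Y\,\mathbb{E}_j^{k_2}=\mathbb{E}_j^{k_1+1}\,Y\,\mathbb{E}_j^{k_2-1}
\ee
for all matrices $Y$ and all $k_1,k_2\ge 1$. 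Testing this on the spectral basis $Y=|r_k)(l_{k'}|$ of $\mathbb{E}_j$ forces every non-top eigenvalue to vanish and rules out Jordan blocks, giving $\mathbb{E}_j^2=\mathbb{E}_j$ (a rank-one projection). The analogous argument applied to $M_j\otimes\bar M_j$ forces all nonzero $\mu_{j,q}$ within a sector to share a common value of modulus one, so $(M_j\otimes\bar M_j)^2=M_j\otimes\bar M_j$; assembling the sectors gives $\mathbb{E}^2=\mathbb{E}$.

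The step I expect to be the most delicate is this spectral/Jordan analysis in the $(\Rightarrow)$ direction: extracting the operator identity above from the scalar CID condition via the injectivity/spanning argument, ruling out both unwanted eigenvalues of modulus one and Jordan blocks at zero, and then checking that the blocking used to invoke injectivity does not obstruct the return to a statement about the original (unblocked) $\mathbb{E}$.
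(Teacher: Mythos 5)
Your argument follows the same essential mechanism as the paper's, with one presentational difference and one genuine gap. For $\mathbb{E}^2=\mathbb{E}\Rightarrow$ ZCL you argue exactly as the paper does for CID, and your blockwise derivation of LO (a nonzero eigenvalue of $\mathbb{E}_{j,j'}$ would have modulus $\ge 1$, hence $\mathbb{E}_{j,j'}$ is nilpotent, hence zero) makes explicit a step the paper leaves implicit; that part is sound. For the converse, the paper argues by contraposition: if $\mathbb{E}_1^2\neq\mathbb{E}_1$ for some normal block, block injectivity (Proposition \ref{propblockinj}) yields explicit observables with $\mathbb{E}_{O_1}\propto|R_1)(l_1|$ and $\mathbb{E}_{O_2}\propto|r_1)(L_1|$, so the correlator (\ref{Corr}) behaves as $\lambda^{\mathrm{dist}(O_1,O_2)}$ times a prefactor that Lemma \ref{Lem:app_simple} shows is nonzero for infinitely many $N$. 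Your route --- use the spanning property of $\{\mathbb{E}_O\}$ to upgrade the scalar CID condition to the operator identity $\mathbb{E}_j^{k_1}Y\mathbb{E}_j^{k_2}=\mathbb{E}_j^{k_1+1}Y\mathbb{E}_j^{k_2-1}$ and test it on a spectral basis --- is the same idea packaged differently, and it treats the Jordan block at $0$ more carefully than the paper does. Both versions need Lemma \ref{Lem:app_simple} to guarantee that the multiplicity prefactor $|\sum_q\mu_{j,q}^N|^2$ does not vanish for all relevant $N$.

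The gap is the final claim of your converse: that ``the analogous argument applied to $M_j\otimes\bar M_j$ forces all nonzero $\mu_{j,q}$ within a sector to share a common value of modulus one.'' No such argument is available, because the $\mu_{j,q}$ are scalars multiplying entire blocks of $\mathbb{E}$: in (\ref{Corr}) the two arcs separating $O_1$ from $O_2$ contribute $(\mu_{j,q}\bar\mu_{j,q'})^{m}\cdot(\mu_{j,q}\bar\mu_{j,q'})^{N-m-\mathrm{const}}=(\mu_{j,q}\bar\mu_{j,q'})^{N-\mathrm{const}}$, which is independent of the separation $m$. Hence CID (and LO) place no constraint on the relative phases within a sector, whereas $\mathbb{E}^2=\mathbb{E}$ requires $\mu_{j,q}\bar\mu_{j,q'}\in\{0,1\}$. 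Concretely, $A^i=A_1^i\oplus e^{i\phi}A_1^i$ with $A_1$ a normal RFP tensor and $\phi\neq 0$ is in CF, generates $(1+e^{iN\phi})|V^{(N)}(A_1)\rangle$, and therefore has CID and LO, yet its transfer matrix contains the block $e^{i\phi}\mathbb{E}_1$ and so is not idempotent. You should know that the paper's own proof makes the same silent leap (it asserts that $\mathbb{E}^2\neq\mathbb{E}$ forces $\mathbb{E}_1^2\neq\mathbb{E}_1$ for some normal block, ignoring the possibility that the failure of idempotency lives entirely in the $\mu$'s), so this is as much a defect of the statement's normalization conventions as of your write-up; still, as written, your ``analogous argument'' is not a proof, and you should either add a hypothesis excluding distinct phases within a sector or acknowledge that CID cannot detect them.
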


\subsection{Commuting parent Hamiltonians}

One can always build local Hamiltonians for which a MPS is a ground state. Among them, the so-called parent Hamiltonians play an important role, as they ensure the ground space is very much related to the structure of the tensors generating the MPS. They are additionally translationally invariant and frustration free; that is, a sum of positive operators that act on a neighborhood of each spin, and that annihilate the state. In this section we will show that if, additionally, those local operators commute with each other, then there is an intimate relation between their ground states and the RFP. In \ref{AppendixLooseEnds} we also derive a general relation between ground states of commuting parent Hamiltonians and the absence of correlations.

Let us consider a tensor $A$ generating a family of MPS. Let us take a block consisting of $L$ spins, and the subspace, $S_L$, spanned by
 \be
 |v_{\alpha,\beta}\rangle = \sum_{i_1,\ldots,i_L} (\alpha|A^{i_1}\ldots A^{i_N}|\beta) |i_1,\ldots,i_N\rangle
 \ee
Let us further assume that $d^L>D^2\ge {\rm dim}(S_L)$, so that $S_L^\perp$ is non-trivial. We denote by $P_L^\perp$ the projector onto that subspace and construct the Hamiltonian
 \be
 \label{eq:parent}
 H_{L}^{(N)} = \sum_{j=0}^{N-1} \tau_j(P_L^\perp)
 \ee
where $\tau_j$ translates the spins by an amount $j$, so that it is translationally invariant (with periodic boundary conditions). Obviously, $|V^{(N)}(A_j)\rangle$ are ground states of $H_L^{(N)}$ with zero energy.

\begin{defn}
Given a tensor in CF, with BNT $A_j$ ($j=1,\ldots,g$), we say that
(\ref{eq:parent}) is a parent Hamiltonian if the ground state subspace is spanned by $|V^{(N)}(A_j)\rangle$ for all $N>L$.  A commuting parent Hamiltonian fulfills $[\tau_j(P_L),P_L]=0$ for all $j=1,\ldots,L-1$. A nearest-neighbor parent Hamiltonian has $L=2$.
\end{defn}

A direct consequence of Proposition \ref{propblockinj} \cite{David2006} is that for any $A$ one can always find a parent Hamiltonian with $L\le 3D^5$ independent of the tensor $A$ (just of its bond dimension). Thus, if we block spins it is always possible to find a nearest-neighbor parent Hamiltonian. In the following section we will establish a close relation between RFP and the ground states of nearest-neighbor commuting parent Hamiltonians (NNCPH).

\subsection{Main theorems}

We can now state the main results of the section. In Theorem \ref{thm:main-MPS}, we show the announced equivalence between being a RFP,  ZCL, and NNCPH. Theorem \ref{thm:charact-MPS} gives a characterization of the tensors $A$ with the RFP property.

\begin{thm}\label{thm:main-MPS}
The following properties are equivalent for a tensor $A$ in CF generating a family of  MPS.
\begin{description}
\item[(i)] RFP.
\item[(ii)] ZCL.
\item[(iii)] For all $N>2$, $|V^{(N)} (A)\rangle$ is a ground state of a NNCPH.
\end{description}
\end{thm}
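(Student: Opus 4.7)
The plan is to prove the cycle (i) $\Rightarrow$ (ii) $\Rightarrow$ (iii) $\Rightarrow$ (i), using three tools: the transfer-matrix characterisation of ZCL (Theorem~\ref{TheoremZCLPure}), the BNT decomposition of tensors in CF, and the Fundamental Theorem~\ref{thm1}. The implication (i) $\Rightarrow$ (ii) is a direct calculation. Substituting $A^{i_1}A^{i_2}=\sum_j U_{(i_1,i_2),j}A^j$ with $U^\dagger U=\Id$ into $\mathbb{E}^2$, one obtains
\be
\mathbb{E}^2 = \sum_{i_1,i_2}(A^{i_1}A^{i_2})\otimes(\bar A^{i_1}\bar A^{i_2}) = \sum_{j,j'}\Bigl(\sum_{i_1,i_2}U_{(i_1,i_2),j}\bar U_{(i_1,i_2),j'}\Bigr)\,A^j\otimes\bar A^{j'} = \mathbb{E},
\ee
since the isometry condition collapses the inner sum to $\delta_{j,j'}$. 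Theorem~\ref{TheoremZCLPure} then yields ZCL.

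For (ii) $\Rightarrow$ (iii), I work in the BNT decomposition of $A$. Local orthogonality kills the cross terms in $\mathbb{E}$, so $\mathbb{E}^2=\mathbb{E}$ is equivalent to $\mathbb{E}_j^2=\mathbb{E}_j$ for each NT $A_j$. Since each $\mathbb{E}_j$ has a unique eigenvalue of modulus one, idempotence forces $\mathbb{E}_j$ to be a rank-one projector $|\rho_j)(\Id_{D_j}|$, i.e.\ each $A_j$ generates a ``replacer'' channel $X\mapsto\tr(X)\rho_j$. Unpacking the Kraus form of such a channel yields a factorisation of $A_j$ across the virtual bond which, when inserted back into the MPS, makes the two-site support $S_2=\mathrm{span}\{A^{i_1}A^{i_2}\}$ split (through the rank-one bond) into a tensor product of single-site supports. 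This splitting implies directly that the projector $P_2^\perp$ onto $S_2^\perp$ commutes with its translate $\tau(P_2^\perp)$. After blocking sufficiently to reach biCF (Proposition~\ref{propblockinj}), one checks that the ground space of the resulting NNCPH is spanned by the $|V^{(N)}(A_j)\rangle$.

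For (iii) $\Rightarrow$ (i), I use the commutation $[P_2^\perp,\tau(P_2^\perp)]=0$ in reverse: combined with the frustration-free condition and biCF, it forces $S_2$ to lie inside $\mathrm{span}\{A^j\}$ with coefficients that assemble into an isometry. More precisely, the Fundamental Theorem~\ref{thm1} applied to the blocked tensor $B^{(i_1,i_2)}=A^{i_1}A^{i_2}$ and the original $A$ delivers an invertible intertwiner on the bond which, combined with the normalisation imposed by the commuting projectors, is isometric on the physical index. This produces the required $U$ in~(\ref{AA=A}).

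The main obstacle is (ii) $\Rightarrow$ (iii): one must upgrade the purely spectral input $\mathbb{E}^2=\mathbb{E}$ together with BNT orthogonality into an honest tensor factorisation of $S_2$, stable under translation, so that both $P_2^\perp$ and $\tau(P_2^\perp)$ inherit a product structure and commute. Once this factorisation is in hand, (iii) $\Rightarrow$ (i) is largely bookkeeping with the Fundamental Theorem, and the algebraic heart of the argument lies in extracting the local tensor-product structure from the rank-one transfer-matrix data.
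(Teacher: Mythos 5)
Your implications (i) $\Rightarrow$ (ii) and (ii) $\Rightarrow$ (iii) are sound and essentially track the paper: the transfer-matrix computation for (i) $\Rightarrow$ (ii) is exactly the identification of the renormalization step with $\mathcal{E}\mapsto\mathcal{E}^2$ used in \ref{AppendixPure}, and your (ii) $\Rightarrow$ (iii) amounts to re-deriving the structural characterization of Lemma \ref{lem:charact-NT-pure-RFP} and Theorem \ref{thm:charact-MPS} (each $\mathbb{E}_{j,j}$ is a rank-one idempotent, hence $A^i_j=X_j\Lambda_j U^i_j X_j^{-1}$ with $U$ an isometry), from which the commuting two-site projectors follow; note that for a tensor of this form no extra blocking is needed to reach block injectivity at $L=2$, so your appeal to Proposition \ref{propblockinj} is superfluous and, taken literally, would change the Hamiltonian from nearest-neighbor to longer range.

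The genuine gap is in (iii) $\Rightarrow$ (i). The Fundamental Theorem \ref{thm1} only applies to two tensors in CF that generate proportional MPV \emph{for all $N$ on the same physical Hilbert space}, and its conclusion is a gauge transformation $X(\cdot)X^{-1}$ on the \emph{virtual} indices. The blocked tensor $B^{(i_1,i_2)}=A^{i_1}A^{i_2}$ and $A$ live on different physical spaces ($d^2$ versus $d$ per site) and generate vectors on chains of different lengths, so the theorem does not apply to that pair; and even if one could relate them, what (\ref{AA=A}) demands is an isometry on the \emph{physical} index, which is precisely the piece of structure the Fundamental Theorem cannot supply. The actual content needed here is a structure theorem for ground spaces of nearest-neighbor commuting frustration-free Hamiltonians in 1D (the paper invokes \cite{Beigi}, in the spirit of Bravyi--Vyalyi and of the saturation of strong subadditivity): the middle-site Hilbert space splits as $\oplus_k H^{(k)}_l\otimes H^{(k)}_r$ compatibly under translation, forcing every ground state to be of the form $U^{\otimes N}\ket{\varphi}^{\otimes N}$ as in (\ref{eq:basic-vectors-RFP-pure}); since the parent-Hamiltonian property identifies these with a BNT of $A$, Theorem \ref{thm:charact-MPS} then gives RFP. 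Without some version of that decomposition lemma, the commutation $[P_2^\perp,\tau(P_2^\perp)]=0$ alone does not ``assemble into an isometry'' on the physical index, and your proof of this implication does not close.
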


\begin{thm}\label{thm:charact-MPS}
A tensor $A$ in CF is a RFP iff it can be written as
 \be
 \label{III_CFI_RFP}
 A^i = \oplus_{j=1}^g \oplus_{q=1}^{r_j} \mu_{j,q} X_{j,q} \Lambda_j U^i_j X_{j,q}^{-1}
 \ee
where $|\mu_{j,q}|=1$, the matrices $\Lambda_j$ are diagonal, positive, and $\tr(\Lambda_j)=1$, and $U$ is an isometry in the sense that
 \be
 \label{eq:III_isometry}
 \sum_{i=1}^s (U_j^i)_{\alpha,\beta}  ({\bar U}_{j'}^i)_{\alpha',\beta'} = \delta_{j,j'}
 \delta_{\alpha,\alpha'}\delta_{\beta,\beta'}.
 \ee
\end{thm}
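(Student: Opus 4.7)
My plan is to reduce the characterization to the transfer-matrix criterion for zero correlation length: by Theorem~\ref{thm:main-MPS} a CF tensor is an RFP iff it has ZCL, and by Theorem~\ref{TheoremZCLPure} ZCL for a CF tensor is equivalent to $\mathbb{E}^2=\mathbb{E}$ together with local orthogonality of its BNT. The task therefore reduces to showing that a CF tensor meets both of these conditions iff it has the form~(\ref{III_CFI_RFP}).

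For the ``only if'' direction I start from $A^i=\oplus_{j,q}\mu_{j,q}\,X_{j,q}\,A^i_j\,X_{j,q}^{-1}$ with normal BNT $A_j$. Local orthogonality makes every cross-term $\sum_i A^i_j\otimes\bar A^i_{j'}$ with $j\ne j'$ vanish, so the transfer matrix decomposes into a direct sum of block transfer matrices $\mathbb{E}_j$, and idempotency of $\mathbb{E}$ forces each $\mathbb{E}_j$ to be idempotent as well. Being normal, each $A_j$ has a CPM whose only eigenvalue of modulus one is $1$ and is non-degenerate; combined with idempotency this collapses $\mathbb{E}_j$ to a rank-one projector $\mathbb{E}_j=|r_j\rangle\langle\ell_j|$ with $\langle\ell_j|r_j\rangle=1$. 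I then exploit the gauge freedom of a normal tensor in two steps: first use the invertible similarity to bring $A_j$ to left-canonical form so that $\ell_j=\mathrm{vec}(\Id)$, equivalently $\sum_i (A^i_j)^\dagger A^i_j=\Id$, and then use the remaining unitary gauge to diagonalize the right fixed point as a positive diagonal matrix $\Lambda_j$ of the normalization stated in the theorem. In this gauge the rank-one identity becomes
\be
\sum_i (A^i_j)_{\alpha',\alpha}\,(\bar A^i_{j})_{\beta',\beta}=\Lambda_{j,\alpha'}\,\delta_{\alpha',\beta'}\,\delta_{\alpha,\beta},
\ee
which factorizes as $A^i_j=\Lambda_j U^i_j$ with $U^i_j$ satisfying the $j=j'$ part of~(\ref{eq:III_isometry}); the $j\ne j'$ part is precisely LO transported to the $U_j$'s. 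Substituting this back into the CF decomposition yields~(\ref{III_CFI_RFP}), and the magnitudes $|\mu_{j,q}|=1$ are fixed because any $|\mu_{j,q}|<1$ sector would be exponentially suppressed under blocking and hence incompatible with the fixed-point equation~(\ref{AA=A}).

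For the ``if'' direction I plug the ansatz~(\ref{III_CFI_RFP}) directly into the transfer matrix. The isometry identity~(\ref{eq:III_isometry}) immediately kills cross-block terms (giving LO) and reduces each diagonal block of $\mathbb{E}$ to a rank-one operator $(X_{j,q}\otimes\bar X_{j,q})(\Lambda_j\otimes\Lambda_j)|\Id\rangle\langle\Id|(X_{j,q}^{-1}\otimes\bar X_{j,q}^{-1})$; squaring reproduces the same operator once $|\mu_{j,q}|=1$ and the normalization of $\Lambda_j$ are used. Hence $\mathbb{E}^2=\mathbb{E}$, so $A$ has ZCL, and by Theorem~\ref{thm:main-MPS} it is an RFP.

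The main obstacle is the step that upgrades ``rank-one idempotent block transfer matrix on a normal tensor'' to the explicit factorization $A^i_j=\Lambda_j U^i_j$ with $U_j$ an isometry. This requires handling the two layers of gauge freedom of a normal tensor carefully (a general invertible similarity to fix the left fixed point to the identity, followed by the residual unitary gauge to diagonalize the right fixed point) and then recognizing the resulting component identity as precisely the orthonormality statement~(\ref{eq:III_isometry}). Everything else --- dealing with the phases $\mu_{j,q}$, reinserting the $X_{j,q}$ matrices, and reassembling the block-diagonal layout --- is bookkeeping dictated by the canonical form of $A$.
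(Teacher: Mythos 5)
Your proposal is correct and follows essentially the same route as the paper: reduce RFP to ZCL and hence to $\mathbb{E}^2=\mathbb{E}$ via Theorems \ref{thm:main-MPS} and \ref{TheoremZCLPure}, conclude that each normal block's transfer matrix is the rank-one projector $|R_j)(L_j|$, pass to CFII to fix $L_j=\mathrm{vec}(\Id)$ and $R_j=\mathrm{vec}(\Lambda_j)$, and read off the factorization into a positive diagonal matrix times an isometry (the paper phrases this last step as ``any tensor with this transfer matrix is isometry-related to the canonical one,'' while you extract $U_j$ directly from the component identity --- the same fact, and your version inherits the paper's own $\sqrt{\Lambda_j}$-versus-$\Lambda_j$ looseness). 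The reassembly with the gauges $X_{j,q}$ and the unimodularity of the $\mu_{j,q}$ is handled at the same level of rigor as in the paper's proof.
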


We can represent this graphically as
 \be
 \raisebox{-24pt}{\includegraphics[height=7em]{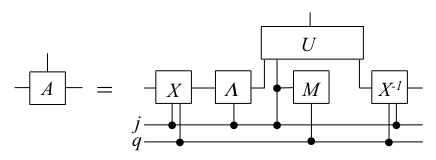}}
 \ee
This expression has the following meaning. There are two indices,
$j,q$, which give rise to the different blocks and are indicated in the graph. The index $j$ enumerates the elements of a BNT, whereas the $q$ gives different representations of the same NT. The tensors $X$, and $M$ depend on the values of those two indices, whereas $\Lambda$ only depends on $j$. The resulting matrices, $X_{j,q}$ and $\Lambda_j$ are multiplied and, together with $X^{-1}_{j,q}$, and the indices $j$ and $q$ give rise to the physical index, after applying the isometry $U$. Note that since the indices $j$ and $q$ connect the physical and virtual indices, up to the isometry, they give rise to a direct sum in both physical and virtual spaces.

The MPS generated by a RFP tensor $A$ in CF thus have the form
 \be
 \label{eq:III_MPSRFP}
 |V^{(N)}(A)\rangle = \sum_{j=1}^g \left(\sum_{q=1}^{r_j} e^{iN\phi_{j,q}}\right) |V^{(N)}(A_j)\rangle,
 \ee
where its basic vectors give
 \be\label{eq:basic-vectors-RFP-pure}
 |V^{(N)}(A_j)\rangle = U^{\otimes N} |\varphi_j\rangle^{\otimes N},
 \ee
with
 \be
 \label{eq:III_varphi}
 |\varphi_j\rangle = \sum_{m_j=1}^{d_j} \lambda_{m_j} |m_j,m_j\rangle.
 \ee
Note that one can understand the state by taking two spins per node, $a_n$ and $b_n$: the state $\varphi_j$ is shared by any two spins in neighboring nodes, $b_{n}$ and $a_{n+1}$, whereas the isometry $U$ acts on the two spins at each node, $a_n$ and $b_n$. The state can be graphically represented as
 \be
  |V^{(N)}(A_j)\rangle = \raisebox{-6pt}{\includegraphics[height=4.0em]{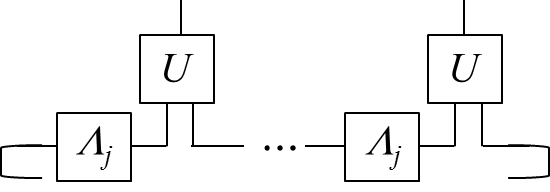}}
 \ee

\begin{cor}
\label{III_cor3}
The elements of a BNT corresponding to a tensor $A$ that is RFP have the form
 \be
 A_j = X_j \Lambda_j U^i_j X_j^{-1}
 \ee
where $U$ is an isometry fulfilling (\ref{eq:III_isometry}).
\end{cor}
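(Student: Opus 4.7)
The plan is to derive the corollary as a direct consequence of Theorem \ref{thm:charact-MPS} together with Proposition \ref{prop:char-BNT}. Since $A$ is a RFP in CF, Theorem \ref{thm:charact-MPS} already provides the explicit decomposition (\ref{III_CFI_RFP}), which exhibits $A^i$ as a direct sum of NT of the type $X_{j,q}\Lambda_j U_j^i X_{j,q}^{-1}$ grouped into $g$ families indexed by $j$. For fixed $j$, all blocks within a family are related to one another by similarity (and phase $\mu_{j,q}$), so they belong to a single equivalence class of NT. For distinct values of $j$, the orthogonality relation (\ref{eq:III_isometry}) forces the corresponding NMPVs $|V^{(N)}(A_j)\rangle$ given by (\ref{eq:basic-vectors-RFP-pure}) to be linearly independent for all sufficiently large $N$, so no two families can be merged. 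Hence a BNT of $A$ has precisely $g$ elements, one per $j$.

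By Proposition \ref{prop:char-BNT}, any BNT element $A_j$ must be similar, up to a phase, to one of the blocks in the $j$-th family: there exist an index $q_j$, a phase $\phi_j$ and an invertible matrix $Y_j$ such that
\begin{equation}
A_j^i \;=\; e^{i\phi_j}\, Y_j \bigl(X_{j,q_j}\Lambda_j U_j^i X_{j,q_j}^{-1}\bigr) Y_j^{-1}.
\end{equation}
Defining $X_j := Y_j X_{j,q_j}$ and absorbing the phase into $U_j^i \mapsto e^{i\phi_j}U_j^i$ immediately yields the desired form $A_j^i = X_j \Lambda_j U_j^i X_j^{-1}$. It only remains to check that the rephased $U_j^i$ still satisfy the isometry condition (\ref{eq:III_isometry}); this is automatic, since for $j=j'$ the factors $e^{i\phi_j}$ and $e^{-i\phi_j}$ cancel, whereas for $j\neq j'$ the right-hand side of (\ref{eq:III_isometry}) vanishes irrespective of the phases.

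I do not expect any genuine obstacle: the argument is essentially a bookkeeping step on top of Theorem \ref{thm:charact-MPS}. The only mildly delicate point is verifying that the ambiguity in choosing a BNT — which Proposition \ref{prop:char-BNT}(ii) fixes only up to similarity and phase — is exactly absorbed by the redefinitions of $X_j$ and $U_j^i$ without spoiling the orthonormality structure encoded in (\ref{eq:III_isometry}). Since those are precisely the degrees of freedom built into the parametrization stated in the corollary, no further work is needed.
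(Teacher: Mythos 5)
Your argument is correct and matches the paper's (implicit) proof: the corollary is read off from Theorem \ref{thm:charact-MPS} --- equivalently, from Lemma \ref{lem:charact-NT-pure-RFP} applied blockwise, since the appendix notes that each BNT element of a RFP is itself a RFP normal tensor --- combined with the uniqueness of the BNT up to phases and gauge transformations (Proposition \ref{prop:char-BNT}). Your verification that the rephasing $U_j^i\mapsto e^{i\phi_j}U_j^i$ preserves (\ref{eq:III_isometry}) is the only delicate point, and you handle it correctly.
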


\subsection{Saturation of the area law}

We finish this section by noticing that there exists another property that is implied by RFP tensors, and which is connected to the entanglement content of the states. We consider the von Neumann entropy of a block of $L$ spins contained in $|V^{(N)}(A)\rangle$. This entropy is defined as
 \be
 S_L^{(N)}(A) = - {\rm tr}\left[\rho_L \log_2(\rho_L)\right]:= S(\rho_L)
 \ee
where $\rho_L$ is the corresponding reduced state of $L$ spins. Since all MPS fulfill the area law, it is upper bounded by a constant. Furthermore, using the strong subadditivity inequality, translationally invariance, and the fact that $S_{(N)}^N=0$, it is simple to show that $S_L^{(N)}$ is an increasing function of $L$ for $N>2L$. Thus, in the limit $N\to \infty$ it will saturate to a constant. This motivates the following definition
 \begin{defn}
 We say that a tensor $A$ {\em saturates the area law} (SAL) if the generated MPS fulfill $S_1^{(N)}(A)=S_2^{(N)}(A)=\ldots S_{N/2}^{(N)}(A)$
 \end{defn}

We can establish a relationship between SAL and RFP as follows (the result is a trivial consequence of the general case of mixed states that we will analyze later on):

\begin{prop}
\label{ZCLandSALpure}
If a tensor $A$ in CF is RFP then it is SAL.
\end{prop}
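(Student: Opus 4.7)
The plan is to exploit the explicit structure of RFP tensors in CF provided by Theorem \ref{thm:charact-MPS}, together with the isometry condition (\ref{eq:III_isometry}), to compute the reduced density matrices on any contiguous block of $L$ sites in closed form and read off $L$-independence directly.

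First, I would rewrite the state via (\ref{eq:III_MPSRFP}) as $|V^{(N)}(A)\rangle = \sum_{j=1}^g c_j^{(N)}\,|V^{(N)}(A_j)\rangle$ with coefficients $c_j^{(N)} = \sum_{q=1}^{r_j} e^{iN\phi_{j,q}}$, and expand each basic vector in the isometry-plus-entangled-pair form (\ref{eq:basic-vectors-RFP-pure}): $|V^{(N)}(A_j)\rangle = U^{\otimes N}|\varphi_j\rangle^{\otimes N}$, where each $|\varphi_j\rangle$ is shared between the right virtual spin of one node and the left virtual spin of the next.

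Second, I would observe that (\ref{eq:III_isometry}) reads exactly $U_j^\dagger U_{j'} = \delta_{j,j'}\Id$, so the physical images of different $j$-blocks live in mutually orthogonal subspaces $H_j\subset H_d$. Applied site by site on any block, this forces the reduced states $\rho_L^{(j)} = \tr_{L^c}|V^{(N)}(A_j)\rangle\langle V^{(N)}(A_j)|$ to be supported on orthogonal subspaces of $H_d^{\otimes L}$, which kills all cross terms in the reduced density matrix (consistent with the LO guaranteed by Theorem \ref{thm:main-MPS}). Setting $p_j = |c_j^{(N)}|^2\,\|V^{(N)}(A_j)\|^2/\|V^{(N)}(A)\|^2$ (depending on $N$ but not on $L$) and writing $\tilde\rho_L^{(j)}$ for the normalized block-$j$ contribution, one therefore obtains an orthogonal convex combination $\rho_L = \sum_j p_j\,\tilde\rho_L^{(j)}$ and hence $S(\rho_L) = H(\{p_j\}) + \sum_j p_j\, S(\tilde\rho_L^{(j)})$, reducing the task to showing that each $S(\tilde\rho_L^{(j)})$ is $L$-independent.

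Third, I would compute $S(\tilde\rho_L^{(j)})$ using (\ref{eq:basic-vectors-RFP-pure}): since $U^{\otimes L}$ is an isometry it preserves entropy, and the problem collapses to the virtual layer, where we have a product of $|\varphi_j\rangle$'s placed on a ring so that each pair is shared between the right virtual spin of one node and the left virtual spin of the next. Tracing out the $N-L$ exterior sites leaves exactly two $|\varphi_j\rangle$ pairs cut at the boundary (each contributing the one-sided reduction $\sigma_j$ of $|\varphi_j\rangle$) while the $L-1$ interior pairs remain pure, so that up to the site-wise isometry $\tilde\rho_L^{(j)}\;\simeq\;\sigma_j \otimes |\varphi_j\rangle\langle\varphi_j|^{\otimes(L-1)}\otimes \sigma_j$, whose entropy is $2S(\sigma_j)$, independent of $L$ (and of $N$ for $N>2L$). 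Assembling everything, $S(\rho_L) = H(\{p_j\}) + \sum_j p_j\cdot 2S(\sigma_j)$ is $L$-independent, which is SAL.

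The only subtle point is the simultaneous use of two orthogonality structures: (a) local orthogonality of different basic vectors, which eliminates cross terms in $\rho_L$, and (b) the single-site isometry property (\ref{eq:III_isometry}) of $U$ within each block, which legitimizes replacing the physical computation by the (trivial) virtual-layer one. Once these are separated, the product form of $|\varphi_j\rangle^{\otimes N}$ makes the $L$-independence manifest, and the statement follows with essentially no further calculation.
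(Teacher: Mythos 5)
Your argument is correct, but it follows a genuinely different route from the paper's. The paper treats Proposition \ref{ZCLandSALpure} as a ``trivial consequence'' of the mixed-state result (Proposition \ref{propsimple} in \ref{AppendixMixed}): there, one combines the general monotonicity $I_L\le I_{L+1}$ of Proposition \ref{PropILILp1} with the reverse inequality $I_{L+1}\le I_L$, obtained by applying the trace-preserving maps ${\cal T}$ and ${\cal S}$ of Definition \ref{RFPMixedTS} locally to the two regions and invoking the fact that mutual information cannot increase under local channels; for a pure state $I_L=2S_L$, so this is SAL. Your proof instead uses the structural characterization of Theorem \ref{thm:charact-MPS} to compute $\rho_L$ explicitly: the mutual orthogonality of the ranges of the $U_j$ encoded in (\ref{eq:III_isometry}) kills the cross terms and yields the block decomposition $S(\rho_L)=H(\{p_j\})+\sum_j p_j S(\tilde\rho_L^{(j)})$, and the entangled-pair form (\ref{eq:basic-vectors-RFP-pure}) reduces each $S(\tilde\rho_L^{(j)})$ to $2S(\sigma_j)$ with exactly two cut bonds for every $1\le L\le N/2$. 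Both proofs are sound; the paper's is shorter, needs only the defining property of an RFP rather than its structure theorem, and extends verbatim to mixed states (where no such explicit product representation is available), whereas yours is more concrete and has the added value of producing the saturated entropy in closed form, $S_L=H(\{p_j\})+2\sum_j p_j S(\sigma_j)$, making the $L$-independence (and its geometric origin in the two boundary bonds) completely explicit.
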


Note that, as it will be clear from the next section when we study mixed states, the converse is not necessarily true. However, in that case we will impose ZCL as well in order to assess the equivalence, at least for some class of tensors. In the present pure case it does not make sense to add this condition, since it is already equivalent to RFP.

\section{Mixed States}\label{Sec:mixed-states}

As in the previous section we consider spins in a 1D chain, but now we focus on mixed states. That is, we consider a tensor, $M$, that generates translationally invariant density operators acting on the spins called matrix product density operators (MPDO) \cite{juanjoMPDO,guifreMPDO}. The tensor $M$ has four indices, two auxiliary (as a MPV) and two physical (as an operator acting on the Hilbert space of the spins). That is, for the MPDO corresponding to $N$ spins we write
 \be
 \label{eq:III_MPDOform}
 \rho^{(N)}(M) =  \raisebox{-8pt}{\includegraphics[height=3.2em]{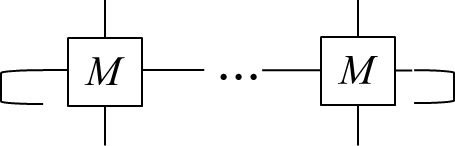}}
 \ee
The arrows indicate that the density operator is acting from bottom to top. Those operators are hermitian and positive semidefinite, ie. $\rho^{(N)}(M)=\rho^{(N)}(M)^\dagger \ge 0$.

In the following, we will define the renormalization fixed points (RFP) of the tensors $M$ generating families of MPDO. We will also analyze their connection with those that have "no length scale", by studing states with ZCL and also that saturate the area law (SAL) for the mutual information. As we will see, those conditions are not equivalent, but together coincide with the definition of RPF for some particular cases. We will characterize the tensors corresponding to ZCL and SAL and, finally, the RPF for the general case.

\subsection{Renormalization flow and renormalization fixed points}

As opposed to the previous section, it is in principle not clear what should be the right definition of RFP in this context, apart from the fact that it should capture the hand-waving intuition of "blocking until there is no length scale left in the system". The problem we face can be explained as follows. If we define a renormalization procedure as we did before, the dimension of the Hilbert space corresponding to one site can grow indefinitely. In the pure state case, due to the area law the relevant part of this space (i.e., the subspace where the state is supported) remains finite (its dimension is bounded by $D^2$), which allows one to describe the fixed points of the procedure within the MPV formalism. In the mixed case, this is not longer true (it is only the space of operators acting on that space which remains finite, but their action can involve the whole Hilbert space). Thus, in general, one should deal with local physical spaces whose Hilbert spaces have infinite dimensions, which takes us out of the MPV description.

A natural way around this problem is to define a renormalization flow that keeps the physical dimensions, very much in the spirit of standard renormalization group. This could be done in terms of a trace-preserving completely positive map (tpCPM), ${\cal S}$, that transforms the state of two spins, $\rho_{12}$ into a state of one, $\sigma_1$, but keeping the Hilbert space of the final spin equal to the original one: ${\cal S}(\rho_{12}) = \sigma_1$. This automatically induces a map in the tensors $M$ generating the MPDO. Defining
 \begin{subequations}
 \bea
 M_2(X)&=& \raisebox{-12pt}{\includegraphics[height=3.0em]{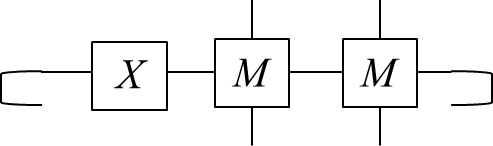}}\\
 N_1(X)&=&\raisebox{-12pt}{\includegraphics[height=3.0em]{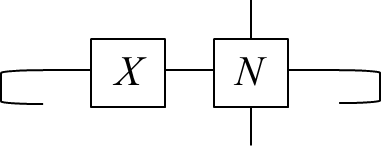}}\\
 \eea
 \end{subequations}
Then, we have ${\cal S}[M_2(X)]=N_1(X)$ for all $X$. The flow, as before, can be obtained by successive application of ${\cal S}$. The fixed point, a tensor $M$, should then satisfy
 \be
 \label{eq:Smap}
 {\cal S}[M_2(X)]=M_1(X).
 \ee
However, notice that the map may truncate part of the information, and thus make the procedure irreversible. In order to restore the reversibility, we will require the existence of another tpCPM, ${\cal T}$ mapping the state of one spin into that of two, i.e. ${\cal T}[M_1(X)]=N_2(X)$, such that for the fixed points it also fulfills
 \be
 \label{eq:Tmap}
 {\cal T}[M_1(X)]=M_2(X)
 \ee
for all $X$. This motivates the following

\begin{defn}
\label{RFPMixedTS}
Given the tensor $M$, in CF, generating a family of MPDOs, we say that it is a RFP if there exist two tpCPM, ${\cal T},{\cal S}$ acting on the physical indices, fulfilling (\ref{eq:Smap}) and (\ref{eq:Tmap}).
\end{defn}

Graphically,
 \be
 \label{TandSforsimple}
  {\includegraphics[height=4.8em]{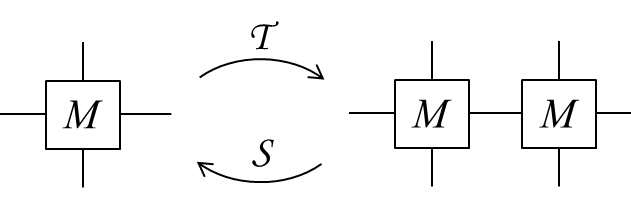}}
 \ee

The first observation is that this definition coincides with Definition \ref{defRFP} in the particular case in which the MPDO is pure, i.e. an MPS. In that case, the tensor $M$ is given by 

 \be
 \label{M-pure-case}
  {\includegraphics[height=8em]{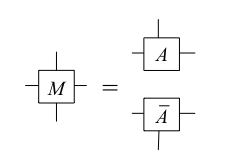}}
 \ee

Definition \ref{defRFP} is equivalent to the existence of a unitary $U$ acting on two sites and a pure state $e$ on one site so that 

\be
 \label{Def-equiv-pure}
  {\includegraphics[height=8em]{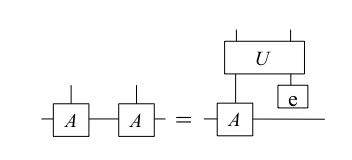}}
 \ee

Then we obtain (\ref{TandSforsimple}) for $M$ if we define ${\cal T}(X)=V(X\otimes |e\rangle\langle e|)V^\dagger$ and ${\cal S}(Y)=\tr_2(V^\dagger Y V)$ where $\tr_2$ traces the second system. Graphically,

\be
 \label{T-and-S-mixed-equiv-pure}
  {\includegraphics[height=11em]{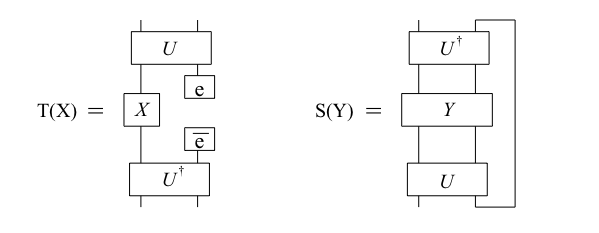}}
 \ee

Definition \ref{RFPMixedTS} appears also very naturally in the context of boundary theories.


\subsection{Boundary theories}\label{Section:Boundary}

Using the same tensor notation as introduced in Section \ref{section:graphical}, one can deal with systems arranged in 2D lattices (wlog the square lattice). If the tensors give rise to pure states, they are called Projected Entangled-Pair States (PEPS). That is, the tensor
\be \label{PEPS-single tensor}
 \raisebox{-12pt}{\includegraphics[height=2.8em]{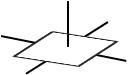}}
\ee
while contracted through a 2D square lattice of size $L_H\times L_V$ with periodic boundary conditions
\be \label{PEPS-square-lattice}
 \raisebox{-12pt}{\includegraphics[height=5.8em]{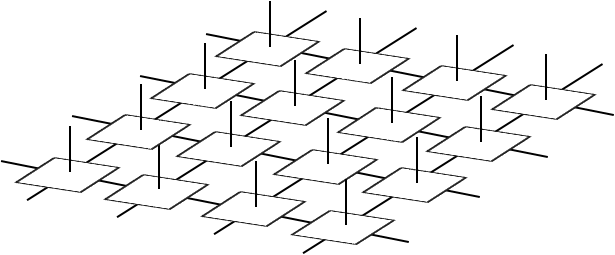}}
\ee
gives rise to a vector in $(\mathbb{C}^d)^{\otimes L_HL_V}$.

PEPS are known to be a good description of ground states of gapped local Hamiltonians and, since any ground state of a frutration-free commuting Hamiltonian is a PEPS, they are expected to represent exactly ground states of gapped RFP quantum phases in 2D.

In \cite{bulkboundary}, motivated by the seminal paper of Haldane and Li \cite{HaldaneLi}, the authors show an holographic correspondence for PEPS: an explicit isometry connecting the PEPS with a 1D mixed state living at the boundary. Interpreting such mixed state as $e^{-H_b}$, gives a 1D Hamiltonian   $H_b$ which is called the {\it  boundary theory} of the PEPS. Moreover, $e^{-H_b}$ is given explicitly in terms of the left and right fixed points of the transfer operator of the PEPS,
 \be \label{PEPS-transfer-operator} \raisebox{-12pt}{\includegraphics[height=5.8em]{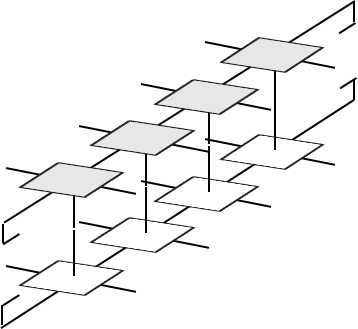}}
 \ee
It is clear from the way one computes expectation values in PEPS that such fixed points encode the correlations present in the PEPS. In an abuse of notation, here we will call any such fixed point a boundary theory.

As in any holographic correspondence, the aim of studying boundary theories is to provide with a dictionary translating relevant properties of the bulk to physically meaningful properties of  $H_b$. In \cite{bulkboundary, bulkboundconj} numerical evidence has been provided that (i) a gapped bulk corresponds to $H_b$ having quasi-local interactions; and (ii) the existence of topological order is reflected in some anomaly present in $H_b$, in the sense that $e^{-H_b}$ is supported only  in a subspace of the total Hilbert space that encodes the topological features of the bulk. The main results of this section can be seen as analytical proofs of such dictionary for the case of RFP.

From the point of view of boundary theories, there is a natural way to define RFP for mixed states, take a PEPS which is already the fixed point of some renormalization procedure and see how such property is transferred to the boundary. In this sense, let us consider a PEPS that is in an RFP, so that  its tensor verifies the following RFP property:
\be \label{PEPS-RFP}
 \raisebox{-12pt}{\includegraphics[height=5.8em]{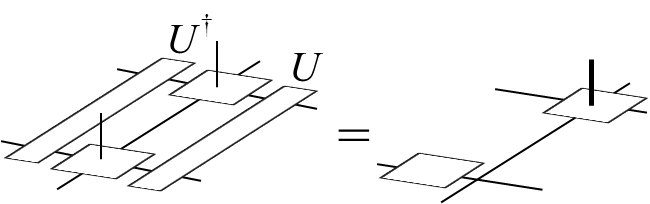}}
\ee
where equality means up to an isometry acting on the physical indices. 
In this case, the tensor associated to the boundary theory can be seen to be (using also that the PEPS is an RFP):
\be
 \raisebox{-12pt}{\includegraphics[height=5.0em]{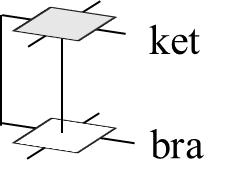}}
\ee
and property (\ref{PEPS-RFP}) maps to
\be
 \raisebox{-50pt}{\includegraphics[height=19.0em]{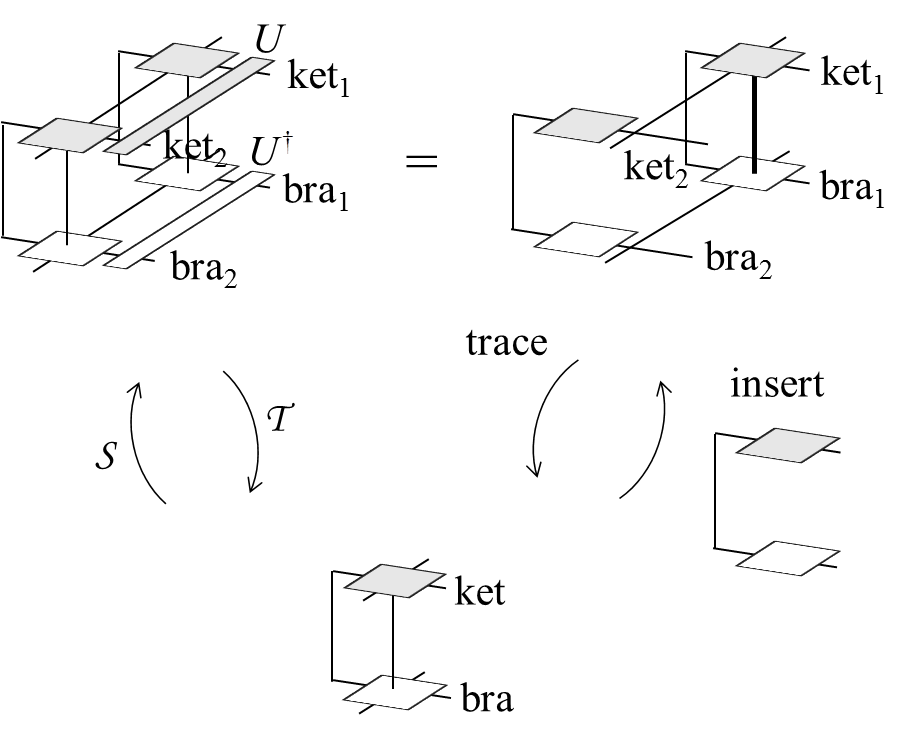}}
\ee
which gives the desired $T$ and $S$ of Definition \ref{RFPMixedTS}.
\subsection{Zero correlation length}

We have seen that in the MPS case ZCL is equivalent to RFP. Since the pure case is a particular case of the mixed one, one may try to see whether the corresponding generalization of ZCL is enough to capture the RFP property.

\begin{defn}
\label{DefinitionZCL}
A tensor $M$ generating MPDO is said to have zero correlation length (ZCL) if
 $$\raisebox{-12pt}{\includegraphics[height=4em]{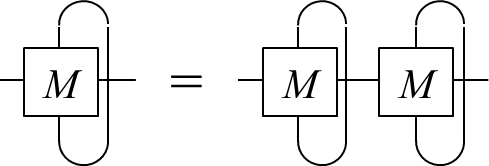}}$$
\end{defn}

This is the natural extension of the pure state case of Theorem \ref{TheoremZCLPure}.
Indeed, it is clear that if we compute correlation functions using a  MPDO generated by a tensor $M$ with ZCL, they will be length-independent.

In order to characterize the tensors fulfilling this property, we will write the tensor $M$ generating MPDOs in terms of another tensor, $A$, generating a purification.\footnote{This may not be always possible, see \cite{Gemma}.} Graphically, we have
  \be
  \label{Psipuri}
  {\includegraphics[height=7em]{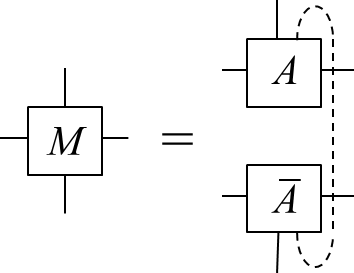}}
 \ee
In terms of the MPDO, the interpretation is that apart from the $N$ spins, we have $N$ ancillary systems, each of them associated to one spin, such that
 \be
 \label{eq:MPDO-Puri-1}
 \rho^{(N)}_p(M)={\rm tr}_a(|\Psi^{(N)}(A)\rangle\langle\Psi^{(N)}(A)|),
 \ee
where the trace is with respect to the ancillary systems, and $\Psi$ is a MPS describing the spins and the ancillas.

Now we can apply the renormalization procedure for pure states to the tensor $A$, which induces a notion of RFP for the tensors corresponding to mixed states:

\begin{defn}\label{def:Puri-RFP}
We call a tensor $M$ a Purification Renormalization Fixed Point (PRFP) if it can be written as (\ref{eq:MPDO-Puri-1}), where $A$ is a RFP.
\end{defn}

Note that the tensor $A$ in (\ref{Psipuri}) will have the form (\ref{III_CFI_RFP}), but now with two physical indices corresponding to the spin and the ancilla. When tracing over the ancilla indices,
this will give rise to a superoperator, ${\cal E}$, which is a tpCPM.
In particular, we can write
 \be
\label{PuriRFP}
 \rho^{(N)} = \sum_{j,j'=1}^g  \sum_{q=1}^{r_j}  \sum_{q'=1}^{r_{j'}} e^{iN(\phi_{j,q}-\phi_{j',q'})} \rho^{(N)}_p(A_j,A_{j'}),
 \ee
where
 \be\label{PuriRFP-BNT}
 \rho^{(N)}_p(A_j,A_{j'})= {\cal E}^{\otimes N} \left[|\varphi_j\rangle\langle \varphi_{j'}|^{\otimes N}\right],
 \ee
and $\varphi$ is given in (\ref{eq:III_varphi}). Note that {\it any} MPDO with a MPS purification can be written in the same way if $\mathcal{E}$ is asked to be only a CPM (not necessarily trace-preserving).

Thus, we can give the following characterization:

\begin{thm}
The following statements are equivalent for a tensor $M$ fulfilling (\ref{eq:MPDO-Puri-1}):
 \begin{description}
\item[(i)] $M$ is a PRFP
\item[(ii)] $M$ has ZCL
\item[(iii)] The density operators generated by $M$ have the form (\ref{PuriRFP})
\end{description}
\end{thm}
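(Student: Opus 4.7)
The plan is to establish the cycle (i)$\Rightarrow$(iii)$\Rightarrow$(ii)$\Rightarrow$(i). The first implication is essentially an unpacking of the pure-state characterization from Theorem \ref{thm:charact-MPS}, the second is a short transfer-matrix calculation exploiting trace preservation, and the third is the substantive step.

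For (i)$\Rightarrow$(iii), let $A$ be the RFP purification furnished by Definition \ref{def:Puri-RFP}. I would apply Theorem \ref{thm:charact-MPS} to $A$ with the combined spin-plus-ancilla index $(i,a)$ as physical index, writing
\begin{equation}
A^{(i,a)} = \bigoplus_{j=1}^{g}\bigoplus_{q=1}^{r_j}\mu_{j,q}\, X_{j,q}\,\Lambda_j\, U_j^{(i,a)}\, X_{j,q}^{-1},
\end{equation}
substitute into the MPS expansion (\ref{eq:III_MPSRFP})--(\ref{eq:basic-vectors-RFP-pure}), and trace out the $N$ ancilla legs in $|V^{(N)}(A)\rangle\langle V^{(N)}(A)|$. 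Since each ancilla is local, this converts each on-site isometry $U_j$ into the tpCPM $\mathcal{E}_j(X)=\tr_a[U_j X U_j^\dagger]$ while leaving the bond states $|\varphi_j\rangle^{\otimes N}$ untouched. The resulting expression is precisely (\ref{PuriRFP})--(\ref{PuriRFP-BNT}), and since the $U_j$ are honest isometries, the $\mathcal{E}_j$ are trace-preserving.

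For (iii)$\Rightarrow$(ii), I would substitute (\ref{PuriRFP-BNT}) into the ZCL picture of Definition \ref{DefinitionZCL}. Contracting two adjacent MPDO tensors at their physical indices amounts to applying $\mathcal{E}$ at each site and tracing one of them; because $\mathcal{E}$ is trace-preserving, $\tr[\mathcal{E}(\cdot)]=\tr(\cdot)$, so the intermediate site collapses and the diagram factorizes in exactly the way ZCL demands.

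The main obstacle is (ii)$\Rightarrow$(i). Starting from any MPS purification $A$ of $M$, Proposition \ref{propblockinj} allows us to block $A$ into biCF without losing generality. The ZCL picture for $M$, via (\ref{Psipuri}), is an identity on $A\otimes\bar A$ after the ancilla indices are traced. I would first use the BNT decomposition (\ref{eq:II_ABasicTensors}) together with the Fundamental Theorem \ref{thm1} to reduce to the case in which $A$ consists of a single normal block; the cross-block contributions are locally orthogonal in the sense of Definition \ref{DefLO} and contribute only the phase factors already present in (\ref{PuriRFP}). For a single BNT block the ZCL equation becomes a condition on the single-site CPM $\mathcal{E}_j(X)=\tr_a[A_j^{(i,a)}\,X\,(A_j^{(i,a)})^\dagger]$ extracted from the purification, and I expect this condition to force $\mathcal{E}_j$ to be trace-preserving. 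A Stinespring dilation of the resulting tp map — equivalently, an ancilla-side gauge transformation that does not affect $M$ — then provides an isometric representative, which combined with Theorem \ref{thm:charact-MPS} upgrades $A$ to an RFP purification. The hardest part will be untangling the two gauge freedoms at play: the auxiliary-index gauges $X_{j,q}$ of the canonical form and the ancilla-side gauge that preserves $M$, and in particular showing that the ZCL identity really forces $\mathcal{E}_j$ to be trace-preserving on the nose rather than merely up to an inner similarity.
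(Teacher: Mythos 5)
Your implications (i)$\Rightarrow$(iii) and (iii)$\Rightarrow$(ii) are fine and coincide with the paper's (largely implicit) argument: the RFP form of the purification tensor, read with the combined spin--ancilla index as physical index, traces down to (\ref{PuriRFP})--(\ref{PuriRFP-BNT}) with $\mathcal{E}$ trace-preserving, and trace preservation makes the contraction of Definition \ref{DefinitionZCL} collapse. The problem is (ii)$\Rightarrow$(i): the decisive step --- that the ZCL identity forces the single-block map $\mathcal{E}_j$ to be trace-preserving up to the allowed gauge --- is only announced (``I expect this condition to force\ldots''), and your closing worry about untangling the two gauge freedoms confirms that you have not closed it. As written, the cycle has a gap precisely at its only nontrivial link.

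The gap closes immediately once you notice what the ZCL condition on $M$ \emph{is}. Writing $M$ as in (\ref{Psipuri}), contracting the bra and ket physical indices of $M$ (which is what Definition \ref{DefinitionZCL} does) produces $\sum_{i,a}A^{(i,a)}\otimes\bar A^{(i,a)}$, i.e.\ exactly the transfer matrix $\mathbb{E}$ of the purification tensor $A$; hence ZCL for $M$ is literally the statement $\mathbb{E}^2=\mathbb{E}$ for $A$. Taking the purification in CF (blocking if necessary), Theorem \ref{TheoremZCLPure} together with the equivalence (i)$\Leftrightarrow$(ii) of Theorem \ref{thm:main-MPS} --- equivalently, Lemma \ref{lem:charact-NT-pure-RFP} applied blockwise --- already states that this is equivalent to $A$ being an RFP, i.e.\ to $M$ being a PRFP. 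There is then nothing to ``force'' about trace preservation: it falls out of the isometry condition (\ref{eq:III_isometry}) in the structural form (\ref{III_CFI_RFP}), and the inner-similarity ambiguity you worry about is exactly the gauge $X_{j,q}$ that (\ref{III_CFI_RFP}) already carries and that cancels upon tracing the ancilla. Your detour through block-injectivity, the Fundamental Theorem and a Stinespring dilation is not wrong in spirit, but it amounts to re-deriving --- without completing --- a statement the paper has already proved in the pure-state section; the intended proof of this theorem is essentially a one-line reduction to those results.
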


Even though it has a nice characterization in terms of zero-correlation length, Definition \ref{def:Puri-RFP} has two main problems. The first one is that it only applies to MPDOs with a purification and, as shown in \cite{Gemma}, this does not need to be the case. The second, and most important, is that it is too weak. As shown in Example \ref{ExZCLnoSAL} below, for mixed states ZCL does not immediately imply that there is no length scale in the system. In particular, it does not imply the saturation of the area law for the mutual information. This is different than for pure states, where the saturation of the area law is implied by ZCL (Proposition \ref{ZCLandSALpure}).

\subsection{Mutual information. Saturation of the area law}
\label{MixedMutual}

Throughout this Section, we will need to consider normalized states in order to talk meaningfully about entropic quantities. As a convention, whenever we consider one such entropic quantity for an MPDO $\rho^{(N)}(M)$, we will be referring to its normalized version 
$$\frac{\rho^{(N)}(M)}{{\rm tr}[\rho^{(N)}(M)]}.$$

If we consider a chain with $N$ spins, the mutual information between a part of the chain of length $L$ and the rest is defined as
 \be
 I_L = S_L + S_{N-L}- S_N
 \ee
where $S_L$ is the von Neumann entropy of the reduced state of $L$ neighboring spins. For pure states, it coincides with (twice) the entanglement entropy of that part of the chain with the rest. It is not difficult to check (see \ref{AppendixMixed}) that it fulfills the following monotonicity property

\begin{prop}
\label{PropILILp1}
For any MPDO and $L<\lfloor N/2\rfloor$, $I_{L}\le I_{L+1}$ and
 \be
 \lim_{L\to\infty} \lim_{N\to\infty} I_L =I_\infty <\infty.
 \ee
\end{prop}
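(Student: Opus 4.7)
The plan is to split the statement into (i) the finite-$N$ monotonicity $I_L\le I_{L+1}$ for $L<\lfloor N/2\rfloor$, and (ii) the existence and finiteness of the iterated limit $I_\infty$. Both parts rely on strong subadditivity (SSA) of the von Neumann entropy and on the structural properties of MPDOs. For (i) I would rewrite
\begin{equation*}
I_{L+1}-I_L \;=\; (S_{L+1}-S_L)\;-\;(S_{N-L}-S_{N-L-1}),
\end{equation*}
where $S_k$ denotes the entropy of the normalized reduced state on $k$ consecutive sites, and reduce the claim to showing that $\Delta_k:=S_{k+1}-S_k$ is non-increasing in $k$, i.e.\ the concavity of $k\mapsto S_k$. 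This concavity follows from SSA $S(ABC)+S(B)\le S(AB)+S(BC)$ applied to three consecutive blocks of sizes $1,k,1$: translation invariance gives $S(AB)=S(BC)=S_{k+1}$, $S(ABC)=S_{k+2}$, $S(B)=S_k$, and hence $S_{k+2}+S_k\le 2S_{k+1}$. The hypothesis $L<\lfloor N/2\rfloor$ is equivalent to $L\le N-L-1$, so iterating concavity yields $\Delta_L\ge\Delta_{N-L-1}$, and therefore $I_{L+1}\ge I_L$.

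For (ii) I would first fix $L$ and send $N\to\infty$. Using the canonical form and the fact that, after sufficient blocking, the transfer matrix of $M$ has a unique eigenvalue of maximal modulus, the reduced state $\rho_L^{(N)}$ converges in the thermodynamic limit to some $\rho_L^{(\infty)}$, while the difference $S_N-S_{N-L}$ converges to $L\,s$, with $s$ the entropy density of the MPDO. Hence $\tilde I_L:=\lim_{N\to\infty} I_L$ exists, and the finite-$N$ monotonicity from (i) carries through the limit to give that $\tilde I_L$ is non-decreasing in $L$.

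Finally, I would prove a uniform upper bound $\tilde I_L\le c(D)$ using the MPDO structure: the vectorization of $\rho^{(N)}$ is itself an MPS of bond dimension $D$ on a chain of physical dimension $d^2$, so its operator Schmidt rank across any cut is at most $D^2$. Combined with standard entropy inequalities, this yields the desired uniform bound, and the monotone bounded sequence $\{\tilde I_L\}_L$ has a finite limit $I_\infty$. The main technical difficulty lies precisely in this uniform bound: for MPDOs admitting an MPS purification of bond dimension $D'$ one immediately has $S_L\le 2\log D'$ and hence $I_L\le 4\log D'$, but as \cite{Gemma} emphasises not every MPDO admits such a purification, forcing the argument to work directly at the operator-Schmidt level.
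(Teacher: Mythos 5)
Your monotonicity argument is correct and is essentially a repackaging of the paper's: the paper applies strong subadditivity once to four consecutive regions of sizes $N-2L-1,\,L,\,1,\,L$ and subtracts $S_N$, whereas you first extract the concavity $S_{k+2}+S_k\le 2S_{k+1}$ from SSA on blocks of sizes $1,k,1$ and then iterate, using $L\le N-L-1$ to compare $\Delta_L$ with $\Delta_{N-L-1}$. Both routes use only SSA and translation invariance, and your bookkeeping ($I_{L+1}-I_L=\Delta_L-\Delta_{N-L-1}$, and $k+2\le N$ throughout the iteration range) checks out.

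The second half of your proposal has a genuine gap, and it sits exactly where you suspected. The paper does not attempt to derive the uniform bound from the operator Schmidt rank; it simply invokes the area law for the mutual information of MPDOs, $I_L\le 4\log D$, proven in \cite{WolfAreaLaw}, and then concludes from monotone boundedness. Your replacement step --- ``the operator Schmidt rank across the two cuts is at most $D^2$, which combined with standard entropy inequalities yields the bound'' --- is not a proof: there is no standard entropy inequality that bounds $I(A:B)$ by the logarithm of the operator Schmidt rank of a mixed state $\rho_{AB}$. The inequality that does exist goes through a \emph{local purification}: if $|\Psi\rangle_{AE_A BE_B}$ is an MPS purification of bond dimension $D'$, then data processing gives $I(A:B)_\rho\le I(AE_A:BE_B)_\Psi=2S(AE_A)\le 4\log D'$ (note that your intermediate claim $S_L\le 2\log D'$ is false --- the maximally mixed state has $D'=1$ and $S_L=L$ --- it is the mutual information, not the block entropy, that is controlled). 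Since, as \cite{Gemma} shows and you yourself point out, a bounded-bond-dimension local purification need not exist, your argument does not close; you need to either cite \cite{WolfAreaLaw} as the paper does or supply an actual proof valid for arbitrary positive MPOs. A smaller point: the existence of the inner limit $\lim_{N\to\infty}I_L$ is asserted in your sketch via convergence of reduced states and of $S_N-S_{N-L}$ to $Ls$; this is plausible after blocking away periodicities but is not proven (the paper is equally silent on it, establishing only monotonicity and the uniform bound).
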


This entails that the mutual information must increase and saturate with $L$, thus providing a length scale to the system. In a RFP one expects then a saturation of the area law as we define now

\begin{defn}\label{def:area-law}
A tensor $M$ generating MPDO fulfilling $I_1=I_2=...$ is said to verify {\em saturation of the area law} (SAL).
\end{defn}

Below we will show examples of tensors which have ZCL but no SAL, nor the converse. In view of that, it makes sense to consider tensors with ZCL and SAL. According to their definition (\ref{def:area-law}), they will have to fulfill $I_L=I_{L+1}$, or, equivalently, $S_L+S_{N-L}=S_{L+1}-S_{N-L-1}$ for $L<\lfloor N/2\rfloor$. Note that this condition involves reduced density operators in which we have traced several spins. Thus, if the MPDO (\ref{eq:III_MPDOform}) contains terms that vanish whenever one traces few spins, then they will not be visible in the conditions we derive. More specifically, let us consider the tensor $M$ generating the MPDO as a MPV where the two indices (bra and ket) corresponding to the action on a spin are taken as physical indices. By assuming that we have blocked enough number of spins, we can write it in CF in terms of some BNT, $M_k$. Let us define
  \be
  {\includegraphics[height=4em]{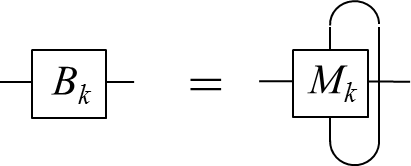}}
 \ee
There may be some nilpotent $B_k$, meaning that ${\rm tr}_R\rho^{(N)}(M_k)=0$, where we have traced $R\le D$ sites, and $N$ is sufficiently large. Thus, whenever we consider the entropy of a reduced state of the MPDO in which we have traced more than $D$ spins, we cannot state anything about the nilpotent terms in the BNT. This motivates the following:

\begin{defn}
A tensor generating MPDO's is simple if none of the elements in a BNT is nilpotent.
\end{defn}

We aim now to provide the equivalent of Theorem \ref{thm:main-MPS} for simple mixed states. In that theorem, three properties were shown to be equivalent for an MPS in CF: \emph{(i)} being a RFP of some renormalization flow, characterized by the fact that there is a unitary (hence reversible) way to create one tensor out of two; \emph{(ii)} the absence of some concrete type of length scales in the system, CID and LO in that case; \emph{(iii)} its parent Hamiltonian being commuting.

The role of ZCL (equivalently CID and LO) will be played now by ZCL and SAL. While we have an analogue of \emph{(i)} through the definition of RFP \ref{RFPMixedTS}, we need to find an analogue of \emph{(iii)}.  A natural choice, which also fits very well if one understands the given MPDO as the boundary theory of a 2D system (see Section \ref{Section:Boundary}), is the fact that the MPDO is of a special form.

\begin{defn}
\label{defrhoNComm}
We say that a density operator of $N$ spins is a Gibbs state of a nearest-neighbor commuting Hamiltionian (GSNNCH), if it can be written as
 \be
 \label{rhoNComm}
 \rho^{(N)}\propto \oplus_{x} n_x e^{-\sum_{j=1}^N \tau_{j}(h^{(x)})}.
 \ee
where $n_x$ are natural numbers and the $h^{(x)}$ acts on the first two spins, $\tau_j$ translates the spins by an amount $j$, and $[h^{(x)},\tau_1(h^{(x)})]=0$.
\end{defn}

The direct sum indicates that the global Hilbert space ${\cal H}= \oplus_x {\cal H}_x$, where
${\cal H}_x=\otimes_n {\cal H}_n^{x}$, and for each $n$, the ${\cal H}_n^{(x)}$ are orthogonal. The density operator (\ref{rhoNComm}) is translationally invariant, where the spin $N+1$ is identified with the first.

In the following we will establish the main result of this subsection. We will consider a simple tensor, $K$, in biCF, with BNT ${\cal K}_j$ and corresponding coefficients $\mu_{j,q}$ [cf (\ref{eq:II_ABasicTensors})]. In order to emphasize that the tensor is simple, we will use $K$ instead of $M$, and $\sigma^{(N)}(K)$ instead of $\rho^{(N)}(K)$.

\begin{thm}\label{thm:main-simple}
Let us call $M$ the tensor obtained by blocking two sites:
\be
 \label{K=MM}
  \raisebox{-12pt}{\includegraphics[height=3.5em]{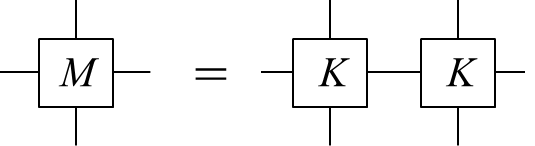}}
\ee

Consider the following statements:

\begin{description}
\item[(i)] $K$ is a RFP.
\item[(ii)] $K$ has ZCL and SAL.
\item[(iii)] The elements of a BNT are supported on different subspaces, i.e.
 \be
 \label{KxKy=0}
  \raisebox{-24pt}{\includegraphics[height=5.5em]{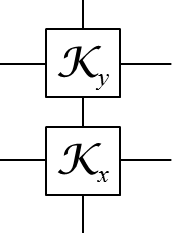}}=0 \quad {\text if} \; x\ne y,
\ee
and can always be chosen such that $\sigma^{(N)}({\cal K}_j)\ge 0$ for all $N$. Furthermore, for each BNT element, ${\cal K}$,
there exists an isometry $U$ such that
\be
 \label{UkU=rl}
  \raisebox{-12pt}{\includegraphics[height=8em]{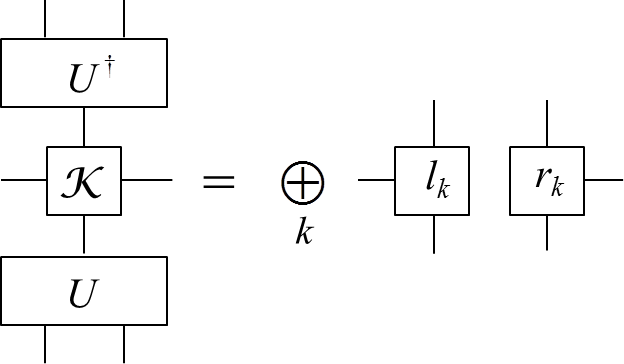}}
\ee
with
\be
 \label{etakhetc}
  \eta_{k,h}=\raisebox{-12pt}{\includegraphics[height=3em]{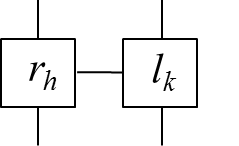}}\ge 0
\ee
and
 \begin{subequations}
 \label{tralktrrk}
 \bea
 {\rm tr}(l_k)&=&|\Phi) a_k,\\
 {\rm tr}(r_k)&=& b_k (\Psi|
 \eea
 \end{subequations}
with
 \begin{subequations}
 \label{PsiPhi}
 \bea
 \label{PsiPhia}
(\Psi|\Phi)&=&1,\\
 \label{PsiPhib}
 \sum_k a_k b_k&=&1.
 \eea
 \end{subequations}
 \item[(iv)] $\sigma^{(N)}(K)$ is a GSNNCH and $K$ has ZCL.
 \item[(v)] $M$ is a RFP.
\end{description}
Then $i \Rightarrow ii \Leftrightarrow iii \Leftrightarrow iv \Rightarrow v$.
\end{thm}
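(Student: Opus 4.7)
My plan is to prove the equivalences $(ii) \Leftrightarrow (iii) \Leftrightarrow (iv)$ first, since these are structural characterizations, and then close with $(i) \Rightarrow (ii)$ and $(iv) \Rightarrow (v)$. The main tools are the biCF structure of $K$ (Proposition \ref{propblockinj}) together with the Fundamental Theorem \ref{thm1}, and the transfer-matrix characterization of ZCL familiar from the pure case (Theorem \ref{TheoremZCLPure}).

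For $(i) \Rightarrow (ii)$: given tpCPMs $\mathcal{T}, \mathcal{S}$ satisfying (\ref{eq:Smap}) and (\ref{eq:Tmap}), iterating $\mathcal{S}$ reduces any multi-site string of tensors to a single site, which yields ZCL in the sense of Definition \ref{DefinitionZCL}. For SAL, $\mathcal{T}$ grows a reduced $L$-block into an $(L+1)$-block while $\mathcal{S}$ undoes this, so applying the data-processing inequality for the mutual information in both directions forces $I_L = I_{L+1}$ for every $L$.

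For $(ii) \Leftrightarrow (iii)$: write $K$ in biCF with BNT $\mathcal{K}_j$ and interpret $K$ as an MPV whose physical index is an operator index. The ZCL condition forces the mixed transfer operator between distinct BNTs to vanish on the stationary subspace, which combined with simplicity and Theorem \ref{thm1} yields the orthogonality (\ref{KxKy=0}). Within a single BNT, ZCL collapses the on-block transfer operator to a rank-one object of the form $|\Phi)(\Psi|$, producing (\ref{tralktrrk}) and (\ref{PsiPhia}) after a Perron--Frobenius normalization. SAL is then layered on top to extract the isometry $U$: since the mutual information is preserved under local reversible operations, SAL forces the $(L+1)$-site reduced state to arise as an isometric dilation of the $L$-site one, which iterated yields the decomposition (\ref{UkU=rl}) with positive operators $l_k, r_k$ satisfying (\ref{etakhetc}) and (\ref{PsiPhib}). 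The converse $(iii) \Rightarrow (ii)$ is a direct diagrammatic check from the explicit form. This implication is the main obstacle of the proof: ZCL alone only produces the transfer-matrix decomposition, and extracting the clean isometric/positive structure demands that SAL and simplicity be combined to rule out nilpotent contributions, which would otherwise escape any reduced-state argument.

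For $(iii) \Leftrightarrow (iv)$: the form (\ref{UkU=rl})--(\ref{PsiPhi}) is precisely what one obtains by simultaneously diagonalizing a commuting nearest-neighbor interaction inside each superselection sector $x$, with $k$ labelling joint eigenvalues and $l_k, r_k$ encoding the Gibbs weights; conversely, the GSNNCH form (\ref{rhoNComm}) combined with the commutativity of $h^{(x)}$ with its translate allows a simultaneous diagonalization that produces a tensor of the form (iii), and ZCL follows automatically from that commutation. Finally, for $(iv) \Rightarrow (v)$: after blocking two sites, the commuting nearest-neighbor interactions become on-site within $M$, so we may take $\mathcal{T}$ to be the Stinespring dilation that inserts a fresh site prepared in the conditional Gibbs weight given the neighbor's eigenvalue label and $\mathcal{S}$ to be the corresponding partial-trace inverse; both maps are tpCPM thanks to (\ref{PsiPhi}), and the identities (\ref{eq:Smap})--(\ref{eq:Tmap}) for $M$ then follow diagrammatically from the explicit form (iii).
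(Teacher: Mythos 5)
Your overall architecture ($i \Rightarrow ii$, then the cycle $ii \Rightarrow iii \Rightarrow iv \Rightarrow ii$, then $v$) matches the paper, and your $i\Rightarrow ii$ argument (trace preservation for ZCL, data processing in both directions for SAL) is exactly the paper's Proposition \ref{propsimple}. The problem is in $ii\Rightarrow iii$, which you correctly identify as the main obstacle but whose mechanism you misattribute. First, the support orthogonality (\ref{KxKy=0}) does not come from ZCL: in the mixed setting ZCL (Definition \ref{DefinitionZCL}) is only idempotence of the physically-traced transfer matrix and says nothing about the physical supports of distinct BNT elements. In the paper this orthogonality is a consequence of SAL: $I_1=I_2$ saturates strong subadditivity, and the structure theorem of Hayden, Jozsa, Petz and Winter \cite{Hay03} then decomposes the middle site's Hilbert space as $\oplus_k H_{b_1}^{(k)}\otimes H_{b_2}^{(k)}$ (Lemma \ref{Lsigma3}); the block-injectivity inverse $K^{-1}$ is then applied to the outer sites of $\sigma_3^{(N)}$ to push this decomposition from the state down to the tensor, and only at that stage does one show that each block $k$ is tied to a unique BNT element $j$, which is what gives (\ref{KxKy=0}) and (\ref{UkU=rl}). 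Your sketch never invokes the saturation-of-strong-subadditivity structure theorem (your substitute, ``SAL forces the $(L+1)$-site reduced state to arise as an isometric dilation of the $L$-site one,'' is not a theorem and does not obviously produce the direct-sum-of-tensor-products structure), and it never actually uses the biCF inverse, which is the step that converts information about reduced density operators into information about the tensor ${\cal K}$ itself.

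Second, your ordering is internally inconsistent: the objects $l_k$, $r_k$ and the matrix $T_{k,h}={\rm tr}(\eta_{k,h})$ are only defined once the SAL decomposition is in place, so ``ZCL collapses the on-block transfer operator to a rank-one object producing (\ref{tralktrrk})'' cannot precede the SAL step. The correct order (Lemmas \ref{propSN} and \ref{SALZCL} of the paper) is: SAL plus injectivity gives the decomposition with $\eta_{k,h}\ge 0$ and shows that $T$ is primitive (primitivity itself requires injectivity to exclude several irreducible blocks); only then does ZCL force ${\rm tr}(T^N)={\rm tr}(T)$ for all $N$, hence $T$ of rank one, $T_{k,h}=a_k b_h$, which is (\ref{tralktrrk})--(\ref{PsiPhi}). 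The remaining pieces of your plan ($iii\Leftrightarrow iv$ by taking logarithms of the full-rank $\eta_{k,h}$ and by the commuting-Hamiltonian structure theorem, and the explicit construction of ${\cal T},{\cal S}$ for $v$) are in the right spirit, but as written the central implication has a genuine gap.
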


The proof will be given in \ref{AppendixMixed}. We finish this section giving two examples:

\begin{ex}
\label{ExZCLnoSAL}
(state with ZCL but no SAL):
Let us consider the state of four spins, each of them composed of two qubits. We identify by $(n,l)$ and $(n,r)$ the two qubits corresponding to the $n$-th spin (one to the left and the other to the right). Each qubit is maximally entangled with its neighbor. That is, $nr$ and $(n+1)l$ are in the state $(|0,0\rangle+|1,1\rangle)/\sqrt{2}$, where the spin $5$ is identified with the first. We apply to each spin the tpCPM
 \be
 {\cal E}_n(X)= p \sigma^{(n,l)}_x\otimes\sigma^{(n,l)}_x X \sigma^{(n,l)}_x\otimes\sigma^{(n,l)}_x + (1-p) X.
 \ee
It is simple to show that $I_2>I_1$ for any $p\ne 0,1/2$. For instance, taking $p=0.25$, one can see that $S_1=2$, $S_2=2.9544$, $S_3=3.8802$, and $S_4= 2.7839$, where $S_i$ is the entropy of $i$ neighboring spins. Thus, $I_1=3.0963$ and $I_2= 3.1250$ and it does not saturate the area law. Furthermore, it has obviously ZCL since it fulfills (\ref{PuriRFP}).
\end{ex}

\begin{ex}
\label{ExSALnoZCL}
(state with SAL but no ZCL):
We just have to take the same configuration as in the previous example, with $N$ spins, each of them composed of two qubits and define
 \be
  \sigma\propto\sum_{k_1,\ldots,k_N=0}^1 \otimes_{n=1}^N \eta_{k_n,k_{n+1}}
  \ee
where $\eta_{k_n,k_{n+1}}$ acts on the qubits $(n,r)$ and $((n+1),l)$ (identifying $N+1$ with 1). We take
 \begin{subequations}
 \begin{eqnarray}
 \eta_{0,0}&=&|0\rangle\langle 0| \otimes |0\rangle\langle 0| ,\\
 \eta_{0,1}&=&\frac{1}{2}|0\rangle\langle 0| \otimes |1\rangle\langle 1| ,\\
 \eta_{1,0}&=&\frac{1}{2}|1\rangle\langle 1| \otimes |0\rangle\langle 0| ,\\
 \eta_{1,1}&=&|1\rangle\langle 1| \otimes |1\rangle\langle 1|
 \end{eqnarray}
It is easy to show that this state fulfills the criterion of SAL. However, it does not have ZCL. A similar example for pure states also shows that in the case of MPS, saturation of the entanglement entropy does not suffice to characterize RFP.
 \end{subequations}
\end{ex}

\subsection{General case}

We have seen in the previous section that the RFP tensors generating MPDO can be fully characterized in terms SAL and ZCL for simple tensors. Those are the ones in which, when we trace one or more spins, we do not lose information about the tensor. Since, as shown in Theorem \ref{thm:main-simple}, the MPDO they generate are basically Gibbs states of commuting nearest neighbor Hamiltonians, they are, in some sense, trivial. For instance, they do not include the state considered in the following

\begin{ex}
Let us consider a tensor, $M$, with $d=D=2$ and all components equal to zero except for: $1=M_{00}^{00}=M_{00}^{11}=M_{11}^{00}=-M_{11}^{11}$. The corresponding MPDO is
 \be
 \rho^{(N)}(M)= \Id^{\otimes N} + \sigma_z^{\otimes N},
 \ee
where $\sigma_z=|0\rangle\langle 0| - |1\rangle\langle 1|$. This is precisely the boundary state corresponding to the toric code \cite{Toric,bulkboundconj} (Section \ref{Section:Boundary}). If we trace one spin, the reduced state becomes proportional to the identity, i.e. one loses all the information about the $\sigma_z$ part. Note that this tensor is SAL and has ZCL. However, it cannot be expressed in the form given by (\ref{rhoNComm}).
They nevertheless are RFP since one can easily find ${\cal S}$ and ${\cal T}$ [see (\ref{RFPMixedTS})].
\end{ex}

In this section we study RFP tensors that are general, and thus not necessarily simple. As we will see, their characterization will give rise to non-trivial solutions, which connect with unitary fusion categories and topologically ordered models in 2D.

In order to characterize the RFP property as defined in  (\ref{TandSforsimple}) for  tensors $M$ generating MPDO,  we have to use the CF of $M$ but in the vertical direction. That is, we will consider the tensor as generating MPV by concatenating it vertically, and thus by exchanging the notion of physical and virtual indices. We have

\begin{prop}\label{Prop:IV.12}
A tensor $\tilde M$ in CF, generating MPDO, is also in CF vertically. Moreover, there exists an isometry $U$ such that
 \be
 \label{UMU}
 U \tilde M U^\dagger = \bigoplus_\alpha \mu_\alpha \otimes M_\alpha,
 \ee
where $\mu_\alpha$ are diagonal and positive matrices, and $\{M_\alpha\}_\alpha$ is a BNT.
\end{prop}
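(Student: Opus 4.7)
The plan is to start from the horizontal CF of $\tilde M$ and then apply the general MPV machinery of Section~\ref{Sec:MPV} in the vertical direction, using the hermiticity and positivity of the generated MPDO to tighten the resulting gauge.

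First, I would view $\tilde M$ as an MPV in the vertical direction: the pair of ket/bra indices $(i,i')$ now plays the role of the virtual index (of dimension $d^2$), while the pair of horizontal indices $(\alpha,\beta)$ plays the role of the physical index (of dimension $D^2$). By Propositions~\ref{prop:char-BNT} and~\ref{propblockinj} applied in this direction, after sufficient vertical blocking $\tilde M$ admits a BNT decomposition
\be
\tilde M \;=\; X\left[\bigoplus_\alpha \bigl(N_\alpha\otimes M_\alpha\bigr)\right]X^{-1}
\ee
for some vertically normal tensors $M_\alpha$, multiplicity matrices $N_\alpha$ (a priori only diagonalisable), and an invertible gauge $X$ acting on the vertical physical space $H_d\otimes H_d$. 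This already delivers the first half of the statement, that $\tilde M$ is in CF vertically.

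Next I would use hermiticity $\rho^{(N)}(\tilde M)=\rho^{(N)}(\tilde M)^\dagger$ for every $N$ to replace $X$ by an isometry $U$. Hermiticity is equivalent to the statement that $\tilde M$ and its "vertical flip--conjugate" (obtained by swapping $i\leftrightarrow i'$ and taking the complex conjugate) generate the same family of MPDO. Applying the Fundamental Theorem~\ref{thm1} to this identification yields a gauge relating $\tilde M$ to its Hermitian conjugate; polar-decomposing that intertwiner and absorbing the positive part into $X$ produces a Hermitian positive operator whose spectral decomposition can be used to define an isometry $U$ on the vertical physical space. Conjugating by $U$ then puts $\tilde M$ into manifest block-diagonal form $U\tilde M U^\dagger=\bigoplus_\alpha N_\alpha\otimes M_\alpha$.

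Finally, positivity $\rho^{(N)}(\tilde M)\ge 0$ forces each $N_\alpha$ to be similar to a diagonal positive matrix. Because the direct sum after conjugation by $U$ is orthogonal on the vertical physical Hilbert space, each summand $N_\alpha\otimes M_\alpha$ contributes a separate block to $\rho^{(N)}$, so positivity holds within each block. Together with Corollary~\ref{II_cor2} this restricts $N_\alpha$ to a positive diagonalisable operator; the remaining similarity can be absorbed into $M_\alpha$ (which stays a normal tensor up to a horizontal gauge) and into $U$, leaving $N_\alpha=\mu_\alpha$ diagonal and positive as required.

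The hardest step will be the second one, upgrading the generic invertible gauge $X$ to the isometry $U$. The subtlety is that Hermitian conjugation may permute the vertical BNT classes, sending some $M_\alpha$ to a tensor equivalent to $M_{\alpha'}$ with $\alpha'\neq\alpha$; those conjugate pairs must be grouped together before $U$ can be written as an isometry, and the grouping has to be compatible with the horizontal CF structure of $\tilde M$ already in place. This bookkeeping, together with promoting the reality of the multiplicities $\mu_\alpha$ to strict positivity using the semidefiniteness of $\rho^{(N)}$ at arbitrarily large $N$, is where the technical effort of the proof will concentrate.
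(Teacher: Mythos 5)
There is a genuine gap, and it sits at the very first step. Proposition \ref{Prop:IV.12} asserts that the \emph{unblocked} tensor $\tilde M$ is already in CF in the vertical direction; your plan to obtain a vertical CF ``after sufficient vertical blocking'' proves nothing about $\tilde M$ itself (any tensor is in CF after blocking, by Proposition 2.4), and vertical blocking replaces $\tilde M$ by a different tensor whose single--site operator content is not the one appearing in (\ref{UMU}). The content of the proposition is precisely that the \emph{horizontal} CF together with hermiticity and positivity of the horizontally generated family $H^{(N)}(\tilde M)=\rho^{(N)}(\tilde M)$ forces the vertical CF with no blocking. The paper achieves this through a transfer lemma (Lemma \ref{Lemma-L}): for a tensor in horizontal CF, if two single--site operators $Y,Z$ satisfy $Y_1V^{(N)}=Z_1V^{(N)}$ for all $N$, then $YA=ZA$ at the level of the tensor. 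With it, a vertical invariant subspace $P\tilde M=P\tilde MP$ gives $P_1H^{(N)}=P_1H^{(N)}P_1=(P_1H^{(N)}P_1)^\dagger=H^{(N)}P_1$ by hermiticity, whence $P\tilde M=\tilde MP$ and $P^\perp\tilde MP=0$; and positivity of $H^{(N)}$ excludes vertical $p$--periodic vectors because $Q[H^{(N)}]^p=[H^{(N)}]^pQ$ forces $QH^{(N)}=H^{(N)}Q$ for a positive semidefinite operator. Your proposal contains no mechanism playing the role of Lemma \ref{Lemma-L}, and it does not address the $p$--periodicity issue at all.

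The second step has the same defect in a different guise. Hermiticity of $\rho^{(N)}$ identifies $\tilde M$ with its flip--conjugate as generators of the same \emph{horizontal} family, so the Fundamental Theorem \ref{thm1} applied there yields an intertwiner acting on the horizontal virtual space $\mathbb{C}^D$; the gauge you need to upgrade to an isometry is the \emph{vertical} one, acting on the physical space $\mathbb{C}^d$, and the horizontal intertwiner does not constrain it. (Applying Theorem \ref{thm1} vertically instead would require knowing that the vertical MPV families of $\tilde M$ and its conjugate coincide, which is not what hermiticity of $\rho^{(N)}$ says.) In the paper the isometry is extracted site--locally: writing the vertical CF as $\oplus_{\alpha,k}\mu_{\alpha,k}X_{\alpha,k}M_\alpha X_{\alpha,k}^{-1}$, self--adjointness of $P_{\alpha,k}H^{(N)}P_{\alpha,k}$ combined once more with Lemma \ref{Lemma-L} and normality of $M_\alpha$ forces $X_{\alpha,k}^\dagger X_{\alpha,k}=\omega_{\alpha,k}X_{\alpha,1}^\dagger X_{\alpha,1}$, after which one normalizes to unitaries; positivity of the $\mu_{\alpha,k}$ follows from $\mathrm{tr}(P_{\alpha,k}H^{(N)}P_{\alpha,k})=\mu_{\alpha,k}d_\alpha>0$. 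Your concluding worry about conjugation permuting vertical BNT classes is a symptom of having taken the wrong route: in the paper's argument no such bookkeeping arises.
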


In graphical form, (\ref{UMU}) simply means
 \be
 \label{graphUMU}
 \raisebox{-12pt}{\includegraphics[height=8em]{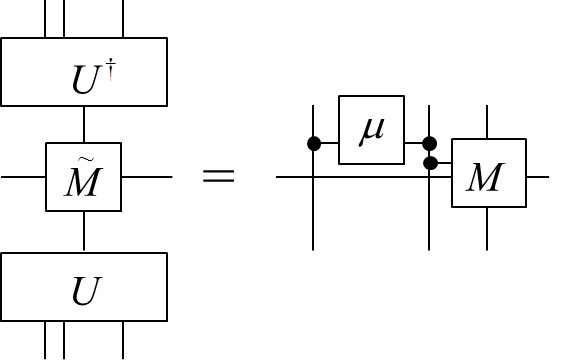}}
 \ee
In what follows, we will ignore the isometry, as it just changes the physical basis on each site.

Given a tensor $M$ we define the operator $O_L(M)={\rm tr}(MM...M)$, where $M$ is multiplied $L$ times.
Graphically, we have
 \be
 O_L(M)= \raisebox{-50pt}{\includegraphics[height=10em]{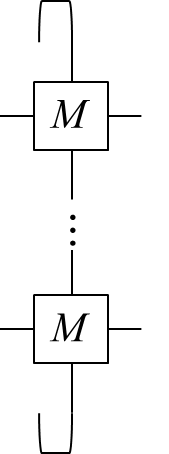}}
 \ee
We also consider the corresponding operators for the BNT of M, $O_L(M_\alpha)$.

Now, we can give the characterization of  RFP, which is the main result of this paper.

\begin{thm}\label{thm:IV.13}
Given a tensor $M$ in CF that generates MPDO, the following statements are equivalent:
 \begin{itemize}
 \item[(i)] $M$ is a RFP.
 \item[(ii)] There exists a set of diagonal matrices, $\chi_{\alpha,\beta,\gamma}$, with positive elements, such that for each $L$ the operators $O_L(M_\alpha)$ linearly span an algebra with structure coefficients $c_{\alpha,\beta,\gamma}^{(L)}={\rm tr}(\chi^L_{\alpha,\beta,\gamma})$:
 \be
 \label{eq:algebra}O_L(M_\alpha)O_L(M_\beta)=\sum_\gamma c_{\alpha,\beta,\gamma}^{(L)}O_L(M_\gamma),
 \ee
 and
  \be
 \label{idempotent}
 m_\gamma = \sum_{\alpha,\beta} c^{(1)}_{\alpha,\beta,\gamma} m_\alpha m_\beta
 \ee
 where $m_\alpha={\rm tr}(\mu_\alpha)$. That is, the vector $(m_\alpha)_\alpha$ is an idempotent for the ``multiplication'' induced by $c^{(1)}$.
 \item[(iii)] There exist isometries, $U_{\alpha,\beta}$, such that
 \be
 \label{Ualphabeta}
 U_{\alpha,\beta} M_\alpha M_\beta U_{\alpha,\beta}^\dagger = \oplus_\gamma \chi_{\alpha,\beta,\gamma} \otimes M_\gamma,
 \ee
where $\chi_{\alpha,\beta,\gamma}$ are the same as in the previous statement, and (\ref{idempotent}) is fulfilled.
 \end{itemize}
\end{thm}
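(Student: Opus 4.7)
The plan is to establish the three equivalences via the cycle (iii) $\Leftrightarrow$ (ii) and (iii) $\Leftrightarrow$ (i). Throughout, the vertical canonical form from Proposition \ref{Prop:IV.12} is the backbone: write $M = \bigoplus_\alpha \mu_\alpha \otimes M_\alpha$ (up to the physical-index isometry), where $\{M_\alpha\}$ is the vertical BNT and each $\mu_\alpha$ is positive diagonal. The statement (iii) should be read as the assertion that this BNT is closed under vertical concatenation, with the canonical form of each product $M_\alpha M_\beta$ having BNT drawn from $\{M_\gamma\}$ and multiplicities encoded in the diagonal positive $\chi_{\alpha,\beta,\gamma}$, while (\ref{idempotent}) ensures the overall weights $m_\alpha$ are preserved under stacking.

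For (iii) $\Leftrightarrow$ (ii): the forward direction is a direct computation, applying $O_L$ to (\ref{Ualphabeta}). Vertical concatenation becomes operator composition so $O_L(M_\alpha M_\beta) = O_L(M_\alpha)\, O_L(M_\beta)$, while the direct sum on the right yields $\sum_\gamma \mathrm{tr}(\chi_{\alpha,\beta,\gamma}^L)\, O_L(M_\gamma)$; the physical isometries $U_{\alpha,\beta}$ amount to a local change of basis that does not affect the operator identity, and (\ref{idempotent}) follows by matching overall multiplicities at $L=1$. For the converse, note that (ii) asserts equality of $O_L$ for $M_\alpha M_\beta$ and for any tensor $\bigoplus_\gamma D_{\alpha,\beta,\gamma} \otimes M_\gamma$ whose diagonal multiplicities satisfy $\mathrm{tr}(D_{\alpha,\beta,\gamma}^L) = c^{(L)}_{\alpha,\beta,\gamma}$; Corollary \ref{II_cor2} (the Fundamental Theorem of MPV applied to these as horizontal MPVs in the physical index) provides a gauge transformation relating the two, and positivity and diagonality of $\chi$ are enforced by hermiticity and positivity of the underlying MPDO together with the fact that the vertical CF already produces positive diagonal multiplicities.

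The direction (iii) $\Rightarrow$ (i) is an explicit construction of $\mathcal{S}$ and $\mathcal{T}$. Stacking two copies of $M$ gives $\bigoplus_{\alpha,\beta} (\mu_\alpha \otimes \mu_\beta) \otimes M_\alpha M_\beta$, which by (\ref{Ualphabeta}) is equivalent through the isometries $U_{\alpha,\beta}$ to $\bigoplus_\gamma \bigl(\bigoplus_{\alpha,\beta} \mu_\alpha \otimes \mu_\beta \otimes \chi_{\alpha,\beta,\gamma}\bigr) \otimes M_\gamma$. Define $\mathcal{S}$ as this physical isometry followed by a normalized partial trace on the $(\alpha,\beta)$ auxiliary labels that collapses each $\gamma$-block to $\mu_\gamma \otimes M_\gamma$; the rescaling is well-defined precisely because of the idempotent condition (\ref{idempotent}). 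Let $\mathcal{T}$ be the dual construction: append an ancilla in the state with weights $\mu_\alpha \mu_\beta \chi_{\alpha,\beta,\gamma}/\mu_\gamma$ and apply the adjoint isometry. Both maps are trace-preserving and completely positive by construction, and the fixed-point equations (\ref{eq:Smap})--(\ref{eq:Tmap}) follow immediately.

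The main obstacle is (i) $\Rightarrow$ (iii). From $\mathcal{S} \circ M_2 = M_1$ and $\mathcal{T} \circ M_1 = M_2$ one deduces that $\mathcal{T} \circ \mathcal{S}$ fixes every operator in the image $\{M_2(X)\}_X$; since $\mathcal{T} \circ \mathcal{S}$ is a tpCPM and fixed points of tpCPMs form an operator system on which the map restricts to the identity, Kadison's Schwarz inequality forces $\mathcal{S}$ to behave isometrically on this system. Hence $M_2$ and $M_1$ generate the same families of MPDOs up to a physical-index isometry, and Proposition \ref{Prop:IV.12} implies they share the same vertical BNT up to gauge. Since the vertical BNT of $M_2$ is built from the canonical forms of $M_\alpha M_\beta$, this sharing is precisely the content of (\ref{Ualphabeta}); diagonality and positivity of $\chi_{\alpha,\beta,\gamma}$ are inherited from the vertical CF together with positivity of the MPDO, and (\ref{idempotent}) is the book-keeping identity matching the multiplicities $m_\gamma$ on both sides of $\mathcal{S}[M_2(X)] = M_1(X)$.
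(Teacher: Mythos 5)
Your overall architecture (the cycle through (iii)) is reasonable, and the directions (iii)$\Rightarrow$(ii) and (iii)$\Rightarrow$(i) match the paper's in substance: the first is the direct computation $O_L(M_\alpha)O_L(M_\beta)=O_L(M_\alpha M_\beta)$, and the second is the explicit construction of ${\cal S}$ and ${\cal T}$ out of partial traces, tensoring with the states $\nu_\gamma/m_\gamma$, and the isometries, with (\ref{idempotent}) guaranteeing trace preservation. However, your treatment of (i)$\Rightarrow$(iii) has a genuine gap. From ${\cal S}[M_2(X)]=M_1(X)$ and ${\cal T}[M_1(X)]=M_2(X)$ you only learn that the tpCPM ${\cal T}\circ{\cal S}$ fixes the \emph{linear span} of $\{M_2(X)\}_X$, which is a proper subspace of the relevant virtual algebra ${\cal A}_1=\oplus_\alpha{\cal M}_{d_\alpha\times d_\alpha}$. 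A channel can fix a proper subspace without being reversible on the whole algebra, so Kadison--Schwarz applied to that subspace does not yet force ${\cal S}$ and ${\cal T}$ to be (blockwise) unitary conjugations, nor does it give you "the same families up to a physical-index isometry." The paper closes exactly this hole: it compresses ${\cal T},{\cal S}$ to maps $\tilde T,\tilde S$ between ${\cal A}_1$ and ${\cal A}_2$, invokes the structure theorem for fixed-point sets of tpCPMs (${\cal F}=U[0_r\oplus_k\rho_k\otimes{\cal M}_{h_k\times h_k}]U^\dagger$), and then uses block injectivity (Proposition \ref{propblockinj}) to show that products of $L$ elements of the fixed subspace span all of ${\cal A}_1$, which together with the special form of $C_L({\cal F})$ forces ${\cal F}={\cal A}_1$. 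Only then is $\tilde S\tilde T=\mathrm{id}$ on the full algebra, and only then does the inverse-channel result yield $\tilde T(\oplus_\alpha X_\alpha)=\oplus_\alpha U_\alpha X_\alpha U_\alpha^\dagger$, from which (\ref{Ualphabeta}) follows. This block-injectivity step is the heart of the hard direction and is absent from your sketch.

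A secondary, smaller issue is in your (ii)$\Rightarrow$(iii): Corollary \ref{II_cor2} requires both tensors to be in CF, and $M_\alpha M_\beta$ (the vertically stacked tensor) is not in CF a priori; moreover $O_L(M_\alpha)O_L(M_\beta)$ need not be positive or Hermitian, so Proposition \ref{Prop:IV.12} cannot be applied to it directly to upgrade the virtual gauge to a physical isometry with positive diagonal $\chi$. The paper sidesteps this by working with the full stacked tensor $MM$ (whose generated operators are squares of MPDOs, hence positive), extracting its vertical BNT $\{A_\gamma\}$, identifying $A_\gamma$ with $M_\gamma$ via Theorem \ref{thm1} and Proposition \ref{Prop:IV.12}, and then matching the spectra of $\nu_\gamma$ against $\{\mu_{\alpha,k_1}\mu_{\beta,k_2}\chi_{\alpha,\beta,\gamma,k_3}\}$ through Lemma \ref{Lem:app_simple}; note also that the paper actually proves (ii)$\Rightarrow$(i) directly rather than (ii)$\Rightarrow$(iii). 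You should either follow that route or justify why the product tensor can be brought into a form where your invocation of the Fundamental Theorem and of Proposition \ref{Prop:IV.12} is legitimate.
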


This result is very appealing from the perspective of 2D topologically ordered models and boundary theories (Section \ref{Section:Boundary}), since it has been recently proven in \cite{FrankStrings} that starting from (\ref{eq:algebra}) and (\ref{idempotent}), for the particular case of $c_{\alpha,\beta,\gamma}^{(L)}\in \mathbb{N}$ and independent of $L$,  one recovers in a direct and elegant way all known non-chiral topologically ordered phases together with a description of their excitations. The RFP MPDO we started with correspond exactly with the boundary theories of the RFP representatives of such topological phases \cite{FrankStrings}. This connection also gives immediately a large family of non-trivial RFP MPDO. We leave as an interesting open question whether there are more examples than those. More concretely, whether there exist RFP $M$ for which $c_{\alpha,\beta,\gamma}^{(L)}$ depends on $L$.

As showed in \cite{FrankStrings}, the associativity of the multiplication in (\ref{eq:algebra}) imposes restrictions on the tensor $X=(\chi_{k,\alpha,\beta,\gamma})_{\alpha,\beta,\gamma,k}$ that, by (\ref{Ualphabeta}), give a pentagon-like equation for the $U_{\alpha,\beta}$. The solutions of such equation using F-symbols of unitary fusion categories give the family of RFP MPDO commented above: boundary theories of Levin-Wen string net models. In this sense, the {\it topological content} of a RFP MPDO lies in the unitaries $U_{\alpha,\beta}$ of (\ref{Ualphabeta}).

Now, by applying equation (\ref{Ualphabeta}) recursively, one may write any RFP MPDO given by a tensor $M$ as locally unitary equivalent --via the unitaries $U$ in (\ref{UMU})-- to $(\mu^{\otimes N})\tilde{U}Q\tilde{U}^\dagger$, where $\mu=\sum_\alpha \mu_\alpha \otimes \ket{\alpha}\bra{\alpha}$ is the tensor defined in (\ref{graphUMU}); both $\mu^{\otimes N}$ and $\tilde{U}Q\tilde{U}^\dagger$ are completed by $\Id$ in the spaces in which they do not act;
\be
[\mu^{\otimes N}, \tilde{U}Q\tilde{U}^\dagger]=0\; ;
\ee

\be
 \label{Projector-Q}
 Q=\quad\quad  \raisebox{-30pt}{\includegraphics[height=6em]{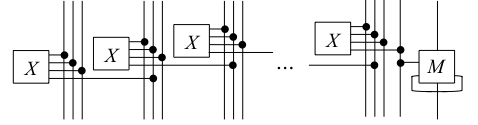}}\quad ,
\ee
with each $X=(\chi_{\alpha,\beta,\gamma, k})_{k,\alpha,\beta,\gamma}$ having assigned their indices in the order $ k,\alpha,\beta,\gamma$; and $\tilde{U}$ is a sequential circuit of unitaries made out of the $U_{\alpha,\beta}$'s in (\ref{Ualphabeta}).

If  the structure coefficients $c_{\alpha,\beta,\gamma}^{(L)}={\rm tr}(\chi^L_{\alpha,\beta,\gamma})$ are independent of $L$ one can easily show by Lemma \ref{Lem:app_simple} that the $\chi_{\alpha,\beta,\gamma,k}\in \{0,1\}$  and therefore $c_{\alpha,\beta,\gamma}^{(L)}\in \mathbb{N}$. In this case, if the final matrices $\tr(M_{\gamma})$ at the RHS of (\ref{Projector-Q}) are  projectors, we have that $Q$ in (\ref{Projector-Q}) is also a projector. The same applies then to $\tilde{U}Q\tilde{U}^\dagger$. By making an spectral decomposition of the $\tr(M_{\gamma})$'s  we can reduce to such case and obtain the following theorem.

\begin{thm}
Start with a  RFP tensor $M$  defining  MPDO so that $c_{\alpha,\beta,\gamma}^{(L)}$  is independent of $L$. Then the MPDO
$$\rho^{(N)}(M)=\sum_{i=1}^d \lambda_i P^{(N)}_i e^{-H_N}$$ where $d$ is the single site Hilbert dimension, $P_i{(N)}$ are projectors which encode the {\em topological properties of the tensor, if interpreted as a boundary theory}, $H_N=\sum_{i=1}^Nh_{i,i+1}$ is translationally invariant, nearest-neighbor and commuting $[h_{i-1,i},h_{i,i+1}]=0$, and $[P_i,e^{-H}]=0$ for all $i$.
\end{thm}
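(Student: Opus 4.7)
The plan is to leverage the decomposition $\rho^{(N)}(M)=(\mu^{\otimes N})\,\tilde{U}Q\tilde{U}^{\dagger}$ assembled in the discussion preceding the theorem, which already furnishes (i) a diagonal positive operator $\mu^{\otimes N}$ as the candidate for $e^{-H_N}$, (ii) an operator $\tilde{U}Q\tilde{U}^{\dagger}$ whose spectral projectors are the candidates for the $P_i^{(N)}$, and (iii) the commutation $[\mu^{\otimes N},\tilde{U}Q\tilde{U}^{\dagger}]=0$. The proof then reduces to three tasks: rewriting $\mu^{\otimes N}$ as the exponential of a nearest-neighbor commuting Hamiltonian, decomposing $\tilde{U}Q\tilde{U}^{\dagger}$ as a positive combination of orthogonal projectors, and reading off the commutation $[P_i^{(N)},e^{-H_N}]=0$.

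The first task is straightforward: since $\mu=\sum_\alpha \mu_\alpha\otimes \ket{\alpha}\bra{\alpha}$ is a single-site diagonal positive operator, I would set $h:=-\log\mu$ on its support, and use the symmetrization $\sum_i h_i=\sum_i \tfrac12(h_i+h_{i+1})$ to obtain $H_N=\sum_i h_{i,i+1}$ with $h_{i,i+1}:=\tfrac12(h_i+h_{i+1})$. This is manifestly translationally invariant and nearest-neighbor, and $[h_{i-1,i},h_{i,i+1}]=0$ reduces to $[h_i,h_j]=0$ for all $i,j$, which is trivial since each $h_i$ is a single-site operator.

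The second task is the heart of the proof. The hypothesis that $c_{\alpha,\beta,\gamma}^{(L)}$ is independent of $L$ forces, by Lemma \ref{Lem:app_simple}, $\chi_{\alpha,\beta,\gamma,k}\in\{0,1\}$, so by the observation just before the theorem statement, substituting any projector for each $\tr(M_\gamma)$ in \eqref{Projector-Q} again yields a projector. I would therefore spectrally decompose each cap $\tr(M_\gamma)=\sum_k\lambda_{\gamma,k}\Pi_{\gamma,k}$ into orthogonal projectors, expand $Q$ into a sum over multi-indices $I$ selecting one spectral projector at each cap, and obtain $Q=\sum_I \Lambda_I\,Q_I$ where each $Q_I$ is a projector and the orthogonality $Q_IQ_J=\delta_{IJ}Q_I$ descends from $\Pi_{\gamma,k}\Pi_{\gamma,k'}=\delta_{k,k'}\Pi_{\gamma,k}$ at each cap. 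Setting $P_I^{(N)}:=\tilde{U}Q_I\tilde{U}^\dagger$ and $\lambda_I:=\Lambda_I$ then yields the target decomposition $\rho^{(N)}(M)=\sum_I\lambda_I\,P_I^{(N)}\,e^{-H_N}$.

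The commutation $[P_I^{(N)},e^{-H_N}]=0$ will follow because, after grouping the $Q_I$ by common eigenvalue, the partial sums give the true spectral projectors of $\tilde{U}Q\tilde{U}^\dagger$, and any operator commuting with a Hermitian operator commutes with its spectral projectors. The ``topological'' interpretation of the $P_I^{(N)}$ is then inherited from $\tilde{U}$, the sequential circuit built from the $U_{\alpha,\beta}$ of \eqref{Ualphabeta} whose pentagon-like equations encode the F-symbols of the underlying unitary fusion category. I expect the principal technical point to be verifying that the substitutions in \eqref{Projector-Q} really yield mutually orthogonal projectors, which requires tracking how the internal $\chi$-tensor contractions interact with the orthogonality at distinct caps; once that is settled, the remainder is a clean spectral-theoretic assembly of ingredients already present in the preceding discussion.
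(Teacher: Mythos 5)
Your proposal follows essentially the same route as the paper, which in fact only sketches this result in the paragraph preceding the theorem: the decomposition $(\mu^{\otimes N})\tilde{U}Q\tilde{U}^{\dagger}$, the observation via Lemma \ref{Lem:app_simple} that $\chi_{\alpha,\beta,\gamma,k}\in\{0,1\}$, and the spectral decomposition of the caps $\tr(M_\gamma)$ to reduce to the case where $Q$ is a projector. Your elaborations (the symmetrization $h_{i,i+1}=\tfrac12(h_i+h_{i+1})$, the explicit multi-index bookkeeping for the $Q_I$, and the flagging of the cap-orthogonality as the one point genuinely requiring verification) are faithful to, and somewhat more careful than, the paper's own argument.
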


This result can be seen then as an analytical proof, for the case of RFP, of the conjecture stated in \cite{bulkboundconj} about how boundary theories look like for topologically ordered models (Section \ref{Section:Boundary}).

\section{Conclusion}

In this paper we have given a full characterization of MPS and MPDO that are RFP in terms of their defining tensor. We have also shown that being a RFP can be also characterized by physical notions: zero-correlation length, saturation of the area law and being the ground or thermal state of commuting Hamiltionians.

The motivation of the paper, and arguably its main interest, lies on its connection with boundary theories and the classification program of 2D topological models on the lattice. It is believed that within each gapped topological quantum phase, there exists one representative that is the fixed point of some renormalization transformation. At a gapped RFP one expects an exact description of its ground state in terms of PEPS and an associated boundary theory which is a RFP MPDO. Here we have illustrated this for a concrete and natural definition of RFP in 2D but we expect the same result for any meaningful definition of RFP in 2D (we will analyze this in full detail in a forthcoming paper).  Since all properties of the PEPS are encoded on its boundary via the boundary-bulk correspondence of \cite{bulkboundary}, one can understand this paper as a way to classify 2D quantum phases on their boundary theories. From this perspective, our main result (Theorem \ref{thm:IV.13}) is very appealing, since it shows how the topological content of the bulk, at a RFP, can be clearly distilled from the canonical form of the MPDO that appears at the boundary.  Moreover, when complemented with \cite{FrankStrings}, our main result suggests that all quantum phases in 2D on the lattice seem to be already covered by Levin-Wen string net models and their generalizations.

\

JIC acknowledges support from the EU Integrated projects SIQS, and the DFG through the excellence cluster NIM.
DPG acknowledges support from
Comunidad de Madrid (grant QUITEMAD+-CM, ref. S2013/ICE-2801) and MINECO (grant no. MTM2014-54240-P).
This project has received funding from the European Research Council (ERC) under the European Union's Horizon 2020 research and innovation program (grant agreement GAPS, No 648913).
This work was made possible through the support of grant \#48322 from the John Templeton Foundation. The opinions expressed in this publication are those of the authors and do not necessarily reflect the views of the John Templeton Foundation.
NS acknowledges funding by the European Research Council (ERC) under the European Union's Horizon 2020 research and innovation program (grant agreement WASCOSYS, No 636201).
FV acknowledges funding by EU grant SIQS and ERC
grant QUERG, the Odysseus grant from the Research Foundation Flanders (FWO)
and the Austrian FWF SFB grants FoQuS and ViCoM.
This research was supported in part by the National Science Foundation under Grant No. NSF PHY11-25915.

We thank Michael Mari\"en for very useful discussions. We also thank KITP (UC Santa Barbara), Perimeter Institute for Theoretical Physics and IQIM (Caltech) for their kind hospitality. This work was largely developed during the visits of the authors to those institutions.

\appendix


\renewcommand{\thethm}{\Alph{section}.\arabic{thm}}


\section{Proofs of Section \ref{Sec:MPV}}\label{AppendixMPV}

To prove the Fundamental Theorem of MPV and the associated results stated in Section \ref{Sec:MPV} we will need some lemmas that we will state and prove below. Before entering into that, and to connect also with previous (more stringent) definitions of canonical forms for MPS, we will introduce the following definition:

\begin{defn}
We say that a set of matrices $A^i$ are in a {\em canonical form II} (CFII), if they are in CF and the corresponding CPMs (\ref{Ek}) are trace-preserving, and have a full-rank diagonal fixed point.
\end{defn}

The two last properties can be expressed as
 \begin{subequations}
 \label{eq:II_TPLambda}
 \bea
 \label{TP}
 \sum_{i=1}^d  A^{i\dagger}_k  A^i_k &=& \Id_{D_k},\\
 \label{Lambda}
 {\cal E}_k(\Lambda_k) &=& \sum_{i=1}^d A^{i}_k \Lambda_k A^{i\dagger}_k = \Lambda_k>0,
 \eea
 \end{subequations}
where $\Lambda_k$ is a strictly positive definite diagonal $D_k\times D_k$  matrix. In \cite{Fan92, David2006} it was shown that for any $A$ in CF, it is always possible to find a non-singular gauge matrix $X$, such that
 \be
 \label{eq:II_XAX}
 {\includegraphics[height=3.5em]{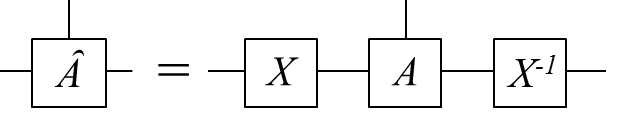}}
 \ee
is also in CF and fulfills (\ref{eq:II_TPLambda}). For that one just has to determine the fixed point (i.e., the eigenvector corresponding to the largest eigenvalues in magnitude) of each CPM ${\cal E}_k$.

\begin{lem}
\label{equalMPS}
Given two NMPV, $V_{a,b}$, generated by two NT with corresponding $D_\alpha\times D_\alpha$ matrices $A_{\alpha}^i$ ($\alpha=a,b$), then
 \begin{subequations}
 \bea
 \lim_{N\to\infty} \langle V_\alpha|V_\alpha\rangle&=&1,\\
 \label{scprod}
 \lim_{N\to\infty} |\langle V_b|V_a\rangle|&=& \text{ 0 or 1}.
 \eea
 \end{subequations}
In the latter case, $D_a=D_b$ and there exists a non-singular matrix, $X$, and a phase $\phi$ so that $A^i_b= e^{i\phi} X  A^i_a  X^{-1}$.
\end{lem}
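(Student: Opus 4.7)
The plan is to reduce to CFII, obtain the diagonal limits by Perron--Frobenius for the associated CPMs, and handle the cross term via a Kadison--Schwarz estimate on a mixed transfer map; the main difficulty will lie in extracting a per-index similarity relation from the equality case of the Schwarz bound. Since any NT admits a similarity gauge of the form (\ref{eq:II_XAX}) that puts it in CFII without altering the MPV, I assume $A_a,A_b$ are already in CFII: $\sum_i A_\alpha^{i\dagger}A_\alpha^i=\Id_{D_\alpha}$ and $\mathcal E_\alpha(\Lambda_\alpha)=\Lambda_\alpha$ with $\Lambda_\alpha>0$ diagonal, normalized so that $\tr(\Lambda_\alpha)=1$. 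The NT hypothesis makes $1$ a simple peripheral eigenvalue of $\mathcal E_\alpha$, so $\mathbb E_{\alpha,\alpha}^{N}\to|\Lambda_\alpha)(\Id|$ and $\langle V_\alpha|V_\alpha\rangle=\tr(\mathbb E_{\alpha,\alpha}^N)\to\tr(\Lambda_\alpha)=1$.

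For the cross scalar product I write $\langle V_b|V_a\rangle=\tr(\mathbb E_{a,b}^N)=\tr(T^N)$, with $T(Y)=\sum_i A_a^i Y A_b^{i\dagger}$ on $\mathrm{Hom}(\mathbb C^{D_b},\mathbb C^{D_a})$. The TP relations say precisely that $V_\alpha:|\mu\rangle\mapsto\sum_i|i\rangle\otimes A_\alpha^i|\mu\rangle$ is an isometry, so the adjoint reads $T^*(Y)=V_a^\dagger(\Id\otimes Y)V_b$, giving $\|T^*(Y)\|_{\mathrm{op}}\le\|Y\|_{\mathrm{op}}$. Hence the spectral radius of $T$ is at most $1$. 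If it is strictly less than $1$, then $T^N\to 0$ in norm and (\ref{scprod}) holds with limit $0$.

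Otherwise there is a nonzero $Y$ with $T^*(Y)=\bar\lambda Y$, $|\lambda|=1$. From $V_aV_a^\dagger\le\Id$ one gets a Schwarz-type operator inequality $T^*(Y)^\dagger T^*(Y)\le V_b^\dagger(\Id\otimes Y^\dagger Y)V_b=\mathcal E_b^*(Y^\dagger Y)$, which together with the eigenvector equation reads $Y^\dagger Y\le\mathcal E_b^*(Y^\dagger Y)$. Pairing both sides with the strictly positive $\Lambda_b$ and using $\mathcal E_b(\Lambda_b)=\Lambda_b$ shows the two traces coincide, so the operator inequality is actually an equality and $Y^\dagger Y$ is fixed by $\mathcal E_b^*$; NT uniqueness of the fixed point forces $Y^\dagger Y=c\,\Id_{D_b}$. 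The symmetric bound using $V_bV_b^\dagger\le\Id$ gives $YY^\dagger=c'\,\Id_{D_a}$, and comparing ranks and traces yields $D_a=D_b$, $c=c'$, and $Y=\sqrt{c}\,U$ with $U$ unitary. The equality case of the Schwarz bound, $V_aV_a^\dagger(\Id\otimes Y)V_b=(\Id\otimes Y)V_b$, read in the basis $\{|i\rangle\}$ gives the \emph{per-index} identity $A_a^i Y=\lambda Y A_b^i$ for every $i$---equivalently $A_b^i=e^{i\phi}X A_a^i X^{-1}$ with $X=Y^{-1}$ and $\phi=-\arg\lambda$. Substituting into (\ref{MPV}) yields $|V^{(N)}(A_b)\rangle=e^{iN\phi}|V^{(N)}(A_a)\rangle$ and hence $|\langle V_b|V_a\rangle|=\langle V_a|V_a\rangle\to 1$. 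The delicate step is the upgrade from trace equality to operator equality (using strict positivity of $\Lambda_b$) together with NT uniqueness of the fixed point of $\mathcal E_b^*$; everything else is routine Perron--Frobenius and basis-expansion bookkeeping.
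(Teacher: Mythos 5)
Your proof is correct and follows essentially the same route as the paper's: reduce both tensors to CFII, obtain the diagonal limits from the spectral gap of each normal tensor's transfer map, and extract the per-index similarity $A^i_b=e^{i\phi}XA^i_aX^{-1}$ from the equality case of a Schwarz-type inequality applied to a peripheral eigenvector of the mixed transfer map. The only difference is in implementation: the paper runs a scalar Cauchy--Schwarz on Hilbert--Schmidt traces weighted by $\sqrt{\Lambda_a}$ and gets $D_a=D_b$ from the fact that a NT cannot vanish on a subspace, whereas you use the operator inequality $V_aV_a^\dagger\le\Id$, upgrade the trace equality against the strictly positive $\Lambda_b$ to $Y^\dagger Y=c\,\Id$ via uniqueness of the fixed point of $\mathcal{E}_b^*$, and read off $D_a=D_b$ from the ranks of $Y^\dagger Y$ and $YY^\dagger$ --- two renditions of the same saturation argument.
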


\begin{proof}
Since  $V_{a,b}$ are NMPV, the comment above shows that we can always find non-singular matrices, $Y_{a,b}$, such that
 \be
 \label{rel}
 \tilde A^i_\alpha=Y_\alpha A^i Y_\alpha^{-1}
 \ee
are in a CFII and generate the same MPV. We will use those matrices to prove the proposition. In particular, we will show that if $|\langle V_b|V_a\rangle|\to 1$ then there exists a unitary $X$ such that $X A^i_b X^{-1}= e^{i\phi} A^i_a$. The result for the original NT follows automatically from (\ref{rel}). First, $\langle V_\alpha|V_\alpha\rangle={\rm tr}(\mathbb{E}_\alpha^N)$ where $\mathbb{E}_\alpha=\sum_{i=1}^d A_\alpha^i\otimes \bar A_\alpha^i$. This is nothing but the matrix representation of the CPM built out of the matrices $A^i_\alpha$, and thus has the same spectrum. Since $A_\alpha$ is a NT, $\mathbb{E}_\alpha$ has a unique eigenvalue of magnitude 1, and the rest are strictly smaller, from which follows that the norm tends to one for $N\to\infty$. Second, $\langle V_b|V_a\rangle={\rm tr}(\mathbb{E}_{1,2}^N)$, where
$\mathbb{E}_{ab}=\sum_{i=1}^d A_a^i\otimes \bar A_b^i$. Let us consider the eigenvalue equation for this matrix written as a linear map,
 \be
 \sum_{i=1}^d A_b^{i\dagger} X A_a^{i}=\lambda X.
 \ee
Using the fact that the fixed point of ${\cal E}_a$, $\Lambda_a$, is strictly positive, we have
 \bea
 &&|\lambda {\rm tr}(X \Lambda_a X^\dagger)|^2 = \left|\sum_{i=1}^d {\rm tr}(X A^i_a \sqrt{\Lambda_a} \sqrt{\Lambda_a} X^\dagger A_b^{i\dagger})\right|^2 \nonumber\\
 &\le& \sum_{i=1}^d {\rm tr}\left(A^i_b X \Lambda_a X^\dagger A_b^{i\dagger}\right)
 \sum_{j=1}^d {\rm tr}\left(X A^j_a \Lambda_a A_a^{j\dagger}X^\dagger\right)\nonumber\\
 && \le |{\rm tr}(X\Lambda_a X^\dagger)|^2,\nonumber
 \eea
where we have used (\ref{TP},\ref{Lambda}) and the Cauchy-Schwarz inequality. Since $\Lambda_a$ is full rank, $|\lambda|< 1$ or $|\lambda|=1$, giving rise to $0$ and $1$ in (\ref{scprod}), respectively. Let us explore the latter case $|\lambda|=1$. This requires that the Cauchy-Schwarz if fulfilled with an equal sign, and since $\Lambda_a$ is non-singular,
 \be
 A^i_b X = \alpha X A^i_a.
 \ee
Using $X=\sum_{i=1}^d A^{i\dagger}_b A^i_b X =\alpha \sum_{i=1}^d A^{i\dagger}_aXA^i_b=\lambda \alpha X$, and thus $|\alpha|=1$. Then, $\sum_{i=1}^d A^{i\dagger}_aX^\dagger X A^i_a=X^\dagger X =\Id$, and thus $X$ is an isometry. Furthermore, $X^\dagger A_b^i X = \alpha X^\dagger X A^i_a=\alpha A^i_a$. Let us assume that $D_a> D_b$; then there is a subspace
of dimension $D_a-D_b$ such that $A^i_a$ vanishes on it, which is impossible given that $A_a$ is a NT. Thus, $D_a=D_b$, and $X$ is unitary.
\end{proof}

From there we obtain the following two trivial corollaries:

\begin{cor}
\label{eqV}
Given two NT generating the NMPV, $V_{a,b}$, either $\langle V_a|V_b\rangle \to 0$ in the limit $N\to \infty$, or
 \be
 |V_b\rangle = e^{i\phi N} |V_a\rangle
 \ee
for all $N$.
\end{cor}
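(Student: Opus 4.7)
The plan is to read this off directly from Lemma \ref{equalMPS}. By that lemma, the limit $\lim_{N\to\infty} |\langle V_b|V_a\rangle|$ can only be $0$ or $1$, so the dichotomy in the statement is structural: either we land in the first case (and we are done), or we land in the second case and must upgrade this into an identity valid at every finite $N$.

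Assume therefore that $|\langle V_b|V_a\rangle|\to 1$. Lemma \ref{equalMPS} then supplies a phase $\phi$ and a non-singular matrix $X$ with
\be
A^i_b = e^{i\phi}\, X A^i_a X^{-1}
\ee
for all physical indices $i$. Substituting this into the definition (\ref{MPV}) of $V^{(N)}$ and using that all the intermediate factors $X^{-1}X$ telescope, one obtains
\be
A^{i_1}_b\cdots A^{i_N}_b = e^{iN\phi}\, X\bigl(A^{i_1}_a\cdots A^{i_N}_a\bigr)X^{-1},
\ee
after which the cyclicity of the trace cancels the remaining $X$ and $X^{-1}$. Thus
\be
\tr\bigl(A^{i_1}_b\cdots A^{i_N}_b\bigr) = e^{iN\phi}\,\tr\bigl(A^{i_1}_a\cdots A^{i_N}_a\bigr)
\ee
for every tuple $(i_1,\dots,i_N)$, and summing over the physical indices gives $|V_b^{(N)}\rangle = e^{iN\phi}|V_a^{(N)}\rangle$ for every $N$, as required.

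There is essentially no obstacle here: the only substantive content has already been bundled into Lemma \ref{equalMPS}, and the corollary amounts to the observation that similarity plus an overall phase at the level of the matrices $A^i$ lifts, via the trace in (\ref{MPV}), to a phase at the level of the generated vector, with the phase multiplying by $N$ because there are $N$ factors. The only point worth flagging is that this identity holds not just asymptotically but for all finite $N$, which is automatic from the algebraic relation between $A_a$ and $A_b$ and does not require any additional limit argument.
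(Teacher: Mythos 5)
Your proof is correct and follows exactly the route the paper intends: the corollary is stated there as an immediate consequence of Lemma \ref{equalMPS}, with the case $|\langle V_b|V_a\rangle|\to 1$ upgraded to the all-$N$ identity precisely by substituting $A^i_b = e^{i\phi} X A^i_a X^{-1}$ into the trace and telescoping. Nothing is missing.
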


\begin{cor}
\label{Lem1}
Any set of NMPV, $\{V_j\}_{j=1}^g$, fulfilling $\langle V_j|V_{j'}\rangle\to \delta_{j,j'}$ for $N\to\infty$, is linearly independent for $N$ sufficiently large.
\end{cor}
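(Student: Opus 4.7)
The plan is to reduce linear independence to the invertibility of the Gram matrix and exploit the convergence hypothesis together with continuity of the determinant.

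Concretely, for each $N$ form the $g\times g$ Gram matrix
\[
G^{(N)}_{j,j'} = \langle V_j^{(N)}|V_{j'}^{(N)}\rangle.
\]
By the hypothesis of the corollary, $G^{(N)}_{j,j'}\to \delta_{j,j'}$ as $N\to\infty$, so $G^{(N)}\to \Id_g$ entrywise. Since the determinant is a continuous function of the matrix entries and $\det(\Id_g)=1\neq 0$, there exists $N_0$ such that $\det(G^{(N)})\neq 0$ for all $N>N_0$, i.e.\ the Gram matrix is invertible.

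From there, linear independence is immediate by the usual argument: if $\sum_{j=1}^g c_j\,|V_j^{(N)}\rangle=0$, taking the inner product with $\langle V_{j'}^{(N)}|$ gives $\sum_{j} G^{(N)}_{j',j} c_j = 0$, and invertibility of $G^{(N)}$ forces $c_j=0$ for all $j$. Hence $\{|V_j^{(N)}\rangle\}_{j=1}^g$ are linearly independent for all $N>N_0$.

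There is essentially no obstacle: the only ingredient beyond the hypothesis is the continuity of the determinant. One could, if desired, make $N_0$ quantitative by noting that $\|G^{(N)}-\Id_g\|<1$ already forces invertibility via a Neumann series, together with the exponential decay of the off-diagonal entries established in Lemma \ref{equalMPS} (the non-leading eigenvalues of the mixed transfer matrices $\mathbb{E}_{ab}$ have modulus strictly less than one).
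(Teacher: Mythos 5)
Your proof is correct, and it is the standard Gram-matrix argument that the paper implicitly relies on: the paper states this as a ``trivial corollary'' of Lemma \ref{equalMPS} without writing out a proof, and the intended reasoning is exactly that the Gram matrix converges entrywise to $\Id_g$, hence is invertible for $N$ large enough by continuity of the determinant. Nothing to add.
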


We can now prove Proposition \ref{prop:char-BNT}, that we restate here for the convenience of the reader

\begin{prop*}
The tensors $A_j$ ($j=1,\ldots,g$) form a BNT of $A$ iff: (i) for all NT appearing in its CF (\ref{eq:II_CF1}), $\tilde A_k$, there exists a $j$, a non-singular matrix $X_k$, and some phase $\phi_k$ such that \be
 \label{eq-Appendix:II:A=XAX}
 \tilde A^i_k=e^{i\phi_k} X_k A^i_j X_k^{-1};
 \ee (ii) the set is minimal, in the sense that for any element $A_j$, there is no other $j'$ for which (\ref{eq-Appendix:II:A=XAX}) is possible.
\end{prop*}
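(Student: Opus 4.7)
The two directions both hinge on Lemma \ref{equalMPS} together with Corollaries \ref{eqV} and \ref{Lem1}: for any two NT $A_a,A_b$, either there exist a phase $\phi$ and a non-singular $X$ with $A_b^i=e^{i\phi}X A_a^i X^{-1}$, in which case $|V^{(N)}(A_b)\rangle=e^{iN\phi}|V^{(N)}(A_a)\rangle$ for every $N$, or else $\langle V^{(N)}(A_a)|V^{(N)}(A_b)\rangle\to 0$ as $N\to\infty$; and any family of pairwise non-equivalent NT therefore yields asymptotically orthonormal, and in particular eventually linearly independent, MPV.

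For the direction $(\Leftarrow)$ one just verifies each clause in the definition of a BNT from (i) and (ii) of the proposition. The tensors $A_j$ are NT because being NT is preserved by a similarity combined with an overall phase, which is exactly the relation (\ref{eq-Appendix:II:A=XAX}). The span clause is obtained by plugging $|V^{(N)}(\tilde A_k)\rangle=e^{iN\phi_k}|V^{(N)}(A_{j_k})\rangle$ into $|V^{(N)}(A)\rangle=\sum_k\mu_k^N|V^{(N)}(\tilde A_k)\rangle$ from (\ref{eq:II_Psi_k}) and collecting terms by $j_k$, so $|V^{(N)}(A)\rangle=\sum_j\bigl(\sum_{k:j_k=j}\mu_k^N e^{iN\phi_k}\bigr)|V^{(N)}(A_j)\rangle$. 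Finally, the minimality clause (ii) says no two distinct $A_j,A_{j'}$ are related by a phase and a similarity, so by Corollary \ref{eqV} their MPV are asymptotically orthogonal, and Corollary \ref{Lem1} furnishes the required $N_0$ beyond which the $|V^{(N)}(A_j)\rangle$ are linearly independent.

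For the direction $(\Rightarrow)$ I would first extract a canonical BNT from the CF itself: partition the NT $\tilde A_k$ in (\ref{eq:II_CF1}) into classes under the equivalence relation ``equal up to phase and similarity'', pick a representative $\hat A_l$ per class, and observe that $\{\hat A_l\}$ satisfies the already-proved direction $(\Leftarrow)$, hence is a BNT. Now, given an arbitrary BNT $\{A_j\}$, I have to prove each $A_j$ is phase/similarity-equivalent to some $\hat A_l$ (which, by construction, is in turn equivalent to one of the $\tilde A_k$), and that the correspondence is one-to-one. Suppose for contradiction that some $A_{j_0}$ is inequivalent to every $\hat A_l$; then by Corollary \ref{eqV} $\langle V^{(N)}(A_{j_0})|V^{(N)}(\hat A_l)\rangle\to 0$ for all $l$, whereas by the linear-independence clause of BNT $\langle V^{(N)}(A_{j_0})|V^{(N)}(A_j)\rangle\to\delta_{j,j_0}$. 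Writing $|V^{(N)}(A)\rangle$ in both expansions and pairing with $|V^{(N)}(A_{j_0})\rangle$ contradicts the fact that the coefficient of $|V^{(N)}(\hat A_{l_0})\rangle$ (where $l_0$ is the class of $\tilde A_k$'s mapped to $A_{j_0}$, if any, or of some $\hat A_l$ otherwise) has to reproduce a non-vanishing CF trigonometric polynomial in $N$. Once every $A_j$ is shown to be equivalent to some $\hat A_l$, minimality (BNT clause (iii)) forces the map $j\mapsto l$ to be injective, and surjectivity then follows from the fact that the $\hat A_l$ already span $|V^{(N)}(A)\rangle$: any missed $\hat A_{l'}$ would contribute a non-vanishing $|V^{(N)}(\hat A_{l'})\rangle$-component to $|V^{(N)}(A)\rangle$ that cannot be recovered from the $A_j$. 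This establishes (i), and (ii) of the proposition is exactly the injectivity just shown.

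The main obstacle is making the asymptotic orthogonality argument quantitative at finite $N$: the coefficients $c_j^{(N)}$ that express $|V^{(N)}(A)\rangle$ in the $|V^{(N)}(A_j)\rangle$ basis are uniquely determined only for $N\ge N_0$ and need not be uniformly bounded in $N$, while the ``canonical'' coefficients $\alpha_l^{(N)}=\sum_{k\in\mathcal{C}_l}\mu_k^N e^{iN\phi_k}$ are non-vanishing trigonometric polynomials that may still vanish at specific $N$. The clean way around this is to pass to a subsequence of $N$'s along which the relevant $\alpha_l^{(N)}$ stays bounded away from zero (which exists since it is a non-vanishing sum of exponentials of modulus determined by the CF), and along which the Gram matrix of the $|V^{(N)}(A_j)\rangle$ is uniformly close to the identity; on this subsequence the coefficients $c_j^{(N)}$ are uniformly bounded and the contradiction derived above becomes a genuine numerical one.
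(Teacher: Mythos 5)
Your proof follows essentially the same route as the paper's: both rest on the dichotomy of Lemma \ref{equalMPS} (two NT either generate asymptotically orthogonal MPV or are related by a phase and a similarity, via Corollaries \ref{eqV} and \ref{Lem1}) and on grouping the NT of the CF into equivalence classes to extract a canonical set of representatives. The paper's own proof is far terser --- it only spells out the greedy construction of such representatives and declares the equivalence ``immediate'' --- and the quantitative subsequence machinery in your last paragraph is not actually needed, since exact linear independence of the $|V^{(N)}(A_j)\rangle$ for $N>N_0$ already lets you equate coefficients exactly and then invoke Lemma \ref{Lem:app_simple} to rule out an identically vanishing exponential sum.
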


\begin{proof}
The result is an immediate consequence of the following procedure to construct a BNT for any given tensor $A$. First, express $A$ in CF. Let us denote by $\tilde A_k$ the NT appearing in the decomposition (\ref{eq:II_CF1}). Then, take $A_1:=\tilde A_1$. Second, if $\langle V^{(N)}(\tilde A_2)|V^{(N)}(A_1)\rangle\to 0$ as $N\to \infty$, take $A_2:=\tilde  A_2$ and otherwise do not incorporate $\tilde  A_2$ to the set, since according to Corollary \ref{eqV} $|V^{(N)}(\tilde A_2)\rangle = e^{i \phi_2 N} |\tilde V^{(N)}(A_1)\rangle$. Then, continue in the same way; that is, if $\langle V^{(N)}(\tilde A_k)|\tilde V^{(N)}(A_j)\rangle\to 0$ for all $j=1,\ldots,J$, then $A_{J+1}:=\tilde A_k$. After this procedure, the BNT will be composed of a subset of $\tilde A_k$, say $A_j=\tilde{A}_{k_j}$, with $j=1,\ldots,g$. Note that if a NT, $\tilde A_k$, is not included in the BNT it must fulfill, according to Proposition \ref{equalMPS}, $\langle V^{(N)}(\tilde A_k)|\tilde V^{(N)}(A_j)\rangle\to 1$ for some $j$, and thus there exist a non-singular matrix $X_k$, and some phase $\phi_k$ such that (\ref{eq-Appendix:II:A=XAX}).
\end{proof}

Note that from Corollary \ref{eqV} it immediately follows that the elements of any two BNT corresponding to the same CF will coincide up to phases and a gauge transformation.

We need a final lemma, whose proof can be found e.g. in \cite{Gemma}.

Let us consider two sets of complex numbers, $\lambda_{\alpha,k_\alpha}=|\lambda_{\alpha,k_\alpha}|e ^{i \phi_{\alpha,k_\alpha}}$ ($\alpha=a,b$, $k_\alpha=1,\ldots,x_\alpha$), sorted such that $|\lambda_{\alpha,k_\alpha}|\ge |\lambda_{\alpha,k_\alpha+1}|$ and
$\phi_{\alpha,k_\alpha}\le \phi_{\alpha,k_\alpha+1}$.

\begin{lem}
\label{Lem:app_simple}
If $\forall N\le \max\{x_a,x_b\}$,
 \be
 \label{eq:app_aux}
 \sum_{k=1}^{x_a} \lambda_{a,k}^N = \sum_{k=1}^{x_b} \lambda_{b,k}^N
 \ee
then  $x_a=x_b$ and $\lambda_{a,k}=\lambda_{b,k}$.
\end{lem}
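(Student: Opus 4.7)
The plan is to reduce the statement to the classical fact that the first $x$ power sums of $x$ complex numbers determine them up to permutation.

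First, I would normalize the lengths. Set $x:=\max\{x_a,x_b\}$ and pad the shorter of the two sequences with zeros. Since $0^N=0$ for every $N\ge 1$, padding leaves every power sum $p_N(\alpha):=\sum_{k} \lambda_{\alpha,k}^N$ unchanged, so the hypothesis (\ref{eq:app_aux}) becomes $p_N(a)=p_N(b)$ for all $N=1,\dots,x$, with both padded sequences now of common length $x$. The goal reduces to proving that the two (padded) multisets coincide.

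Next, I would invoke Newton's identities. Writing $e_k(\alpha)$ for the $k$-th elementary symmetric polynomial of the padded $\lambda_{\alpha,\cdot}$'s, the Newton recursion
$$k\,e_k = p_k - e_1\,p_{k-1} + e_2\,p_{k-2} - \dots + (-1)^{k-1} e_{k-1}\,p_1\,,$$
valid for every $k\le x$, expresses $e_k(\alpha)$ in terms of $p_1(\alpha),\dots,p_k(\alpha)$ and $e_1(\alpha),\dots,e_{k-1}(\alpha)$. A straightforward induction on $k$, using that the $p_i$'s agree on the two sides, yields $e_k(a)=e_k(b)$ for all $k=1,\dots,x$. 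Equivalently, the monic polynomials $P_\alpha(z):=\prod_{k=1}^{x}(z-\lambda_{\alpha,k})$ have identical coefficients, hence $P_a=P_b$.

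From $P_a=P_b$ I conclude that the two multisets of roots coincide with multiplicities, and the canonical ordering (decreasing magnitude, with ties broken by increasing phase) then forces $\lambda_{a,k}=\lambda_{b,k}$ term by term. Finally, any trailing zeros that the padding may have introduced on one side must be matched by genuine zeros on the other; comparing the numbers of nonzero entries therefore gives $x_a=x_b$. The only step that is not bookkeeping is Newton's recursion, which is standard, so I do not anticipate a real obstacle; the delicate point is simply to keep track of the case $x_a\neq x_b$ via the padding trick.
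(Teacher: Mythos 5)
Your argument is correct. It is worth noting that the paper does not actually contain a proof of this lemma: it is stated with the remark that a proof ``can be found e.g.\ in [De las Cuevas et al., arXiv:1512.05709]'', so you are supplying a self-contained argument where the authors only supply a citation. Your route --- pad both lists with zeros to the common length $x=\max\{x_a,x_b\}$, use Newton's identities over $\mathbb{C}$ (characteristic zero, so dividing by $k$ is harmless) to upgrade equality of the first $x$ power sums to equality of all elementary symmetric functions, hence $P_a=P_b$, hence equality of the root multisets, and finally invoke the prescribed ordering to get term-by-term equality --- is the standard one and matches in spirit what the cited reference does; an equally common variant subtracts the two sums, collects the distinct nonzero values $\mu_1,\dots,\mu_M$ appearing in either list with integer coefficients $m_j$, and inverts the nonsingular Vandermonde system $\sum_j m_j\mu_j^N=0$, $N=1,\dots,M$, to conclude $m_j=0$ for all $j$. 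One point you should make explicit: your final step (``comparing the numbers of nonzero entries gives $x_a=x_b$'') tacitly assumes that all $\lambda_{\alpha,k}$ are nonzero. This assumption is genuinely needed --- with zeros allowed the statement is false as written (take $x_a=1$, $\lambda_{a,1}=0$ versus $x_b=2$, $\lambda_{b,1}=\lambda_{b,2}=0$, for which all power sums vanish on both sides) --- but it is an implicit hypothesis in every application the paper makes of the lemma (the $\lambda$'s there are the nonzero, indeed often strictly positive, coefficients $\mu_{j,q}$), so it is a gap in the statement rather than in your proof.
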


We can now prove the Fundamental Theorem of MPV, that we restate:

\begin{thm*}
\label{thm1}
Let $A$ and $B$ be two tensors in CF, with BNT $A^i_{k_a}$ and $B^i_{k_b}$ ($k_{a,b}=1,\dots,g_{a,b}$), respectively. If for all $N$, $A$ and $B$ generate MPV that are proportional to each other, then: (i) $g_a=g_b=:g$; (ii) for all $k$ there exists a $j_k$, phases $\phi_{k}$, and non-singular matrices, $X_{k}$ such that $B^i_{k}=e^{i\phi_{k}}X_{k}A^i_{j_k}X_{k}^{-1}$.
\end{thm*}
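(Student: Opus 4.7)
My plan is to reduce the comparison of $A$ and $B$ to a single combined list of pairwise non-similar normal tensors and then use the asymptotic linear independence guaranteed by Corollary \ref{Lem1} to read off (i) and (ii). First, I would pool the two BNT and take $\{C_\gamma\}_{\gamma=1}^{G}$ to be a maximal list of pairwise non-similar (up to a phase and an invertible gauge) normal tensors extracted from $\{A_j\}\cup\{B_l\}$. By Proposition \ref{prop:char-BNT}, each $A_j$ is similar via some phase $\phi^A_j$ and some invertible $X$ to a unique $C_{\alpha(j)}$, and likewise each $B_l$ to a unique $C_{\beta(l)}$; the maps $\alpha,\beta$ are injective because of the minimality clause of that same characterization. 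Substituting into the CF decomposition (\ref{eq:II_ABasicTensors}) one obtains
\begin{equation*}
|V^{(N)}(A)\rangle=\sum_{\gamma=1}^{G}g^A_\gamma(N)\,|V^{(N)}(C_\gamma)\rangle,\qquad |V^{(N)}(B)\rangle=\sum_{\gamma=1}^{G}g^B_\gamma(N)\,|V^{(N)}(C_\gamma)\rangle,
\end{equation*}
with $g^A_\gamma(N)=\sum_{(j,q):\,\alpha(j)=\gamma}\bigl(\mu^A_{j,q}e^{i\phi^A_j}\bigr)^N$, so that $g^A_\gamma\equiv 0$ iff $\gamma\notin\mathrm{Im}(\alpha)$, and analogously for $g^B$.

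Next, by Lemma \ref{equalMPS} and Corollary \ref{eqV} the NMPVs $|V^{(N)}(C_\gamma)\rangle$ are pairwise asymptotically orthogonal, hence linearly independent for $N$ large by Corollary \ref{Lem1}. The hypothesis $|V^{(N)}(A)\rangle=\lambda_N\,|V^{(N)}(B)\rangle$ then collapses componentwise onto $g^A_\gamma(N)=\lambda_N\,g^B_\gamma(N)$ for every $\gamma$ and every large $N$. Equivalently, the two $G$-vectors $\bigl(g^A_\gamma(N)\bigr)_\gamma$ and $\bigl(g^B_\gamma(N)\bigr)_\gamma$ are always parallel, i.e.\ every $2\times 2$ minor $g^A_\gamma(N)g^B_{\gamma'}(N)-g^A_{\gamma'}(N)g^B_\gamma(N)$ vanishes identically in $N$.

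What is left to show is $\mathrm{Im}(\alpha)=\mathrm{Im}(\beta)$: once this holds, injectivity of $\alpha,\beta$ yields $g_a=G=g_b$ (which is (i)), and the bijection $k\mapsto j_k:=\alpha^{-1}(\beta(k))$ realises (ii) directly, because by construction $B_k\sim C_{\beta(k)}=C_{\alpha(j_k)}\sim A_{j_k}$ up to phase and gauge. Suppose for contradiction that there exists $\gamma_0\in\mathrm{Im}(\beta)\setminus\mathrm{Im}(\alpha)$, so that $g^A_{\gamma_0}\equiv 0$ while $g^B_{\gamma_0}\not\equiv 0$. Choose $\gamma'\in\mathrm{Im}(\alpha)$ for which $g^A_{\gamma'}$ carries at least one term of modulus one, which is possible because the CF of $A$ contains some $|\mu|=1$ by convention. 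The minor identity then gives $g^A_{\gamma'}(N)\,g^B_{\gamma_0}(N)\equiv 0$; expanding this product as an exponential sum in $N$, grouping by distinct values of the exponentials, and applying Lemma \ref{Lem:app_simple} to match multisets against the zero exponential sum, one isolates the top-modulus stratum of the product, where the guaranteed modulus-one term of $g^A_{\gamma'}$ forces a nontrivial relation on $g^B_{\gamma_0}$. Iterating down the moduli, every stratum of $g^B_{\gamma_0}$ must vanish, contradicting $g^B_{\gamma_0}\not\equiv 0$. The symmetric argument excludes $\mathrm{Im}(\alpha)\setminus\mathrm{Im}(\beta)$, so $\mathrm{Im}(\alpha)=\mathrm{Im}(\beta)=\{1,\ldots,G\}$.

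The main obstacle is exactly this last support-matching step. The ring of exponential sums is not an integral domain: roots of unity produce genuine zero-divisor identities such as $(1+(-1)^N)(1-(-1)^N)\equiv 0$, so one cannot simply quote that a product of two nontrivial exponential sums is nontrivial. The way in which the canonical-form conventions come to the rescue is the ordering $|\mu_{j,q}|\le 1$ together with the normalisation requirement that at least one $|\mu|=1$: this stratifies every $g^\bullet_\gamma$ into a nonempty dominant piece and subdominant corrections, and it is this stratified structure that makes Lemma \ref{Lem:app_simple} applicable one modulus level at a time. A minor secondary point is handling the $N$'s at which $V^{(N)}(A)$ or $V^{(N)}(B)$ may happen to vanish, which is taken care of by restricting to the sequence of $N$ where the component MPVs are linearly independent — which is cofinite in $\mathbb{N}$.
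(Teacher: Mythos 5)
Your proof is correct and follows essentially the same route as the paper's: both arguments pool the two BNTs, invoke Lemma \ref{equalMPS} and Corollary \ref{eqV} to reduce each pairwise comparison to the dichotomy ``asymptotically orthogonal or gauge-equivalent up to a phase,'' and then combine the asymptotic linear independence of Corollary \ref{Lem1} with Lemma \ref{Lem:app_simple} to rule out a mismatch of supports, after which the identification $B_k\sim A_{j_k}$ and the count $g_a=g_b$ follow exactly as in the paper. The one remark worth making is that your zero-divisor worry is moot in this setting: every $g^{\bullet}_\gamma$ is a sum of $N$-th powers of nonzero numbers with coefficient $+1$, so the product $g^A_{\gamma'}(N)\,g^B_{\gamma_0}(N)$ is again a nonempty sum of $N$-th powers with positive integer multiplicities, and a single direct application of Lemma \ref{Lem:app_simple} (matching its multiset of bases against the empty one) already shows it cannot vanish identically --- the stratification by moduli is unnecessary.
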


\begin{cor*}
\label{II_cor2}
If two tensors $A$ and $B$ in CF generate the same MPV for all $N$ then: (i) the dimensions of the matrices $A^i$ and $B^i$ coincide; (ii) there exists an invertible matrix, $X$, such that
 \be
 \label{eq:II_auxcor}
 A^i = X B^i X^{-1}.
 \ee
\end{cor*}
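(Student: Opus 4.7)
The plan is to derive Corollary \ref{II_cor2} directly from the Fundamental Theorem \ref{thm1}, together with the canonical decomposition in (\ref{eq:II_ABasicTensors})--(\ref{decBSV}). Since equality of MPV is the special case of proportionality with constant $1$, Theorem \ref{thm1} will provide a bijection between the BNT of $A$ and those of $B$: after relabeling, $g_a=g_b=:g$ and for each $k$ one has $B_k^i = e^{i\phi_k} Y_k A_k^i Y_k^{-1}$, so $|V^{(N)}(B_k)\rangle = e^{iN\phi_k}|V^{(N)}(A_k)\rangle$ for every $N$. Substituting this into the BNT decomposition (\ref{decBSV}) of $|V^{(N)}(B)\rangle$ and equating with the one for $|V^{(N)}(A)\rangle$ will yield
\[
\sum_{j=1}^g\Bigl(\sum_{q=1}^{r_j^A}(\mu_{j,q}^A)^N\Bigr)|V^{(N)}(A_j)\rangle = \sum_{j=1}^g e^{iN\phi_j}\Bigl(\sum_{q=1}^{r_j^B}(\mu_{j,q}^B)^N\Bigr)|V^{(N)}(A_j)\rangle.
\]

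The next step will exploit Corollary \ref{Lem1}, by which the vectors $|V^{(N)}(A_j)\rangle$ are linearly independent for sufficiently large $N$; this will let me match coefficients block by block,
\[
\sum_{q=1}^{r_j^A}(\mu_{j,q}^A)^N = \sum_{q=1}^{r_j^B}(e^{i\phi_j}\mu_{j,q}^B)^N \qquad \text{for all large } N,
\]
and then invoke Lemma \ref{Lem:app_simple} to conclude $r_j^A=r_j^B=:r_j$ and, up to reordering, $\mu_{j,q}^A = e^{i\phi_j}\mu_{j,q}^B$. Because each similarity $Y_k$ is non-singular, the dimensions of $A_k$ and $B_k$ coincide, so summing $\sum_j r_j D_j$ over both decompositions will establish (i).

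For (ii), I will write the two canonical forms as in (\ref{eq:II_ABasicTensors}),
\[
A^i = X_A\Bigl[\bigoplus_j(M_j^A\otimes A_j^i)\Bigr]X_A^{-1}, \qquad B^i = X_B\Bigl[\bigoplus_j(M_j^B\otimes B_j^i)\Bigr]X_B^{-1},
\]
substitute $B_j^i = e^{i\phi_j} Y_j A_j^i Y_j^{-1}$, and absorb each phase $e^{i\phi_j}$ into $M_j^B$; by the previous step this modified $M_j^B$ will then equal $M_j^A$ up to a permutation of the $q$-index, which I will package into a permutation matrix and collapse together with $X_A$, $X_B^{-1}$ and the block-diagonal factor $\bigoplus_j(\Id\otimes Y_j)$ into a single invertible $X$ satisfying $A^i = X B^i X^{-1}$. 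The only nontrivial point I anticipate is matching the multisets of coefficients $\mu_{j,q}$ inside each block, which relies essentially on Lemma \ref{Lem:app_simple} (uniqueness of a power-sum representation once one sorts by magnitude and phase); once that is in hand, producing $X$ amounts to bookkeeping of the three types of gauge freedom (outer similarities, inner similarities $Y_j$, and permutations of the multiplicity index).
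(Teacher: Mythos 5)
Your proposal is correct and follows essentially the same route as the paper's own proof: apply Theorem \ref{thm1} to pair up the BNT elements via $B_j^i = e^{i\phi_j} Y_j A_j^i Y_j^{-1}$, use linear independence of the $|V^{(N)}(A_j)\rangle$ to equate the power sums $\sum_q (\mu_{j,q})^N = \sum_q (e^{i\phi_j}\nu_{j,q})^N$ blockwise, invoke Lemma \ref{Lem:app_simple} to match multiplicities and coefficients, and assemble $X$ from the block-diagonal $\oplus_j \Id_{r_j}\otimes Y_j$ conjugated by the outer gauge matrices. The only difference is that you spell out the permutation bookkeeping and the appeal to Corollary \ref{Lem1} explicitly, which the paper leaves implicit.
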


\begin{proof}
To prove the Theorem we reason very similarly as in the proof of Proposition \ref{prop:char-BNT}. Let us first consider $B_k$ for some given $k$. It is not possible that $\langle V^{(N)}(B_k)|V^{(N)}(A_j)\rangle\to 0$ as $N\to\infty$ for all $j$, since otherwise the MPV generated by $A$ and $B$ could not be proportional for all $N$ (Lemma \ref{Lem1}). Thus, according to Corollary \ref{eqV}, there must exist one $j_k$ such that $|V^{(N)}(B_1)\rangle =e^{i\phi_k N} |V^{(N)}(A_{j_k})\rangle$. According to Lemma \ref{equalMPS} we have $B_k = e^{i\phi_k} X_k A_{j_k} X_k^{-1}$. We also conclude that $g_a\ge g_b$, But if we had considered $A_k$ to start with, we would obtain $g_b\ge g_a$, so that $g_a=g_b$.

To prove the Corollary, we denote by $A_j$ and $B_j$ the BNT which, according to the Theorem, have $g$ NT and (after relabeling) $B_j = e^{i\phi_j} Y_j A_{j} Y_j^{-1}$. Denoting by $\mu_{j,q_a}$ and $\nu_{j,q_b}$ ($j=1,\ldots,g$, $q_{x} = 1,\ldots, r_{x,j}$, with $x=a,b$) the coefficients in the decomposition of $A$ and $B$ in their BNT (\ref{eq:II_ABasicTensors}), we must have
 \be
 \sum_{q=1}^{r_{a,j}} \mu_{j,q}^N = \sum_{q=1}^{r_{b,j}} (\nu_{j,q}e^{i\phi_j})^N, \quad \forall j,N,
 \ee
From this equation it follows that $r_{a,j}=r_{b,j}=:r_j$ and $\mu_{j,q}=\nu_{j,q}e^{i\phi_j}$  (Lemma \ref{Lem:app_simple}). Then, the dimensions of the matrices $A^i$ and $B^i$ coincide. Also, defining
 \be
 Y= \oplus_{j=1}^{g} \Id_{r_j}\otimes Y_j
 \ee
and using (\ref{eq:II_ABasicTensors}), we have (\ref{eq:II_auxcor}).
\end{proof}

We finish with the following

\begin{cor}\label{thm:Fundamental-CFII}
If in the above Theorem and Corollary the tensors are in CFII, the same applies but now with the $X$ and $X_k$ unitary.
\end{cor}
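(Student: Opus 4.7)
The plan is to invoke the Fundamental Theorem to produce non-singular gauge matrices $X_k$ (and phases $\phi_k$) with $B^i_k = e^{i\phi_k} X_k A^i_{j_k} X_k^{-1}$, and then exploit the extra structure of CFII---namely, trace-preservation of each block CPM---to show that $X_k^\dagger X_k$ must be a scalar multiple of the identity, so that $X_k$ coincides (up to a positive scalar) with a unitary.

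First I would substitute $B^i_k = e^{i\phi_k} X_k A^i_{j_k} X_k^{-1}$ into the CFII identity $\sum_i B_k^{i\dagger} B_k^i = \Id$ for $B$. The phases cancel, and conjugating by $X_k$ and $X_k^\dagger$ on the outside yields
\begin{equation*}
\sum_i A_{j_k}^{i\dagger}\,(X_k^\dagger X_k)\,A_{j_k}^i \;=\; X_k^\dagger X_k .
\end{equation*}
Thus $X_k^\dagger X_k$ is a fixed point, with eigenvalue $1$, of the dual CPM $\mathcal{E}^{*}_{A_{j_k}}(Y) := \sum_i A_{j_k}^{i\dagger} Y A_{j_k}^i$. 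Because $A_{j_k}$ is itself in CFII, we also have $\sum_i A_{j_k}^{i\dagger} A_{j_k}^i = \Id$, so $\Id$ is a fixed point of $\mathcal{E}^{*}_{A_{j_k}}$. Now $\mathcal{E}_{A_{j_k}}$ and $\mathcal{E}^{*}_{A_{j_k}}$ share their spectrum, and by normality of $A_{j_k}$ the only eigenvalue of $\mathcal{E}_{A_{j_k}}$ of magnitude one is the simple eigenvalue $1$; hence the $1$-eigenspace of $\mathcal{E}^{*}_{A_{j_k}}$ is one-dimensional, forcing $X_k^\dagger X_k = c_k \Id$. Since $X_k$ is invertible, $c_k>0$, and then $U_k := X_k/\sqrt{c_k}$ is unitary and induces the same conjugation, so the Theorem holds with $X_k$ unitary (and, symmetrically, $X_k^{-1} = U_k^\dagger$).

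The main (and only real) obstacle is establishing that the $1$-eigenspace of the dual map is one-dimensional; but this is essentially immediate from the definition of a normal tensor together with the fact that a CPM and its adjoint share spectrum, so no new machinery is required beyond what is already collected in Section~\ref{Sec:MPV}.

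For the Corollary, I would just revisit its proof: there $X$ is constructed as $Y = \oplus_{j=1}^g \Id_{r_j} \otimes Y_j$, where each $Y_j$ is precisely the gauge matrix produced by the Theorem applied to the $j$-th BNT element. Feeding the strengthened Theorem back in, each $Y_j$ is unitary, and a block-diagonal assembly of tensor products of identities with unitaries is itself unitary; hence $X$ is unitary, completing the upgrade of Corollary~\ref{II_cor2}.
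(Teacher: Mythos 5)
Your proof is correct and uses essentially the same mechanism as the paper: the unitarity is already established inside the proof of Lemma~\ref{equalMPS}, where the gauge matrix is shown to satisfy $\sum_i A^{i\dagger}_a X^\dagger X A^i_a = X^\dagger X$ and the simplicity of the eigenvalue $1$ of the trace-preserving transfer map forces $X^\dagger X\propto\Id$ --- precisely your fixed-point step for the dual CPM. The only organizational difference is that the paper builds the unitary directly into that lemma (so the corollary needs no separate argument), whereas you upgrade the output of Theorem~\ref{thm1} a posteriori and then assemble the block-diagonal unitary for Corollary~\ref{II_cor2} exactly as the paper does with $Y=\oplus_j \Id_{r_j}\otimes Y_j$.
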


\section{Proofs of Section \ref{Sec:MPS}}\label{AppendixPure}

Let us start proving Theorem \ref{thm:renormalization-flow}. For that we start with the observation that the classes of tensors given by the relation (\ref{eq:equivalence-relation-RFP-pure})  are in one to one correspondence with the set of CPM since two sets of matrices give the same CPM via the formula
\be\label{eq:appendix-CPM} \mathcal{E}(X)=\sum_{i=1}^d A^i X A^{i\dagger}
\ee if and only if they are related by an  isometry in the physical indices \cite{Michael-Notes}.

With this identification, it is easy to see that the renormalization step defined by blocking in Section \ref{Sec:MPS} corresponds exactly to $\mathcal{E}\mapsto \mathcal{E}^2$ and, in particular, it does not depend on the representatives of the classes of tensors that are used. Moreover, if the renormalization flow has a limit, it must verify $\mathcal{E}= \mathcal{E}^2$. Using again that two sets of matrices giving the same CPM are related by an isometry in the physical indices we get Theorem \ref{thm:renormalization-flow}. Using Theorem \ref{TheoremZCLPure} (that we will prove below) and the fact that the transfer matrix $\mathbb{E}$ is the matrix associated to the linear map $\mathcal{E}$, this shows also the equivalence $(i)\Leftrightarrow (ii)$ in Theorem  \ref{thm:main-MPS}.

Let us now see that, if we start with a tensor $A$ in CF, the renormalization procedure always converges. We write $A$ in terms of its BNT
(\ref{eq:II_ABasicTensors}). Given the block diagonal form of the matrices $A^i$, the transfer matrix is also block diagonal. In order to simplify the notation, we define
 \be
 \label{eq:II_Etilde}
 \mathbb{E}' = \left(X\otimes \bar X\right) \mathbb{E}
 \left(X\otimes \bar X\right)^{-1},
 \ee
where $X$ is defined in (\ref{eq:II_X}). We have
 \be
 \label{EasEkk}
 \mathbb{E}'= \oplus_{j,j'=1}^g \oplus_{q,q'=1}^{r_j, r_{j'}}\mu_{j,q}{\bar\mu_{j',q'}} \mathbb{E}_{j,j'},
 \ee
where
 \be
 \label{Ekk}
 \mathbb{E}_{j,j'} = \sum_{i=1}^d A^i_j \otimes \bar A^i_{j'}
 \ee
$\mathbb{E}_{j,j}$ can be considered as the matrix associated to the linear CPM (\ref{Ek}). It thus has a single, nondegenerate, eigenvalue of magnitude 1, with corresponding right and left eigenvectors $|R_j)$ and $|L_j)$, i.e.
 \begin{subequations}
 \bea
 \mathbb{E}_{j,j} |R_j)&=&|R_j),\\
 (L_j| \mathbb{E}_{j,j} &=&(L_j|,
 \eea
 \end{subequations}
If the tensor is in CFII, we have in addition
 \begin{subequations}
 \bea
 \label{Lk}
 |L_j) &=& \sum_{\alpha} |\alpha,\alpha), \\
 |R_j) &=& \sum_{\alpha} (\Lambda_j)_{\alpha,\alpha} |\alpha,\alpha),
 \eea
 \end{subequations}
where $\Lambda_j>0$ and diagonal. The eigenvalues of $\mathbb{E}_{j,j'}$ for $j\ne j'$ must have a magnitude $< 1$ (see Lemma \ref{equalMPS}).
The assumed normalization ($|\mu_{jq}|\le 1$ and at least one of them $=1$) implies then that the sequence $\mathbb{E}^N$, and hence the sequence $\mathcal{E}^N$, will converge always.

Let us move now to prove Theorem \ref{TheoremZCLPure}, that is, ZCL (defined as CID and LO) is equivalent to the property $\mathbb{E}^2=\mathbb{E}$ for a tensor $A$ in CF. The above argument shows such a tensor $A$ is LO if and only if $\mathbb{E}_{j,j'}=0$ for all $j\ne j'$. Let's assume then that property and prove that then $\mathbb{E}^2=\mathbb{E}$ is equivalent to CID. This will finish the proof of Theorem \ref{TheoremZCLPure}. Wlog we will assume for simplicity that we work with $\mathbb{E}'$ as defined in (\ref{eq:II_Etilde}).

Clearly, if $\mathbb{E}^2=\mathbb{E}$, equation (\ref{Corr}) shows that $A$ has CID. For the converse we reason as follows. Assume that $\mathbb{E}^2\not =\mathbb{E}$. This implies that there exists a NT in $A$, wlog $A_1$, so that $\mathbb{E}_1^2\not =\mathbb{E}_1$ and hence there is a non-zero eigenvalue $\lambda$ smaller than $1$. Call $|r_1), |l_1)$ to the associated right and left eigenvectors (so that $(l_1|r_1)=1$). The block-injective property guaranteed by Proposition \ref{propblockinj} ensures the existence of operators $O_1, O_2$, supported on regions of size $3 D^5$ so that
\be
\mathbb{E}_{O_1}=P_{j=j'=1}\left( \oplus_{q,q'=1}^{r_1} (\mu_{1,q}\bar\mu_{1,q'})^{3D^5}|R_1)(l_1|\right) P_{j=j'=1}
\ee
and
\be
\mathbb{E}_{O_2}=P_{j=j'=1}\left( \oplus_{q,q'=1}^{r_1} (\mu_{1,q}\bar\mu_{1,q'})^{3D^5} |r_1)(L_1|\right) P_{j=j'=1}
\ee
where $P_{j=j'=1}$ projects on the sector $j=j'=1$ of the decomposition (\ref{EasEkk}).

Due to formula (\ref{Corr}), the correlation function
\be
\langle V^{(N)}(A)|O_1 O_2|V^{(N)}(A)\rangle= \lambda^{{\rm dist}(O_1,O_2)} \sum_{q,q'=1}^{r_1}\left(\mu_{1,q}\bar\mu_{1,q'}\right)^{N}
\ee
Recall that by Lemma \ref{Lem:app_simple},
\be
 \sum_{q,q'=1}^{r_1}\left(\mu_{1,q}\bar\mu_{1,q'}\right)^{N}= \left| \sum_{q=1}^{r_1}\mu_{1,q}^{N}\right|^2\not = 0
\ee
for infinitely many $N$s. Hence the tensor does not have CID, what finishes the proof of Theorem \ref{TheoremZCLPure}.

The next step is to prove Theorem \ref{thm:charact-MPS}, giving the structural characterization for RFP tensors in CF presented in (\ref{III_CFI_RFP}). For that we need  the following lemma.

\begin{lem}\label{lem:charact-NT-pure-RFP}
A NT $A$ is a RFP iff it can be written as
 \be
 \label{eq_III_NT_RFP}
 A^i = X \Lambda U^i X^{-1},
 \ee
where $\Lambda$ is diagonal, positive, with $\tr(\Lambda)=1$, and $U$ is an isometry
 \be
 \sum_{i=1}^s U^i_{\alpha,\beta} \bar U^i_{\alpha',\beta'} =
 \delta_{\alpha,\alpha'}\delta_{\beta,\beta'}.
 \ee
 We can express graphically (\ref{eq_III_NT_RFP}) as follows
 \be
 {\includegraphics[height=5em]{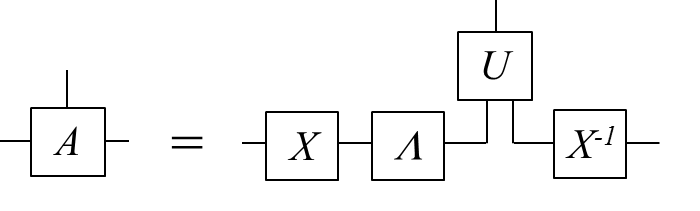}}
 \ee
 \end{lem}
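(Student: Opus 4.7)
The plan is to reduce the characterization of normal-tensor RFP's to the transfer-matrix projector condition $\mathbb{E}^2=\mathbb{E}$. For a single NT the BNT is the tensor itself, so the local orthogonality piece of ZCL is vacuous and Theorem~\ref{TheoremZCLPure}, together with the equivalence $(i)\Leftrightarrow (ii)$ of Theorem~\ref{thm:main-MPS}, reduces the RFP property of a NT to the CPM $\mathcal{E}(Y)=\sum_i A^i Y A^{i\dagger}$ being idempotent. Both directions are then handled at the level of $\mathcal{E}$, with the Kraus freedom supplying the structural form.

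For the ``$\Leftarrow$'' direction I would fix the gauge $X=\Id$ without loss of generality. The isometry condition on $U$ is equivalent to $\sum_i U^i\otimes\bar U^i=|I)(I|$, with $|I)=\sum_\alpha |\alpha,\alpha)$, so the transfer matrix factorises as $\mathbb{E}=(\Lambda\otimes\Lambda)|I)(I|$, a rank-one operator, and $\mathbb{E}^2=(I|(\Lambda\otimes\Lambda)|I)\,\mathbb{E}=\mathbb{E}$ under the normalisation fixed in the statement. Hence $A$ is a RFP.

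For the ``$\Rightarrow$'' direction I would first absorb a gauge $X$ to bring $A$ into CFII, so that $\mathcal{E}$ is trace preserving and its fixed point $\Lambda$ is strictly positive and diagonal. Idempotency combined with the NT spectral hypothesis (unique eigenvalue $1$, remaining ones of strictly smaller modulus) forces $\mathbb{E}$ to be a rank-one projector onto the span of that fixed point, so $\mathcal{E}(Y)=\tr(Y)\,\Lambda$ with $\tr(\Lambda)=1$. The Choi matrix of this CPM is $\Lambda\otimes\Id_D$, of rank $D^2$ since $\Lambda>0$; a canonical minimal Kraus family is $\tilde K^{(m,\alpha)}=\sqrt{\lambda_m}\,|m\rangle\langle\alpha|$. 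By Kraus freedom, the original operators $\{A^i\}$ are related to this family by a coisometry $V$ on the Kraus index, $A^i=\sum_{m,\alpha}V_{i,(m,\alpha)}\tilde K^{(m,\alpha)}$, and setting $U^i_{m,\alpha}:=V_{i,(m,\alpha)}$ yields $A^i=\sqrt{\Lambda}\,U^i$, with the two-sided relation $\sum_i U^i_{\alpha\beta}\bar U^i_{\alpha'\beta'}=\delta_{\alpha\alpha'}\delta_{\beta\beta'}$ of the lemma being a direct transcription of the coisometry property of $V$. Undoing the gauge then produces the stated form $A^i=X\Lambda U^i X^{-1}$, modulo the normalisation convention for $\Lambda$.

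The step I expect to be the most delicate is the translation between the Kraus-level description and the explicit tensor form: the Kraus freedom furnishes a coisometry on the single index labelling the Kraus operators, whereas the lemma demands a two-sided isometry on the pair of virtual indices $(\alpha,\beta)$ of $U^i$, and one must also reconcile the $\sqrt{\Lambda}$ that naturally appears from the Choi decomposition with the $\Lambda$ written in the statement. Once the Choi matrix has been identified as $\Lambda\otimes\Id_D$, the rest of the argument is essentially direct computation.
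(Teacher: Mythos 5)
Your proof is correct and follows essentially the same route as the paper: both reduce the RFP property of a NT to $\mathbb{E}^2=\mathbb{E}$, pass to CFII so that $\mathbb{E}=|\Lambda)(\Id|$ with $\tr(\Lambda)=1$, and recover the stated form from the isometric (Kraus) freedom relative to the reference family $\sqrt{\lambda_m}\,|m\rangle\langle\alpha|$, which is exactly the paper's tensor $A^{(\alpha',\beta')}_{\alpha,\beta}=\delta_{\alpha,\alpha'}\delta_{\beta,\beta'}\sqrt{\Lambda_\alpha}$. The $\sqrt{\Lambda}$-versus-$\Lambda$ discrepancy you flag at the end is present in the paper's own proof (which likewise produces $\sqrt{\Lambda}$ with $\tr(\Lambda)=1$, the normalization under which your converse computation $\mathbb{E}^2=(I|(\sqrt{\Lambda}\otimes\sqrt{\Lambda})|I)\,\mathbb{E}=\tr(\Lambda)\,\mathbb{E}=\mathbb{E}$ actually closes), so your treatment is, if anything, more explicit on that point.
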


\begin{proof}
Let us take a NT $A$ which is a RFP (the reverse implication is trivial). Let us first assume that $A$ is in CFII.
Since $A$ is a NT, its transfer matrix has a single block.
Moreover, being a RFP, $\mathbb{E}^2=\mathbb{E}$, which implies that $\mathbb{E}=|R)(L|$ with $(L|R)=1$. Moreover,
 \be
 |R)= \sum_{\alpha=1}^D \Lambda_\alpha |\alpha,\alpha),\;\;
 (L|=\sum_{\alpha=1}^D (\alpha,\alpha|
 \ee
This transfer matrix is obtained by a tensor whose physical index is a double-index, $i=(\alpha',\beta')$,  $A^{(\alpha',\beta')}_{\alpha,\beta}= \delta_{\alpha,\alpha'}\delta_{\beta,\beta'} \sqrt{\Lambda_\alpha}$. Any other tensor giving rise to the same transfer matrix must be related to this one by an isometry $U$ acting on the physical indices. Thus, we arrive at (\ref{eq_III_NT_RFP}) with $X=\Id$. The appearance of $X$ follows from the fact that $A$ is in CF, and thus it is related by a similarity transformation to its CFII.
\end{proof}

The proof of Theorem \ref{thm:charact-MPS} follows immediately now  if we use the decomposition of $A$ in BNT elements (\ref{eq:II_ABasicTensors}), and the form of each NT in the RFP given in the previous proposition. Note that the proof of RFP $\Leftrightarrow$ ZCL shown above gives explicitly that if $A$ is a RFP, each one of the elements in the BNT is a RFP too.

Let us finish this section showing the remaining equivalence in Theorem \ref{thm:main-MPS}. Namely that RFP $\Leftrightarrow$ NNCPH.

The implication RFP $\Rightarrow$ NNCPH is trivial from Theorem \ref{thm:charact-MPS}. To prove the reverse, we will use \cite{Beigi} where it is shown that the ground space of any nearest-neighbor commuting Hamiltonian in a 1D spin chain with a finite (independent of system size) degeneracy $g$ is spanned by $g$ states of the form (\ref{eq:basic-vectors-RFP-pure}) that are locally orthogonal. By the defining properties of the parent Hamiltonian, they must be then a BNT for the original tensor $A$, and they are RFP. Tensor $A$ has then the form given in (\ref{III_CFI_RFP}) and is a RFP by Theorem \ref{thm:charact-MPS}.


\section{Proofs of Section \ref{Sec:mixed-states}}
\label{AppendixMixed}

\subsection{Proof of Proposition \ref{PropILILp1}}

\begin{proof}
We show here that for a translationally invariant MPDO of $N$ spins and finite bond dimension, $D$, the mutual information, $I_L$, is a monotonically increasing function ($I_L\le I_{L+1}$) for $L<\lfloor N/2\rfloor$, and that it is bounded by a constant that only depends on $D$. The latter immediately follows from the fact that for any MPDO the mutual information fulfills an area law, $I_L\le 4 \log D$, as proven in \cite{WolfAreaLaw}. In order to prove monotonicity, we divide the spin chain in four consecutive segments, $A,B,C$ and $D$, consisting of $N-2L-1, L, 1$ and $L$ spins, respectively (note that $N-2L-1> 0$ since $L<\lfloor N/2 \rfloor$). We use the strong subaddivity inequality for the entropy, $S_B+S_{ABC}\le S_{AB}+S_{BC}$, where $S_X$ denotes the entropy of the reduce state in region $X$. Substracting $S_{ABCD}=S_N$ on both sides of the inequality, we obtain the desired monotonicity condition.
\end{proof}

\subsection{Proof of Theorem \ref{thm:main-simple}}

In this subsection we will slightly change the graphical representation, due to the fact that in some instances we will have to consider tensors with many indices. We will follow the convention that in tensors generating  MPDO we will represent the bra indices with dashed lines whenever we write them on the other side of the tensors. For instance, the tensor $K$ generating the MPDO $\sigma^{(N)}(K)$ will be interchangeably written as
 \be
 \raisebox{-12pt}{\includegraphics[height=4.5em]{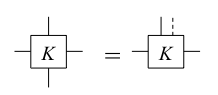}}
 \ee

We start out with the first implication of Theorem \ref{thm:main-simple}, which is the simplest one:

\begin{prop}
\label{propsimple}
If a tensor $K$ is a RFP then it satisfies ZCL and SAL.
\end{prop}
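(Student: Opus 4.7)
The plan is to derive ZCL and SAL as two separate and essentially immediate consequences of the reversibility built into Definition \ref{RFPMixedTS}. The observation that drives everything is that the tensor-level identities ${\cal T}[M_1(X)]=M_2(X)$ and ${\cal S}[M_2(X)]=M_1(X)$ lift, by taking $X$ to be the contraction of the remaining $N-1$ tensors around the ring, to state-level identities: applying ${\cal T}$ to any single physical site of $\sigma^{(N)}(K)$ produces $\sigma^{(N+1)}(K)$, and applying ${\cal S}$ to any two adjacent sites of $\sigma^{(N+1)}(K)$ recovers $\sigma^{(N)}(K)$. In particular ${\cal T}$ is inverted by ${\cal S}$ on the full MPDO state, not just on the individual tensor.

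For ZCL I would simply contract the two physical indices (bra and ket) on both sides of ${\cal S}[M_2(X)]=M_1(X)$. Trace-preservation of ${\cal S}$ turns the LHS into $\mathrm{tr}_{\rm phys}M_2(X)=\mathbb{E}^2\,X$, where $\mathbb{E}:=\sum_i M^{i,i}$ is the MPDO transfer matrix obtained by pairing bra and ket on one site, while the RHS is $\mathbb{E}\,X$. Since $X$ is arbitrary this forces $\mathbb{E}^2=\mathbb{E}$, which is the graphical ZCL condition of Definition \ref{DefinitionZCL} (the direct mixed-state analogue of Theorem \ref{TheoremZCLPure}).

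For SAL I would combine monotonicity of the quantum mutual information under local trace-preserving CPMs with the reversibility just noted. Fix $N$ and partition the ring into a contiguous block $A$ of $L$ sites and its complement $B$ of $N-L$ sites. Applying ${\cal T}$ to a single site inside $A$ produces $\sigma^{(N+1)}(K)$ with $A$ enlarged to $L+1$ sites and $B$ unchanged; the chain
\begin{equation*}
I(A{:}B)_{\sigma^{(N)}}\,\ge\,I({\cal T}(A){:}B)_{\sigma^{(N+1)}}\,\ge\,I({\cal S}\circ{\cal T}(A){:}B)_{\sigma^{(N)}}\,=\,I(A{:}B)_{\sigma^{(N)}}
\end{equation*}
collapses to equalities, yielding $I_L^{(N)}=I_{L+1}^{(N+1)}$. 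Performing the same move on a site inside $B$ instead gives $I_L^{(N)}=I_L^{(N+1)}$. Combining the two, $I_L^{(N+1)}=I_{L+1}^{(N+1)}$ for every admissible $L$, which is SAL in the sense of Definition \ref{def:area-law}. I do not anticipate any real difficulty; the only care needed is the bookkeeping that shows the partition shifts consistently when one inserts an extra site, and the realisation that the reversibility lifts unchanged from the single-site tensor identities to the full state identities, which is precisely what is required to close the monotonicity chain.
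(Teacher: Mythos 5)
Your proposal is correct and follows essentially the same route as the paper: ZCL is obtained from the RFP identity together with trace preservation of one of the maps (the paper uses ${\cal T}$ on $K_1(X)$, you use ${\cal S}$ on $M_2(X)$ -- equivalent), and SAL from monotonicity of the mutual information under local trace-preserving CPMs combined with reversibility. The only difference is bookkeeping in the SAL step: the paper applies ${\cal T}$ inside $A$ and ${\cal S}$ inside $B$ simultaneously so as to stay at fixed $N$ and then invokes Proposition \ref{PropILILp1} for the reverse inequality $I_L\le I_{L+1}$, whereas you pass through $\sigma^{(N+1)}$ and close the argument with the extra identity $I_L^{(N)}=I_L^{(N+1)}$; both are valid.
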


\begin{proof}
Since $K$ is a RFP, there exist two tpCMP, ${\cal T}$ and ${\cal S}$, fulfilling (\ref{TandSforsimple}). The fact that it has ZCL is a direct consequence of (\ref{eq:Tmap}) and that ${\cal T}$ preserves the trace: ${\rm tr}[K_2(X)]={\rm tr}\left[{\cal T}[K_1(X)]\right]={\rm tr}[K_1(X)]$ for all $X$, and thus $K$ has ZCL according to Definition \ref{DefinitionZCL}. In order to prove SAL, let us take the chain,
and remind that according to Proposition \ref{PropILILp1}, for $L<\lfloor N/2\rfloor$ we have that $I_L\le I_{L+1}$. In order to prove the equality, we separate the chain into two regions, A and B, of size $L$ and $N-L$, respectively, with $L<\lfloor N/2\rfloor$, so that the mutual information between A and B is $I_L$. Now we apply ${\cal T}$ to the first spin of region A, and ${\cal S}$ to the first two spins of region B, so that the mutual information between A and B will become $I_{L+1}$. Given that the mutual information cannot increase by local operations, we have that $I_{L+1}\le I_L$, which completes the proof.
\end{proof}

We continue now with a simple property that follows from SAL. Let us define $\sigma_3^{(N)}$ as the state that is obtained from $\sigma^{(N)}(K)$ by tracing all the spins but the first three. Graphically
 \be
\label{sigma3bka}
\sigma_3^{(N)} =
\raisebox{-13.8pt}{\includegraphics[height=4.2em]{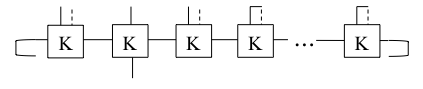}}
 \ee
Let us call those spins $A$, $B$, and $c$, and the corresponding Hilbert spaces, $H_A,H_B$, and $H_c$, respectively. Then we have

\begin{lem}
\label{Lsigma3}
If the tensor $K$ has SAL, then we can decompose $H_B=\oplus_k H_{b_{1}}^{(k)}\otimes H_{b_{2}}^{(k)}$, so that
 \be
 \label{sigma3}
 \sigma_3^{(N)} = \oplus_k \rho_{Ab_1}^{(k)}\otimes \rho_{b_2 c}^{(k)}
 \ee
where $\rho_{Ab_1}^{(k)}$ is supported on $H_A\otimes H_{b_1}^{(k)}$ and $\rho_{b_2 c}^{(k)}$ on  $H_{b_2}^{(k)}\otimes H_c$.
\end{lem}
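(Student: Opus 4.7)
The plan is to translate SAL into the statement that the tripartite reduced state $\sigma_3^{(N)}$ (normalized, as per the convention of Section \ref{MixedMutual}) has vanishing conditional mutual information $I(A:c\vert B)=0$ across the middle site $B$, and then to invoke the Hayden--Jozsa--Petz--Winter (HJPW) characterization of states saturating strong subadditivity (SSA) in order to extract the claimed direct-sum/tensor-product splitting of $H_B$.

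First I would work at the level of the (normalized) full MPDO. Writing $f(L):=S_L-S_{L-1}$ for the first difference of the block-entropy sequence, an application of SSA to three consecutive blocks of sizes $1$, $L-1$, $1$ yields $S_{L+1}+S_{L-1}\le 2S_L$, i.e.\ $f(L+1)\le f(L)$, so $f$ is non-increasing. On the other hand, the equalities $I_L=I_{L+1}$ given by SAL amount, after subtracting $S_N$ from both sides, to $S_{L+1}-S_L=S_{N-L}-S_{N-L-1}$, i.e.\ $f(L+1)=f(N-L)$. Taking $L=1$ gives $f(2)=f(N-1)$, and combined with monotonicity of $f$ on $\{2,\ldots,N-1\}$ this forces $f$ to be constant on that range. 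In particular $f(2)=f(3)$, which is exactly the equality $2S_2=S_1+S_3$.

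Next I would pass from the MPDO to its three-spin reduced state $\sigma_3^{(N)}$ on $H_A\otimes H_B\otimes H_c$. By translational invariance of the full MPDO, the marginals of $\sigma_3^{(N)}$ on any $k$ consecutive sites coincide with the $k$-site marginal of the MPDO, so $S(A)=S(B)=S(c)=S_1$, $S(AB)=S(Bc)=S_2$ and $S(ABc)=S_3$. The identity from the previous step then reads
\be
I(A:c\vert B)=S(AB)+S(Bc)-S(B)-S(ABc)=2S_2-S_1-S_3=0.
\ee

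Finally, I would invoke the HJPW theorem: any tripartite state with zero conditional mutual information across the middle system admits a decomposition $H_B=\oplus_k H_{b_1}^{(k)}\otimes H_{b_2}^{(k)}$ and
\be
\sigma_3^{(N)}=\bigoplus_k p_k\,\tilde\rho_{Ab_1}^{(k)}\otimes\tilde\rho_{b_2c}^{(k)}.
\ee
Absorbing the weights $p_k$ into the factors (say $\rho_{Ab_1}^{(k)}:=p_k\tilde\rho_{Ab_1}^{(k)}$, $\rho_{b_2c}^{(k)}:=\tilde\rho_{b_2c}^{(k)}$) gives exactly the form stated in the lemma, with each $\rho_{Ab_1}^{(k)}$ supported on $H_A\otimes H_{b_1}^{(k)}$ and each $\rho_{b_2c}^{(k)}$ on $H_{b_2}^{(k)}\otimes H_c$. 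The main obstacle of the argument is really the first paragraph, namely squeezing the linearity $2S_2=S_1+S_3$ out of the SAL condition via SSA; once that is in hand, HJPW is used as a black box.
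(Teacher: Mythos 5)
Your proof is correct; it uses the same key tool as the paper (the Hayden--Jozsa--Petz--Winter structure theorem for states saturating strong subadditivity, \cite{Hay03}) but reaches it by a different route. The paper partitions the ring into regions $A,B,C,D$ of sizes $1,1,N-3,1$ and observes that $I_1=I_2$ \emph{is}, verbatim, the saturated SSA identity $S_{ABC}+S_B=S_{AB}+S_{BC}$ for the $(N-1)$-site reduced state $\sigma^{(N)}_{N-1}$, with the macroscopic region $C$ playing the role of your single site $c$; HJPW is applied to that state, and the claim for $\sigma_3^{(N)}$ follows by tracing out all but one spin of $C$, which manifestly preserves the direct-sum/tensor-product structure. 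You instead first localize the entropy identity to three sites: from $I_1=I_2$ (i.e.\ $f(2)=f(N-1)$ for $f(L)=S_L-S_{L-1}$) together with the concavity $S_{L+1}+S_{L-1}\le 2S_L$ (itself an instance of SSA) you deduce $2S_2=S_1+S_3$, and then apply HJPW directly to the three-site state. Both arguments are valid, and both in fact use only the single equality $I_1=I_2$ rather than the full SAL chain. Your version costs an extra telescoping step but buys a slightly stronger intermediate fact --- the block entropies $S_L$ are affine in $L$ on $\{1,\dots,N-1\}$, so \emph{every} consecutive tripartition with a one-site middle saturates SSA --- whereas the paper's version avoids the telescoping entirely by letting the third HJPW subsystem be the whole remainder of the chain. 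The two resulting decompositions of $H_B$ need not coincide a priori, but the lemma only asserts existence, so either serves.
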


\begin{proof}
We divide the whole spin chain composed of $N$ spins into four connected regions, $A,B,C$, and $D$, with $1,1,N-3$, and $1$ spin, respectively. SAL implies that $I_2=I_1$, which is equivalent to $S_{ABC}+S_B = S_{AB}+S_{BC}$, i.e. strong subadditivity is fulfilled as an equality. In this case, we can employ the characterization of density operators satisfying this equality \cite{Hay03}: we first divide the Hilbert space $H_{ABC}= \oplus_k H_{Ab_1}^{(k)}\otimes H_{b_2C}^{(k)}$, with $H_{Ab_1}^{(k)}=H_A\otimes H_{b_1}^{(k)}$ ($H_{b_2C}^{(k)}=H_{b_2}^{(k)}\otimes H_C$). Then, we write the reduced state $\sigma_{ABC} = \oplus_k \rho_{Ab_1}^{(k)}\otimes \rho_{b_2C}^{(k)}=\sigma^{(N)}_{N-1}$, where the direct sum is with respect to region $B$, and $\rho_{Ab_1}^{(k)}$ ($\rho_{b_2C}^{(k)}$) is supported on $H_{Ab_{1}}^{(k)}$ ($H_{b_{2}C}^{(k)}$). The operator $\sigma^{(N)}_{N-1}$ is nothing but the original one, $\sigma^{(N)}({\cal K})$, where we have traced the last spin. Now, if we trace all the spins in region $C$ but the first one, we obtain (\ref{sigma3}).
\end{proof}

\begin{defn}
\label{DefQk}
We will denote by $Q_k$ the projector onto $H_{b_{1}}^{(k)}\otimes H_{b_{2}}^{(k)}$.
\end{defn}

Note that in the proof of Lemma \ref{Lsigma3}, the direct sum structure is derived for $\sigma^{(N)}_{N-1}$ and not for $\sigma^{(N)}$. That is, we always trace at least one spin from the chain. This is why in the whole Section \ref{MixedMutual} we have to deal with simple tensors, i.e. that when we trace a spin none of the elements of a BNT are eliminated.

\subsubsection{Case I: $K$ injective}

In this subsection we will assume $K$ to be an injective NT generating MPDOs. In order to emphasize this fact, we will denote the tensor by ${\cal K}$. The fact that it is injective implies that [compare (\ref{injectivity})]
 \be
 \label{iCFK}
 \raisebox{-12pt}{\includegraphics[height=5em]{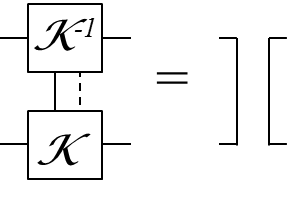}}
 \ee
which will be used in some of the proofs. In the next subsection we will relax this condition and consider general simple tensors.

One of the main purposes of this subsection is to relate SAL and ZCL with the existence of an isometry, $U$, and tensors $r_k,l_k$, such that [compare Proposition \ref{thm:main-simple}]
\be
 \label{AppUkU=rl}
  \raisebox{-12pt}{\includegraphics[height=8em]{MPDO_K=sum_k_LkRk.png}}
\ee
with
\be
 \label{Appetakhetc}
  \eta_{k,h}=\raisebox{-12pt}{\includegraphics[height=3em]{MPDO_etahk=RhLk.png}}\ge 0
\ee
and
 \begin{subequations}
 \label{Apptralktrrk}
 \bea
 {\rm tr}(l_k)&=&|\Phi) a_k,\\
 {\rm tr}(r_k)&=& b_k (\Psi|
 \eea
 \end{subequations}
with
 \begin{subequations}
 \label{AppPsiPhi}
 \bea
 \label{AppPsiPhia}
(\Psi|\Phi)&=&1,\\
 \label{AppPsiPhib}
 \sum_k a_k b_k&=&1.
 \eea
 \end{subequations}

Let us recall that a non-negative matrix --i.e. a matrix with non-negative elements-- is called primitive \cite{refstochmat} if there exist some finite $n_0$ such that $\langle k|T^{n_0}|h\rangle >0$ for all $k,h$. For a primitive matrix, $\lim_{n\to\infty} (T^n)_{k,h}=a_k b_h$ for some real numbers $a_k$ and $b_h$.

\begin{lem}
\label{propSN}
If ${\cal K}$ fulfills SAL, there exists an isometry, $U$, and tensors $r_k,l_k$, such that (\ref{AppUkU=rl}), with (\ref{Appetakhetc}), and the non-negative matrix defined by
 \be
\label{StochT}
 T_{k,h}={\rm tr}(\eta_{k,h})
 \ee
is primitive.
\end{lem}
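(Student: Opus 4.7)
The plan is to derive the structural decomposition by combining SAL with injectivity, and then to read off primitivity from normality. First, I would apply Lemma~\ref{Lsigma3}: SAL implies $I_1=I_2$, i.e.\ equality in strong subadditivity for the three-site reduced state $\sigma_3^{(N)}$, which by the characterization of saturation of strong subadditivity gives the decomposition $H_B=\bigoplus_k H_{b_1}^{(k)}\otimes H_{b_2}^{(k)}$ together with
\be
 \sigma_3^{(N)}\;=\;\bigoplus_k \rho_{Ab_1}^{(k)}\otimes\rho_{b_2 c}^{(k)}.
\ee
The isometry $U$ in the statement is exactly the change of basis implementing this decomposition on each physical site, and the sector projectors are the $Q_k$ of Definition~\ref{DefQk}.

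Next, I would lift this reduced-state factorization to the level of the tensor itself. Because $\mathcal{K}$ is injective, the inverse tensor supplied by \eqref{iCFK} can be contracted into the $A$ and $c$ legs of the three-site MPDO, peeling away the bulk of the chain and leaving the middle tensor with its virtual indices freely accessible on both sides. Running this extraction within each sector, the tensor-product structure of $\rho_{Ab_1}^{(k)}\otimes\rho_{b_2 c}^{(k)}$ transports to a tensor-product decomposition of $Q_k K$ into $l_k\otimes r_k$ at the level of the tensor itself, with $l_k$ carrying the left virtual leg and the $H_{b_1}^{(k)}$ physical index, and $r_k$ carrying the $H_{b_2}^{(k)}$ physical index and the right virtual leg. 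This is exactly \eqref{AppUkU=rl}. The positivity $\eta_{k,h}\ge 0$ then follows for free, since $\eta_{k,h}$ is, up to normalisation by the left/right fixed points of the NT transfer matrix, a reduced density operator of $\sigma^{(N)}(K)$ on the virtual bond between two adjacent sites after sector projection; the vectors $|\Phi)$, $(\Psi|$ and the numbers $a_k,b_k$ satisfying \eqref{Apptralktrrk}--\eqref{AppPsiPhi} are read off from those unique left and right fixed points (whose existence and uniqueness is guaranteed by $\mathcal{K}$ being a NT).

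For primitivity of $T_{k,h}={\rm tr}(\eta_{k,h})$ I would argue as follows. By construction, $T$ is a coarse-graining to the sector labels of the ``partition-function'' transfer operator of $\mathcal{K}$ obtained by tracing both physical indices per site, since summing over the sector label at each site and multiplying the $\eta$'s along the chain reproduces ${\rm tr}\,\sigma^{(N)}(K)$. Reducibility of $T$ would split the sectors into two disjoint invariant sets, producing two linearly independent non-negative fixed points of the transfer operator and contradicting the uniqueness of the leading eigenvalue guaranteed by normality of $\mathcal{K}$; similarly, any period of $T$ would build a $p$-periodic vector of the transfer operator, which is again excluded after the blocking implicit in the canonical form. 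Hence $T$ is irreducible and aperiodic, i.e., primitive.

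I expect the main obstacle to be the lifting step: making rigorous the passage from the sector-wise factorization of $\sigma_3^{(N)}$ to the tensor-product factorization $Q_k K=l_k\otimes r_k$. The subtle point is that the inverse tensor produced by injectivity has to be contracted so that the bulk contributions cancel cleanly and the resulting operator on the middle site genuinely inherits the factorization separately on the ``left virtual $+\,b_1^{(k)}$'' and ``$b_2^{(k)}+\,$right virtual'' halves, rather than just as a global bipartite state; injectivity (and not merely normality) of $\mathcal{K}$ is precisely what makes this surjective probing of both halves possible.
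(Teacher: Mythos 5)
Your skeleton matches the paper's: Lemma~\ref{Lsigma3} for the three-site factorization, contraction of the inverse tensor (\ref{iCFK}) into the first and third sites to lift that factorization to the tensor level and obtain (\ref{AppUkU=rl}), and positivity of the $\eta_{k,h}$ from the fact that projecting $\tilde\sigma$ onto a sector pattern $Q_{k_1}\otimes\cdots\otimes Q_{k_N}$ yields a positive operator equal to $\otimes_n\eta_{k_n,k_{n+1}}$. Up to that point the proposal is essentially the paper's proof, and you correctly identify the lifting step as the place where injectivity (not mere normality) is doing the work.

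The gap is in the primitivity argument. You claim that reducibility of $T$ would ``produce two linearly independent non-negative fixed points of the transfer operator, contradicting the uniqueness of the leading eigenvalue guaranteed by normality.'' But the object whose leading eigenvalue is unique by normality is the doubled transfer matrix $\mathbb{E}=\sum_{i,i'}\mathcal{K}^{ii'}\otimes\bar{\mathcal{K}}^{ii'}$ acting on the virtual space, whereas reducibility of $T$ is a block structure in the \emph{physical} space: it says $\tilde\sigma=\oplus_x\tilde\sigma_x$ with $\tilde\sigma_x=P_x^{\otimes N}\tilde\sigma P_x^{\otimes N}$ for orthogonal physical projectors $P_x$. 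Writing $\mathbb{E}=\sum_x\mathbb{E}_x$ with $\mathbb{E}_x$ the transfer matrix of $P_x\mathcal{K}P_x$ does not hand you two fixed points of $\mathbb{E}$ --- the subdominant block could simply have smaller spectral radius, so no contradiction with normality arises. The same objection applies to your aperiodicity claim: there is no intertwiner identifying a period of $T$ with a $p$-periodic eigenvector of $\mathbb{E}$. What the paper actually does is exploit injectivity again: from the vanishing conditions (\ref{P1KP2}) one deduces that the span $\mathcal{A}$ of $\{\mathrm{tr}(X\mathcal{K})\}_X$ decomposes as $\mathcal{A}_1+\mathcal{A}_2$ with $\mathcal{A}_1\mathcal{A}_2=0$, and such a space can never be all of $\mathcal{M}_{D\times D}$; this contradicts injectivity and forces a single irreducible (and, after the blocking implicit in the setup, primitive) component. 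You need this algebraic step, or an equivalent one, to close the proof. Separately, your parenthetical claim that $|\Phi)$, $(\Psi|$, $a_k$, $b_k$ satisfying (\ref{Apptralktrrk})--(\ref{AppPsiPhi}) can already be ``read off'' at this stage is false: those relations require ZCL in addition to SAL (they are the content of Lemma~\ref{SALZCL}, and Example~\ref{ExSALnoZCL} gives a SAL state for which $T$ is not rank one), though since they are not part of the present statement this does not affect the lemma itself.
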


\begin{proof}
Since ${\cal K}$ fulfills SAL, we can use the characterization of Lemma \ref{Lsigma3}.
Now, we project the second spin using $Q_k$, and apply the map ${\cal K}^{-1}$ to the first and third spin of $\sigma^{(N)}_3$. We denote by
 \be
 X_{\alpha,\beta}=\raisebox{-13.2pt}{\includegraphics[height=3.2em]{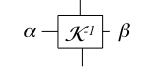}}
 \ee
and
 \be
 \label{Qketc}
 Q_k {\rm tr}_{1,3}\left[(X_{\alpha_1,\beta_1}\otimes \Id_2 \otimes X_{\alpha_3,\beta_3})\sigma_3^{(N)}\right] Q_k = A^{(k)}_{\alpha_1,\beta_1}\otimes B^{(k)}_{\alpha_3,\beta_3}
 \ee
where $A^{(k)}_{\alpha_1,\beta_1}={\rm tr}_1(X_{\alpha_1,\beta_1} \rho_{Ab_1}^{(k)})$ and similarly with $B^{(k)}_{\alpha_3,\beta_3}$. Taking (\ref{sigma3bka}) for $\sigma^{(N)}_3$, and using (\ref{iCFK}) in the
computation of the lhs of (\ref{Qketc}) we get
 \be
 A^{(k)}_{\alpha_1,\beta_1}\otimes B^{(k)}_{\alpha_3,\beta_3}= m_{\beta_3,\alpha_3} \raisebox{-17.2pt}{\includegraphics[height=4.2em]{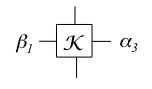}}
 \ee
with
 \be
 m_{\beta,\alpha}=
 \raisebox{-14pt}{\includegraphics[height=3.0em]{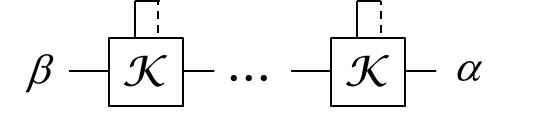}}
 \ee
It is clear that we can always choose $\alpha$ and $\beta$ such that $m_{\beta,\alpha}\ne 0$ [since otherwise $\sigma^{N-3}({\cal K})$ would vanish]. Thus we can write
 \be
 \label{formK}
 \raisebox{-12pt}{\includegraphics[height=9em]{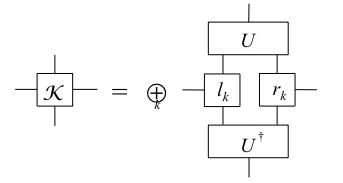}}
 \ee
for some tensors $r_k$ and $l_k$. The direct sum structure refers to the physical (vertical) indices, as it is inherited from the splitting of the space $H_B$ defined above. The isometry $U$ just changes the local basis such that in each subspace defined by $Q_k$, we have the tensor product structure $H_{b_1}^{(k)}\otimes H_{b_2}^{(k)}$. In the following, we will work in that basis, by defining $\tilde \sigma= U^{\dagger\otimes  N} \sigma^{(N)}({\cal K}) U^{\otimes  N}\ge 0$.

We proceed by defining
 \be
 \label{etarl}
 \eta_{k,h}=\raisebox{-22pt}{\includegraphics[height=4.7em]{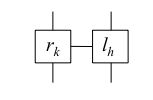}}
 \ee
Then, according to (\ref{formK}) we have
 \be
 0\le \left[Q_{k_1}\otimes\ldots Q_{k_N}\right] \tilde\sigma \left[Q_{k_1}\otimes\ldots Q_{k_N}\right] = \otimes_{n=1}^N \eta_{k_n,k_{n+1}}
 \ee
with $k_{N+1}=k_1$. It is then clear that the $\eta$'s can be chosen positive semidefinite.

We define now
 \be
\label{StochT}
 T_{k,h}={\rm tr}(\eta_{k,h})\ge 0,
 \ee
and thus it is a non-negative matrix. We can always find some finite integer $n$ such that we can write $T^n=\oplus_x T^{(x)}$, where $T^{(x)}$ is primitive for each $x$ \cite{refstochmat} (since by blocking we can get rid of the $p$-periodic components, as it is in the case of CPMs; see discussion after Eq.(\ref{Ek})). For simplicity, we will assume that $n=1$, but the proof can be easily generalized to an arbitrary $n$. Thus, we can split the different $k$ in disjoint sets, $S_x$, so that $\eta_{k,k'}=0$ if $k$ is in a different set than $k'$. Thus, we can write
 \be
 \tilde \sigma = \oplus_{x} \tilde\sigma_x
 \ee
where each
 \be
 \tilde\sigma_x = \oplus_{k_1,\ldots,k_N\in S_x} \otimes_{n=1}^N \eta_{k_n,k_{n+1}}.
 \ee
The tensors generating each $\tilde\sigma_x$ must be locally orthogonal (see Definition \ref{DefLO}), since they correspond to different $k$'s. This is incompatible with the fact that ${\cal K}$ is injective, unless there is a single $x$, i.e. $T$ is primitive. This last statement can be proven as follows: assume that there are two values of $x=1,2$. Then we can find two orthogonal projectors, $P_{1,2}$ with $P_1^{(1)} \tilde\sigma P_2^{(1)}= P_2^{(1)} \tilde\sigma P_1^{(1)}=\left(P_1^{(1)}\otimes P_2^{(2)}\right) \tilde\sigma \left(P_1^{(1)}\otimes P_2^{(2)}\right)=0$, where the superscript indicates on which spin they are acting on. Since ${\cal K}$ is injective, we can use (\ref{iCFK}) to obtain
 \be
 \label{P1KP2}
 \raisebox{-35pt}{\includegraphics[height=8.5em]{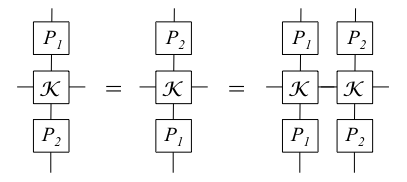}}\quad =0
 \ee
It is now relatively easy to show that this is incompatible with the fact that ${\cal K}$ is injective. For that, let us denote by ${\cal A}_{1,2}$ the linear spaces of matrices generated by ${\rm tr}(X P_{1,2} {\cal K} P_{1,2})$, respectively, for all $X$. According to the first two conditions in (\ref{P1KP2}), the linear space of matrices generated by ${\rm tr}(X{\cal K})$, ${\cal A}$, is ${\cal A}_1+ {\cal A}_2$. Furthermore, the last condition in (\ref{P1KP2}) implies that ${\cal A}_1{\cal A}_2=0$, and thus ${\cal A}$ cannot be the whole set of matrices, from which we conclude that ${\cal K}$ cannot be injective. This statement also holds if there are more than two values of $x$.
\end{proof}

So far we have not introduced the condition of ZCL. It will restrict the properties of the tensors $r_k$ and $l_k$ defined in (\ref{formK}). For the next proposition we need to define
 \be
 \label{lkrk}
 \raisebox{-12pt}{\includegraphics[height=7em]{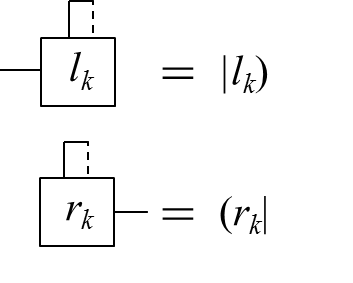}}
 \ee
where the tensors are the ones used to build up the $\eta$'s in (\ref{etarl}). Note that
 \be
 \label{Tkn}
 T_{kh}=(r_k|l_h).
 \ee

\begin{lem}
\label{SALZCL}
If ${\cal K}$ fulfills both SAL and ZCL, then there exist two vectors, $\Phi$, and $\Psi$ and real numbers, $a_k,b_k$, such that (\ref{Apptralktrrk}) and (\ref{AppPsiPhi}).
\end{lem}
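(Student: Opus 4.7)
The plan is to combine the decomposition ${\cal K}=\sum_k r_k\otimes l_k$ provided by Lemma \ref{propSN} with the rank-one structure that ZCL forces on the physical-traced transfer operator, and then read off the claimed factorizations of ${\rm tr}(l_k)$ and ${\rm tr}(r_k)$ via the identity $T_{k,h}=(r_k|l_h)$ of (\ref{Tkn}).

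First, I translate ZCL into a rank-one condition on the nonnegative matrix $T$. Definition \ref{DefinitionZCL} imposes a projector-type identity on the object obtained by tracing the physical indices of two adjacent copies of ${\cal K}$. Plugging the decomposition ${\cal K}=\sum_k r_k\otimes l_k$ into this identity, and using the orthogonal-support structure enforced by the projectors $Q_k$ of Definition \ref{DefQk} (which allows one to isolate the combinatorial label $k$ from the physical indices), together with the injective-inverse property (\ref{iCFK}) applied to ${\cal K}$, the projector identity descends to the scalar matrix $T$, giving $T^2=\lambda T$ for the appropriate normalization scale $\lambda$. Lemma \ref{propSN} has already shown that $T$ is primitive, so its dominant eigenvalue is unique; combined with $T^2=\lambda T$ this forces $T$ to have rank one, so $T_{k,h}=b_k\,a_h$ with strictly positive $a_h,b_k$.

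Second, the rank-one factorization of $T$ immediately yields the statement of the lemma. From $T_{k,h}=(r_k|l_h)=b_k a_h$, the rank-one structure of the bilinear form forces $(r_k|=b_k(\Psi|$ and $|l_h)=a_h|\Phi)$ for some fixed bra $(\Psi|$ and ket $|\Phi)$ that do not depend on $k$ or $h$. Identifying $(r_k|$ with ${\rm tr}(r_k)$ and $|l_k)$ with ${\rm tr}(l_k)$ as in the convention (\ref{lkrk}) gives (\ref{Apptralktrrk}). Substituting back, the inner product evaluates to $(\Psi|\Phi)=T_{k,h}/(b_k a_h)=1$, which is (\ref{AppPsiPhia}). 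Finally, normalizing so that the dominant eigenvalue of $T$ equals $1$ (consistent with the convention adopted throughout the section that the NT ${\cal K}$ has spectral radius $1$) yields ${\rm tr}(T)=\sum_k a_k b_k=1$, which is (\ref{AppPsiPhib}).

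The main technical obstacle is the first step: showing that the projector structure that ZCL imposes on the whole tensor ${\cal K}$ actually descends to the scalar matrix $T_{k,h}$ indexed by the combinatorial label $k$, rather than only constraining certain weighted sums over $k$. The argument requires exploiting the orthogonality of the physical supports $H_{b_1}^{(k)}\otimes H_{b_2}^{(k)}$ produced by SAL via Lemma \ref{Lsigma3}, which enables one to "read off" the $k$-sector from the physical indices, together with injectivity of ${\cal K}$ to guarantee that no off-block cross-terms contribute to the reduction. Once the rank-one conclusion for $T$ is in place, the remaining manipulations are direct.
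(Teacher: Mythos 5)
Your overall strategy --- feed the decomposition of Lemma \ref{propSN} into the ZCL condition, use primitivity of $T$ to force a rank-one structure, and then read off (\ref{Apptralktrrk}) --- matches the paper's, but both of your key deductions rest on steps that do not go through. First, ZCL does not descend to the matrix identity $T^2=\lambda T$ in the way you describe. What Definition \ref{DefinitionZCL} gives is an operator identity $E^2=E$ on the \emph{virtual} space, where $E=\sum_k |l_k)(r_k|$ is the physically traced tensor and $E^2=\sum_{k,h}T_{k,h}\,|l_k)(r_h|$. The projectors $Q_k$ and the inverse (\ref{iCFK}) act on the physical indices, which have already been traced out in forming $E$ and $T$; they cannot be used afterwards to resolve the label $k$, and the vectors $|l_k)$, $(r_h|$ need not be linearly independent, so one cannot match coefficients blockwise. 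The only consequences of $E^2=E$ that survive the trace are contracted ones, e.g. ${\rm tr}(T^N)={\rm tr}(E^N)={\rm tr}(E)={\rm tr}(T)$ for all $N$ --- which is exactly what the paper uses: constancy of all power sums together with primitivity pins the spectrum of $T$ to $\{1,0,\dots,0\}$ and yields $T_{k,h}=a_kb_h$ with ${\rm tr}(T)=1$ --- or $(r_j|E|l_i)=(r_j|E^2|l_i)$, i.e.\ $T^2=T^3$. Your claimed intermediate step $T^2=\lambda T$ is therefore not established by the argument you sketch.

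Second, the rank-one factorization $T_{k,h}=(r_k|l_h)=b_ka_h$ does \emph{not} by itself force $(r_k|=b_k(\Psi|$ and $|l_h)=a_h|\Phi)$: the pairing matrix is blind to any component of $(r_k|$ orthogonal to ${\rm span}\{|l_h)\}_h$ (and vice versa). For instance, $|l_1)=|l_2)=|0)$, $(r_1|=(0|$, $(r_2|=(0|+(1|$ gives $T_{k,h}=1$ for all $k,h$, which is primitive and rank one, yet the $(r_k|$ are not proportional to a common bra. The missing ingredient is again the operator identity: from $E=E^2=\bigl[\sum_k a_k|l_k)\bigr]\bigl[\sum_h b_h(r_h|\bigr]$ one learns that $E=\sum_k|l_k)(r_k|$ is itself proportional to a fixed rank-one operator $|\Phi)(\Psi|$, and it is from this --- not from the scalar data $T$ alone --- that the paper extracts the proportionality of each ${\rm tr}(l_k)$ and ${\rm tr}(r_k)$ separately; (\ref{AppPsiPhia}) and (\ref{AppPsiPhib}) then follow from $T_{k,h}=a_kb_h$ and ${\rm tr}(T)=1$. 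You should repair both steps along these lines before the proof can be considered complete.
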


\begin{proof}
According to Definition \ref{DefinitionZCL} and (\ref{Tkn}), ZCL implies
 \be
 \sum_k |l_k)(r_k| = \sum_{k,h} T_{k,h} |l_k)(r_h|.
 \ee
Thus,
 \be
 {\rm tr}(T^N)={\rm tr}(T)=\sum_k (r_k|l_k)
 \ee
for all $N$. Using the theory of fixed points of non-negative matrices \cite{refstochmat}, and taking into account that $T$ is primitive (see Lemma \ref{propSN}), this in turn implies that $T$ has one eigenvalue equal to one and the rest are zero, so that $T_{k,h} = a_k b_h$, where $a_k,b_h$ are real numbers and that ${\rm tr}(T)=1$, which results in (\ref{AppPsiPhib}). We also conclude that
 \be
 \sum_k |l_k)(r_k|= \left[\sum_k a_k |l_k)\right] \left[\sum_h b_h (l_k|\right] \propto |\Phi)(\Psi|
 \ee
where we have chosen $(\Phi|\Psi)=1$. Then each of the terms in the sum has to be also proportional to $|\Phi)(\Psi|$, so that we have  (\ref{Apptralktrrk}). Since $T_{k,h}=(l_k|r_h)=a_k b_h$, we have the rest of the statements of the lemma.
\end{proof}

Putting together lemmas \ref{propSN} and \ref{SALZCL}, we have

\begin{cor}
\label{CorollarytoProp}
If ${\cal K}$ fulfills both SAL and ZCL, there exists an isometry, $U$, tensors $r_k$ and $l_k$, vectors $\Phi$ and $\Psi$, and real numbers $a_k,b_k$ such that (\ref{AppUkU=rl}), with (\ref{Appetakhetc}), (\ref{Apptralktrrk}), and (\ref{AppPsiPhi}).
\end{cor}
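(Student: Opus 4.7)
The plan is essentially bookkeeping: the corollary is a direct combination of Lemmas~\ref{propSN} and \ref{SALZCL}, since the first already produces, from SAL alone, the isometry $U$, the tensors $r_k, l_k$ giving the decomposition (\ref{AppUkU=rl}), and the positivity (\ref{Appetakhetc}), while the second adds the trace constraints (\ref{Apptralktrrk}) and normalizations (\ref{AppPsiPhi}) once ZCL is also imposed. The only thing to check carefully is that the $r_k, l_k$ in the second lemma are literally the ones produced by the first (through definition (\ref{lkrk})), so the constants $a_k, b_k, \Phi, \Psi$ attach consistently to the same objects.

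First I would invoke Lemma~\ref{propSN}, using that ${\cal K}$ satisfies SAL, to fix the isometry $U$ and tensors $r_k, l_k$ such that $U^\dagger {\cal K} U$ has the block form (\ref{AppUkU=rl}) with $\eta_{k,h}\ge 0$ as in (\ref{Appetakhetc}). This lemma also outputs the additional fact that $T_{k,h}=\tr(\eta_{k,h})$ is a primitive non-negative matrix, which is the hypothesis needed for the next step and is the reason the extraction of a single dominant rank-one fixed point will be available.

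Then I would feed the same $r_k, l_k$ (equivalently, the contracted tensors defined in (\ref{lkrk})) into Lemma~\ref{SALZCL}, and use ZCL to deduce that $\tr(T^N)=\tr(T)$ for all $N$; combined with primitivity and the Perron--Frobenius theory invoked inside that lemma, this forces $T$ to have a single non-zero eigenvalue equal to one. Hence $T_{k,h}=a_k b_h$ with $\sum_k a_k b_k = \tr(T)=1$, and $\sum_k |l_k)(r_k|\propto |\Phi)(\Psi|$ with $(\Psi|\Phi)=1$, which splits term-by-term into (\ref{Apptralktrrk}), giving exactly the four groups of conditions in the statement.

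There is really no hard step here, since both ingredients have already been established; the only subtle point is verifying that the isometry and the tensors $r_k, l_k$ are shared between the two lemmas and not re-derived independently, so that the final constants $a_k, b_k$ and vectors $\Phi, \Psi$ are consistently attached to the same decomposition (\ref{AppUkU=rl}). Once this compatibility is noted, the conjunction of the two lemmas yields the corollary verbatim.
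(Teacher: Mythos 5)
Your proposal is correct and matches the paper exactly: the corollary is stated there as an immediate consequence of "putting together Lemmas~\ref{propSN} and \ref{SALZCL}", with the first (from SAL) supplying $U$, $r_k$, $l_k$, (\ref{AppUkU=rl}), (\ref{Appetakhetc}) and the primitivity of $T$, and the second (adding ZCL) supplying (\ref{Apptralktrrk}) and (\ref{AppPsiPhi}). Your remark about checking that the $r_k,l_k$ of Lemma~\ref{SALZCL} are the same objects defined via (\ref{lkrk}) from Lemma~\ref{propSN} is the right consistency point, and it holds as you describe.
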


This last corollary gives us the structure of the tensors fulfilling both SAL and ZCL. Now we characterize the MPDO generated by those tensors. We will show that they are GSNNCG (see Definition \ref{rhoNComm}).

\begin{prop}
\label{3to4}
The MPDO generated by a ${\cal K}$, fulfilling (\ref{AppUkU=rl}), (\ref{Appetakhetc}), (\ref{Apptralktrrk}), and (\ref{AppPsiPhi}), has the form
 \be
 \label{expression}
 \sigma^{(N)}({\cal K})\propto e^{- \sum_{n=1}^N h_{n,n+1}},
 \ee
where $h_{n,n+1}$ act on nearest neighbors and commute, i.e. $[h_{n-1,n},h_{n,n+1}]=0$; furthermore, ${\cal K}$ has ZCL.
\end{prop}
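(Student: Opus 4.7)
The plan is to exploit the block structure of (\ref{AppUkU=rl}) to exhibit $\sigma^{(N)}({\cal K})$ as a Gibbs state of a commuting nearest-neighbor Hamiltonian, and to read off ZCL directly from the virtual transfer matrix. First I absorb the isometry $U$ by passing to the local basis in which $\tilde{\cal K}=\bigoplus_k l_k\otimes r_k$; since $U$ is local, this conjugation preserves the nearest-neighbor commuting property of any Hamiltonian built below. In this basis each physical site carries the decomposition $H=\bigoplus_k H_k^{L}\otimes H_k^{R}$, with $l_k$ supported on $H_k^{L}$ (and the left virtual leg) and $r_k$ on $H_k^{R}$ (and the right virtual leg). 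Contracting the virtual bonds of $N$ cyclically arranged copies of $\tilde{\cal K}$, the right bond of $r_{k_n}$ and the left bond of $l_{k_{n+1}}$ collapse into the operator $\eta_{k_n,k_{n+1}}$ of (\ref{Appetakhetc}), which lives on $H_{k_n}^{R}\otimes H_{k_{n+1}}^{L}$ and is $\ge 0$; hence
\be
\tilde\sigma^{(N)}({\cal K})=\bigoplus_{k_1,\ldots,k_N}\bigotimes_{n=1}^N \eta_{k_n,k_{n+1}},
\ee
the direct sum reflecting that site $n$ sits in a definite block $k_n$.

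Next I construct the Hamiltonian. Let $\Pi_k^{(n,L)}$ and $\Pi_k^{(n,R)}$ project onto $H_k^L$ and $H_k^R$ at site $n$. Define
\be
h_{n,n+1}=-\sum_{k,k'}\Pi_k^{(n,R)}\,\Pi_{k'}^{(n+1,L)}\,\ln\eta_{k,k'},
\ee
where $\ln\eta_{k,k'}$ is taken on the support of $\eta_{k,k'}$ (adding a large penalty on the kernel if needed, so that the exponential projects onto the support). This is translationally invariant and nearest-neighbor. To verify $[h_{n-1,n},h_{n,n+1}]=0$ I note that on site $n$ the term $h_{n-1,n}$ uses only the $L$-factor $\Pi_\cdot^{(n,L)}$ while $h_{n,n+1}$ uses only the $R$-factor $\Pi_\cdot^{(n,R)}$: within any fixed block $k$ of site $n$ the factorization $H_k^L\otimes H_k^R$ is genuine, so the two act on commuting tensor factors, while across distinct blocks their supports are orthogonal. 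Exponentiating the commuting sum and using the orthogonality of the $\Pi_k^{(n)}$,
\be
e^{-\sum_n h_{n,n+1}}=\prod_n e^{-h_{n,n+1}}=\bigoplus_{\vec k}\bigotimes_n\eta_{k_n,k_{n+1}}\;\propto\;\tilde\sigma^{(N)}({\cal K}),
\ee
which, after undoing the local unitary, is the form (\ref{expression}) with the stated commutation.

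For ZCL I compute the virtual transfer matrix directly from (\ref{Apptralktrrk}) and (\ref{AppPsiPhib}):
\be
\mathbb{E}=\sum_k \mathrm{tr}(l_k)\otimes\mathrm{tr}(r_k)=\Bigl(\sum_k a_k b_k\Bigr)|\Phi)(\Psi|=|\Phi)(\Psi|,
\ee
and hence by (\ref{AppPsiPhia}),
\be
\mathbb{E}^2=|\Phi)(\Psi|\Phi)(\Psi|=(\Psi|\Phi)\,|\Phi)(\Psi|=\mathbb{E}.
\ee
This rank-one idempotency is exactly the tensor identity of Definition~\ref{DefinitionZCL} (equivalently, the relation $\sum_k|l_k)(r_k|=\sum_{k,h}T_{k,h}|l_k)(r_h|$ used in the proof of Lemma~\ref{SALZCL}), so ${\cal K}$ has ZCL.

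The main obstacle is the bookkeeping in the commuting-Hamiltonian step: I must ensure that the $L/R$ decomposition really is intrinsic to each block, so that the projectors $\Pi^{(n,L)}_k$ and $\Pi^{(n,R)}_{k'}$ commute on site $n$ and their product cleanly selects $\eta_{k,k'}$, and I must handle the possible kernel of the merely positive semidefinite $\eta_{k,k'}$ so that the product $\prod_n e^{-h_{n,n+1}}$ reproduces $\bigoplus_{\vec k}\bigotimes_n\eta_{k_n,k_{n+1}}$ exactly, without picking up contributions outside the supports of the $\eta_{k,k'}$.
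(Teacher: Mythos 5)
Your overall route is the same as the paper's: conjugate by $U^{\otimes N}$ to expose the block structure $\tilde\sigma=\bigoplus_{\vec k}\bigotimes_n\eta_{k_n,k_{n+1}}$, take logarithms of the $\eta$'s to build commuting nearest-neighbor terms, and read off ZCL from the rank-one idempotent $\sum_k\mathrm{tr}(l_k)\otimes\mathrm{tr}(r_k)=\bigl(\sum_k a_kb_k\bigr)|\Phi)(\Psi|=|\Phi)(\Psi|$. The ZCL part and the commutativity argument (the $L$/$R$ tensor factors within each block, orthogonal supports across blocks) are fine and match the paper.

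The genuine gap is the one you yourself flag at the end: the possible kernel of $\eta_{k,h}$. This is not a bookkeeping issue that can be patched with ``a large penalty on the kernel'': the exponential of any finite Hermitian $h_{n,n+1}$ is strictly positive on the relevant block, so if some $\eta_{k,h}$ were singular on $H_{R}^{(k)}\otimes H_{L}^{(h)}$ the exact equality $\sigma^{(N)}({\cal K})\propto e^{-\sum_n h_{n,n+1}}$ would simply be false --- the penalty only gives an approximation in a limit, whereas the proposition asserts an exact Gibbs form. The paper closes precisely this hole before taking logarithms: using (\ref{Apptralktrrk}), $\mathrm{tr}(l_h)=|\Phi)\,a_h$ implies that $\mathrm{tr}_2(\eta_{k,h})$ is, up to the scalar $a_h$, an operator depending only on $k$, so its range is independent of $h$ (and symmetrically for $\mathrm{tr}_1$); combined with the definition of the factors $H_R^{(k)}$, $H_L^{(h)}$ coming from the strong-subadditivity decomposition of Lemma \ref{Lsigma3}, this forces $\eta_{k,h}$ to be full rank on $H_{R}^{(k)}\otimes H_{L}^{(h)}$, and only then is $\tilde h_{k,h}=-\ln\eta_{k,h}$ a well-defined finite Hermitian operator. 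You should supply this step (or an equivalent one) rather than leave it as an acknowledged obstacle; everything else in your argument then goes through.
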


\begin{proof}
We define
 \be
 \tilde \sigma = U^{\dagger\otimes  N} \sigma^{(N)}({\cal K})U^{\otimes  N}.
 \ee
Using (\ref{UkU=rl}) we have
\be
 \label{sigmaNK2}
 \tilde \sigma= \oplus_{k_1,\ldots,k_N} \left[\otimes_{n=1}^N \eta_{k_n,k_{n+1}} \right]
\ee
Here, the physical Hilbert space of each spin is decomposed as $H_n=\oplus_{k} H_{L,n}^{(k)}\otimes H_{R,n}^{(k)}$ and $\eta_{k_n,k_{n+1}}\ge 0$ is acting on $H_{R,n}^{(k_n)}\otimes H_{L,n+1}^{(k_{n+1})}$. If we trace the second spin in $\eta_{k,h}$,
 \be
 {\rm tr_2}(\eta_{k,h}) =
 \raisebox{-15pt}{\includegraphics[height=3.5em]{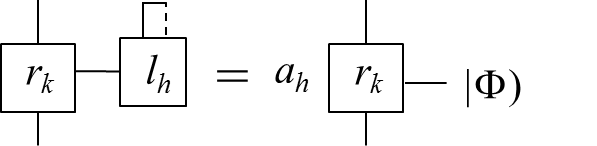}}
 \ee
so that the range of this trace is independent of the value of $h$. The same thing applies if we trace the first spin. This implies that $\eta_{k,h}$ has full rank in $H_{R,n}^{(k)}\otimes H_{L,n+1}^{(h)}$. Thus, we can take the logarithm of $\eta_{k,h}$, and define it as $-\tilde h_{k,h}$. Therefore, we can write [compare (\ref{sigmaNK2})]
 \be
 \tilde\sigma=
 \oplus_{k_1,\ldots,k_N}e^{ -\sum_{n=1}^N \tilde h_{n,n+1}^{(k_n,k_{n+1})}}
 \ee
where $\tilde h_{n,n+1}^{(k_n,k_{n+1})}$ acts on the $n$-th and $n+1$-th spins. By defining $h_{n,n+1}=\sum_{k_n,k_{n+1}}\tilde h_{n,n+1}^{(k_n,k_{n+1})}$ it is a simple exercise to show that this expression can indeed be written as (\ref{expression}), where the isometry $U$ is included in the definition of the $h$'s. It is also straightforward that the $h$ commute with each other. Furthermore, (\ref{Appetakhetc}), (\ref{Apptralktrrk}), and (\ref{AppPsiPhi}) immediately imply that $K$ has ZCL, since
 \be
 \raisebox{-12pt}{\includegraphics[height=3.0em]{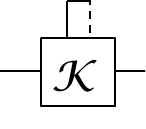}}= \sum_k a_kb_k |\Phi)(\Psi|=|\Phi)(\Psi|
 \ee
where we have used (\ref{AppPsiPhib}). This matrix has rank one according to (\ref{AppPsiPhia}) and is thus idempotent.
\end{proof}

Now we show that:

\begin{prop}
\label{4to2}
If ${\cal K}$ generates MPDOs of the form (\ref{expression})
then it fulfills SAL.
\end{prop}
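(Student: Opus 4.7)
The plan is to exploit the explicit structure of $\tilde\sigma = U^{\dagger\otimes N}\sigma^{(N)}(\mathcal{K})U^{\otimes N}$ established in the proof of Proposition~\ref{3to4}, namely the direct-sum form $\tilde\sigma=\bigoplus_{\vec k}\otimes_{n=1}^N\eta_{k_n,k_{n+1}}$, combined with the trace identities (\ref{Apptralktrrk})--(\ref{AppPsiPhi}). Two structural facts do the heavy lifting. First, since $\mathrm{tr}(\eta_{k,h})=T_{k,h}=a_kb_h$ and $\sum_k a_kb_k=1$, the sector probabilities $p_{\vec k}=\prod_n T_{k_n,k_{n+1}}=\prod_n a_{k_n}b_{k_n}$ form a product measure with single-site marginal $\pi_k:=a_kb_k$. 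Second, writing $\hat\eta_{k,h}:=\eta_{k,h}/(a_kb_h)$ for the normalized bonds, the fact that $\eta_{k,h}$ is built from $r_k$ and $l_h$, together with $\mathrm{tr}(l_h)\propto a_h|\Phi)$, $\mathrm{tr}(r_k)\propto b_k(\Psi|$, and $(\Psi|\Phi)=1$, forces the half-traces $\hat\alpha_k:=\mathrm{tr}_2\hat\eta_{k,h}$ and $\hat\beta_h:=\mathrm{tr}_1\hat\eta_{k,h}$ to depend on only one of the two labels, each with unit trace.

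With these in hand, I would compute the reduced state $\rho_A=\mathrm{tr}_B\tilde\sigma$ on $A=\{1,\dots,L\}$ by tracing out $B=\{L+1,\dots,N\}$ one site at a time. Each interior removal uses the key identity
\[
\mathrm{tr}_n\bigl(\hat\eta_{k_{n-1},k_n}\otimes\hat\eta_{k_n,k_{n+1}}\bigr)=\hat\alpha_{k_{n-1}}\otimes\hat\beta_{k_{n+1}},
\]
which eliminates the dependence on $k_n$; the subsequent sum $\sum_{k_n}\pi_{k_n}=1$ then removes that label entirely. Handling the wrap-around bond $\hat\eta_{k_N,k_1}$ in the same way (and using that the marginal product of $\pi_{k_n}$ for $n\in B$ equals one), what survives is
\[
\rho_A=\bigoplus_{k_1,\dots,k_L}\Bigl(\prod_{i=1}^L\pi_{k_i}\Bigr)\,\hat\beta_{k_1}\otimes\bigl(\otimes_{n=1}^{L-1}\hat\eta_{k_n,k_{n+1}}\bigr)\otimes\hat\alpha_{k_L}.
\]
The analogous formula for $\rho_B$ is obtained by swapping $L\mapsto N-L$.

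Because different sectors are supported on mutually orthogonal subspaces, every von Neumann entropy splits into a Shannon contribution on the sector labels plus the sector-averaged entropy. This yields
\[
S(\rho_A)=L\bigl[H(\pi)+\bar S_\eta\bigr]+\bigl[\bar S_\alpha+\bar S_\beta-\bar S_\eta\bigr],
\]
with $H(\pi)=-\sum_k\pi_k\log\pi_k$, $\bar S_\eta=\sum_{k,h}\pi_k\pi_h S(\hat\eta_{k,h})$, $\bar S_\alpha=\sum_k\pi_k S(\hat\alpha_k)$, $\bar S_\beta=\sum_k\pi_k S(\hat\beta_k)$. The same scheme (with no boundary contributions) gives $S(\tilde\sigma)=N\bigl[H(\pi)+\bar S_\eta\bigr]$, and $S(\rho_B)$ replaces $L$ by $N-L$. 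Assembling,
\[
I_L=S(\rho_A)+S(\rho_B)-S(\tilde\sigma)=2\bigl[\bar S_\alpha+\bar S_\beta-\bar S_\eta\bigr],
\]
which is manifestly independent of $L$, so $I_1=I_2=\cdots$ and $\mathcal{K}$ satisfies SAL.

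The main obstacle is the partial-trace bookkeeping at the top of the argument, in particular verifying that $\mathrm{tr}_1\hat\eta_{k,h}$ and $\mathrm{tr}_2\hat\eta_{k,h}$ genuinely depend on only one label; this hinges on the rank-one virtual contractions $|\Phi)(\Psi|$ that come out of $\mathrm{tr}(l_h),\mathrm{tr}(r_k)$, and on the cancellation between the $a_kb_h$ trace normalization and the $a_h,b_k$ factors pulled out by these virtual contractions. Once this factorization is secured and the wrap-around bond is handled symmetrically so that both boundary pieces $\hat\alpha_{k_L},\hat\beta_{k_1}$ emerge, the remaining steps are an algebraic unwinding and $I_L$ collapses to a constant.
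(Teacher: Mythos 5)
Your argument does not prove the stated proposition: it proves a different implication. Proposition \ref{4to2} assumes \emph{only} that $\sigma^{(N)}({\cal K})$ has the commuting Gibbs form (\ref{expression}); in the logical architecture of Theorem \ref{thm:main-simple} it is the step that closes the cycle $ii\Rightarrow iii\Rightarrow iv\Rightarrow ii$. Your proof instead takes as input the decomposition $\tilde\sigma=\oplus_{\vec k}\otimes_n\eta_{k_n,k_{n+1}}$ \emph{together with} the trace identities (\ref{Apptralktrrk})--(\ref{AppPsiPhi}). Those identities are part of statement (iii) of Theorem \ref{thm:main-simple}; in the paper they are derived (Lemma \ref{SALZCL}) from SAL \emph{and} ZCL, and they are not consequences of the Gibbs form (\ref{expression}). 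Your key factorization $\mathrm{tr}_2\hat\eta_{k,h}=\hat\alpha_k$ genuinely needs the rank-one virtual contractions $|\Phi)(\Psi|$: for a generic commuting Gibbs state, $\mathrm{tr}_2\, e^{-\tilde h^{(k,h)}}$ depends on both labels, the telescoping of partial traces fails, and your formula for $I_L$ does not collapse by that route. So, as written, the proposal assumes (a large part of) statement (iii) in order to prove SAL, which is circular precisely where this proposition is needed, namely for the backward implication $iv\Rightarrow ii$.

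The paper's proof works only with the hypothesis (\ref{expression}): since the $h_{n,n+1}$ commute, the structure theorem for commuting Hamiltonians \cite{Beigi,Bra03} splits the single middle site $B$ as $H_B=\oplus_k H_{b_1}^{(k)}\otimes H_{b_2}^{(k)}$ and exhibits the state as $\oplus_k\rho_{Ab_1}^{(k)}\otimes\rho_{b_2C}^{(k)}$, which is exactly the form that saturates strong subadditivity \cite{Hay03}; hence $S_{ABC}+S_B=S_{AB}+S_{BC}$, i.e.\ $I_1=I_m$ for all $m<\lfloor N/2\rfloor$, with no information needed about traces of the bond operators. If you want to retain your explicit entropy bookkeeping, you would first have to show that the Gibbs form alone (or together with the ZCL assumption that accompanies it in statement (iv)) already forces $\mathrm{tr}(\eta_{k,h})$ to factorize as $a_k b_h$ and the half-traces to depend on a single label --- which amounts to re-deriving (iii) and is not carried out in your proposal.
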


\begin{proof}
Let us divide the chain into four regions, A, B, C, and D, containing $m-1,1,N-m-1$, and $1$ subsequent spins, respectively, and by $H_X$ ($X=A,B,C,D$) the corresponding Hilbert spaces. Since the Hamiltonians $h$ commute, we can use the characterization of \cite{Beigi,Bra03}; we divide $H_B= \oplus_k H_{b_1}^{(k)}\otimes H_{b_2}^{(k)}$, and define $R_k$ as the projector onto the subspace labeled by $k$. We thus have
 \begin{subequations}
 \bea
 h_{m-1,m}&=& \sum_k \tilde h_{A b_1}^{(k)}\\
 h_{m,m+1}&=& \sum_k \tilde h_{b_2 C}^{(k)}\\
 \sum_{n=1}^{m-2} h_{n,n+1} &=& \tilde h_A\\
 \sum_{n=m+1}^{N} h_{n,n+1} &=& \tilde h_C
 \eea
 \end{subequations}
where $\tilde h_{A b_1}^{(k)}$ acts on $H_A\otimes H_{b_1}^{(k)}$, $\tilde h_{b_2 C}^{(k)}$ on $H_{b_2}^{(k)}\otimes H_C$, $\tilde h_A$ on $H_A$ and $\tilde h_C$ on $H_C\otimes H_D$. Denoting further
 \begin{subequations}
 \bea
 \rho_{A b_1}^{(k)} &=& R_k e^{-\tilde h_A - \tilde h_{A b_1}^{(k)}} R_k\\
 \rho_{b_2 C}^{(k)} &=& R_k e^{-\tilde h_C - \tilde h_{b_2 C}^{(k)}} R_k\\
 \eea
 \end{subequations}
we can rewrite (\ref{expression}) as
 \be
 e^{- \sum_{n=1}^N h_{n,n+1}} = \sum_k R_k e^{- \sum_{n=1}^N h_{n,n+1}} = \oplus_k \rho_{A b_1}^{(k)} \otimes
 \rho_{b_2 C}^{(k)}
 \ee
where $R_k$ is acting on the spin in region $B$, and commutes with all the terms.
Thus, the MPDO precisely adopts the form which saturates the strong subadditivity inequality between regions A, B, and C \cite{Hay03}, i.e. $S_{ABC}+S_B=S_{AB}+S_{BC}$, which immediately implies $I_1=I_m$, and thus SAL.
\end{proof}

Now we prove a proposition that relates these results to RFP, as defined in Definition \ref{RFPMixedTS}.

\begin{prop}
\label{3to5}
If ${\cal K}$ fulfills both SAL and ZCL, then there exist two tpCPM, ${\cal T}$ and ${\cal S}$ so that
 \be
 \label{3to5tcpm}
 \raisebox{-12pt}{\includegraphics[height=4.6em]{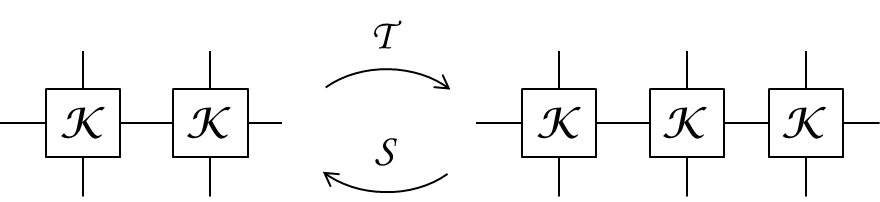}}
 \ee
\end{prop}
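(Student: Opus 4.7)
The plan is to construct the two tpCPMs $\mathcal{T}$ and $\mathcal{S}$ explicitly from the structural data provided by Corollary \ref{CorollarytoProp}, and then to verify (\ref{3to5tcpm}) by a direct diagrammatic calculation. First, I would invoke Corollary \ref{CorollarytoProp} to assume that $\mathcal{K}$ has the form (\ref{AppUkU=rl}); after absorbing the isometry $U$ (on each site) into the eventual definitions of $\mathcal{T}$ and $\mathcal{S}$, I may pretend $U=\Id$, so that the physical Hilbert space on each site decomposes as $H=\oplus_k H_{L}^{(k)}\otimes H_{R}^{(k)}$, the bond tensor $\eta_{k,h}$ lives on the $R$-half of one site tensored with the $L$-half of the next and factorizes through the virtual bond, and the traces satisfy $\tr(l_k)=a_k|\Phi)$, $\tr(r_k)=b_k(\Psi|$, $(\Psi|\Phi)=1$, $\sum_k a_k b_k=1$.

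Next I would define $\mathcal{S}$ to be the map that merges two neighbouring sites into one. Given the block decomposition, I would let $\mathcal{S}$ project onto the pair of block labels $(k,h)$ on the two input sites, then trace out the two "middle" tensor factors $H_{R}^{(k)}\otimes H_{L}^{(h)}$ — exactly the pair that carries a copy of $\eta_{k,h}$ when $\mathcal{S}$ is applied to $\mathcal{K}_2(X)$ — and rescale each block by $1/(a_k b_h)$ before re-identifying the surviving $L$- and $R$-halves with the single output site. Complete positivity is manifest (the map is built from projectors, partial trace, and positive rescaling), and trace preservation follows from $\tr(\eta_{k,h})=a_k b_h$ together with $\sum_{k,h}a_k b_h \cdot (\text{normalization of outputs})=1$, which in turn uses $\sum_k a_k b_k=1$. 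Dually, I would define $\mathcal{T}$ by appending an ancilla site prepared in the blockwise state $\sum_k b_k\, l_k$ (weighted so that the eventual $\eta$-bond is correctly normalized), and then applying a partial isometry that couples the $R$-half of the original site to the $L$-half of the new ancilla by inserting $r_k$ on the new site's $R$-half; the conditions $\tr(l_k)=a_k|\Phi)$ and $(\Psi|\Phi)=1$ are what make this preparation a valid (trace-one, positive) quantum state.

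With $\mathcal{T}$ and $\mathcal{S}$ in hand, the verification of (\ref{3to5tcpm}) reduces to a purely graphical computation. For $\mathcal{S}[\mathcal{K}_2(X)]=\mathcal{K}_1(X)$, one contracts two $\mathcal{K}$'s, performs the partial trace prescribed by $\mathcal{S}$ on the middle bond, and uses $\tr(\eta_{k,h})=a_k b_h$ together with the factorization $\eta_{k,h}=r_h\otimes l_k$ and $\sum_k a_k b_k=1$ to collapse back to a single $\mathcal{K}$ with the outer virtual label $X$ untouched. For $\mathcal{T}[\mathcal{K}_1(X)]=\mathcal{K}_2(X)$, attaching the prescribed ancilla and then applying the coupling isometry reproduces the bond $\eta$ between the two sites because $(\Psi|\Phi)=1$ provides exactly the one-dimensional environment needed to turn a single $l_k$–$r_k$ pair into a pair of pairs with an $\eta_{k,h}$ bridge in between.

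The step I expect to be the main obstacle is not positivity or trace preservation — both are direct from the normalizations (\ref{Apptralktrrk})–(\ref{AppPsiPhi}) — but bookkeeping the \emph{virtual} bond variable $X$: the equations (\ref{3to5tcpm}) must hold for every $X$, and the maps $\mathcal{T},\mathcal{S}$ only touch the physical indices, so one has to check that the block labels and $\eta$-bonds manipulated by $\mathcal{T}$ and $\mathcal{S}$ are genuinely internal and never entangle with the outer virtual legs where $X$ is inserted. This is where the factorized form of $\eta_{k,h}$ combined with the rank-one conditions $\tr(l_k)=a_k|\Phi)$ and $\tr(r_k)=b_k(\Psi|$ is essential: they decouple the "middle bond" piece from the "outer virtual" piece, allowing the $X$-dependence to pass through $\mathcal{T}$ and $\mathcal{S}$ inertly.
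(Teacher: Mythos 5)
You have correctly identified the input data (Corollary \ref{CorollarytoProp}) and the two elementary mechanisms the paper also uses: tracing out the ``middle'' subspins that carry an $\eta_{k,h}$, and appending a fresh bond normalized via $\tr(\eta_{k,h})=a_kb_h$ and $\sum_k a_kb_k=1$. However, there is a structural gap in the site counting. You build $\mathcal{T}:\mathcal{K}_1(X)\to\mathcal{K}_2(X)$ and $\mathcal{S}:\mathcal{K}_2(X)\to\mathcal{K}_1(X)$, i.e.\ maps between one and two sites. The proposition being proved (and what is needed for statement \emph{(v)} of Theorem \ref{thm:main-simple}, where the conclusion is only that the \emph{two-site blocking} $M=KK$ is a RFP, via $\mathcal{T}^2$ and $\mathcal{S}^2$) concerns maps between \emph{two and three} sites. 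This is not a cosmetic difference. The single-site operator $\mathcal{K}_1(X)$ lives on $\oplus_k H_{L}^{(k)}\otimes H_{R}^{(k)}$ and only contains the block-diagonal data $r_kXl_k$, whereas $\mathcal{K}_2(X)$ carries $r_{k_2}Xl_{k_1}$ on its two outer subspins with \emph{independent} labels $k_1\neq k_2$, joined by an internal $\eta$ on the middle subspins. Your $\mathcal{S}$ therefore has nowhere to put the surviving off-diagonal $(k_1,k_2)$ block after tracing the middle factors (the single-site space has no such block), and your $\mathcal{T}$ would have to manufacture the $X$-dependent off-diagonal correlations $r_{k_2}Xl_{k_1}$, $k_1\neq k_2$, out of purely diagonal input by a channel that does not see $X$. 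The rank-one trace conditions $\tr(l_k)=a_k|\Phi)$, $\tr(r_k)=b_k(\Psi|$ cannot rescue this: they only act once a subspin has been traced, which destroys precisely the $X$-dependence you need to preserve. With two sites on the small side there are two outer subspins with independent block labels to host $r_{k_2}Xl_{k_1}$ plus a genuinely internal bond to manipulate, which is why the paper's $\mathcal{T}$ first traces the internal bond of $\mathcal{K}_2(X)$, reads the outer labels $(k,h)$ with projectors $Q^l_k\otimes Q^r_h$, inserts $\oplus_l\frac{1}{a_kb_h}\,\eta_{k,l}\otimes\eta_{l,h}$, and permutes subspins. Note the authors carefully do \emph{not} claim $ii\Rightarrow i$ in Theorem \ref{thm:main-simple}; your construction, if it worked, would prove exactly that stronger implication.

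A second, smaller problem: your $\mathcal{S}$ as described (project onto the block $(k,h)$, partial-trace the middle factors, then ``rescale each block by $1/(a_kb_h)$'') is completely positive but not trace-preserving --- a channel must preserve the trace of \emph{every} input, not just of $\mathcal{K}_2(X)$, and a block-dependent multiplicative rescaling after a partial trace fails this. The paper's version avoids the issue by never rescaling the input: the factor $1/(a_kb_h)$ appears only as the normalization of the freshly \emph{appended} state $\eta_{k,h}/(a_kb_h)$, so each elementary step (pinching, partial trace, tensoring with a trace-one positive operator, subspin permutation) is manifestly a tpCPM. Your $\mathcal{T}$ is phrased in the correct ``append a normalized ancilla'' style, so this defect is repairable, but the site-counting issue in the previous paragraph is not repairable within your scheme.
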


\begin{proof}
Using Corollary \ref{CorollarytoProp} we can use the fact that the tensor has the form
(\ref{AppUkU=rl}), with (\ref{Appetakhetc}), (\ref{Apptralktrrk}), and (\ref{AppPsiPhi}).
We will ignore the isometry $U$ from now on, as it does not add anything to the problem.
Given the direct sum structure in (\ref{AppUkU=rl}), in order to be more concise, for each subspace labeled by $k$ we will separate the left and right spaces and call them subspins. For instance, if we have three spins, 1, 2, and 3, we will talk about the first left subspin, 1l, or the second right subspin, 2r.
We define the map ${\cal T}$ as the composition of three tpCPM, ${\cal T}={\cal T}_2{\cal T}_1{\cal T}_0$. The first one, ${\cal T}_0$, transforms two spins into a single one, whereas ${\cal T}_1$ transforms one into three. In each orthogonal subspace, $k$, the first performs a trace in subspins 1r and 2l, and it is thus explicitly a tpCPM. The second one, ${\cal T}_1$, introduces two operators $\eta$. More explicitly,
 \be
 {\cal T}_1(X) = \sum_{k,h} {\cal T}_{k,h} \left[ (Q^l_k\otimes Q^r_h) X (Q_k^l\otimes Q_h^r)\right]
 \ee
Here the $Q_k^{l,r}$ are projectors on the left and right subspaces
corresponding to the index $k$, and
 \be
 {\cal T}_{k,h}(X) = \frac{1}{a_k b_h} X\otimes \left[\oplus_l\left( \eta_{k,l}\otimes \eta_{l,h}\right)\right].
 \ee
This is also a tpCPM, the trace-preserving property being a consequence of
 \be
 \frac{1}{a_kb_h} \sum_{l} {\rm tr}(\eta_{k,l}){\rm tr}(\eta_{l,h})=1
 \ee
where we have used that ${\rm tr}(\eta_{k,h})=a_k b_h$ and (\ref{PsiPhi}). Finally, ${\cal T}_2$ just shifts in each subspace labeled by the different $k$'s, some of the subspins. More explicitly,
 \be
 {\cal T}_2 (X_{k_1}^{1l}\otimes X_{h_1}^{1r}
 X_{k_2}^{2l}\otimes X_{h_2}^{2r}
 X_{k_3}^{3l}\otimes X_{h_3}^{3r}) = X_{k_1}^{1l}\otimes X_{k_2}^{1r}\otimes X_{h_2}^{2l}
 X_{k_3}^{2r}\otimes X_{h_3}^{3l}\otimes X_{h_1}^{3r}.
 \ee
In summary, we have
 \be
 \raisebox{-12pt}{\includegraphics[height=20em]{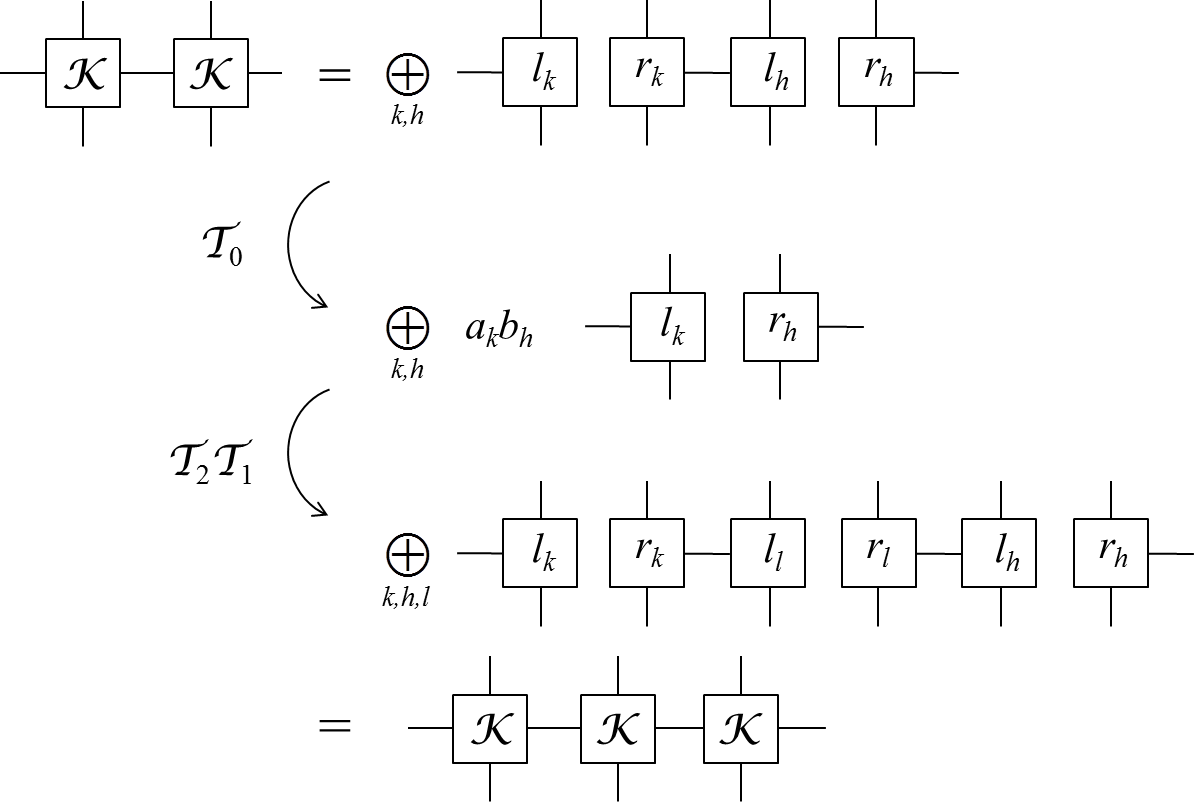}}
 \ee

In a similar way, we build ${\cal S}$ out of three tpCPM, ${\cal S}={\cal S}_2{\cal S}_1{\cal S}_0$. The first one, ${\cal S}_0$ in each subspace just takes the trace of the subspins $1r,2l,2r$ and $2l$. The second one,
 \be
 {\cal S}_1 (X)= \sum_{k,h} {\cal S}_{k,h} \left[ (Q^l_k\otimes Q^r_h) X (Q_k^l\otimes Q_h^r)\right]
 \ee
where
 \be
 {\cal S}_{k,h}(X) = \frac{1}{a_kb_h} X \otimes \eta_{k,h},
 \ee
so that, again, ${\cal S}_1$ is a tpCPM. The last one ${\cal S}_2$ performs a shift similar to ${\cal T}_2$,
 \be
 {\cal S}_2 (X_{k_1}^{1l}\otimes X_{h_1}^{1r}
 X_{k_2}^{2l}\otimes X_{h_2}^{2r}) = X_{k_1}^{1l}\otimes X_{k_2}^{1r}\otimes X_{h_2}^{2l}
 X_{h_1}^{2r}.
 \ee
and so it is tpCPM. We have
 \be
 \raisebox{-12pt}{\includegraphics[height=20em]{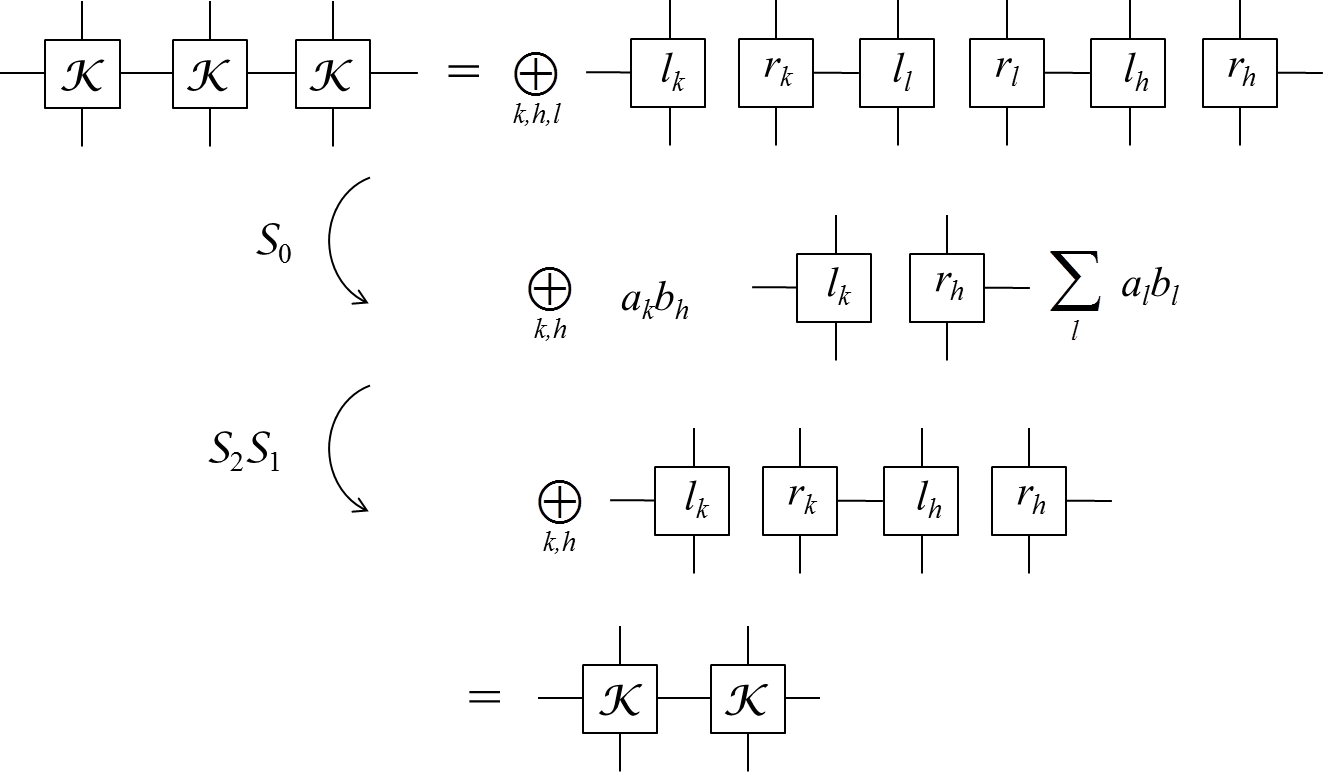}}
 \ee
\end{proof}

\subsubsection{Case II: $K$ simple}

Now we move to the more general case. The idea is to show that this case can be reduced to the injective one. Thus, from now on we will consider a tensor $K$, generating MPDO, and that is both simple and in biCF. We will use calligraphic letters, ${\cal K}$, to denote elements of a BNT. In case there are several elements, ${\cal K}_j$, we will use indistinguishably
 \be
 \label{KBNTs}
 \raisebox{-12pt}{\includegraphics[height=5em]{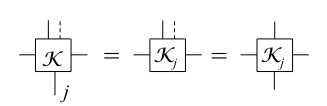}}
 \ee

The strategy to prove Theorem \ref{thm:main-simple} in general will be to first show that the elements in a BNT of $K$ fulfill
 \be
 \label{AppKxKy=0}
  \raisebox{-24pt}{\includegraphics[height=5.5em]{MPDO_KxKy.png}}=0 \quad {\text if} \; x\ne y,
\ee
and then that each of them generates MPDO and fulfills SAL and ZCL, so that we can use the results proven in the previous section. In particular, we will relate those properties to GSNNCH (see Definition \ref{defrhoNComm}):
 \be
 \label{ApprhoNComm}
 \sigma^{(N)}\propto \oplus_{x} n_x e^{-\sum_{j=1}^N \tau_{j}(h^{(x)})}.
 \ee
where $n_x$ are natural numbers and the $h^{(x)}$ acts on the first two spins, $\tau_j$ translates the spins by an amount $j$, and $[h^{(x)},\tau_1(h^{(x)})]=0$.

Let us start out by showing that ZCL implies that the $\mu_{j,q}$ do not depend on $q$.
\begin{lem}
\label{lemmus}
A simple tensor, $K$, generating a MPDO and fulfilling ZCL, must have for all $q$, $\mu_{j,q}=\mu_j$ in its CF (\ref{Eq19}).
\end{lem}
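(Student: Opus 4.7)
The plan is to turn ZCL into a spectral condition on a suitable ``horizontal'' transfer operator built from $K$, and then read off the claim on the $\mu_{j,q}$ from the block structure imposed by the canonical form, using simpleness to guarantee the relevant eigenvalues are nonzero.

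First I would write $K$ in its CF with a BNT, $K^i = \bigoplus_j \bigoplus_q \mu_{j,q}\, X_{j,q} K_j^i X_{j,q}^{-1}$, and contract bra with ket on each site to form the horizontal transfer operator $\mathbb{T}_K := \sum_i K^{i i}$. Up to similarity one has $\mathbb{T}_K \sim \bigoplus_{j,q} \mu_{j,q}\, \mathbb{T}_{K_j}$ with $\mathbb{T}_{K_j} := \sum_i K_j^{i i}$, so the spectrum of $\mathbb{T}_K$ is the disjoint union, over $(j,q)$, of the scaled spectra $\mu_{j,q}\cdot \mathrm{spec}(\mathbb{T}_{K_j})$. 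This is the MPDO analog of the setup used in Theorem~\ref{TheoremZCLPure}.

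Next I would translate the graphical ZCL condition of Definition~\ref{DefinitionZCL} into the algebraic statement $\mathbb{T}_K^{\,2} = c\, \mathbb{T}_K$ for some scalar $c$; this is the direct MPDO counterpart of $\mathbb{E}^2 = \mathbb{E}$ in the pure case. The argument mirrors the one for Theorem~\ref{TheoremZCLPure}: since $K$ can be assumed in biCF (after blocking, by Proposition~\ref{propblockinj}), local insertions on the physical indices can individually probe each BNT sector, so the ZCL diagram decouples across the blocks of the CF and reduces to the stated idempotency.

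Once $\mathbb{T}_K^{\,2} = c\, \mathbb{T}_K$ is established, every eigenvalue of $\mathbb{T}_K$ belongs to $\{0,c\}$. Now I would bring in simpleness: no element $K_j$ of the BNT is nilpotent, so $\mathbb{T}_{K_j}$ is not nilpotent either, and since $K_j$ is a NT, after further blocking to kill the $p$-periodic eigenvalues (as discussed around~(\ref{Ek})) $\mathbb{T}_{K_j}$ has a unique dominant eigenvalue $\tau_j \ne 0$. Since $\mu_{j,q}\,\tau_j$ is an eigenvalue of $\mathbb{T}_K$ and $\mu_{j,q}\ne 0$ in the CF, we must have $\mu_{j,q}\tau_j = c$, i.e.\ $\mu_{j,q} = c/\tau_j =: \mu_j$, independent of $q$.

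The main obstacle is Step~2: passing from the graphical ZCL condition to the clean algebraic identity $\mathbb{T}_K^{\,2} = c\,\mathbb{T}_K$ in the MPDO setting, where (unlike the pure case) only one copy of $K$ enters the transfer operator and one must rule out nontrivial cross-terms between distinct BNT elements. Should this translation only yield idempotency blockwise (one relation per $j$ rather than a single global one), the same spectral argument applied inside each block still yields $\mu_{j,q}\tau_j$ constant in $q$, and one can finish, if needed, by invoking Lemma~\ref{Lem:app_simple} to promote an equality of power sums $\sum_q \mu_{j,q}^N$ with a single geometric sequence $r_j \mu_j^N$ into the equality $\mu_{j,q}=\mu_j$ for every $q$.
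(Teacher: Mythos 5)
Your proposal is correct and follows essentially the same route as the paper's one-line proof: the ZCL condition of Definition \ref{DefinitionZCL} \emph{is} by definition the algebraic idempotency of the physically-traced tensor (so the ``translation'' you flag as the main obstacle is vacuous, and there are no cross-terms to rule out since the CF (\ref{Eq19}) is already a direct sum on the virtual level), and restricting it to the block $(j,q)$ gives $\mu_{j,q}\,\mathbb{T}_{K_j}^{2}=\mathbb{T}_{K_j}$, which together with simpleness (non-nilpotency of $\mathbb{T}_{K_j}$) forces $\mu_{j,q}$ to depend on $j$ only. One small inaccuracy: the nonzero eigenvalue of $\mathbb{T}_{K_j}$ is not a consequence of $K_j$ being a NT (normality concerns the CPM built from $K_j^{ii'}$, not the single matrix $\sum_i K_j^{ii}$); it follows from simpleness alone, and its uniqueness is then forced by the idempotency itself, which is all your spectral argument needs.
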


\begin{proof}
We just have to apply the Definition \ref{DefinitionZCL} and use (\ref{Eq19}), so that
\be
 \raisebox{-12pt}{\includegraphics[height=3.6em]{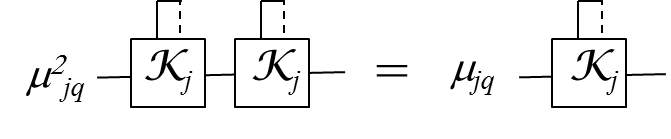}}
 \ee
since this cannot be zero, we conclude the proof.
\end{proof}

According to this lemma, we can include the $\mu_j$ in the definition of the BNT, ${\cal K}_j$, and
explicitly represent this tensor in CF as follows [compare (\ref{Eq19})]:
 \be
 \label{CFK}
 \raisebox{-12pt}{\includegraphics[height=5em]{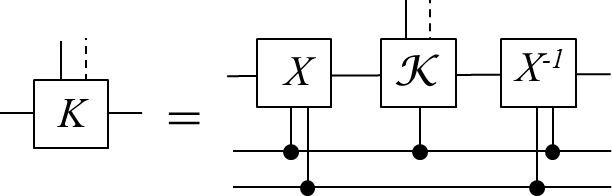}}
 \ee
Since $K$ is in biCF (\ref{defnbi}), there exists another tensor, $K^{-1}$, such that
 \be
 \label{biCFK}
 \raisebox{-12pt}{\includegraphics[height=7.5em]{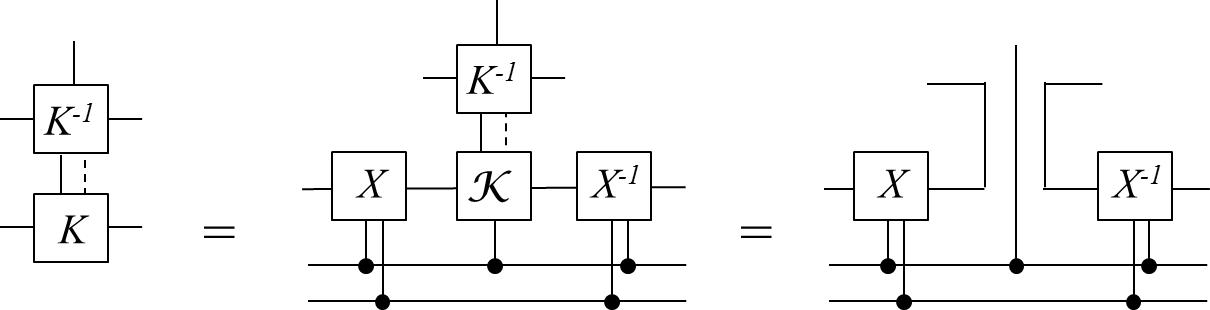}}
 \ee
Note that this implies that
 \be
 \label{Kidentity}
 \raisebox{-12pt}{\includegraphics[height=9em]{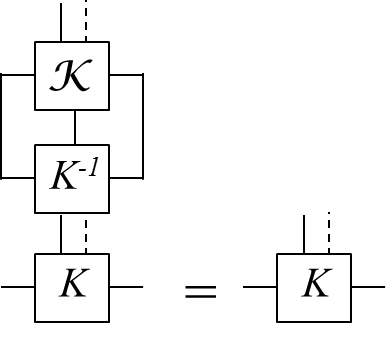}}
 \ee
so that we recover the original tensor after applying $K^{-1}$.

We will need first some intermediate lemmas:

\begin{lem}
If $K$ fulfills SAL, there exist orthogonal projectors, $P_j$ such that
 \be
 \label{Pis}
 \sum_i P_i = \Id,
 \ee
and
 \be
 \label{PjKiPj}
 \raisebox{-40pt}{\includegraphics[height=8em]{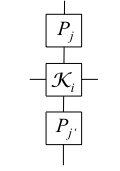}}=0 \quad {\rm if } \; j\ne j'\; {\rm or }\; i\ne j
 \ee
\end{lem}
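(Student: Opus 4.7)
The plan is to upgrade the direct-sum decomposition of the single-spin Hilbert space coming from SAL via Lemma \ref{Lsigma3} into projectors aligned with the horizontal BNT labels of $K$, and then transfer the resulting annihilation property to each bare tensor ${\cal K}_i$ using the biCF extraction.

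First I would apply Lemma \ref{Lsigma3} to $\sigma^{(N)}(K)$: SAL combined with translation invariance yields orthogonal projectors $Q_k$ on the single-spin Hilbert space (Definition \ref{DefQk}) with $\sum_k Q_k=\Id$ and $Q_k\sigma_3^{(N)}Q_{k'}=0$ on the middle spin whenever $k\neq k'$. Then I would insert the CF expansion $\sigma_3^{(N)}=\sum_j n_j(N)\,\sigma_3^{(N)}({\cal K}_j)$ coming from (\ref{CFK}), where $n_j(N)$ is a polynomial in the $\mu_{j,q}^N$. Varying $N$ and invoking Lemma \ref{Lem:app_simple} together with Corollary \ref{Lem1} applied to $K$ viewed as an MPV with doubled physical index $(i,i')$ (giving linear independence of the contributions of distinct BNT elements for $N$ large), the annihilation must hold term by term: $Q_k\sigma_3^{(N)}({\cal K}_j)Q_{k'}=0$ for $k\neq k'$. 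Contracting the outer two physical sites with the dual tensors $K^{-1}_i$ of biCF and using the $\delta$-rule (\ref{injectivity}) then isolates the single tensor ${\cal K}_i$ on the middle spin and yields the single-site identity $Q_k {\cal K}_i Q_{k'}=0$ on its physical bra/ket indices for $k\neq k'$.

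Second, I would show that each NT ${\cal K}_i$ is supported on exactly one $Q_{k(i)}$. After sufficient blocking, ${\cal K}_i$ becomes injective (Proposition \ref{propblockinj}); a splitting of its physical support over two blocks $Q_{k_1},Q_{k_2}$ would decompose its CPM as ${\cal E}={\cal E}_{k_1}+{\cal E}_{k_2}$, giving two independent leading eigensectors and contradicting the unique fixed-point property of a NT. An alternative and perhaps cleaner route is to invoke the vertical CF of Proposition \ref{Prop:IV.12}, which already furnishes a physical direct sum $UKU^\dagger=\bigoplus_\alpha \mu_\alpha\otimes M_\alpha$ with associated projectors $\tilde P_\alpha$, and identify $\tilde P_\alpha$ with the $Q_k$ of SAL via the compatibility of the two block structures.

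Finally I would set $P_i:=Q_{k(i)}$ and check that $i\mapsto k(i)$ is a bijection: injectivity follows from BNT minimality (Proposition \ref{prop:char-BNT}(ii)), since two BNT elements on the same block would be locally indistinguishable and collapse into a single BNT representative; surjectivity follows because an unused $Q_{k_0}$ would force $Q_{k_0}\sigma^{(N)}(K)Q_{k_0}=0$, contradicting the non-triviality of that block in $\sigma_3^{(N)}$. This yields $\sum_i P_i=\Id$ and (\ref{PjKiPj}). The main obstacle is the second step: the reduced-state block condition must be transferred to the bare NT and combined with CPM uniqueness to rule out multi-block support, and combining the horizontal BNT structure, the vertical CF of Proposition \ref{Prop:IV.12}, and SAL consistently is the delicate piece of the argument.
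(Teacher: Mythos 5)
Your step 1 is broadly in the spirit of the paper's argument (the paper also uses Lemma \ref{Lsigma3} to get the $Q_k$ and then contracts the outer two sites of $\sigma_3^{(N)}$ with the biCF dual tensors to isolate the middle tensor), although the paper does the contraction \emph{first}, via operators $X_{j,\alpha,\beta}$ built from $K^{-1}$ that select the horizontal blocks $j$ and $j'$ on the left and right; this yields a single key identity $\delta_{k,k'}A^{(k)}_{j,\alpha}\otimes B^{(k)}_{j',\beta}=\delta_{j,j'}m_j\,Q_k{\cal K}_jQ_{k'}(\ldots)$ and avoids your ``term-by-term'' step, which as stated needs the $3$-site reduced operators $\sigma_3^{(N)}({\cal K}_j)$ of distinct BNT elements to be linearly independent --- something that Corollary \ref{Lem1} gives only for the full $N$-site operators, not for their partial traces, and that varying $N$ does not immediately repair since $\sigma_3^{(N)}({\cal K}_j)$ itself depends on $N$.

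The serious problem is your second and third steps. It is simply false that each NT ${\cal K}_i$ is supported on a single SAL block $Q_{k(i)}$, and the map between BNT labels and SAL blocks is not a bijection: the correct statement (which is what the paper proves) is that each \emph{block} $k$ belongs to a unique BNT element $j_k$, so that one sets $P_j=\sum_{k\in S_j}Q_k$ with $S_j$ possibly containing many $k$'s. Indeed, Lemma \ref{propSN} and item (iii) of Theorem \ref{thm:main-simple} show that a single injective NT satisfying SAL generically decomposes as $\oplus_k l_k\otimes r_k$ over \emph{several} $k$'s with a primitive matrix $T_{k,h}$ (Example \ref{ExSALnoZCL} is an explicit instance with one BNT element and two blocks); your construction $P_i:=Q_{k(i)}$ would then violate $\sum_iP_i=\Id$ and $P_i{\cal K}_iP_i={\cal K}_i$. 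The spectral argument you offer to exclude multi-block support is also invalid: writing ${\cal E}={\cal E}_{k_1}+{\cal E}_{k_2}$ as a sum of CP maps according to a splitting of the \emph{physical} space does not block-diagonalize the transfer operator on the \emph{virtual} space and hence does not produce two independent leading eigenvectors. What you actually need at this stage is the uniqueness direction of the paper's key formula: if $Q_k{\cal K}_{j_k}Q_k\neq0$ then some $A^{(k)}_{j_k,\alpha}\neq0$, and the $\delta_{j,j'}$ on the right-hand side forces $B^{(k)}_{j,\beta}=0$ for every $j\neq j_k$, hence $Q_k{\cal K}_jQ_k=0$ for those $j$. The proposed shortcut through Proposition \ref{Prop:IV.12} does not rescue this, since the vertical-CF blocks are a third, a priori unrelated decomposition.
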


\begin{proof}
As $K$ fulfills SAL, we can use Lemma \ref{Lsigma3}, and thus $\sigma_3^{(N)}$ must have the form (\ref{sigma3}). Thus, there exist some projectors, $Q_k$ (see Definition \ref{DefQk}) that project onto each of the subspaces labeled by $k$, fulfilling
 \be
 \sum_k Q_k=\Id.
 \ee
We will now show that
 \be
 \label{Qks}
 \raisebox{-40pt}{\includegraphics[height=8em]{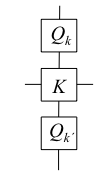}}=0 \;{\rm if } \; k\ne k'
 \ee
and that for each $k$ there exists a unique $j$, such that
 \be
 \label{QkKjs}
 \raisebox{-40pt}{\includegraphics[height=8em]{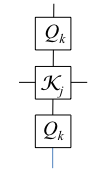}}\ne 0
 \ee
We then define $S_j$ as the set of $k$'s such that (\ref{QkKjs}) holds. As a consequence that for each $k$ there exists a unique $j$ fulfilling (\ref{QkKjs}), these sets are disjoint, and their union is the whole set of $k$'s. Thus, defining
 \be
 P_j = \sum_{k\in S_j} Q_k
 \ee
we will have that they are orthogonal projectors fulfilling (\ref{Pis}). Furthermore, (\ref{PjKiPj}) will follow immediately. Thus, from now on we just have to prove (\ref{Qks}) and (\ref{QkKjs}).

Using the fact that $K$ is simple, for each $j$ we can always find values of the indices, $\alpha_j,\beta_j$, such that
 \be
  \tilde m_j=   \raisebox{-12pt}{\includegraphics[height=3em]{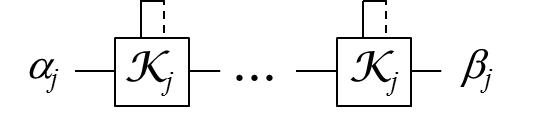}}\ne 0
 \ee
We can also find some $N$ such that $q_j=\sum_q \mu_{jq}^N\ne 0$. We will consider that $N$ from now on, and denote by $m_j=q_j \tilde m_j\ne 0$. Following a similar path as in Lemma \ref{propSN}, we define
 \be
 X_{j,\alpha,\beta}=\raisebox{-17pt}{\includegraphics[height=4.5em]{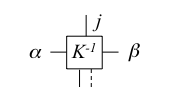}}
 \ee
and apply them to the first and third spin of $\sigma_3^{(N)}$, together with the projectors $Q_k$ and $Q_{k'}$ to the second one, obtaining
 \be
 \label{Qketc2}
  \delta_{k,k'} A^{(k)}_{j,\alpha}\otimes B^{(k)}_{j',\beta}=Q_k {\rm tr}_{1,2}\left[(X_{j,\beta_{j},\alpha}\otimes \Id_2 \otimes X_{j',\beta,\alpha_{j'}})\sigma_3^{(N)}\right] Q_{k'}
 \ee
where $A^{(k)}_{j,\alpha}={\rm tr}_1(X_{j,\beta_j,\alpha} \rho_{Ab_1}^{(k)})$ and similarly with $B^{(k)}_{j',\beta}$. Taking (\ref{sigma3bka}) (with $K$ instead of ${\cal K}$) for $\sigma^{(N)}_3$, and using (\ref{biCFK}) in the
computation of the rhs of (\ref{Qketc2}) we get
 \be
 \label{keyformula}
 \delta_{k,k'} A^{(k)}_{j,\alpha}\otimes B^{(k)}_{j',\beta}= \delta_{j,j'} m_{j} \raisebox{-40pt}{\includegraphics[height=8.0em]{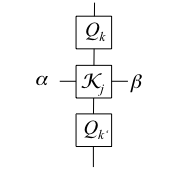}}
 \ee
This is the main equation from which we can derive all the desired results. First, taking $j=j'$, if $k\ne k'$ for all $\alpha,\beta$, the lhs vanishes, so that we must have (\ref{Qks}). Let us denote by
 \be
 O_{j}^{(k)} = \raisebox{-40pt}{\includegraphics[height=8.0em]{MPDO_QkKjQk.png}}
 \ee
Now, for all $k=k'$ there must be some $j=j'$, $\alpha$ and $\beta$ such that the rhs is non-zero, since otherwise the whole $Q_k\sigma^{(N)}_3 Q_k$ would vanish, which wlog can be assumed to be different from zero. Thus, for each $k$ there exist at least one $j_k$ such that $O_{j_k}^{(k)}$ does not vanish. Let us now show that there cannot be another $j\ne j_k$ with the same property. Since $O_{j_k}^{(k)}$ does not vanish, using  (\ref{keyformula}) we conclude that there is some $\alpha=\alpha_k$ such that $A^{(k)}_{j_k,\alpha_k}\ne 0$. Using again (\ref{keyformula}) we conclude that for any $j\ne j_k$ and $\beta$, $B^{(k)}_{j,\beta}=0$. Thus, $A^{(k)}_{j,\alpha}\otimes B^{(k)}_{j,\beta}=0$ for all $\alpha,\beta$, and from (\ref{keyformula}) we obtain that $O_j^{(k)}=0$ as well, as we claimed, and which concludes the proof.
\end{proof}

\begin{prop}
\label{prop2to3}
If $K$ fulfills SAL and ZCL, then the BNT's fulfill (\ref{AppKxKy=0}),  and for each of them there exists an isometry, $U$, tensors $r_k$ and $l_k$, vectors $\Phi$ and $\Psi$, and real numbers $a_k,b_k$ such that (\ref{AppUkU=rl}), with (\ref{Appetakhetc}), (\ref{Apptralktrrk}), and (\ref{AppPsiPhi}).
\end{prop}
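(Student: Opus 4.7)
The plan is to leverage the projectors $P_j$ produced by the preceding lemma to decompose the MPDO generated by $K$ into a direct sum of MPDOs generated by the individual BNT elements ${\cal K}_j$, and then to show that each ${\cal K}_j$ separately inherits both SAL and ZCL, so that Corollary \ref{CorollarytoProp} can be applied term by term. The injectivity needed for Corollary \ref{CorollarytoProp} comes for free because $K$ is assumed to be in biCF, so each ${\cal K}_j$ is injective after blocking.

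First I would establish (\ref{AppKxKy=0}). Using the biCF property (\ref{biCFK}) together with (\ref{PjKiPj}), one identifies $P_j K P_j$ (with the projectors acting on every site of the physical space) with ${\cal K}_j$, up to the multiplicity $r_j$ of that BNT element, while $P_j K P_{j'}=0$ whenever $j\neq j'$. In particular, each BNT tensor ${\cal K}_j$ is supported on the subspace of the physical space picked out by $P_j$, and these subspaces are orthogonal. Substituting $P_j \otimes P_{j'}$ around the physical indices appearing in the graphical contraction (\ref{AppKxKy=0}) yields zero for $x\neq y$.

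Second, because the $P_j$ act site-locally and are mutually orthogonal, the MPDO factorizes as a direct sum
\begin{equation}
\sigma^{(N)}(K)=\bigoplus_{j=1}^g r_j\,\sigma^{(N)}({\cal K}_j)
\end{equation}
(here I have already used Lemma \ref{lemmus} to absorb the phases $\mu_{j,q}=\mu_j$ and note that SAL forces $|\mu_j|=1$ so they contribute no $N$-dependent weight after normalization). Writing $p_j$ for the normalized weights, standard properties of entropy on a block-diagonal state give
\begin{equation}
S_L\bigl(\sigma^{(N)}(K)\bigr)=H(\vec{p})+\sum_{j}p_j\,S_L\bigl(\sigma^{(N)}({\cal K}_j)\bigr),
\end{equation}
and consequently
\begin{equation}
I_L\bigl(\sigma^{(N)}(K)\bigr)=H(\vec{p})+\sum_j p_j\,I_L\bigl(\sigma^{(N)}({\cal K}_j)\bigr).
\end{equation}
By Proposition \ref{PropILILp1} each $I_L({\cal K}_j)$ is non-decreasing in $L$, so the SAL assumption $I_L(K)=I_{L+1}(K)$ forces $I_L({\cal K}_j)$ to be constant in $L$ for every $j$; each ${\cal K}_j$ therefore satisfies SAL. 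An analogous decomposition applies to the transfer matrix associated to $K$, which becomes block diagonal under $\oplus_j P_j$, so the ZCL condition of Definition \ref{DefinitionZCL}, which is a factorization of that transfer matrix, restricts to each block and delivers ZCL for every ${\cal K}_j$.

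Finally, since $K$ is in biCF, each BNT tensor ${\cal K}_j$ is injective (after blocking), and SAL and ZCL have just been established for it. Corollary \ref{CorollarytoProp} then provides, for each $j$, an isometry $U^{(j)}$, tensors $r_k^{(j)}, l_k^{(j)}$, vectors $\Phi^{(j)},\Psi^{(j)}$ and real numbers $a_k^{(j)},b_k^{(j)}$ satisfying (\ref{AppUkU=rl})--(\ref{AppPsiPhi}), which is exactly the claim.

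The main obstacle I anticipate is the second step: carefully verifying that the direct-sum structure truly survives at every scale. Specifically, one must argue that the orthogonality of the $P_j$, established from the reduced state $\sigma^{(N)}_3$ via the previous lemma, genuinely implies site-local orthogonality of the BNT supports, so that the reduced density matrices on any interval (not just three sites) remain block-diagonal with respect to the same decomposition. Without this site-local orthogonality the identity $I_L(K)=H(\vec{p})+\sum_j p_j I_L({\cal K}_j)$ would fail and the inheritance of SAL and ZCL by the individual ${\cal K}_j$ would break down; the biCF hypothesis and (\ref{PjKiPj}) are what make it go through.
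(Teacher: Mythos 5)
Your proposal is correct and follows essentially the same route as the paper's proof: use the projectors $P_j$ from the preceding lemma to write $\sigma^{(N)}(K)$ as a block-diagonal sum $\oplus_j n_j\sigma^{(N)}({\cal K}_j)$, deduce (\ref{AppKxKy=0}) and positivity of each block, show each ${\cal K}_j$ inherits SAL (your monotonicity-of-$I_L$ argument is just a repackaging of the paper's strong-subadditivity-equality step, both resting on the weighted entropy identity $S_X=H(\vec p)+\sum_j p_j S_X^j$) and ZCL, and then invoke Corollary \ref{CorollarytoProp} blockwise using the injectivity guaranteed by biCF.
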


\begin{proof}
As shown in the previous lemma, there exist orthogonal projectors $P_j$ fulfilling (\ref{Pis}) and (\ref{PjKiPj}). From those properties, one can easily show that
 \be
 \label{generateMPDO}
 \raisebox{-12pt}{\includegraphics[height=9em]{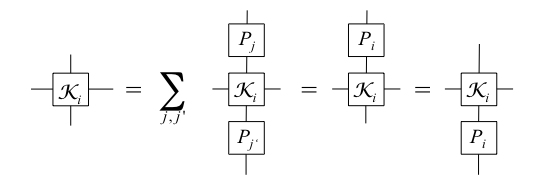}}
 \ee
which immediately implies (\ref{AppKxKy=0}). Next, we will show that each BNT element ${\cal K}_j$ generates MPDO, and fulfills SAL and ZCL. Then, applying Corollary \ref{CorollarytoProp}, we will conclude the proof.

The fact that each  ${\cal K}_j$ generates a MPDO can be shown as follows. Let
us define
 \be
 \label{sigmaNKj}
 \sigma^{(N)}_j(K)= P_j^{\otimes N} \sigma^{(N)}(K) P_j^{\otimes N}= n_j \sigma^{(N)}({\cal K}_j)
 \ee
where we have used (\ref{generateMPDO}) and Lemma \ref{lemmus}, and where $n_j$ is the number of blocks with index $j$. From this identity, we have that  $\sigma^{(N)}({\cal K}_j)\ge 0$.

In order to show that each BNT fulfills SAL, let us divide a chain of $N$ spins into four regions, A,B,C, and D, containing $m-1,1,N-m-1$, and $1$ consecutive spins, respectively. We can write
 \be
 \sigma^{(N)}(K)=\oplus_j P^{(B)}_j \sigma^{(N)}(K) P^{(B)}_j
 =\oplus_j \sigma^{(N)}_j(K)
 \ee
where the projectors are applied to the spin in region B only, and we have used (\ref{generateMPDO}). Let us denote by $p_j={\rm tr}[\sigma^{(N)}_j(K)]\ge 0$, by $S_X^j$ the entropy of $\sigma^{(N)}_j(K)/p_j$ in region $X$, and by $H_p=-\sum_{j} p_j \log p_j$. We will also denote by $S_X$ the entropy of $\sigma^{(N)}(K)$ in region $X$. Given that the different $\sigma^{(N)}_j(K)$ are supported on different subspaces in region $B$, we can write
 \be
 \label{entropiesj}
 S_X = H_p + \sum_j p_j S_X^j.
 \ee
Now, using the fact that $K$ fulfills SAL, and thus $I_1=I_m$ for $m<\lfloor N/2\rfloor$, this implies $S_{ABC}+S_B=S_{AB}+S_{BC}$. Using (\ref{entropiesj}) we obtain
 \be
 \label{sumstron}
 \sum_j p_j (S_{ABC}^j + S_B^j) = \sum_j p_j (S_{AB}^j + S_{BC}^j).
 \ee
Together with the strong subadditivity inequality for each of the $\sigma^{(N)}_j(K)$, i.e.
 \be
 \label{strongj}
 S_{ABC}^j + S_B^j \le S_{AB}^j + S_{BC}^j
 \ee
this implies that for each $j$, the (\ref{strongj}) must be fulfilled as an equality, if one of them would be strictly smaller, then it would be impossible to have (\ref{sumstron}). Substracting $S_N^j$ on each side of the equality we obtain that the mutual information is independent of $m$, and thus that each $\sigma^{(N)}_j(K)/p_j$ saturates the area law. Using the relation between $\sigma^{(N)}_j(K)$ and $\sigma^{(N)}({\cal K}_j)$ given in (\ref{sigmaNKj}), we conclude that each ${\cal K}_j$ fulfills SAL. Furthermore, since $K$ has ZCL then using the CF (\ref{CFK}) it is obvious that each ${\cal K}_j$ must also have ZCL.
\end{proof}

Next, we prove the following

\begin{prop}
\label{prop3to4}
If the BNT's of $K$ fulfill (\ref{AppKxKy=0}), and (\ref{AppUkU=rl}), with (\ref{Appetakhetc}), (\ref{Apptralktrrk}), and (\ref{AppPsiPhi}), then the MPDO generated by $K$ have the form (\ref{ApprhoNComm}).
\end{prop}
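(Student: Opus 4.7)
The strategy is to reduce the claim to the single-BNT case already handled by Proposition \ref{3to4}, and then assemble the pieces using the support-orthogonality supplied by (\ref{AppKxKy=0}). Concretely, I would first write $K$ in the canonical form (\ref{CFK}) and express $\sigma^{(N)}(K)$ as a sum over BNT contributions, then turn that sum into a direct sum, and finally exponentiate block by block via Proposition \ref{3to4}.

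For the decomposition step, I would start from
\begin{equation*}
\sigma^{(N)}(K) \;=\; \sum_{j=1}^{g}\Bigl(\sum_{q=1}^{r_j}\mu_{j,q}^{N}\Bigr)\,\sigma^{(N)}({\cal K}_j),
\end{equation*}
which follows directly from (\ref{CFK}). In the context in which this proposition is applied (inside the chain $ii\Leftrightarrow iii\Leftrightarrow iv$ of Theorem \ref{thm:main-simple}), $K$ has ZCL, so Lemma \ref{lemmus} gives $\mu_{j,q}=\mu_j$, and after absorbing these phases into the BNTs (as explained right after Lemma \ref{lemmus}) the sum collapses to $\sigma^{(N)}(K)=\sum_j r_j\,\sigma^{(N)}({\cal K}_j)$. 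The hypothesis (\ref{AppKxKy=0}) then plays exactly the role it did in the proof of Proposition \ref{prop2to3}: it supplies mutually orthogonal projectors $P_j$ on the single-site physical Hilbert space with ${\cal K}_j=P_j{\cal K}_j P_j$, hence $\sigma^{(N)}({\cal K}_j)$ is supported on $P_j^{\otimes N}$, and the different terms live on mutually orthogonal subspaces. The sum therefore upgrades to a direct sum
\begin{equation*}
\sigma^{(N)}(K)\;=\;\bigoplus_{j=1}^{g} r_j\,\sigma^{(N)}({\cal K}_j).
\end{equation*}

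For the exponentiation step, the assumption of Proposition \ref{prop3to4} is that each BNT tensor ${\cal K}_j$ satisfies (\ref{AppUkU=rl})--(\ref{AppPsiPhi}). These are exactly the hypotheses of Proposition \ref{3to4}, so for every $j$,
\begin{equation*}
\sigma^{(N)}({\cal K}_j)\;\propto\; \exp\Bigl(-\sum_{n=1}^{N} h^{(j)}_{n,n+1}\Bigr),
\end{equation*}
with $h^{(j)}_{n,n+1}$ translationally invariant, nearest-neighbor and pairwise commuting. Inserting this into the direct sum and setting $n_j:=r_j\in\mathbb{N}$ and $h^{(j)}:=h^{(j)}_{1,2}$ yields the desired form (\ref{ApprhoNComm}). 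The main technical point, which is really the content of the argument rather than a genuine obstacle, is verifying that the site-level orthogonality (\ref{AppKxKy=0}) propagates to the $N$-site MPDO; this is immediate once one has the projectors $P_j$ of Proposition \ref{prop2to3}, since products of site-wise block-diagonal operators remain block-diagonal in the corresponding tensor-product blocks.
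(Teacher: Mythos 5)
Your proposal is correct and follows essentially the same route as the paper: the paper's own proof likewise uses (\ref{AppKxKy=0}) together with Lemma \ref{lemmus} to write $\sigma^{(N)}(K)=\oplus_j n_j\,\sigma^{(N)}({\cal K}_j)$ with $n_j$ the block multiplicity, and then applies Proposition \ref{3to4} to each BNT element. You merely spell out the intermediate steps (the passage from sum to direct sum via the projectors $P_j$, and the identification $n_j=r_j$) that the paper compresses into two sentences.
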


\begin{proof}
Given (\ref{AppKxKy=0}) and Lemma \ref{lemmus}, we can write
 \be
 \sigma^{(N)}(K)=\oplus_j n_j \sigma({\cal K}_j),
 \ee
where $n_j$ is an integer. The rest follows directly from Proposition \ref{3to4} since, according to the previous proposition, each ${\cal K}_j$ fulfills SAL and ZCL.
\end{proof}

Now we show

\begin{prop}
\label{prop4to2}
If $K$ generates MPDO of the form (\ref{ApprhoNComm}) then it fulfills SAL.
\end{prop}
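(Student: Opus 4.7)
The plan is to reduce the statement to Proposition \ref{4to2}, which already establishes SAL for a single Gibbs state of a commuting nearest-neighbor Hamiltonian as in (\ref{expression}). Since (\ref{ApprhoNComm}) is precisely a direct sum of such Gibbs states, the essential task is to track how the von Neumann entropies (and hence mutual informations) decompose under this direct sum structure.

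First I would normalize and write $\sigma^{(N)}(K) = \sum_x q_x \rho_x^{(N)}$, where $\rho_x^{(N)} \propto e^{-\sum_{j=1}^N \tau_j(h^{(x)})}$ is the normalized Gibbs state of the $x$-th commuting Hamiltonian and $q_x = n_x Z_x^{(N)} / \sum_y n_y Z_y^{(N)}$ with $Z_x^{(N)} = \tr\bigl(e^{-\sum_j \tau_j(h^{(x)})}\bigr)$. The key observation is that Definition \ref{defrhoNComm} specifies a \emph{site-wise} decomposition $\mathcal{H}_n = \oplus_x \mathcal{H}_n^{(x)}$, with each $\rho_x^{(N)}$ supported on $\otimes_n \mathcal{H}_n^{(x)}$. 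Consequently, for any contiguous region $S$ of $L$ spins, the reduced state inherits a block-diagonal decomposition $\sigma_S = \sum_x q_x \rho_{x,S}^{(N)}$ whose summands are supported on mutually orthogonal subspaces of $\otimes_{n\in S} \mathcal{H}_n$. By the standard additivity of the von Neumann entropy under orthogonal direct sums,
\begin{equation}
S(\sigma_S) = H(q) + \sum_x q_x\, S\bigl(\rho_{x,S}^{(N)}\bigr),
\end{equation}
where $H(q) = -\sum_x q_x \log q_x$ is the Shannon entropy of the sector weights.

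Substituting this identity into $I_L = S_L + S_{N-L} - S_N$, two of the three $H(q)$ contributions cancel against the third, yielding $I_L = H(q) + \sum_x q_x\, I_L^x$, where $I_L^x$ denotes the mutual information of $\rho_x^{(N)}$. Since each $\rho_x^{(N)}$ has exactly the form (\ref{expression}), Proposition \ref{4to2} gives $I_L^x = I_1^x$ for all $L < \lfloor N/2 \rfloor$; averaging with weights $q_x$ yields $I_L = I_1$, which is SAL. The only point that requires attention is verifying the site-wise direct-sum property, since without it partial tracing would not preserve block-diagonality; however, this is precisely what Definition \ref{defrhoNComm} guarantees, so no additional machinery is needed beyond invoking Proposition \ref{4to2} on each sector.
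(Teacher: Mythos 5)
Your proof is correct and follows essentially the same route as the paper, which simply applies Proposition \ref{4to2} to each term of the direct sum and notes that the sum being site-local lets the conclusion pass to the whole state. You merely make explicit the entropy bookkeeping $S(\sigma_S)=H(q)+\sum_x q_x S(\rho_{x,S}^{(N)})$ and the resulting identity $I_L=H(q)+\sum_x q_x I_L^x$ that the paper leaves implicit.
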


\begin{proof}
Using Proposition \ref{4to2} to each of the terms in the direct sum we have that each ${\cal K}_j$ fulfills SAL. Since the direct sum is local, this also applies to the whole $K$.
\end{proof}

\begin{prop}
\label{prop2to5}
If $K$ fulfills SAL and ZCL then there exist two tpCPM ${\cal T}$ and ${\cal S}$ fulfilling (\ref{3to5tcpm}).
\end{prop}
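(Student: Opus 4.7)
The plan is to reduce the general simple case to the injective one already handled in Proposition \ref{3to5}. The key input is that by Proposition \ref{prop2to3}, SAL and ZCL imply both the physical orthogonality of the BNT elements, as encoded in (\ref{AppKxKy=0}) and more strongly in the projectors $P_j$ satisfying $\sum_j P_j = \Id$ and (\ref{PjKiPj}), and that each individual BNT element ${\cal K}_j$ satisfies SAL and ZCL and therefore fits the hypotheses of Proposition \ref{3to5}.

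First, for each $j$ I would apply Proposition \ref{3to5} to the injective tensor ${\cal K}_j$ to obtain trace-preserving completely positive maps ${\cal T}_j$ and ${\cal S}_j$ satisfying, schematically,
\be
{\cal T}_j[{\cal K}_j] = {\cal K}_j {\cal K}_j, \qquad {\cal S}_j[{\cal K}_j {\cal K}_j] = {\cal K}_j .
\ee
Thanks to (\ref{PjKiPj}) these relations also hold after dressing by $P_j$: $P_j {\cal K}_j P_j = {\cal K}_j$, and the image of ${\cal T}_j$ is supported in the $j$-th orthogonal sector in the relevant number of sites.

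Second, I would glue the sectoral maps into global tpCPMs by the ansatz
\be
{\cal T}(X) = \sum_j {\cal T}_j\bigl(P_j X P_j\bigr), \qquad {\cal S}(Y) = \sum_j {\cal S}_j\bigl((P_j \otimes P_j) Y (P_j \otimes P_j)\bigr).
\ee
Complete positivity is immediate since each summand is a composition of a CP projection map $X \mapsto P_j X P_j$ with the CP map ${\cal T}_j$ (resp.\ ${\cal S}_j$). Trace preservation follows from $\sum_j P_j = \Id$ combined with the tp property of each ${\cal T}_j$ and ${\cal S}_j$:
\be
\tr {\cal T}(X) = \sum_j \tr {\cal T}_j(P_j X P_j) = \sum_j \tr(P_j X P_j) = \tr(X),
\ee
and analogously for ${\cal S}$.

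Third, I would verify the required identities (\ref{3to5tcpm}) by combining the CF decomposition (\ref{CFK}) with Lemma \ref{lemmus} and the orthogonality (\ref{PjKiPj}). Concretely, when ${\cal T}$ is applied to $K_1(X)$ the projection $P_j(\cdot)P_j$ extracts exactly the $j$-th BNT block $n_j\,{\cal K}_j \otimes \mathrm{(bond\ contraction\ with\ }X)$, and then ${\cal T}_j$ converts it into the corresponding two-site block; summing over $j$ reproduces $K_2(X)$ via (\ref{CFK}). The analogous computation works for ${\cal S}$ applied to $K_2(X)$. The only subtlety, and what I expect to be the one step requiring care, is keeping track of the multiplicity factors $n_j$ and the scalars $\mu_j^N$ so that the blockwise identities glue into a strict equality of tensors (not merely of the generated MPDOs); this is fine because Lemma \ref{lemmus} makes $\mu_{j,q}$ independent of $q$ and the $P_j$'s separate the sectors cleanly, so the matching of coefficients is just bookkeeping inside each $j$-block.
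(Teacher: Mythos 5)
Your proposal is correct and follows essentially the same route as the paper: the paper's own proof is a one-line reduction stating that one "first projects into each local subspace labeled by $j$" and then repeats the construction of Proposition \ref{3to5} blockwise, which is exactly your ansatz ${\cal T}(X)=\sum_j {\cal T}_j(P_jXP_j)$, ${\cal S}(Y)=\sum_j {\cal S}_j\bigl((P_j\otimes P_j)Y(P_j\otimes P_j)\bigr)$. Your verification of complete positivity and trace preservation via $\sum_j P_j=\Id$, and your appeal to Proposition \ref{prop2to3} to ensure each injective block ${\cal K}_j$ satisfies the hypotheses of Proposition \ref{3to5}, supply precisely the details the paper leaves implicit.
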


\begin{proof}
The proof follows in the same way as that of Proposition \ref{3to5} by first projecting into each local subspace labeled by $j$.
\end{proof}

Now we are at the position of proving Theorem \ref{thm:main-simple}:

\begin{proof}
$i \Rightarrow ii $ by Proposition \ref{propsimple}.
$ii \Rightarrow iii$ by Proposition \ref{prop2to3}.
$iii \Rightarrow iv$ by Proposition \ref{prop3to4}. $iv \Rightarrow ii$ by Proposition \ref{prop4to2}. $iii \Rightarrow v$ by Proposition \ref{prop2to5} where the tpCPM are ${\cal T}^2$ and ${\cal S}^2$.
\end{proof}

\subsection{Proof of Proposition \ref{Prop:IV.12}.}

In order to prove it, we will need the following Lemma about MPV.

\begin{lem}\label{Lemma-L}
Let $A$ be a tensor in CF generating a family of MPV, with local Hilbert space $\mathcal{H}_d$. Let $Y,Z$ be operators on $\mathcal{H}_d$ If
$Y_1V^{(N)}(A)=Z_1V^{(N)}(A)$ for all $N$ (where $Y_1,Z_1$ means that the operators are located in the first spin) -- graphically
 \be\label{eq1:proof.IV.12}
 \raisebox{-12pt}{\includegraphics[height=5.5em]{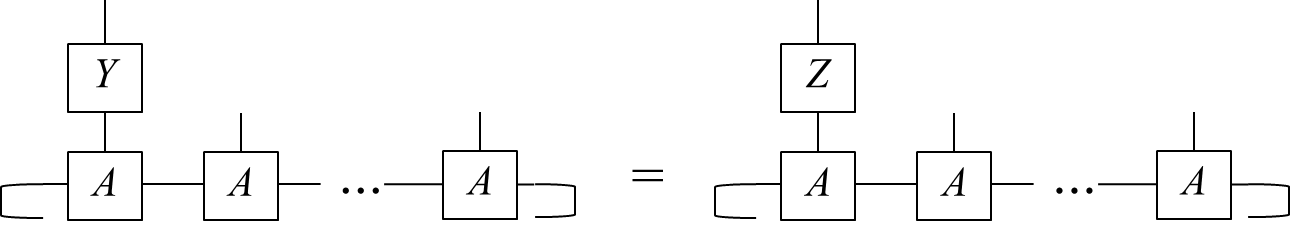}}\nonumber\\
 \ee
then
 \be
 \raisebox{-12pt}{\includegraphics[height=5.5em]{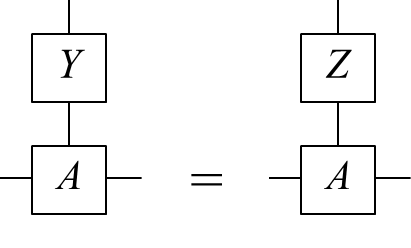}}\nonumber\\
\ee

\end{lem}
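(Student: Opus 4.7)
The plan is to translate the MPV-level hypothesis into a tensor-level equality through the biCF structure. Set $W = Y - Z$ and define the auxiliary tensor $(WA)^j = \sum_i W_{j,i} A^i$. Expanding the hypothesis $Y_1 V^{(N)}(A) = Z_1 V^{(N)}(A)$ in the computational basis gives
\begin{equation*}
\text{tr}\bigl((WA)^{j} A^{i_2} \cdots A^{i_N}\bigr) = 0 \quad \text{for all } j, i_2, \ldots, i_N, \text{ and all } N.
\end{equation*}
The target conclusion, represented by the second diagram, is that $(WA)^j$ vanishes as a $D\times D$ matrix for every $j$, i.e.\ that $Y$ and $Z$ contracted with the physical index of $A$ yield identical tensors.

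First I would block $A$ a bounded number of times (at most $3D^5$, by Proposition \ref{propblockinj}) so as to assume without loss of generality that $A$ is in biCF; the hypothesis is preserved under blocking, since $Y_1$ and $Z_1$ still act on the first (now composite) site of the blocked chain. Writing the CF of $A$ in BNT form (\ref{eq:II_ABasicTensors}), $A^i = \bigoplus_{j,q} \mu_{j,q}\, X_{j,q} A^i_j X^{-1}_{j,q}$, both $(WA)^{j}$ and each product $A^{i_2}\cdots A^{i_N}$ inherit this block-diagonal decomposition, labeled by $(j,q)$.

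Next I would exploit the biCF property together with quantum Wielandt (injectivity of each NT after blocking) to show that the matrices $\{A^{i_2}\cdots A^{i_N}\}$, as $(i_2,\ldots,i_N)$ and $N$ vary, generate in their linear span (up to the $X_{j,q}$ gauges and the scalar weights $\mu_{j,q}^{N-1}$) the full algebra $\bigoplus_j \mathcal{M}_{D_j\times D_j}$. To disentangle the $\mu_{j,q}^{N-1}$ weights coming from different blocks sharing the same BNT label, I would invoke Lemma \ref{Lem:app_simple} (and Corollary \ref{Lem1}) to separate the vanishing condition into one equation per block $(j,q)$, yielding
\begin{equation*}
\text{tr}\bigl(X_{j,q}^{-1}(WA)^{j'}_{(j,q)}X_{j,q}\cdot M\bigr) = 0 \quad \text{for every } M \in \mathcal{M}_{D_j\times D_j},
\end{equation*}
where $(WA)^{j'}_{(j,q)}$ is the $(j,q)$-block of $(WA)^{j'}$. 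Non-degeneracy of the trace pairing on $\mathcal{M}_{D_j\times D_j}$ forces each such block to vanish, and since $(WA)^{j'}$ already has purely block-diagonal structure, we conclude $(WA)^{j'} = 0$, i.e.\ $YA = ZA$ as tensors, which is precisely the graphical identity in the second figure.

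The main obstacle is step three: cleanly separating the block contributions when several phases $\mu_{j,q}$ may coincide across different labels, and verifying that biCF plus injectivity truly delivers the full algebra $\bigoplus_j \mathcal{M}_{D_j\times D_j}$ at the lengths we use. Both issues are dissolved by blocking once more if necessary and by appealing to the linear independence of distinct $\mu^N$ sequences, but the bookkeeping between the physical index $j$ on $(WA)^j$ and the BNT index $j'$ inside the block decomposition has to be kept carefully aligned.
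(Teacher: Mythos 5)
Your proposal is correct and follows essentially the same route as the paper's proof: expand the hypothesis into trace conditions $\tr\bigl((Y-Z)A\cdot A^{i_2}\cdots A^{i_N}\bigr)=0$, use the BNT decomposition together with Proposition \ref{propblockinj} to see that the products $A^{i_2}\cdots A^{i_N}$ exhaust the block algebra $\oplus_j\mathcal{M}_{D_j\times D_j}$ up to the weights $\sum_q\mu_{j,q}^N$, invoke Lemma \ref{Lem:app_simple} to pick an $N$ with nonvanishing weight, and conclude by non-degeneracy of the trace pairing. The only cosmetic difference is that the separation is naturally per BNT label $j$ (with the factor $\sum_q\mu_{j,q}^N$) rather than per individual block $(j,q)$, since all $q$-copies of a given $A_j$ carry the same content up to gauge and scalar — exactly the bookkeeping point you flag, and it resolves as you expect.
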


\begin{proof}
Let us consider the CF decomposition of $A$ given in (\ref{eq:II_ABasicTensors}).  The hypothesis implies that
 \be
 \raisebox{-12pt}{\includegraphics[height=5.5em]{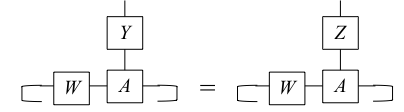}}\nonumber\\
\ee
 for all $W\in {\rm span}\{\oplus_{j,q} \mu_{j,q}X_{j,q}A_jX_{j,q}^{-1}\}^N$. That is, calling
 \be
 \raisebox{-12pt}{\includegraphics[height=12em]{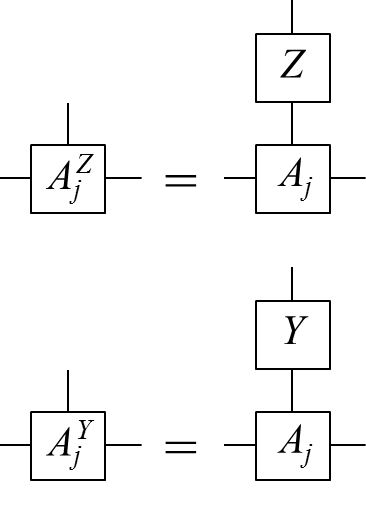}}\nonumber\quad ,\\
\ee
Proposition \ref{propblockinj} ensures that $$\left(\sum_q\mu_{j,q}^N \right)\tr[W_j A_j^Y]=\left(\sum_q\mu_{j,q}^N \right)\tr[W_j A_j^Z]$$ for all $j$, all blocks $W_j$ of $W$, and all $N$. By choosing $N$ so that $\sum_q\mu_{j,q}^N \not = 0$, which exists by Lemma \ref{Lem:app_simple}, we get that $A_j^Y=A_j^Z$ for all $j$ and from there the desired result.
\end{proof}

We can now prove Proposition \ref{Prop:IV.12}.. Let us first restate it.

\begin{prop*}
A tensor $\tilde{M}$ in CF, generating MPDO, is also in CF vertically. Moreover, there exists an isometry $U$ such that
 \be
 \label{UMU-appendix}
 U \tilde{M} U^\dagger = \bigoplus_\alpha \mu_\alpha \otimes M_\alpha,
 \ee
where $\mu_\alpha$ are diagonal and positive matrices, and $\{M_\alpha\}_\alpha$ is a BNT.
\end{prop*}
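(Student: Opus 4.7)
The plan is to invoke the canonical-form machinery of Section \ref{Sec:MPV} on $\tilde M$ viewed as an MPV in the vertical direction, and then to use the hermiticity and positivity of the generated MPDO to promote the resulting similarity transformation to an isometry and to force the coefficients to be non-negative reals.

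First I would formalize the vertical view: regard $\tilde M$ as a tensor with combined ``physical'' index $\ell=(\alpha,\beta)\in\{1,\dots,D\}^2$ and ``virtual'' indices $i,j\in\{1,\dots,d\}$ contracted in the vertical direction, yielding a family of MPV $V^{(N)}_{\mathrm v}(\tilde M)=\sum_{\ell_1,\dots,\ell_N}\tr(\tilde M^{\ell_1}\cdots \tilde M^{\ell_N})\,\ket{\ell_1,\dots,\ell_N}$. After blocking a bounded number of vertical sites and invoking Propositions \ref{propblockinj} and \ref{prop:char-BNT} in this vertical picture, this MPV admits a BNT decomposition of the form (\ref{eq:II_ABasicTensors}),
\be
\tilde M^\ell \;=\; X\,\Bigl[\bigoplus_\alpha (D_\alpha \otimes M_\alpha^\ell)\Bigr]\, X^{-1},
\ee
where $\{M_\alpha\}$ is a vertical BNT, each $D_\alpha$ is diagonal with complex entries $\mu_{\alpha,q}$ of modulus at most $1$, and $X$ is invertible on the vertical virtual space. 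This already witnesses that $\tilde M$ is in CF vertically.

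Next I would use hermiticity $\rho^{(N)}(\tilde M)=\rho^{(N)}(\tilde M)^\dagger$, which amounts to the identity $\tr(\tilde M^{i_1,j_1}\cdots\tilde M^{i_N,j_N})=\overline{\tr(\tilde M^{j_1,i_1}\cdots\tilde M^{j_N,i_N})}$ for all $N$ and all index choices. Read as equality between two vertical MPV, Corollary \ref{II_cor2} supplies an invertible intertwiner relating $\tilde M$ to its ket-bra-swapped complex conjugate. Passing to the CFII normalization on the vertical virtual space (adjusting $X$ so that the vertical transfer maps satisfy (\ref{eq:II_TPLambda})) and invoking Corollary \ref{thm:Fundamental-CFII}, this intertwiner is unitary. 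Hence $X$ may be taken to be a single-site isometry $U$ on the vertical virtual space, which in MPDO language is the local physical basis change referred to in the statement. Finally, positivity $\rho^{(N)}(\tilde M)\ge 0$, tested against rank-one operators supported in each BNT sector via the block-injectivity of Definition \ref{defnbi}, forces the scalar coefficients $\sum_q \mu_{\alpha,q}^N$ in (\ref{eq:II_Psi_k}) to be non-negative reals for all $N$; Lemma \ref{Lem:app_simple} then upgrades this to $\mu_{\alpha,q}\ge 0$ individually. Setting $\mu_\alpha:=D_\alpha\ge 0$ delivers the desired $U\tilde M U^\dagger=\bigoplus_\alpha \mu_\alpha \otimes M_\alpha$.

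The main obstacle is the middle step: turning the generic invertible gauge supplied by Theorem \ref{thm1} into a unitary and then pinning down the phases and signs of the scalars $\mu_{\alpha,q}$. The Fundamental Theorem only yields similarity transformations, so extracting an isometry requires carefully passing to the CFII representative of the vertical MPV while respecting the assumed \emph{horizontal} CF of $\tilde M$, and then the positivity constraint has to be leveraged in a block-resolved way — relying precisely on block injectivity (Proposition \ref{propblockinj}) to separate the contributions of distinct BNT sectors. Once these two ingredients are combined, everything else is standard bookkeeping with the canonical-form technology of Section \ref{Sec:MPV}.
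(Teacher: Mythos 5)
There are several genuine gaps here, and the overall strategy does not match what the statement actually requires. First, the opening move --- blocking a bounded number of \emph{vertical} sites to reach a vertical CF --- is not available: vertical concatenation of $\tilde M$ generates powers of the operator $\rho^{(N)}(\tilde M)$, so after vertical blocking you are proving a statement about a tensor generating $\rho^k$, not about $\tilde M$. The first assertion of the proposition is precisely that $\tilde M$, \emph{without} any further blocking, is already in CF vertically; this has to be deduced from the hermiticity and positivity of $\rho^{(N)}$ (the paper shows that any one-sided vertically invariant subspace is two-sided because $PH^{(N)}=PH^{(N)}P=(PH^{(N)}P)^\dagger=H^{(N)}P$, and that $p$-periodic vectors are excluded because a projector commuting with a power of a positive operator commutes with the operator itself). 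Your proposal omits this content entirely. Second, the mechanism you propose for upgrading the gauge to an isometry does not work: the gauge freedom of the vertical MPV acts on the vertical virtual index, which is the \emph{physical} index of the MPDO, so ``passing to the CFII normalization on the vertical virtual space'' is not an innocent change of representative --- it conjugates $\rho^{(N)}$ by a non-unitary matrix and destroys the hermiticity and positivity you intend to exploit. Likewise, reading $\rho^{(N)}=\rho^{(N)\dagger}$ through Corollary \ref{II_cor2} produces an intertwiner on the \emph{horizontal} bond relating $\tilde M^{ij}$ to $\overline{\tilde M^{ji}}$, which lives on the wrong index and says nothing directly about the matrices $X_{\alpha,k}$ acting on the physical space. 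The paper's proof instead relies on Lemma \ref{Lemma-L}, which uses the \emph{horizontal} CF of $\tilde M$ to cancel the tensor from operator identities of the form $Y_1V^{(N)}=Z_1V^{(N)}$ holding for all $N$; this is the tool that converts self-adjointness of $P_{\alpha,k}H^{(N)}P_{\alpha,k}$ into the relation $X_{\alpha,k}^\dagger X_{\alpha,k}=\omega_{\alpha,k}X_{\alpha,1}^\dagger X_{\alpha,1}$, whence unitarity. Nothing in your proposal plays this role.

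Finally, the last step is simply false as stated: from $\sum_q\mu_{\alpha,q}^N\ge 0$ for all $N$, Lemma \ref{Lem:app_simple} does not let you conclude $\mu_{\alpha,q}\ge 0$ individually --- the multiset $\{1,1,i,-i\}$ has all power sums equal to $2+2\cos(N\pi/2)\ge 0$ yet contains non-real elements. The paper avoids this by projecting onto each \emph{individual} block $(\alpha,k)$ of the physical-space direct sum (not just onto a BNT sector) and using positivity of $P_{\alpha,k}H^{(N)}P_{\alpha,k}$ together with the strict positivity of the trace of the corresponding diagonal contraction $d_\alpha$ to conclude $\mu_{\alpha,k}>0$ one coefficient at a time. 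To repair your argument you would need to (i) replace the vertical blocking by the hermiticity/positivity arguments establishing vertical CF directly, (ii) introduce a cancellation lemma of the type of Lemma \ref{Lemma-L} based on the horizontal CF, and (iii) redo the positivity of the $\mu_{\alpha,k}$ blockwise rather than via power sums.
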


\begin{proof}
Let us first show that for all projector $P$ so that  $P\tilde{M}=P\tilde{M}P$, we have that $P^\perp\tilde{M}P=0$.

Let us denote by $H^{(N)}=H^{(N)}(\tilde{M})$ the MPDO generated by $\tilde{M}$ in the horizontal direction. That is
 \be
 H^{(N)}(\tilde{M})=\raisebox{-15pt}{\includegraphics[height=4.5em]{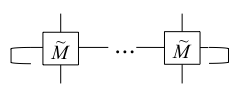}}
 \ee
 Imagine there exists a projector $P$ so that $P\tilde{M}=P\tilde{M}P$. Then
 \bea
 \label{eq1:proof.IV.12}
 PH^{(N)} &=&
 \raisebox{-12pt}{\includegraphics[height=6em]{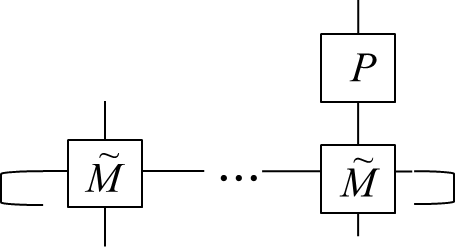}}\nonumber\\
 &=& PH^{(N)}P=(PH^{(N)}P)^\dagger=H^{(N)}P
\eea
By assumption, $\tilde{M}$ is in CF in the horizontal direction. Then, Lemma \ref{Lemma-L} gives $P^\perp\tilde{M}P=0$ as desired.

The next step is to show that there are no p-periodic vectors. If this were the case, there would exist an orthogonal projector $Q$ so that $QH^{(N)}\not = H^{(N)}Q$ for all $N$ but
\be \label{eq2:proof.IV.12} Q\left[H^{(N)}\right]^p =\left[H^{(N)}\right]^pQ \text{ for all } N.
\ee

But since  $H^{(N)}$ is semidefinite positive, (\ref{eq2:proof.IV.12}) implies that $QH^{(N)} = H^{(N)}Q$ for all $N$, which is the desired contradiction.

It only remains to show that there exists an isometry $U$ so that $$U\tilde{M}U^\dagger = \bigoplus_\alpha \mu_\alpha \otimes M_\alpha$$
where the $\mu_\alpha$ are positive diagonal matrices and the $M_\alpha$ a BNT for $\tilde{M}$ in the vertical direction.

Since $\tilde{M}$ is in CF in the vertical direction, it can be written as $\oplus_{\alpha,k}\mu_{\alpha,k} X_{\alpha,k}M_\alpha X_{\alpha, k}^{-1}$. Let us denote by $P_{\alpha,k}$ the projector selecting the sector $\alpha,k$ in such direct sum. Then, since $\tilde{M}$ was also in CF in the horizontal direction, there exists an $N$ so that $0\not =P_{\alpha,k}H^{(N)}P_{\alpha,k}\ge 0$. Its trace is then $\mu_{\alpha,k} d_\alpha>0$, where
 \be
 \raisebox{-12pt}{\includegraphics[height=5.5em]{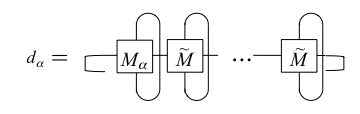}}
 \ee
Since wlog $\mu_{\alpha,1}>0$ for all $\alpha$, we can conclude that $d_\alpha>0$ and hence $\mu_{\alpha,k}>0$. We now finish if we show that
\be\label{eq3:proof.IV.12}
X_{\alpha,k}^\dagger X_{\alpha,k} =\omega_{\alpha,k} X_{\alpha,1}^\dagger X_{\alpha, 1} \text{ for all } k.
\ee
Since we can always choose wlog $X_{\alpha,1}=\Id$ for all $\alpha$,  this would imply that $$\frac{1}{\sqrt{\omega_{\alpha,k}}}X_{\alpha,k}=U_{\alpha,k}.$$
Since the $X_{\alpha,k}$ are defined up to a constant, we are done.

Let us prove then (\ref{eq3:proof.IV.12}). Using that $P_{\alpha,k}H^{(N)}P_{\alpha,k}$ is self-adjoint and that $\mu_{\alpha,k}>0$ we arrive at

 \be
 \raisebox{-12pt}{\includegraphics[height=8.0em]{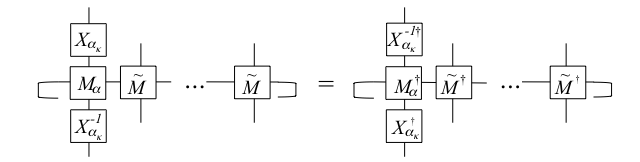}}
 \ee

 Using that $\tilde{M}$ is in CF in the horizontal direction together with Lemma \ref{Lemma-L} we get :

 \be
 \raisebox{-12pt}{\includegraphics[height=14em]{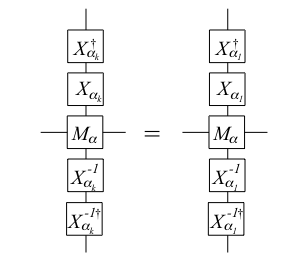}}
 \ee

 But since $M_{\alpha}$ is a NT in the vertical direction, we obtain (\ref{eq3:proof.IV.12}) for a (necessarily positive) constant $\omega_{\alpha,k}$.
 \end{proof}

\subsection{Proof of Theorem \ref{thm:IV.13} }

 Let us restate for convenience here Theorem \ref{thm:IV.13}

 \begin{thm*}
Given a tensor $M$ in CF  that generates MPDO, the following statements are equivalent:
 \begin{itemize}
 \item[(i)] $M$ is a RFP.
 \item[(ii)] There exists a set of diagonal matrices, $\chi_{\alpha,\beta,\gamma}$, with positive elements, such that for each $L$ the operators $O_L(M_\alpha)$ linearly span an algebra with structure coefficients $c_{\alpha,\beta,\gamma}^{(L)}={\rm tr}(\chi^L_{\alpha,\beta,\gamma})$:
 \be
 \label{eq:algebra-appendix}O_L(M_\alpha)O_L(M_\beta)=\sum_\gamma c_{\alpha,\beta,\gamma}^{(L)}O_L(M_\gamma),
 \ee
 and
  \be
 \label{idempotent-appendix}
 m_\gamma = \sum_{\alpha,\beta} c^{(1)}_{\alpha,\beta,\gamma} m_\alpha m_\beta
 \ee
 where $m_\alpha={\rm tr}(\mu_\alpha)$. That is, the vector $(m_\alpha)_\alpha$ is an idempotent for the ``multiplication'' induced by $c^{(1)}$.
 \item[(iii)] There exist isometries, $U_{\alpha,\beta}$, such that
 \be
 U_{\alpha,\beta} M_\alpha M_\beta U_{\alpha,\beta}^\dagger = \oplus_\gamma \chi_{\alpha,\beta,\gamma} \otimes M_\gamma,
 \ee
where $\chi_{\alpha,\beta,\gamma}$ are the same as in the previous statement, and (\ref{idempotent-appendix}) is fulfilled.
 \end{itemize}
\end{thm*}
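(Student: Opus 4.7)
The plan is to establish the pair of equivalences (ii)$\Leftrightarrow$(iii) via the Fundamental Theorem of MPV applied horizontally, and (i)$\Leftrightarrow$(iii) by explicitly constructing (and conversely extracting) the tpCPM ${\cal S}, {\cal T}$ from the algebraic data.

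The implication (iii)$\Rightarrow$(ii) is direct. Applying $O_L$ to both sides of the identity in (iii) and using that (a) horizontal tracing is invariant under conjugation by an isometry acting on the physical indices, (b) vertical stacking of tensors corresponds to operator composition, so $O_L(M_\alpha M_\beta) = O_L(M_\alpha)O_L(M_\beta)$, and (c) $O_L(\chi_{\alpha,\beta,\gamma}\otimes M_\gamma) = \tr(\chi_{\alpha,\beta,\gamma}^L)\,O_L(M_\gamma)$ for diagonal $\chi$, one reads off (\ref{eq:algebra-appendix}) with $c^{(L)}_{\alpha,\beta,\gamma} = \tr(\chi^L_{\alpha,\beta,\gamma})$; the idempotent condition (\ref{idempotent-appendix}) is part of the hypothesis of (iii). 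For the converse (ii)$\Rightarrow$(iii), I would apply Proposition \ref{Prop:IV.12} to the vertically stacked tensor $M_\alpha M_\beta$, which itself generates an MPDO horizontally: this yields an isometry $U_{\alpha,\beta}$ on the physical indices and diagonal positive matrices $\tilde\chi_\tau$ such that $U_{\alpha,\beta}(M_\alpha M_\beta)U_{\alpha,\beta}^\dagger = \oplus_\tau \tilde\chi_\tau \otimes N_\tau$ with $\{N_\tau\}$ a BNT. Comparing $O_L$ of this identity with the algebra (\ref{eq:algebra-appendix}) from (ii), the Fundamental Theorem of MPV (Theorem \ref{thm1}) forces $\{N_\tau\}$ to coincide, after relabelling, with a subset of $\{M_\gamma\}$ up to gauge transformations, and Lemma \ref{Lem:app_simple}, applied to the diagonal entries, matches $\tilde\chi_\tau$ with $\chi_{\alpha,\beta,\gamma(\tau)}$. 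The gauges are absorbed into a redefinition of $U_{\alpha,\beta}$, producing the form in (iii).

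For (iii)$\Rightarrow$(i), I would construct ${\cal T}$ and ${\cal S}$ explicitly. Using $M = \oplus_\gamma \mu_\gamma \otimes M_\gamma$ from Proposition \ref{Prop:IV.12} and expanding $M\cdot M = \oplus_{\alpha,\beta,\gamma}(\mu_\alpha \mu_\beta \chi_{\alpha,\beta,\gamma})\otimes M_\gamma$ via (iii), define ${\cal T}$ as the block-diagonal tpCPM that sends the $\gamma$-sector of $M$ to the corresponding weighted direct sum of $(\alpha,\beta,\gamma)$-sectors with weights $\mu_\alpha\mu_\beta\chi_{\alpha,\beta,\gamma}/m_\gamma$ and inserts the isometry $U_{\alpha,\beta}^\dagger$; trace preservation follows precisely from the idempotent condition (\ref{idempotent-appendix}). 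The map ${\cal S}$ performs the inverse partial trace and re-weighting.

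The remaining direction (i)$\Rightarrow$(iii) is the main obstacle. Starting from the abstract existence of ${\cal S}, {\cal T}$, I would apply them to $M_1(X)$ and $M_2(X)$ for varying $X$, and use the biCF property (Proposition \ref{propblockinj}) to pick $X$'s isolating individual BNT elements $M_\gamma$; this converts RFP into concrete tensorial relations between $M_\alpha M_\beta$ and direct sums of the $M_\gamma$. Casting these relations in isometric form then again invokes Proposition \ref{Prop:IV.12} and the Fundamental Theorem of MPV. The central technical difficulty throughout the proof is the bookkeeping relating the vertical CF structure (used in Proposition \ref{Prop:IV.12} and in the definition of RFP) to the horizontal BNT $\{M_\gamma\}$; the Fundamental Theorem is the key tool bridging the two, and the idempotent condition plays the role of the normalization constraint ensuring trace preservation.
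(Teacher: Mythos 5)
The heart of this theorem is the implication $(i)\Rightarrow(iii)$, and that is precisely where your argument has a genuine gap. Knowing only that tpCPMs ${\cal S},{\cal T}$ exist with ${\cal T}[M_1(X)]=M_2(X)$ and ${\cal S}[M_2(X)]=M_1(X)$ tells you how they act on the particular family $\{M_1(X)\}_X$; isolating individual BNT elements via block injectivity does not by itself force ${\cal T}$ to act as a unitary conjugation between sector algebras, which is what is needed to produce the isometries $U_{\alpha,\beta}$ and the positive diagonal $\chi_{\alpha,\beta,\gamma}$. The paper closes this gap with a structural argument absent from your sketch: it sandwiches ${\cal T},{\cal S}$ between weighting/unweighting maps $R_i,\tilde R_i$ to obtain maps $\tilde T,\tilde S$ between the $C^*$-algebras ${\cal A}_1=\oplus_\alpha{\cal M}_{d^1_\alpha\times d^1_\alpha}$ and ${\cal A}_2$, notes that $\oplus_\alpha m_\alpha M_\alpha(X)$ lies in the fixed-point set ${\cal F}$ of $\tilde S\tilde T$, and combines the structure theorem for fixed-point sets of CP maps \cite{Michael-Notes} with Proposition \ref{propblockinj} (products of the $M_\alpha(X)$ span all of ${\cal A}_1$) to conclude ${\cal F}={\cal A}_1$, i.e.\ $\tilde S\tilde T=\mathrm{id}$. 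Only then does the reversibility result of \cite{Inverse-eigenvalue-problem} force $\tilde T$ to be a direct sum of unitary conjugations, which is where the $U_{\alpha,\beta}$ actually come from. Without some version of this step your plan for $(i)\Rightarrow(iii)$ does not get off the ground.

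Your direct route $(ii)\Rightarrow(iii)$ also needs repair. Proposition \ref{Prop:IV.12} is proved for tensors generating MPDO (positivity of $H^{(N)}$ is used throughout, e.g.\ to exclude $p$-periodic vectors and to make the gauge matrices unitary), and $M_\alpha M_\beta$ is not a priori such a tensor, nor automatically in CF horizontally; the required positivity must be extracted by viewing the relevant blocks as compressions of the positive operator $\left(H^{(N)}(M)\right)^2$, which is essentially what the paper does. Moreover, Theorem \ref{thm1} only yields invertible gauges relating the BNT of $M_\alpha M_\beta$ to the $M_\gamma$; these act on the horizontal virtual space while $U_{\alpha,\beta}$ acts on the physical one, so they cannot simply be ``absorbed'' --- the paper instead shows they are unitary by repeating the positivity argument of Proposition \ref{Prop:IV.12}. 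Note that the paper avoids this implication altogether, closing the cycle as $(i)\Rightarrow(ii),(iii)$, $(iii)\Rightarrow(ii)$, $(ii)\Rightarrow(i)$. By contrast, your $(iii)\Rightarrow(ii)$ argument and your explicit construction of ${\cal T},{\cal S}$ with trace preservation supplied by the idempotent condition (\ref{idempotent-appendix}) are sound and agree in substance with the paper's $(ii)\Rightarrow(i)$ step.
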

\begin{proof}
$(i)\Rightarrow (ii) \text{ and } (iii)$.
By Proposition \ref{Prop:IV.12}, both
\be\label{Figure9}
 \raisebox{-12pt}{\includegraphics[height=4em]{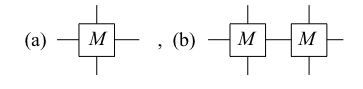}}
\ee
are in CF in vertical direction. So, up to isometries that can be absorbed in the $T$ and $S$ appearing in the RFP property, they can be written as $\oplus_\alpha \mu_\alpha \otimes M_\alpha $ and $\oplus_\beta \nu_\beta \otimes A_\beta$, respectively, where $\{M_\alpha\}_\alpha$  and $\{A_{\beta}\}_\beta$ are BNT with dimensions $d^1_\alpha$ and $d^2_\beta$, respectively, and $\mu_\alpha, \nu_\beta$ are diagonal positive matrices with $m_\alpha=\tr[\mu_\alpha]$ and $n_\beta=\tr[\nu_\beta]$.
Let us define the $C^*$-algebras $\mathcal{A}_1=\oplus_\alpha \mathcal{M}_{d^1_\alpha\times d^1_\alpha}$ and ${\mathcal{A}}_2=\oplus_\beta \mathcal{M}_{{d}^2_\beta\times {d}^2_\beta}$ and the operators
\begin{align}
R_1&:\mathcal{A}_1\longrightarrow \oplus_\alpha {\rm span}\{\mu_\alpha\}\otimes \mathcal{M}_{d^1_\alpha\times d^1_\alpha}, \\
R_2&: \mathcal{A}_2\longrightarrow \oplus_\beta {\rm span}\{\nu_\beta\}\otimes \mathcal{M}_{d^2_\beta\times d^2_\beta},\\
\tilde{R}_1&: \oplus_\alpha {\rm span}\{\mu_\alpha\}\otimes \mathcal{M}_{d^1_\alpha\times d^1_\alpha}\longrightarrow \mathcal{A}_1\\
\tilde{R}_2&:\oplus_\beta {\rm span}\{\nu_\beta\}\otimes \mathcal{M}_{d^2_\beta\times d^2_\beta}\longrightarrow\mathcal{A}_2\
\end{align}
by

\begin{align}
R_1(\oplus_\alpha X_\alpha)&=\oplus_\alpha \frac{\mu_\alpha}{m_\alpha}\otimes X_\alpha\\
R_2(\oplus_\beta Y_\beta)&=\oplus_\beta \frac{\nu_\beta}{n_\beta}\otimes Y_\beta\\
\tilde{R}_1(\oplus_\alpha \lambda_\alpha\mu_\alpha\otimes X_\alpha)&= \oplus_\alpha \lambda_\alpha m_\alpha X_\alpha\\
\tilde{R}_2(\oplus_\beta \delta_\beta\nu_\beta\otimes Y_\beta)&=\oplus_\beta \delta_\beta n_\beta Y_\beta
\end{align}

 Let us call $\tilde{T}=\tilde{R}_2 TR_1:\mathcal{A}_1\rightarrow \mathcal{A}_2$ and $\tilde{S}= \tilde{R}_1SR_2:\mathcal{A}_2\rightarrow \mathcal{A}_1$, where $T,S$ are the tpCPM in the definition of RFP. Let us show that $\tilde{S}\tilde{T}$ is the identity in $\mathcal{A}_1$ and  $\tilde{T}\tilde{S}$  the identity in $\mathcal{A}_2$.
 For that we note that
 $\tilde{T}(\oplus_\alpha m_\alpha M_\alpha(X))=\oplus_\beta n_\beta A_\beta(X)$ and $ \tilde{S}(\oplus_\beta n_\beta A_\beta(X))=\oplus_\alpha m_\alpha M_\alpha(X)$, where
 \be
 \raisebox{-12pt}{\includegraphics[height=3.7em]{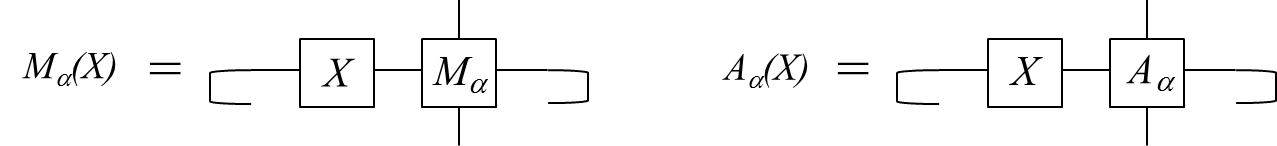}}
\ee
 Hence $\oplus_\alpha m_\alpha M_\alpha(X)$ is contained in the set of fixed points $\mathcal{F}$ of $\tilde{S}\tilde{T}$. By \cite[Theorem 6.14]{Michael-Notes}, $\mathcal{F}$ has the form
 \be
 \mathcal{F}=U\left[0_r\oplus_k\rho_k \otimes \mathcal{M}_{h_k\times h_k}\right]U^\dagger
 \ee
 for a unitary $U$.  Given a vector subspace $V$ of $\mathcal{M}_{D\times D}$ and $L\in \mathbb{N}$, we call
\be
 C_L(V) ={\rm span}(V^L)={\rm span}\{v_1v_2\cdots v_L: v_i\in V\}.
 \ee
 Since $V=\{\oplus_\alpha m_\alpha M_\alpha(X): X\}\subset \mathcal{F}$, we have that $C_L(V)\subset C_L(\mathcal{F})$ for all $L$. But Proposition \ref{propblockinj} guarantees that for some $L$, $C_L(V)=\mathcal{A}_1$. Moreover
 \be
 C_L(\mathcal{F})=U\left[0_r\oplus_k\rho_k^L \otimes \mathcal{M}_{h_k\times h_k}\right]U^\dagger\; .
 \ee

This clearly implies that $\mathcal{F}=\mathcal{A}_1$ and hence  $\tilde{S}\tilde{T}$ is the identity in $\mathcal{A}_1$.

 A similar argument proves that  $\tilde{T}\tilde{S}$  the identity in $\mathcal{A}_2$.

Reasoning as in \cite[Theorem 8]{Inverse-eigenvalue-problem} we get that (after relabeling of blocks that can be absorbed in the definition of $T$ and $S$), $\mathcal{A}_1=\mathcal{A}_2$ and $\tilde{T}(\oplus X_\alpha)=\oplus_\alpha U_\alpha X_\alpha U_\alpha^\dagger$, $\tilde{S}(\oplus Y_\alpha)=\oplus_\alpha U_\alpha^\dagger Y_\alpha U_\alpha$ for some unitaries $U_\alpha$.

Now
\be
\oplus_\alpha \nu_\alpha \otimes A_\alpha(X)=T(\oplus_\alpha \mu_\alpha \otimes M_\alpha(X))=
\ee
\be
=R_2\tilde{T}\tilde{R}_1(\oplus_\alpha \mu_\alpha \otimes M_\alpha(X))=\oplus_\alpha \frac{m_\alpha}{n_\alpha}\nu_\alpha \otimes U_\alpha M_\alpha (X)U_\alpha^\dagger
\ee
from where we conclude (remember we did a relabelling of blocks)
\be
A_\alpha =\frac{m_\alpha}{n_\alpha} U_\alpha M_\alpha U_\alpha^\dagger\;, \; \forall \alpha
\ee

Then, if we denote (\ref{Figure9}.(b)) by $B$, we get that $O_L(B)$ can be written simultaneously as
\be
O_L(B)= \sum_\gamma \frac{m_\gamma^L}{n_\gamma}\tr[\nu_\gamma^L]O_L(M_\gamma)
\ee
and
\be\label{eq1-proof-main-thm}
O_L(B)=\sum_{\alpha,\beta} \tr[\mu^L_\alpha]\tr[\mu_\beta^L] O_L(M_\alpha)O_L(M_\beta)
\ee

From here, it is easy to see that a BNT for

 \be\label{Figure11}
 \raisebox{-12pt}{\includegraphics[height=3.5em]{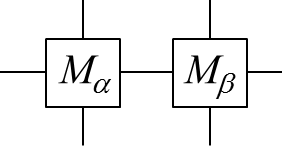}}
\ee
must contain neccessarily only elements of the form $e^{i\theta_{\alpha,\beta,\gamma}} Z_{\alpha,\beta,\gamma} M_\gamma Z_{\alpha,\beta,\gamma}^{-1}$ so that
\be\label{eq2-proof-main-thm}
\text{(\ref{Figure11})}=\oplus_\gamma \chi_{\alpha,\beta,\gamma} \otimes Z_{\alpha,\beta,\gamma} M_\gamma Z_{\alpha,\beta,\gamma}^{-1}
\ee
for a diagonal matrix $\chi_{\alpha,\beta,\gamma}$. This gives (\ref{eq:algebra-appendix}).  Reasoning as in Proposition \ref{Prop:IV.12} we get that $\chi_{\alpha,\beta,\gamma}$ are positive and $Z_{\alpha,\beta,\gamma}$ unitaries.
It only remains to show (\ref{idempotent-appendix}).
By plugging (\ref{eq2-proof-main-thm}) into (\ref{eq1-proof-main-thm}) we get
\be
 \sum_\gamma \frac{m_\gamma^L}{n_\gamma}\tr[\nu_\gamma^L]O_L(M_\gamma) =
 \ee
 \be
 =O_L(B)=\sum_{\alpha,\beta,\gamma} \tr\left[\left(\mu_\alpha\otimes \mu_\beta \otimes \chi_{\alpha,\beta,\gamma}\right)^L\right]O_L(M_\gamma)
\ee
Since the $M_\gamma$ form a BNT, for all $\gamma$ and for all sufficiently large $L$,
\be
\tr\left[\left(\frac{m_\gamma}{n_\gamma}\nu_\gamma\right)^L\right]=\sum_{\alpha,\beta}  \tr\left[\left(\mu_\alpha\otimes \mu_\beta \otimes \chi_{\alpha,\beta,\gamma}\right)^L\right]
\ee
By Lemma \ref{Lem:app_simple}, for all $\gamma$ the set $\{\frac{m_\gamma}{n_\gamma}\nu_{\gamma,k}\}_k$ coincides with the set $\{\mu_{\alpha,k_1}\mu_{\beta,k_2}\chi_{\alpha,\beta,\gamma,k_3}\}_{\alpha,\beta,k_1,k_2,k_3}$ (note that all coefficients are positive). In particular, summing over all indices we get (\ref{idempotent-appendix}).

\

Clearly $(iii)\Rightarrow (ii)$ so it only remains to show that $(ii)\Rightarrow (i)$.

As above, we consider (\ref{Figure9}) and write them, up to conjugation by unitaries, as $\oplus_\alpha \mu_\alpha \otimes M_\alpha $ and $\oplus_\beta \nu_\beta \otimes A_\beta$. (\ref{eq:algebra-appendix}) implies that for all $L$,
 \be
 \sum_\gamma \tr[\nu_\gamma^L]O_L(A_\gamma) =\sum_{\alpha,\beta,\gamma} \tr\left[\left(\mu_\alpha\otimes \mu_\beta \otimes \chi_{\alpha,\beta,\gamma}\right)^L\right]O_L(M_\gamma)
\ee

Since both $A_\gamma$ and $M_\gamma$ are BNT, this implies that, up to relabelling,
\be
A_\gamma = e^{i\theta_\gamma} Z_\gamma M_\gamma Z_\gamma^{-1}
\ee
By Proposition \ref{Prop:IV.12} $e^{i\theta_\gamma}=1$ and $Z_\gamma=U_\gamma$ is unitary.
 Since, for sufficiently large $L$, $O_L(A_\gamma)=O_L(M_\gamma)$ are linearly independent, and all coefficients $\nu_{\gamma,k}$, $\mu_{\alpha,k_1}$, $\mu_{\beta,k_2}$, $\chi_{\alpha,\beta,\gamma,k_3}$ are positive, Lemma \ref{Lem:app_simple} implies that for all $\gamma$ the set $\{\nu_{\gamma,k}\}_k$ coincides with the set $\{\mu_{\alpha,k_1}\mu_{\beta,k_2}\chi_{\alpha,\beta,\gamma,k_3}\}_{\alpha,\beta,k_1,k_2,k_3}$. If we sum in all indices we get that
 \be
 \tr[\nu_\gamma]=\sum_k\nu_{\gamma,k}=\sum_{\alpha,\beta,k_1,k_2,k_3}\mu_{\alpha,k_1}\mu_{\beta,k_2}\chi_{\alpha,\beta,\gamma,k_3}=
 \ee
 \be
 = \sum_{\alpha,\beta} m_\alpha m_\beta c_{\alpha,\beta,\gamma}^{(1)}=m_\gamma
 \ee
 by (\ref{idempotent-appendix}). We can now define $T=\tilde{R}_2\tilde{T}R_1$  where
 \begin{itemize}
 \item $R_1:\mathcal{M}_{d\times d}\longrightarrow \oplus_\gamma\mathcal{M}_{d_\gamma\times d_\gamma}$
 is defined by  $R_1(X)=\sum_\gamma \tr_{\gamma,1}[P_\gamma X P_\gamma]$, being $P_\gamma$ the projector on the corresponding sector of the decomposition $\oplus \mu_\gamma\otimes M_\gamma$ and $\tr_{\gamma,1}$ the partial trace on the first subsystem of that sector.
 \item $\tilde{T}:\oplus_\gamma\mathcal{M}_{d_\gamma\times d_\gamma}\longrightarrow \oplus_\gamma\mathcal{M}_{d_\gamma\times d_\gamma}$ is just conjugation by $\oplus_\gamma U_\gamma$ and
 \item $\tilde{R}_2: \oplus_\gamma\mathcal{M}_{d_\gamma\times d_\gamma}\longrightarrow \mathcal{M}_{d\times d}\otimes \mathcal{M}_{d\times d}$ is given by
 $\tilde{R}_2(\oplus_\gamma X_\gamma)= \oplus_\gamma \frac{\nu_\gamma}{m_\gamma} \otimes X_\gamma$.
 \end{itemize}

 It is clear that $T(\oplus_\gamma \mu_\gamma\otimes M_\gamma)=\oplus_\gamma \nu_\gamma \otimes A_\gamma$ as desired in the definition of RFP.

 Similarly we can define $S=\tilde{R}_1\tilde{S} R_2$ where
 \begin{itemize}
 \item $R_2:\mathcal{M}_{d\times d}\otimes \mathcal{M}_{d\times d} \longrightarrow \oplus_\gamma\mathcal{M}_{d_\gamma\times d_\gamma}$
 is defined by  $R_2(X)=\sum_\gamma \tr_{\gamma,1}[Q_\gamma X Q_\gamma]$, being $Q_\gamma$ the projector on the corresponding sector of the decomposition $\oplus \nu_\gamma\otimes A_\gamma$ and $\tr_{\gamma,1}$ the partial trace on the first subsystem of that sector.
 \item $\tilde{S}:\oplus_\gamma\mathcal{M}_{d_\gamma\times d_\gamma}\longrightarrow \oplus_\gamma\mathcal{M}_{d_\gamma\times d_\gamma}$ is just conjugation by $\oplus_\gamma U_\gamma^\dagger$ and
 \item $\tilde{R}_1: \oplus_\gamma\mathcal{M}_{d_\gamma\times d_\gamma}\longrightarrow \mathcal{M}_{d\times d}$ is defined
  by
 $\tilde{R}_1(\oplus_\gamma X_\gamma)= \oplus_\gamma \frac{\mu_\gamma}{m_\gamma} \otimes X_\gamma$.
 \end{itemize}
 It is clear that $S$ has the property  $S(\oplus_\gamma \nu_\gamma A_\gamma)=\oplus_\gamma \mu_\gamma M_\gamma$ as desired in the definition of RFP.
This finishes the proof of Theorem \ref{thm:IV.13}.

 \end{proof}

\section{Additional results}\label{AppendixLooseEnds}

In this Appendix we elaborate on some of the definitions and results obtained in this paper. In particular, we give alternative definitions of RFP, and exploit the relation between commuting Hamiltonians and the fact that certain correlation functions vanish.

\subsection{Alternative definitions of RFP}

There are in principle other possible definitions of RFP for a tensor. The results of this paper show that the chosen ones seem to be the right ones, given their clear connection with relevant physical notions such as zero-correlation length, saturation of the area law, commuting Hamiltonians or topological properties of boundary theories. It is however worth mentioning other possibilities and their limitations.

The first possibility would be to allow in Definition \ref{RFPMixedTS} that the maps $T$ and $S$ introduce a gauge transformation in the virtual level of the MPDO.
 \be\label{RFP-gauge}
 \raisebox{-20pt}{\includegraphics[height=9em]{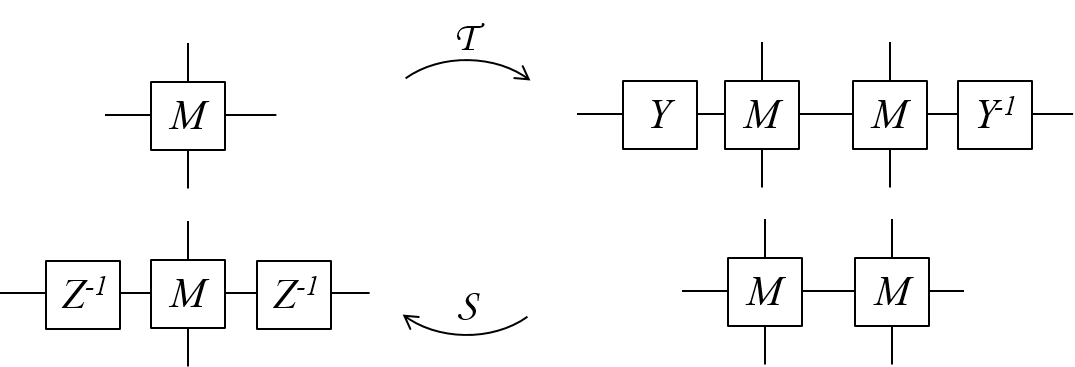}}\\
\ee

In the pure case, where $T$ and $S$ are unitary conjugations,  Definition (\ref{RFP-gauge}) coincides with Definition \ref{RFPMixedTS} for tensors in CF as the following argument shows. (\ref{RFP-gauge}) implies that $\mathbb{E}^2$ is similar to $\mathbb{E}$ and hence both must have eigenvalues $1$ and $0$ exclusively. By going to CFII, one may assume wlog that $\mathbb{E}$ is contractive for the trace norm and hence a standard argument \cite{Fan92} shows that the eigenvalue $1$ cannot have non-trivial Jordan blocks. Using that $\mathbb{E}^{k}$ is similar to $\mathbb{E}$ for all $k$ one gets that the eigenvalue $0$ cannot have non-trivial Jordan blocks either. As a consequence $\mathbb{E}^2=\mathbb{E}$ and, by the characterization given in Theorem \ref{thm:main-MPS}, the tensor is an RFP.

Unfortunately we do not know whether for mixed states a similar argument can be applied and  Definition (\ref{RFP-gauge}) coincides with Definition \ref{RFPMixedTS}. We leave it as an interesting open question.

Another possibility could be to impose that $T$ and $S$ are given by unitary conjugations also in the mixed case. Graphically,
 \be\label{Strong-RFP}
 \raisebox{-20pt}{\includegraphics[height=10em]{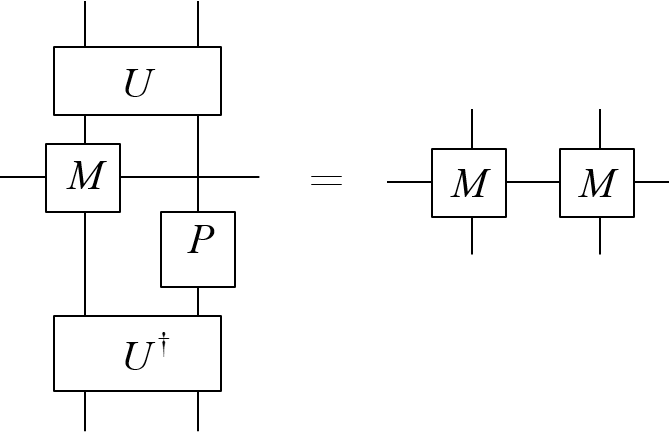}}\\
\ee
for some $P\ge 0$.

This definition is, however, too strong and many of the interesting examples arising from string-net models would not be covered. The reason is that, with (\ref{Strong-RFP}), the rank of the associated MPDO of $N$ sites must grow with $N$ as $rs^{N-1}$ for natural numbers $r,s$. Boundary theories of many string-net models, such as the Fibonacci model, cannot have such behavior. As an illustration, let us see it for one of the elements in the BNT of the boundary theory associated to the Fibonacci model. Concretely, the one corresponding to the vacuum \cite{FrankStrings}. The tensor $A$ has bond dimension $2$ and three physical qubits that we call $ijk$. For each qubit, the ket index is equal to the bra index (the tensor $A$ generates MPDO which are diagonal in the computational basis). Therefore, $A$ gets defined by the elements $A^{ijk}_{\alpha,\beta}$ and the generated MPDO are given by $\rho^{(N)}(A)=$
\be
\sum_{i_1,j_1,k_1,\ldots, i_N,j_N,k_N}\tr(A^{i_1,j_1,k_1}\cdots A^{i_N,j_N,k_N}) \ket{i_1j_1k_1\cdots i_Nj_Nk_N}\bra{i_1j_1k_1\cdots i_Nj_Nk_N}\; .
\ee
All we will need here is that
$A^{ijk}_{\alpha,\beta}=\delta_{i,\alpha}\delta_{k,\beta}N_{ijk}$ and $N_{ijk} =0$ if and only if exactly two of the $i,j,k$ are $0$; otherwise $N_{ijk}>0$.  (This corresponds to the fusion rules of the Fibonacci model.) Let us see that
\be\label{rank-Fibonacci}
{\rm rank}(\rho^{(N)}(A))=\tau_+^{2N}+\tau_-^{2N}.
\ee
where $\tau_\pm=\frac{1\pm\sqrt{5}}{2}$. It is trivial to check that $\tau_+^{2N}+\tau_-^{2N}$ is not of the form $rs^{N-1}$.

Denote by $x^n_{\alpha\beta}$ the rank of the MPDO of length $n$ generated by $A$ with left and right boundary conditions given by $\alpha,\beta\in \{0,1\}$ respectively:
\be
 \sum_{i_1,j_1,k_1,\ldots, i_N,j_N,k_N}(\alpha|A^{i_1,j_1,k_1}\cdots A^{i_N,j_N,k_N}|\beta)\ket{i_1j_1k_1\cdots i_Nj_Nk_N}\bra{i_1j_1k_1\cdots i_Nj_Nk_N}\; .
\ee
Clearly, the $x^n$ satisfy the following recurrence relations:
\begin{align*}
x^n_{00}&=x^{n-1}_{00}+x^{n-1}_{01}\\
x^n_{01}&=x^{n-1}_{00}+2x^{n-1}_{01}\\
x^n_{10}&=x^{n-1}_{10}+x^{n-1}_{11}\\
x^n_{11}&=x^{n-1}_{10}+2x^{n-1}_{11}\; .
\end{align*}
Let us call $x^n$ to the vector $(x^n_{00},x^n_{01},x^n_{10},x^n_{11})$.
We have
$x^1=(1,1,1,2), x^n= x^{n-1}A=x^1 A^{n-1}$ with $$A=\left(\begin{matrix}
1 & 1& 0& 0\\
1 & 2 & 0 & 0\\
0& 0& 1& 1\\
0& 0& 1& 2
\end{matrix}\right)\; .$$
Diagonalizing $$B=\left(\begin{matrix}
1& 1\\
1 & 2
\end{matrix}\right)$$ we obtain $B=PDP^{-1}$ with $P^t=P^{-1}$ and
$$D=\left(\begin{matrix}
\tau_+^2 & 0& \\
0 & \tau_-^2
\end{matrix}\right)\; , \quad P= \left(\begin{matrix}
\frac{1}{\sqrt{1+\tau_+^2}} & \frac{1}{\sqrt{1+\tau_-^2}}\\
\frac{\tau_+}{\sqrt{1+\tau_+^2}} & \frac{\tau_-}{\sqrt{1+\tau_-^2}}
\end{matrix}\right)\; .$$
Since we are considering periodic boundary conditions, the rank of $\rho^{(N)}$ is exactly $x^n_{00}+x^n_{11}=$
\begin{equation}\label{eq:1} x^1 \left(\begin{matrix} P & 0\\
0 & P
\end{matrix}\right)
\left(\begin{matrix}
D^{n-1} & 0 \\
0 & D^{n-1}
\end{matrix}
\right)
\left(\begin{matrix} P^t & 0\\
0 & P^t
\end{matrix}\right)
\left(\begin{matrix}
1\\
0\\
0\\
1
\end{matrix}\right)\; .
\end{equation}
Using the defining property $\tau_\pm^2=\tau_\pm +1$ it is an straightforward computation to verify that (\ref{eq:1}) is exactly $$\tau_+^{2N}+\tau_-^{2N}\; .$$

\subsection{General relations between NNPCH and correlation functions}

In Section \ref{Sec:MPS} (see Theorem \ref{thm:main-MPS}) we have shown that for a spin chain, ZCL of a tensor is equivalent to the existence of a NNPCH for the generated states. In this appendix we show that there is a deeper relation between those concepts which is independent of the dimension or the fact that states are generated by tensor networks.

Let us consider a spin system in any dimension and two disjoint regions, $A$ and $B$. We are interested in analyzing the correlations between operators acting on those regions for certain states. We will denote by $X$ the spins not included in $A$ or $B$, and by $H_{A,X,B}$, the corresponding Hilbert spaces. We will also denote by $H_{AX}=H_A\otimes H_X$, and by $H_{XB}=H_X\otimes H_B$. Let us consider $K_{AXB}$, a  subspace of the whole Hilbert space and denote by $P_{AXB}$ the projector onto that subspace.

\begin{defn}
Given a subspace $K_{AXB}$ we say that regions A and B are decorrelated if for all observables $O_{A,B}$ acting on $H_{A,B}$, respectively,
 \be
 P_{AXB} O_A P_{AXB}^\perp O_B P_{AXB}=0.
 \ee
\end{defn}

Note that we can equivalently write
 \be
 \label{decorr}
 \langle \varphi|O_A P_{AXB}^\perp O_B|\varphi'\rangle=\langle \varphi|O_B P_{AXB}^\perp O_A|\varphi'\rangle=0
 \ee
for all $\varphi,\varphi'\in K_{AXB}$. In particular, if the subspace has just one vector, $\varphi$, we will have
 \be
 \langle O_A O_B\rangle=\langle O_A\rangle \langle O_B \rangle
 \ee
where the expectation value is taken with respect to $\varphi$. This definition of correlations has been used, for example, in \cite{Hastingsdefcorr}. 

Our aim is to connect this concept of correlations with the fact that $K_{AXB}$ is the ground space of certain Hamiltonian, $H$, whenever $A$ and $B$ are not neighbors. We take $H$ local in the sense that it can be written as $H=Q_{AX} + Q_{XB}$, where $Q_{AX}$ ($Q_{XB}$) acts on $H_{AX}$ ($H_{XB}$). We will consider frustration-free problems, so that we can take wlog $Q_{AX}$ and $Q_{XB}$ to be projectors. We will denote by $K_{AX}$ and $K_{XB}$ the kernels of $Q_{AX}$ and $Q_{XB}$, respectively. 
\begin{defn}
Given a subspace $K_{AXB}$ we say that it corresponds to a parent commuting Hamiltonian if there exist two projectors, $Q_{AX}$ and $Q_{XB}$, acting on $H_{AX}$ and $H_{XB}$ respectively, such that
 \begin{subequations}
 \bea
 \label{QAXQXB}
 [Q_{AX},Q_{XB}]&=& 0,\\
 \label{KAXBKAXKXB}
 K_{AXB} &=& \left(K_{AX}\otimes H_B \right)\cap\left( H_A\otimes K_{XB}\right)
 \eea
 \end{subequations}
\end{defn}

Note that the last property (\ref{KAXBKAXKXB}) means that $K_{AXB}$ coincides with the ground subspace of $H=Q_{AX}+Q_{XB}$.
In the following we will show that the two definitions given above are equivalent.

\begin{prop}
Given the subspace $K_{AXB}$, it corresponds to a parent commuting Hamiltonian if and only if regions A and B are decorrelated.
\end{prop}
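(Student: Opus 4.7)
The plan is to work in both directions with the auxiliary orthogonal projectors $\Pi_{AX}$ (onto $K_{AX}\otimes H_B$) and $\Pi_{XB}$ (onto $H_A\otimes K_{XB}$) on $H_A\otimes H_X\otimes H_B$. In the $(\Leftarrow)$ direction I set $K_{AX}:=\ker Q_{AX}$ and $K_{XB}:=\ker Q_{XB}$; in the $(\Rightarrow)$ direction I take the canonical choice, namely $K_{AX}$ the smallest subspace of $H_{AX}$ with $K_{AXB}\subseteq K_{AX}\otimes H_B$ (equivalently, the support of $\tr_B P_{AXB}$), with $K_{XB}$ defined symmetrically. By construction $\Pi_{AX}$ acts as identity on $H_B$ and $\Pi_{XB}$ as identity on $H_A$, so every local observable satisfies $[O_A,\Pi_{XB}]=[O_B,\Pi_{AX}]=0$; moreover $K_{AXB}\subseteq\mathrm{Im}\,\Pi_{AX}\cap\mathrm{Im}\,\Pi_{XB}$, giving $P_{AXB}\Pi_{AX}=\Pi_{AX}P_{AXB}=P_{AXB}$ and similarly for $\Pi_{XB}$.

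For $(\Leftarrow)$, commutativity of $Q_{AX}$ and $Q_{XB}$ immediately yields $[\Pi_{AX},\Pi_{XB}]=0$, and $K_{AXB}=\ker Q_{AX}\cap\ker Q_{XB}$ forces $P_{AXB}=\Pi_{AX}\Pi_{XB}$. I would then substitute $P_{AXB}^\perp=I-\Pi_{AX}\Pi_{XB}$ into $P_{AXB}\,O_A\,P_{AXB}^\perp\,O_B\,P_{AXB}$ and expand: using $[O_A,\Pi_{XB}]=[O_B,\Pi_{AX}]=[\Pi_{AX},\Pi_{XB}]=0$ together with $\Pi_{AX}^2=\Pi_{AX}$ and $\Pi_{XB}^2=\Pi_{XB}$, both resulting terms collapse to the common operator $\Pi_{AX}O_A\Pi_{AX}\Pi_{XB}O_B\Pi_{XB}$ and their difference vanishes, establishing decorrelation.

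For $(\Rightarrow)$, the crucial identification is
\[
\mathrm{Im}\,\Pi_{XB}\;=\;H_A\otimes K_{XB}\;=\;\mathrm{span}\bigl\{(O_A\otimes I_{XB})|\psi\rangle\,:\,O_A\in B(H_A),\;\psi\in K_{AXB}\bigr\},
\]
and the analogue with $A$ and $B$ swapped for $\Pi_{AX}$. This follows by writing any $\psi\in K_{AXB}$ as $\psi=\sum_i|a_i\rangle_A|\eta_i\rangle_{XB}$ with $\{|a_i\rangle\}$ orthonormal: varying $O_A$ turns the tuple $(O_A|a_i\rangle)_i$ into an arbitrary element of $H_A^{\oplus}$, while as $\psi$ ranges over $K_{AXB}$ the $\eta_i$ span precisely $K_{XB}$ by definition. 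The decorrelation hypothesis $\langle\psi|O_A P_{AXB}^\perp O_B|\psi'\rangle=0$ thus asserts that the images of $P_{AXB}^\perp$ restricted to $\mathrm{Im}\,\Pi_{XB}$ and to $\mathrm{Im}\,\Pi_{AX}$ are mutually orthogonal, equivalently the single operator identity $\Pi_{AX}P_{AXB}^\perp\Pi_{XB}=0$. Combined with $P_{AXB}\Pi_{AX}=P_{AXB}=P_{AXB}\Pi_{XB}$ this gives $\Pi_{AX}\Pi_{XB}=\Pi_{AX}P_{AXB}\Pi_{XB}=P_{AXB}$; taking adjoints yields $\Pi_{XB}\Pi_{AX}=P_{AXB}$ as well, so the projectors commute and $P_{AXB}$ is the projector onto $\mathrm{Im}\,\Pi_{AX}\cap\mathrm{Im}\,\Pi_{XB}=(K_{AX}\otimes H_B)\cap(H_A\otimes K_{XB})$. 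Setting $Q_{AX}:=I_{AX}-P'_{AX}$ on $H_{AX}$, where $P'_{AX}$ is the projector onto $K_{AX}$, and $Q_{XB}$ analogously, produces the required commuting parent Hamiltonian.

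The only substantive step is the identification of $\mathrm{Im}\,\Pi_{XB}$ with $\mathrm{span}\{(O_A\otimes I)|\psi\rangle\}$, and this is precisely where the canonical (smallest) choice of $K_{XB}$ in the $(\Rightarrow)$ direction is essential: a strictly larger subspace would fail to yield the exact operator equality $\Pi_{AX}P_{AXB}^\perp\Pi_{XB}=0$ that decorrelation encodes. Everything else is routine manipulation of commuting projectors.
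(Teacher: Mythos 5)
Your proof is correct and follows essentially the same route as the paper: both directions hinge on choosing $K_{AX}$ and $K_{XB}$ as the supports of the partial traces of $P_{AXB}$, using decorrelation to establish $\Pi_{AX}P_{AXB}^\perp\Pi_{XB}=0$ and hence $\Pi_{AX}\Pi_{XB}=\Pi_{XB}\Pi_{AX}=P_{AXB}$, and conversely using that product identity to annihilate $P_{XB}P_{AXB}^\perp P_{AX}$. Your span identity $\mathrm{Im}\,\Pi_{XB}=\mathrm{span}\{(O_A\otimes I)|\psi\rangle:\psi\in K_{AXB}\}$ is just a cleaner packaging of the paper's explicit basis resolution of $P_{AX}\otimes\Id_B$ and $\Id_A\otimes P_{XB}$ in terms of vectors $O_B|\varphi\rangle$ with $\varphi\in K_{AXB}$.
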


\begin{proof}
{\em (if):} Let us denote by $K_{AX}$ the support of tr$_B(K_{AXB})$ and by $K_{XB}$ the support of tr$_A(K_{AXB})$, and by $P_{AX}$ and $P_{XB}$ the corresponding projectors. Obviously, if $\varphi\in K_{AXB}$ we will have
 \be
 \label{varphiPAXB}
 |\varphi\rangle = P_{AX}|\varphi\rangle = P_{XB} |\varphi\rangle
 \ee
so that 
 \be
 \label{Propproj}
 P_{\alpha}P_{AXB}=P_{AXB} P_{\alpha} = P_{AXB}.
 \ee
where $\alpha=AX,XB$. Now, we choose orthonormal bases $\{a_i\}\in H_A$, $\{b_i\}\in H_B$, $\{\alpha_i\}\in H_{AX}$ and $\{\beta_i\}\in H_{XB}$. We can always find some vectors $\varphi_{a,i},\varphi_{b,i}\in K_{AXB}$, $\tilde a_i\in H_A$ and $\tilde b_i\in H_B$ such that
 \begin{subequations}
 \bea
 |\alpha_i\rangle&=&\langle\tilde b_i|\varphi_{b,i}\rangle,\\
 |\beta_i\rangle&=&\langle\tilde a_i|\varphi_{a,i}\rangle.
 \eea
 \end{subequations}
We can thus write
 \begin{subequations}
 \label{PAXPXB}
 \bea
 P_{AX}\otimes\Id_B &=& \sum_{i,j} \langle\tilde b_i|\varphi_{b,i}\rangle\langle\varphi_{b,i}|\tilde b_i\rangle\otimes |b_j\rangle\langle b_j|\nonumber\\
  &=& \sum_{i,j} |b_j\rangle_B \langle\tilde b_i|\varphi_{b,i}\rangle\langle\varphi_{b,i}|\tilde b_i\rangle_B \langle b_j|,\\
 \Id_A \otimes P_{XB} &=& \sum_{i',j'} |a_{j'}\rangle\langle a_{j'}|\otimes \langle\tilde a_{i'}|\varphi_{a,i'}\rangle\langle\varphi_{a,i'}|\tilde a_{i'}\rangle\nonumber\\
&=& \sum_{i',j'} |a_{j'}\rangle_A \langle\tilde a_{i'}|\varphi_{a,i'}\rangle\langle\varphi_{a,i'}|\tilde a_{i'}\rangle_A\langle a_{j'}|
 \eea
 \end{subequations}
where we have used the subindices $A,B$ in order to emphasize that the operators act on the corresponding regions. Using these expression we have
 \bea
 \label{PXBAXetc}
 P_{XB}P_{AX} &=& P_{XB}P_{AXB}P_{AX} = P_{AXB},\nonumber\\
 P_{AX} P_{XB} &=& P_{AX}P_{AXB}P_{XB} = P_{AXB}.
 \eea
where we have used (\ref{Propproj}) as well as (\ref{decorr}) with $\varphi=\varphi_{b,i}$, $\varphi'=\varphi_{a,i'}$, $O_A=|a_{j'}\rangle\langle \tilde a_{i'}|$ and $O_B=|\tilde b_i\rangle\langle b_j|$. We thus arrive at $P_{AX}P_{XB}=P_{XB}P_{AX}$, which implies (\ref{QAXQXB}) if we define $Q_{AX}$ and $Q_{XB}$ through 
 \begin{subequations}
 \label{PPPAXB}
 \bea
 \label{PAXQAX}
 P_{AX}&=& \Id - Q_{AX},\\
 \label{PXBQXB}
 P_{XB}&=& \Id - Q_{XB}.
 \eea
 \end{subequations}
 Furthermore, from (\ref{varphiPAXB}) it immediately follows that 
 \be K_{AXB} \subseteq \left(K_{AX}\otimes H_B \right)\cap\left( H_A\otimes K_{XB}\right).
 \ee
  The converse relation can be proven as follows. Let us take any $\varphi\in K_{AX}\otimes H_B$ and $\varphi\in H_A\otimes K_{XB}$. Thus,
 \be
 |\varphi\rangle = P_{AX} |\varphi\rangle= P_{BX} |\varphi\rangle= 
 P_{XB} P_{AX} |\varphi\rangle= P_{AXB} |\varphi\rangle,
\ee
where we have used (\ref{PXBAXetc}). Thus, $\varphi\in K_{AXB}$. 

{\em (only if):} Let us define (\ref{PAXQAX},\ref{PXBQXB}), and $P=P_{AX}P_{XB}$. Given that $K_{AXB} \subseteq \left(K_{AX}\otimes H_B \right)\cap\left( H_A\otimes K_{XB}\right)$ we have that if $\varphi\in K_{AXB}$, then (\ref{varphiPAXB}) and thus (\ref{Propproj}). We have to show that
 \be
 \label{arrggg}
 P=P_{AX}P_{XB}=P_{AXB}
 \ee
In such case, for any $O_{A,B}$ acting on $H_{A,B}$, respectively, we have:
 \bea
 P_{AXB} O_A P_{AXB}^\perp O_B P_{AXB} &=& P_{AXB} P_{XB} O_A P_{AXB}^\perp O_B P_{AX} P_{AXB}\nonumber\\ 
&=& P_{AXB} O_A P_{XB} P_{AXB}^\perp P_{AX} O_B P_{AXB}=0\qquad
 \eea
since $ P_{XB} P_{AXB}^\perp P_{AX}= P_{XB} (\Id -P_{AXB}) P_{AX}= P_{AXB}-P_{AXB}=0$, where we have used (\ref{arrggg}) and (\ref{Propproj}). In order to show (\ref{arrggg}), we recall that since $[P_{AX},P_{XB}]=0$, $P$ must be a projector itself, i.e. $P=P^\dagger=P^2$. Furthermore, if $P_{AXB}|\varphi\rangle = |\varphi\rangle$, then we have (\ref{varphiPAXB}) and thus $P|\varphi\rangle=|\varphi\rangle$. Conversely, if $P|\varphi\rangle=|\varphi\rangle$ then $Q_{AX}|\varphi\rangle=Q_{XB}|\varphi\rangle=0$, so that $\varphi\in \left(K_{AX}\otimes H_B \right)\cap\left( H_A\otimes K_{XB}\right)$. But then, according to (\ref{KAXBKAXKXB}) $\varphi\in K_{AXB}$ so that $P_{AXB}|\varphi\rangle=|\varphi\rangle$.
\end{proof}


\section*{References}

\end{document}